%% file: main.tex
\documentclass[11pt]{article}

\usepackage{amsthm}
\usepackage{graphicx} 
\usepackage{array} 

\usepackage{amsmath, amssymb, amsfonts, verbatim}
\usepackage{hyphenat,epsfig,subcaption,multirow}
\usepackage[font=small,labelfont=bf, hypcap=true]{caption}
\usepackage{nicefrac}
\usepackage{paralist}

\usepackage{wrapfig}

\usepackage{tocloft}

\usepackage[usenames,dvipsnames]{xcolor}
\usepackage[ruled]{algorithm2e}

\DeclareFontFamily{U}{mathx}{\hyphenchar\font45}
\DeclareFontShape{U}{mathx}{m}{n}{
      <5> <6> <7> <8> <9> <10>
      <10.95> <12> <14.4> <17.28> <20.74> <24.88>
      mathx10
      }{}
\DeclareSymbolFont{mathx}{U}{mathx}{m}{n}
\DeclareMathSymbol{\bigtimes}{1}{mathx}{"91}

\usepackage{tcolorbox}
\tcbuselibrary{skins,breakable}
\tcbset{enhanced jigsaw}

\usepackage[normalem]{ulem}
\usepackage[compact]{titlesec}

\definecolor{DarkRed}{rgb}{0.5,0.1,0.1}
\definecolor{DarkBlue}{rgb}{0.1,0.1,0.5}

\usepackage{nameref}
\definecolor{ForestGreen}{rgb}{0.1333,0.5451,0.1333}
\definecolor{Red}{rgb}{0.9,0,0}
\usepackage[linktocpage=true,
	pagebackref=true,colorlinks,
	linkcolor=DarkRed,citecolor=ForestGreen,
	bookmarks,bookmarksopen,bookmarksnumbered]
	{hyperref}
\usepackage[noabbrev,nameinlink]{cleveref}
\crefname{property}{property}{Property}
\creflabelformat{property}{(#1)#2#3}
\crefname{equation}{eq}{Eq}
\creflabelformat{equation}{(#1)#2#3}

\usepackage{bm}
\usepackage{url}
\usepackage{xspace}
\usepackage[mathscr]{euscript}

\usepackage{tikz}
\usetikzlibrary{arrows}
\usetikzlibrary{arrows.meta}
\usetikzlibrary{shapes}
\usetikzlibrary{backgrounds}
\usetikzlibrary{positioning}
\usetikzlibrary{decorations.markings}
\usetikzlibrary{decorations.pathreplacing} 
\usetikzlibrary{patterns}
\usetikzlibrary{calc}
\usetikzlibrary{fit}
\usetikzlibrary{decorations}

\usepackage[framemethod=TikZ]{mdframed}

\usepackage[noend]{algpseudocode}
\makeatletter
\def\BState{\State\hskip-\ALG@thistlm}
\makeatother

\usepackage{cite}
\usepackage{enumitem}
\setlist[itemize]{leftmargin=20pt}
\setlist[enumerate]{leftmargin=20pt}

\usepackage[margin=1in]{geometry}

\usepackage{thmtools}
\usepackage{thm-restate}

\newtheorem{theorem}{Theorem}
\newtheorem{lemma}{Lemma}[section]
\newtheorem{proposition}[lemma]{Proposition}
\newtheorem{corollary}[lemma]{Corollary}
\newtheorem{claim}[lemma]{Claim}

\newtheorem{problem}{Problem}

\newtheorem*{corollary*}{Corollary}

\newtheorem*{claim*}{Claim}
\newtheorem*{assumption*}{Assumption}
\newtheorem*{proposition*}{Proposition}
\newtheorem*{lemma*}{Lemma}

\newtheorem{observation}[lemma]{Observation}

\newtheorem*{theorem*}{Theorem}

\crefname{lemma}{Lemma}{Lemmas}
\crefname{claim}{claim}{claims}
\crefname{property}{Property}{Properties}

\newtheorem{mdresult}{Result}
\newenvironment{result}{\begin{mdframed}[backgroundcolor=lightgray!40,topline=false,rightline=false,leftline=false,bottomline=false,innertopmargin=5pt]\begin{mdresult}}{\end{mdresult}\end{mdframed}}

\newtheorem{remark}[lemma]{Remark}

\theoremstyle{definition}

\newtheorem{invariant}{Invariant}
\crefname{invariant}{Invariant}{Invariants}
\newtheorem{flaw}{Flaw}

\newtheorem*{mdproblem*}{Problem}
\newenvironment{Problem*}{\begin{mdframed}[hidealllines=false,innerleftmargin=10pt,backgroundcolor=gray!10,innertopmargin=5pt,innerbottommargin=5pt,roundcorner=10pt]\begin{mdproblem*}}{\end{mdproblem*}\end{mdframed}}
\newtheorem{mddefinition}[lemma]{Definition}
\newenvironment{Definition}{\begin{mdframed}[hidealllines=false,innerleftmargin=10pt,backgroundcolor=white!10,innertopmargin=5pt,innerbottommargin=5pt,roundcorner=10pt]\begin{mddefinition}}{\end{mddefinition}\end{mdframed}}
\newtheorem*{mddefinition*}{Definition}
\newenvironment{Definition*}{\begin{mdframed}[hidealllines=false,innerleftmargin=10pt,backgroundcolor=white!10,innertopmargin=5pt,innerbottommargin=5pt,roundcorner=10pt]\begin{mddefinition*}}{\end{mddefinition*}\end{mdframed}}
\newtheorem{mdremark}{Remark}

\newenvironment{ourbox}{\begin{mdframed}[hidealllines=false,innerleftmargin=10pt,backgroundcolor=white!10,innertopmargin=10pt,innerbottommargin=10pt,roundcorner=10pt]}{\end{mdframed}}

\newenvironment{coloredbox}[1]{\begin{mdframed}[hidealllines=false,innerleftmargin=10pt,backgroundcolor=#1!10,innertopmargin=0pt,innerbottommargin=5pt]}{\end{mdframed}}

\newtcolorbox{redbox}[1]{
  colback=red!5,
  colframe=red!35,
  boxrule=0.5pt,
  arc=2pt,
  title=\textbf{#1},
  fonttitle=\normalsize,
  coltitle=black
}

\newtheorem{mdalgorithm}{Algorithm}
\newenvironment{Algorithm}{\begin{ourbox}\begin{mdalgorithm}}{\end{mdalgorithm}\end{ourbox}}

\allowdisplaybreaks

\newtheorem{mdassumption*}{Assumption}

\renewcommand{\qed}{\nobreak \ifvmode \relax \else
      \ifdim\lastskip<1.5em \hskip-\lastskip
      \hskip1.5em plus0em minus0.5em \fi \nobreak
      \vrule height0.75em width0.5em depth0.25em\fi}

\renewcommand{\leq}{\leqslant}
\renewcommand{\geq}{\geqslant}
\renewcommand{\le}{\leq}

\input{macros}

\title{Dynamic $(\Delta+1)$ Vertex Coloring with Adaptive Adversaries: \\ Entropy Compression to the Rescue\footnote{A preliminary version of this paper appears in the ACM-SIAM Symposium on Discrete Algorithms, SODA 2027.}}
\author{Sepehr Assadi\footnote{(sepehr@assadi.info) School of Computer Science, University of Waterloo. 
Supported in part by an NSERC
Discovery Grant and a Faculty of Math Research Chair grant. \smallskip} \\ {\small University of Waterloo} \and 
\and Helia Yazdanyar\footnote{(hyazdanyar@uwaterloo.ca) School of Computer Science, University of Waterloo. Supported in part by a Cheriton Scholarship from School of Computer Science, Cotton Family Women in Mathematics Graduate Scholarship, Barbara Hayes-Roth Award for Women in Math and Computer Science, and SA's NSERC Discovery Grant.} \\ {\small University of Waterloo}
}

\date{}

\begin{document}
\maketitle

\pagenumbering{roman}
	
\input{abstract}
	
\clearpage

\setcounter{tocdepth}{3}
\tableofcontents
	
\clearpage
	
\setcounter{page}{1}
\pagenumbering{arabic} 

\input{intro.tex}

\input{prelim.tex}

\input{technical-overview.tex}

\clearpage

\input{algorithm.tex}

\clearpage

\input{sparse.tex}

\clearpage

\input{sparse-subroutines.tex}

\clearpage

\input{dense.tex}

\paragraph{AI Disclosure:} No AI tools were used in any capacity in the research or preparation of this submission, except possibly for minor typo editing. 

\paragraph{Acknowledgement.} We are thankful to David Wajc for helpful discussions, and to Yuval Gil and Magnus Halldorsson for sharing their independent result in \cite{HGil2026} with us.

\bibliographystyle{alpha}
\bibliography{general}

\clearpage

\appendix

\input{missingProofs.tex}

\end{document}

%% file: macros.tex
\newcommand{\Ot}{\ensuremath{\widetilde{O}}}
\newcommand{\eps}{\ensuremath{\varepsilon}}
\newcommand{\Paren}[1]{\Big(#1\Big)}

\newcommand{\bracket}[1]{\left[#1\right]}
\newcommand{\paren}[1]{\ensuremath{\left(#1\right)}\xspace}
\newcommand{\card}[1]{\left\vert{#1}\right\vert}

\newcommand{\IN}{\ensuremath{\mathbb{N}}}

\newcommand{\abs}[1]{\ensuremath{\left|#1\right|}}

\newcommand{\floor}[1]{{\left\lfloor{#1}\right\rfloor}}
\newcommand{\prob}[1]{\Pr\paren{#1}}
\newcommand{\expect}[1]{\Exp\bracket{#1}}

\newcommand{\set}[1]{\ensuremath{\left\{ #1 \right\}}}
\newcommand{\poly}{\mbox{\rm poly}}

\DeclareMathOperator*{\Exp}{\ensuremath{{\mathbb{E}}}}
\DeclareMathOperator*{\Prob}{\ensuremath{\textnormal{Pr}}}
\renewcommand{\Pr}{\Prob}

\newenvironment{tbox}{\begin{tcolorbox}[
		enlarge top by=5pt,
		enlarge bottom by=5pt,
		 breakable,
		 boxsep=0pt,
                  left=4pt,
                  right=4pt,
                  top=10pt,
                  arc=0pt,
                  boxrule=1pt,toprule=1pt,
                  colback=white
                  ]
	}
{\end{tcolorbox}}

\newcommand{\II}{\ensuremath{\mathbb{I}}}

\newcommand{\mireal}[1][]{
  \ifx\relax#1\relax%
    \II(\mione \,; \mitwo)%
  \else%
    \II(\mione \,; \mitwo\mid #1)%
  \fi
}

\newcommand{\Decomposition}{\textnormal{\ensuremath{\algs{Decomposition}}}\xspace}
\newcommand{\ExpandMatching}{\textnormal{\ensuremath{\algs{ExpandAntiMatching}}}\xspace}
\newcommand{\ColorDense}{\textnormal{\ensuremath{\algs{ColorDenser}}}\xspace}
\newcommand{\ColorOH}{\textnormal{\ensuremath{\algs{PartialColorOH}}}\xspace}
\newcommand{\ColorSparse}{\textnormal{\ensuremath{\algs{AttemptColorSparser}}}\xspace}

\newcommand{\Cleanup}{\textnormal{\ensuremath{\algs{CleanUp}}}\xspace}

\newcommand{\UpdateSparseColor}{\textnormal{\ensuremath{\algs{UpdateSparserColor}}}\xspace}

\newcommand{\colS}{\varphi^{\star}}

\newcommand{\Fail}{\ensuremath{\textnormal{\emph{Fail}}}\xspace}
\newcommand{\Pass}{\ensuremath{\textnormal{\emph{Pass}}}\xspace}

\newcommand{\algs}[1]{\textnormal{\ensuremath{\texttt{#1}}}}
\newcommand{\Resample}{\ensuremath{\algs{Resample}}\xspace}

\newcommand{\Attempt}{\ensuremath{\algs{AttemptColor}}\xspace}

\newcommand{\randTable}{\ensuremath{A}\xspace}

%% file: abstract.tex

\begin{abstract}
	We present a randomized algorithm for maintaining a $(\Delta+1)$ vertex coloring of a fully dynamic graph against an adaptive adversary with $\Ot(\sqrt{n})$ amortized update time. 
	
	\medskip
	
	Our algorithm improves upon the recent randomized algorithms of Behnezhad, Rajaraman, and Wasim (SODA 2025) with $\Ot(n^{8/9})$ and Flin and Halld\'orsson (ICALP 2025) with $\Ot(n^{2/3})$ amortized update time.  
	Our result (nearly) matches a ``fundamental limit'' of $O(\sqrt{n})$ update time identified by Flin and Halld\'orsson and settles their main open question. 
	
	\medskip
	
	The previous algorithms' main improvement came from efficient coloring of the dense parts of the graph. Our approach complements these prior work by handling the sparse parts more efficiently, 
	using ideas inspired by the recently developed dynamic entropy compression method. A key new aspect of our work is to tailor this method specifically to dynamic algorithms---as opposed to original
	local search algorithms---and achieve update times that can be sublinear in the number of involved constraints per update.

\end{abstract}


%% file: intro.tex

\section{Introduction}\label{sec:intro} 

A $c$-coloring of a graph $G=(V,E)$ is an assignment of colors from $\set{1,\ldots,c}$ to vertices so no edge is monochromatic. 
One of the most basic graph coloring problems is that of finding a $(\Delta+1)$ coloring in graphs with maximum degree $\Delta$. Every graph admits such a coloring 
that can be found easily using a naive greedy algorithm. There is also a large body of work on $(\Delta+1)$ coloring across various computational models including distributed~\cite{Linial92,barenboim2015,HarrisSS16, ChangLP2020,ghaffariKuhn2021}, sublinear time and streaming~\cite{AssadiCK19,AssadiY25}, parallel~\cite{ChangFGUZ19, CzumajDP21,ghaffariKuhn2021, ghaffariK2025}, 
communication complexity~\cite{FlinM24, changMNS2025}, and dynamic graphs~\cite{BhattacharyaCHN18, HenzingerP22, BhattacharyaGKL22 ,BehnezhadRW25, FlinH25}. This last line of work, namely, dynamic algorithms for $(\Delta+1)$ coloring, is the focus of our paper. 

A (fully) dynamic algorithm for the $(\Delta+1)$ coloring problem is given a graph that undergoes edge insertions and deletions and its goal is to maintain a $(\Delta+1)$ coloring after each update quickly. 
It is easy to obtain a dynamic algorithm with $O(\Delta)$ update time: if an update makes an edge monochromatic, scan the neighbors of one of its endpoints and find a color missing from them to assign to the endpoint. 
This basic bound was significantly improved in the pioneering work of~\cite{BhattacharyaCHN18} to a \emph{randomized} algorithm with $O(\log{\Delta})$ update time, and was further improved
by~\cite{BhattacharyaGKL22,HenzingerP22} to an $O(1)$ update time algorithm. 

While remarkably fast, there is a caveat to the results in~\cite{BhattacharyaCHN18,BhattacharyaGKL22,HenzingerP22}: they only work against an \emph{oblivious} adversary meaning that the edge updates need to be fixed in advance and cannot
adapt to the algorithm's output. This is problematic for many applications of dynamic algorithms wherein the output of the algorithm can indeed
influence its future updates. As a result, there has been a significant interest in the literature in obtaining dynamic algorithms against \emph{adaptive} adversaries that provide the next update based on prior outputs of the algorithm; 
see, e.g.~\cite{BeimelKMNSS22,BernsteinBFKS26} and references therein for more on this general topic. 
For some time after~\cite{BhattacharyaCHN18}, the best known dynamic algorithm for $(\Delta+1)$ coloring against adaptive adversaries remained the naive $O(\Delta)$ update time deterministic algorithm, 
which can be as slow as $\Theta(n)$ update time. This is until the breakthrough of~\cite{BehnezhadRW25} who presented an $\Ot(n^{8/9})$ update time randomized
algorithm. This was then improved by~\cite{FlinH25} to an $\Ot(n^{2/3})$ update time randomized algorithm which constitutes the state-of-the-art. The authors in~\cite{FlinH25} identified two natural bounds
for the update time to target: (1) $O(\sqrt{n})$ as a ``fundamental limit'' and (2) $O(n^{2/3})$ as a ``natural barrier'' of existing techniques; we will discuss the rationale behind these bounds in~\Cref{sec:techniques}. The algorithm of~\cite{FlinH25} matches 
this second bound, leading the authors to ask: 
\begin{quote}
\!\!\cite[Problem 1]{FlinH25}:~\emph{``Does there exist a fully dynamic $(\Delta+1)$-coloring [algorithm] with $o(n^{2/3})$ update time that remains robust against adaptive adversaries?''}
\end{quote}
We settle this question in the affirmative in this paper.  

\begin{result}\label{res:main}
	There is a randomized algorithm that maintains a $(\Delta+1)$ coloring of any $n$-vertex fully dynamic graph with maximum degree $\Delta$ against an adaptive adversary in $\Ot(\sqrt{n})$ amortized update time with high probability. 
\end{result}

\Cref{res:main} answers the open question of~\cite{FlinH25} and improves the update time all the way to the fundamental limit identified in their work. 

It is also worth drawing a parallel between~\Cref{res:main} and the closely related maximal matching problem (see~\cite{BehnezhadRW25} for references to these connections). Maximal matchings also admit $O(1)$ update time dynamic
algorithms against oblivious adversaries~\cite{Solomon16} (see also~\cite{BaswanaGS11}). But, the first $o(n)$ update time algorithm against adaptive adversaries appears much later and is due to the recent breakthrough of~\cite{BernsteinBKS25} 
who obtained $\Ot(n^{3/4})$ amortized update time randomized algorithms and $\Ot(n^{8/9})$ time deterministic ones. This result was very recently improved to $n^{1/2+o(1)}$ amortized update time~\cite{ChuzhoyKS26}. 
While achieving similar numerical bounds, the techniques behind these work and ours as well as~\cite{BehnezhadRW25,FlinH25} are entirely disjoint. 

Finally, very recently~\cite{AssadiY2026} (see also~\cite{HaeuplerMRST2026}) presented an algorithm for $O(\Delta/\ln{(\Delta)})$ coloring of \emph{triangle-free} graphs in $n^{o(1)}$ update time against adaptive adversaries. That problem 
is fundamentally different from $(\Delta+1)$ coloring on \emph{general} graphs that have a ``rigid'' constraint of only having a single slack color per vertex (the result of~\cite{AssadiY2026} is best compared to the prior algorithm of~\cite{BhattacharyaCHN18}
for $\Delta+o(\Delta)$ coloring general graphs with $\poly\!\log{\!(n)}$ update time).  A related corollary of our work is for coloring \emph{$\gamma$-sparse} graphs wherein neighborhood of any vertex has at most $(1-\gamma) \cdot \binom{\Delta}{2}$ edges (triangle-free graphs are $1$-sparse with this definition). 

\begin{corollary}\label{cor:locally-sparse}
		There is a randomized algorithm that given any $\gamma \in (0,1)$ on any $n$-vertex dynamic $\gamma$-sparse graph with maximum degree $\Delta$ (sufficiently large w.r.t. $\gamma$) 
		maintains a $(1-\Theta(\gamma))\Delta$-coloring  
		against an adaptive adversary in $\Ot(\sqrt{n}/\gamma)$ amortized update time with high probability. 
\end{corollary}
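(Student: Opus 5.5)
The plan is to derive \Cref{cor:locally-sparse} from the machinery behind \Cref{res:main}, specifically its \emph{sparse} part, and avoid the dense part entirely. The key observation is that a $\gamma$-sparse graph is locally sparse everywhere, so every vertex falls in the regime handled by the entropy-compression based sparse subroutine. I would first invoke the standard fact about locally sparse vertices: if $N(v)$ spans at most $(1-\gamma)\binom{\Delta}{2}$ edges, then under a uniformly random partial coloring (each vertex tries a random color once) the expected number of colors repeated in $N(v)$ is $\Omega(\gamma \Delta)$. This gives $v$ permanent slack $\Omega(\gamma\Delta)$ beyond what its degree requires. Setting the palette size to $(1-c\gamma)\Delta$ for a small constant $c$ still leaves slack $\Omega(\gamma\Delta)$, as long as $\Delta$ is large enough in terms of $\gamma$ for the concentration bounds to apply.

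The algorithm would then be the sparse-vertex procedure of \Cref{res:main} (the \ColorSparse/\UpdateSparseColor routines), applied to all vertices with palette $[(1-c\gamma)\Delta]$. The argument proceeds in three steps.
\begin{enumerate}[label=(\roman*)]
\item Show that the one-shot random coloring step creates $\Omega(\gamma\Delta)$ slack for each vertex. The ``bad'' event for $v$ is that too few repeated colors appear in its neighborhood, and this event has probability $\exp(-\Omega(\gamma^2\Delta))$.
\item Run the dynamic entropy-compression analysis from the sparse section, treating these bad events as the constraints to be resampled.
\item Maintain the coloring under updates. When an update makes $v$ uncolored or conflicting, sample colors from $v$'s palette until one is free. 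Because $v$ has $\Omega(\gamma\Delta)$ free colors, each sample succeeds with probability $\Omega(\gamma)$, so a free color is found after $O(1/\gamma)$ expected samples.
\end{enumerate}

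The amortized cost would follow the accounting of \Cref{res:main}. The $\Ot(\sqrt{n})$ term comes from the periodic rebuilds and from resampling charged per update. The extra $1/\gamma$ factor comes from the lower per-sample success rate in step (iii), and possibly from the slack threshold appearing in the entropy-compression witness encoding. Since dense vertices never occur, the parts of the analysis that balance sparse against dense handling are not needed, and the sparse bound holds unconditionally.

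The main obstacle is adapting the slack guarantee of the sparse subroutine to a reduced palette. In \Cref{res:main} the sparse vertices only need slack $\ge 1$ over the $\Delta+1$ palette. Here the generated slack of $\Omega(\gamma\Delta)$ must exceed the palette deficit of $\Theta(\gamma\Delta)$. So the constants must be chosen carefully, and the entropy-compression bound must be re-verified with failure probability $\exp(-\Omega(\gamma^2\Delta))$. That probability is small enough only when $\Delta \ge \poly(1/\gamma)\log n$. For smaller $\Delta$, I would fall back to the deterministic $O(\Delta)$ recoloring, which costs $\Ot(\poly(1/\gamma))$ per update and is within the claimed bound.
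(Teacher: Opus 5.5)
Your overall architecture matches the paper's: drop the decomposition, put every vertex in $S$, shrink the palette used by \SampleColor to $(1-\Theta(\gamma))\Delta$, get $\Omega(\gamma\Delta)$ slack from \Cref{lem:prob-slack-sparse}, and bound the number of \Resample calls by dynamic entropy compression. The gap is in where $\sqrt{n}$ comes from.

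There is no deterministic $O(\Delta)$ fallback in this setting. Greedy recoloring works only because the palette exceeds $\Delta$. With $(1-\Theta(\gamma))\Delta$ colors, a vertex whose $\Delta$ neighbors use distinct colors has no free color. So your small-$\Delta$ fallback is not a valid algorithm, and the balancing against the naive algorithm used for \Cref{thm:delta+1} is unavailable.

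Without that balancing, your step (iii) costs $O(n/(\gamma\Delta))$ per recoloring, and each \Resample costs $O(n/\gamma)$ when implemented via color classes. For, say, $\Delta=n^{1/4}$ both are far above $\sqrt{n}/\gamma$, and your threshold $\poly(1/\gamma)\log n$ does not cover this range.

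The missing idea is to keep the slack-maintaining machinery for \emph{every} $\Delta$ and change only its implementation. For small $\Delta$, find a free color for $v$ by scanning $N(v)$ in $O(\Delta)$ time; this works precisely because $\colS$ certifies slack, so a free color exists in the reduced palette. Likewise, run \Resample in $O(\Delta^2)$ time by checking neighbors instead of color classes. This gives $\Ot\big(\frac{t}{\gamma\Delta}\cdot\min\{n,\Delta^2\}\big)\le\Ot(t\sqrt{n}/\gamma)$. The $\sqrt{n}$ comes from this minimum, not from ``periodic rebuilds''. The approach has no rebuilds; rebuilds are exactly what forced the $n^{2/3}$ barrier in prior work.

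Your tail bound $\exp(-\Omega(\gamma^2\Delta))$ for the bad event is what a bounded-differences argument gives, and it is too weak for the claimed dependence on $\gamma$. If the flaw probability is $2^{-\ell}$, the log-counting argument bounds the number of \Resample calls by roughly $t\log n/\ell$, and each call can uncolor $\Theta(\Delta)$ vertices. With $\ell=\gamma^2\Delta$ you get $O(t\log n/(\gamma^2\Delta))$ calls and $O(t\log n/\gamma^2)$ recourse, losing a factor of $1/\gamma$.

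The paper's \Cref{lem:prob-slack-sparse} instead gives $2^{-\delta^5\gamma\Delta}$ via Talagrand's inequality applied to the certifiable, Lipschitz counts $Y$ and $Z$. Its tail exponent is governed by $\mathbb{E}f=O(\gamma\Delta)$ rather than by the roughly $\Delta$ underlying variables. This yields $O(t\log n/(\gamma\Delta))$ \Resample calls. That bound, not the $O(1/\gamma)$ trials in your step (iii), is the source of the $1/\gamma$ in the final update time.
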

\Cref{cor:locally-sparse} gives a dynamic algorithm for the classical result of~\cite{MolloyR97} on coloring $\gamma$-sparse graphs with less than $(\Delta+1)$ colors. Coloring sparse vertices
is also often used as subroutines in other coloring problems (including~\Cref{res:main} and~\cite{BehnezhadRW25,FlinH25}), and thus~\Cref{cor:locally-sparse} can be its own independent interest. 

\subsection{Our Techniques}\label{sec:techniques}
We give a detailed overview of our
approach in~\Cref{sec:high-level}, and stick to high level bits here. 

Our algorithm, similar to~\cite{BehnezhadRW25,FlinH25}, 
uses \emph{sparse-dense decompositions}~\cite{Reed98} that have found numerous applications in distributed and sublinear algorithms~\cite{HarrisSS16, AssadiCK19,ChangLP2020}. Both~\cite{BehnezhadRW25,FlinH25} spent bulk of their efforts in coloring the dense parts of the graph and primarily relied on existing ideas---the \emph{one-shot coloring technique}~\cite{MolloyR02, HarrisSS16, AssadiCK19, ChangLP2020} and sublinear time algorithms for $(\Delta+1)$ coloring~\cite{AssadiCK19,AssadiY25}---for coloring the sparse parts. 

As pointed out by~\cite{FlinH25}, 
``the randomized argument underlying this approach [of one-shot coloring for sparse vertices] is fragile, requiring a complete rerun every $\Theta(\Delta)$ update''. Since each run of coloring sparse vertices requires nearly $\Theta(n^2/\Delta)$ time (see~\cite{AssadiCK19} for both upper and lower bounds), even just this step leads to a $\Theta(n^2/\Delta^2)$ amortized update time. Therefore, at best, these algorithms can have $\Omega(\min\set{\Delta,(n^2/\Delta^2)})$ update time\footnote{The minimum is because one can always run the naive deterministic $O(\Delta)$ update time algorithm instead.} which can be $\Omega(n^{2/3})$ in the worst case. This is the ``natural barrier'' identified by~\cite{FlinH25}, who optimized the arguments in~\cite{BehnezhadRW25} using structural insights on dense parts of the graph, to reduce the update time to this threshold. 

Our work complements the approaches in~\cite{BehnezhadRW25,FlinH25} by replacing the ``fragile'' arguments for sparse vertices. Instead of restarting with a sublinear time
algorithm at $\Theta(\Delta)$ intervals, we design a dynamic algorithm that \emph{continuously} maintains a coloring of sparse vertices with no restarts. 
The key to our algorithm
is the \textbf{dynamic entropy compression} method---a dynamic version of the celebrated technique of~\cite{moser2009, moserTardos2010} for algorithmic Lovasz Local Lemma (LLL)---that was discovered very recently and independently in~\cite{AssadiY2026, HaeuplerMRST2026}. 
 A main technical contribution of our work is to use this technique beyond local search to analyze an \emph{inherently} dynamic process,
 which in turn allows us 
 to obtain update times that can be sublinear in the number of involved constraints (as opposed to~\cite{AssadiY2026, HaeuplerMRST2026}; see~\Cref{sec:high-level} for more details).  


The remainder of our algorithm mostly follows prior work in~\cite{BehnezhadRW25,FlinH25} with three crucial differences: (1) we need a new classification of dense parts and a new algorithm for handling a certain family of dense vertices 
missing from~\cite{BehnezhadRW25,FlinH25}, and (2) we no longer have any notion of restarts and phases that was used in~\cite{FlinH25,BehnezhadRW25}  even when coloring dense vertices; and (3) we obtain our ``high probability guarantees'' in an
amortized way that relies on some less standard techniques. We discuss our algorithm and its similarities and differences with~\cite{FlinH25,BehnezhadRW25} as well as its connections to dynamic entropy compression~\cite{AssadiY2026, HaeuplerMRST2026} in~\Cref{sec:high-level}. 

Finally, we note that, as pointed out by~\cite{FlinH25}, $O(\sqrt{n})$ forms a ``fundamental limit'' of current techniques: 
finding an available color for recoloring a vertex $v$ is primarily done by either (1) iterating over all its neighbors or (2) picking a random color and going over all vertices with that 
color to check if any are neighbor to $v$. 
The former strategy requires $O(\Delta)$ time and the latter $O(n/\Delta)$ (expected) time (given a random color is used by $n/(\Delta+1)$ vertices in expectation), leading to $\Theta(\sqrt{n})$ time
for recoloring a vertex in the worst-case (of course, assuming a random color even works for coloring the vertex with ``good enough'' probability, which involves the bulk of the effort in algorithms using this subroutine such as~\cite{AssadiCK19,AssadiY25,BehnezhadRW25,FlinH25})%
\footnote{We emphasize that this $\Theta(\sqrt{n})$ is not a formal lower bound, but rather the ``limit'' of current $(\Delta+1)$ coloring algorithms using random trial arguments
which is currently the only known method for dynamic and sublinear-time algorithms. But, it is also worth mentioning that in case of sublinear-time algorithms, $\Omega(n\sqrt{n})$ is a formal lower bound on the time needed to $(\Delta+1)$ color a graph~\cite{AssadiCK19}, 
implying that we cannot hope for a faster subroutine for coloring a \emph{single} vertex \emph{statically} better than $\Theta(\sqrt{n})$ time in the worst case.}.

\paragraph{Independent and concurrent work.} Parallel to us, Gil and Halldorsson \cite{HGil2026} obtained a fully dynamic $(\Delta+1)$-coloring algorithm with $\tilde{O}(\sqrt{n})$ worst case update time against an adaptive adversary.
The two works use entirely disjoint techniques.
Our approach combines dynamic entropy compression with sparse–dense decomposition.
In contrast, they entirely forego the use of sparse–dense decomposition and instead employ a phased-based algorithm that uses global sampling to maintain a set of core vertices whose coloring can be updated in $\tilde{O}(\sqrt{n})$ time, and extends this coloring to the remaining vertices through random color trials.






%% file: prelim.tex

\newcommand{\varbs}[1]{\textnormal{\ensuremath{\textsf{#1}}}}

\newcommand{\out}{\textnormal{\ensuremath{\varbs{out}}}}
\newcommand{\antiDegree}{\textnormal{\ensuremath{\varbs{anti}}}}

\newcommand{\avgout}{\textnormal{\ensuremath{\overline{\varbs{out}}}}}
\newcommand{\avgantiDegree}{\textnormal{\ensuremath{\overline{\varbs{anti}}}}}

\newcommand{\OUT}{\textnormal{\ensuremath{\varbs{OUT}}}}
\newcommand{\ANTIDegree}{\textnormal{\ensuremath{\varbs{ANTI}}}}

\newcommand{\Avail}[2]{\textnormal{\ensuremath{\varbs{AVAIL}_{#2}(#1)}}}
\newcommand{\avail}[2]{\textnormal{\ensuremath{\varbs{avail}_{#2}(#1)}}}
\newcommand{\Free}[2]{\textnormal{\ensuremath{\varbs{Blank}_{#2}(#1)}}}
\newcommand{\free}[2]{\textnormal{\ensuremath{\varbs{blank}_{#2}(#1)}}}

\newcommand{\slack}[2]{\textnormal{\ensuremath{\varbs{slack}_{#2}(#1)}}}

\newcommand{\constMD}{\ensuremath{\textnormal{const}_{\ref{prop:maintaining-decomposition}}}}
\newcommand{\constSD}{\ensuremath{\textnormal{const}_{\ref{def:sparse-dense-decomp}}}}

\newcommand{\Vsparse}{\ensuremath{V_{\textnormal{\texttt{sparse}}}}}

\newcommand{\KKah}{\ensuremath{\mathcal{K}_{\textnormal{\texttt{AH}}}}}
\newcommand{\KKoh}{\ensuremath{\mathcal{K}_{\textnormal{\texttt{OH}}}}}
\newcommand{\KKol}{\ensuremath{\mathcal{K}_{\textnormal{\texttt{OL}}}}}

\newcommand{\Ncolor}[2]{\textnormal{\ensuremath{\varbs{colored}_{#2}(#1)}}}
\newcommand{\Setcolored}[2]{\textnormal{\ensuremath{\varbs{Colored}_{#2}(#1)}}}

\newcommand{\RandomColorTrial}{\textnormal{\ensuremath{\algs{RandomColorTrial}}}\xspace}

\section{Preliminaries}\label{sec:prelim}

\paragraph{Notation.} For any integer $b \geq 1$, define $[b] := \set{1,\cdots,b}$. 
For a graph $G=(V,E)$, we use $N(v)$ to denote the neighbors of $v \in V$, and $\deg(v):=\card{N(v)}$ to denote the degree of $v$. 
For a set $X \subseteq V$, we further use $N_X(v) := N(v) \cap X$ and $\deg_X(v) := \card{N_X(v)}$ to denote the neighbors and degree of $v$ to $X$. 
We say $(u,v)$ is an \emph{anti-edge} in $G$ if $u$ and $v$ are not neighbor in $G$. 


A \textbf{partial coloring} of a graph $G=(V,E)$ is a function
$\varphi : V \to \mathcal{C} \cup \{\perp\}$, where $\mathcal{C}$ is a set of colors and
$\perp$ denotes an uncolored vertex, also referred to as a \emph{blank} vertex.
A partial coloring $\varphi$ is said to be \emph{proper} if no edge is monochromatic, however vertices assigned color $\perp$ are exempt, i.e., are allowed to be adjacent to each other. 
For a color $c\in [\Delta+1]$, we call the set of all vertices with color $c$ under the coloring $\varphi$ the \textbf{color class} of $c$ and use $\mathcal{C}_c$ to denote it.

For a vertex $v \in V$, the set $\Avail{v}{\varphi}$ denotes the colors \emph{available} to $v$ under $\varphi$, namely, colors whose assignments to $v$ keeps $\varphi$ proper. 
Moreover, $\Free{v}{\varphi}$ denotes the neighbors of $v$ that are blank and $\Setcolored{v}{\varphi}$ to be neighbors that are colored, both under $\varphi$. 
We further define 
\begin{alignat*}{2}
 	&\avail{v}{\varphi} := \card{\Avail{v}{\varphi}}, \quad &&\free{v}{\varphi} := \card{\Free{v}{\varphi}}, \\
	&\Ncolor{v}{\varphi} :=\abs{\Setcolored{v}{\varphi}}, \quad \quad &&\slack{v}{\varphi} := \avail{v}{\varphi} - \free{v}{\varphi}. 
\end{alignat*}
Notice that $\slack{v}{\varphi}$ measures the number of ``extra colors'' available to $v$ under the coloring $\varphi$, and that $\slack{v}{\varphi} \geq 1$ always (for any $(\Delta+1)$ coloring $\varphi$). 

\paragraph{Dynamic graphs.} We denote the static vertex set of the underlying $n$-vertex graph by $V := [n]$.  
Let $\mathcal{G} = (G_0, G_1,\cdots, G_t, \cdots)$ be the sequence of dynamic graphs presented to the algorithm: the initial graph $G_0$ is empty and each graph $G_t$ is obtained from the previous graph $G_{t-1}$ by inserting 
or deleting a single edge $e_t = (u_t,v_t)$. We use $G_t := (V,E_t)$ to denote the graph at the $t$-th step; $N_t(v)$ and $\deg_t(v)$ for any $v \in V$ and $r \geq 0$ are defined analogously with respect to $G_t$. 

As is common in this context, we assume $\Delta$ is known a priori and every vertex has degree at most $\Delta$ throughout the updates. 

An \textbf{oblivious} adversary fixes the choices of graphs in $\mathcal{G}$ before presenting them to the algorithm whereas an \textbf{adaptive} adversary can choose each graph $G_t$ based on the output of the algorithm
on prior graphs $G_1,\ldots,G_{t-1}$ (in our case, we can even assume the prior random coin tosses of the algorithm are available to the adversary; this is often called a clairvoyant adversary in the literature; see., e.g., \cite{HaeuplerMRST2026}). 


Finally, the \textbf{recourse} of a dynamic coloring algorithm is the number of changes to the output of the algorithm after each single update.

\paragraph{Concentration Inequalities.} We use the following standard version of Chernoff bound. 

\begin{proposition}[Chernoff-Hoeffding bound; cf.~\cite{MolloyR02}]\label{prop:chernoff}
	Let $X_1,\ldots,X_n$ be $n$ independent random variables where each $X_i \in [0,b]$. Define $X:= \sum_{i=1}^{n} X_i$. Then, for any $t > 0$, 
	\begin{align*}
		\Pr\Paren{\card{X - \expect{X}} > t} \leq 2 \cdot \exp\paren{-\frac{2t^2}{n \cdot b^2}}.
	\end{align*}
	Moreover, for any $\delta \in (0,1)$, and $\mu_{min} \leq \expect{X} \leq \mu_{max}$: 
	\begin{align*}
		\Pr\paren{X > (1+\delta) \cdot \mu_{max}} \leq \exp\paren{-\frac{\delta^2 \cdot\mu_{max}}{3b}} \quad , \quad		\Pr\paren{X < (1-\delta) \cdot\mu_{min}} \leq \exp\paren{-\frac{\delta^2 \cdot\mu_{min}}{2b}}.
	\end{align*} 
\end{proposition}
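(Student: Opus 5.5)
This is the standard Chernoff--Hoeffding bound cited from~\cite{MolloyR02}, and I will prove it by the usual exponential-moment (Bernstein trick) argument. By symmetry it suffices to bound each tail separately and add the two bounds. For any $\lambda>0$, Markov's inequality applied to $e^{\lambda X}$ gives $\Pr(X - \expect{X} > t) \leq e^{-\lambda t}\,\expect{e^{\lambda (X-\expect{X})}}$. Independence factors the moment generating function as $\prod_{i}\expect{e^{\lambda(X_i - \expect{X_i})}}$.

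For the additive bound I will invoke \emph{Hoeffding's lemma}: a random variable $Y\in[0,b]$ satisfies $\expect{e^{\lambda(Y-\expect{Y})}} \leq e^{\lambda^2 b^2/8}$. To prove it, I will use convexity of $e^{\lambda y}$ to dominate it on $[0,b]$ by the chord through its endpoints. This reduces the claim to bounding $\psi(h)=-ph+\ln(1-p+pe^{h})$ with $h=\lambda b$ and $p=\expect{Y}/b$. Since $\psi(0)=\psi'(0)=0$ and $\psi''\leq 1/4$, Taylor's theorem gives $\psi(h)\le h^2/8$. Multiplying over $i$ yields $\Pr(X-\expect{X}>t)\leq \exp(-\lambda t + n\lambda^2 b^2/8)$. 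Optimizing at $\lambda = 4t/(nb^2)$ gives $\exp(-2t^2/(nb^2))$. The lower tail is the same argument applied to $-X_i$, and together these give the factor $2$.

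For the multiplicative bounds I first rescale to $Y_i := X_i/b \in [0,1]$, so it suffices to treat $b=1$ with $\mu$ replaced by $\mu/b$. This explains the $b$ in the denominators. For $Y_i\in[0,1]$ with mean $p_i$, convexity gives $\expect{e^{\lambda Y_i}} \leq 1 + p_i(e^\lambda-1) \leq \exp(p_i(e^\lambda-1))$, hence $\expect{e^{\lambda Y}}\leq \exp(\mu(e^\lambda - 1))$.

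The one point needing care is that we are only given $\expect{X}\le\mu_{max}$ rather than equality. Since $e^\lambda-1>0$, the bound $\exp(\mu(e^\lambda-1))$ is monotone in $\mu$, so I may replace $\mu$ by $\mu_{max}$. With $\lambda=\ln(1+\delta)$ this gives the standard expression
\[
\Pr(Y>(1+\delta)\mu_{max}) \le \Big(\tfrac{e^{\delta}}{(1+\delta)^{1+\delta}}\Big)^{\mu_{max}} .
\]
For the lower tail I use $\lambda<0$ and $\expect{e^{\lambda Y}}\le \exp(\mu(e^{\lambda}-1))$, which is now decreasing in $\mu$, so I may replace $\mu$ by $\mu_{min}$. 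Taking $\lambda = \ln(1-\delta)$ gives $\big(e^{-\delta}/(1-\delta)^{1-\delta}\big)^{\mu_{min}}$.

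The main obstacle is the final elementary calculus. For the upper tail I need $\delta - (1+\delta)\ln(1+\delta) \le -\delta^2/3$ on $(0,1)$. For the lower tail I need $-\delta-(1-\delta)\ln(1-\delta)\le -\delta^2/2$. I would prove each by checking that the difference vanishes at $\delta=0$ and has a derivative of the correct sign. For the upper tail this uses $\ln(1+\delta)\ge 2\delta/(2+\delta)$; for the lower tail it uses the series expansion of $(1-\delta)\ln(1-\delta)$.
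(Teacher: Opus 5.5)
Your proof is correct: Hoeffding's lemma (chord bound plus $\psi''\le 1/4$) gives the additive bound with $\lambda = 4t/(nb^2)$, and the estimate $\exp(\mu(e^\lambda-1))$, monotone in $\mu$ in the appropriate direction for $\lambda>0$ and for $\lambda<0$, combined with your two elementary inequalities gives the multiplicative bounds after rescaling by $b$. The only unaddressed case is $\mu_{min}<0$, and it is trivial because the lower-tail event is then empty. The paper does not prove this proposition at all but quotes it from Molloy and Reed~\cite{MolloyR02}, and your argument is the standard exponential-moment proof of such bounds.
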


We will also be using some more specialized concentration inequalities in certain parts of the proofs such as Talagrand's inequality~\cite{talagrand1995concentration,MolloyR02} or an adaptive form of Chernoff bound~\cite{freedman1975tail,kuszmaul2025}
but postpone stating them to when we need them.

\paragraph{Random Color Trials.} The following is a standard subroutines used by our algorithm and various prior work~\cite{AssadiY25, BehnezhadRW25,FlinH25}. 


\begin{ourbox}
	$\RandomColorTrial(v)$: Sample a color $c$ for vertex $v$ and if $N(v)\cap \mathcal{C}_c = \emptyset$ color $v$ by $c$ and terminate, otherwise repeat.
\end{ourbox}

\begin{proposition}\label{prop:random-color-trial}
	Given a vertex $v$ and partial coloring $\varphi$, the subroutine $\RandomColorTrial(v)$ finds an available color for $v$ in expected $O(n/\avail{v}{\varphi})$ time.	
\end{proposition}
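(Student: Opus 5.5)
Each iteration of $\RandomColorTrial(v)$ samples a color $c$ uniformly from $[\Delta+1]$ and checks whether it works. I will bound two things separately: the probability that an iteration succeeds, and the expected cost of one iteration. The partial coloring $\varphi$ is fixed throughout the subroutine, since nothing changes until $v$ is colored and we stop.

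\emph{Success probability.} An iteration succeeds exactly when $c \in \Avail{v}{\varphi}$, which happens with probability $p := \avail{v}{\varphi}/(\Delta+1)$. Iterations use independent samples, so the number of iterations $T$ is geometric with mean $1/p = (\Delta+1)/\avail{v}{\varphi}$.

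\emph{Cost of one iteration.} The check $N(v)\cap \mathcal{C}_c = \emptyset$ is done by scanning the color class $\mathcal{C}_c$ and testing adjacency of each member to $v$ in $O(1)$ time, using adjacency hash tables. The class sizes are maintained as part of the data structure. Scanning costs $O(1+\card{\mathcal{C}_c})$.

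The step needing care is that failed and successful iterations have different conditional costs, so we cannot simply multiply $\expect{T}$ by the unconditional expected cost $O(1+n/(\Delta+1))$. Both conditional costs are still small:
\begin{itemize}
\item Conditioned on failure, $c$ is uniform over the unavailable colors. There are at least $\Delta+1-\avail{v}{\varphi}$ of them. If $\avail{v}{\varphi} \leq (\Delta+1)/2$, the conditional expected class size is at most $n/((\Delta+1)/2) = O(n/(\Delta+1))$.
\item Conditioned on success, the single successful scan costs at most $O(\sum_{c\in\Avail{v}{\varphi}} \card{\mathcal{C}_c}/\avail{v}{\varphi}) \le O(n/\avail{v}{\varphi})$.
\end{itemize}
By Wald's identity, the total expected time in the case $\avail{v}{\varphi} \leq (\Delta+1)/2$ is
\[
O\!\left(\frac{\Delta+1}{\avail{v}{\varphi}}\cdot\frac{n}{\Delta+1}\right) + O\!\left(\frac{n}{\avail{v}{\varphi}}\right) = O\!\left(\frac{n}{\avail{v}{\varphi}}\right).
\]

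Now suppose $\avail{v}{\varphi} > (\Delta+1)/2$. Then $p>1/2$, so $\expect{T} \le 2$. Each iteration costs at most $O(1+n)$ in the worst case, but that crude bound is too weak. Instead I bound the total expected cost by $\expect{T}$ times the maximum over conditionings of the expected scan size. In every conditioning the expected scan size is at most $\sum_c \card{\mathcal{C}_c}$ divided by the size of the conditioning set. For failures that set may be small, but a failure happens with probability $1-p$, so the failure contribution is $(1-p)\cdot \sum_{c\notin \Avail{v}{\varphi}} \card{\mathcal{C}_c}/((\Delta+1)(1-p)) \le n/(\Delta+1)$. The total over the $O(1)$ expected iterations is therefore $O(n/(\Delta+1)) = O(n/\avail{v}{\varphi})$.

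The same unconditional computation actually gives a uniform argument that avoids the case split. Each iteration's expected cost is $\sum_c \card{\mathcal{C}_c}/(\Delta+1) \le n/(\Delta+1)$ regardless of outcome, and the event of continuing depends only on past iterations. Wald's identity then gives expected total time at most $\expect{T}\cdot O(1+n/(\Delta+1)) = O(n/\avail{v}{\varphi})$, using $n \ge \Delta+1$. This is what I would write.
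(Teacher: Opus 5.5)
Your final uniform argument is correct and is essentially the paper's proof: the paper writes the renewal equation $\Exp[\text{time}] = O(1)\cdot\Exp\card{\mathcal{C}_{c_1}} + (1-p)\cdot\Exp[\text{time}]$ with $p=\avail{v}{\varphi}/(\Delta+1)$ and solves it, which is the same as your Wald's identity bound $\Exp[T]\cdot\Exp[\text{cost of one iteration}]$ using the unconditional per-iteration cost. Your case split on whether $\avail{v}{\varphi}$ exceeds $(\Delta+1)/2$ is also correct but unnecessary, because the conditional-cost concern goes away once you note that the event $\{T\ge i\}$ is independent of the $i$-th sampled color.
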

\begin{proof}
	First note that as there are $n$ vertices and $\Delta+1$ colors, for a uniformly at randomly sampled color $c$, we have
	\[
	\Exp\abs{\mathcal{C}_{c}} \leq \frac{n}{\Delta+1}.
	\]
	
	Let $c_1$ be the first random color sampled by $\RandomColorTrial(v)$.
	We can expand the expected runtime by going over the cases where the first sampled color is available to $v$ or not as below
	\begin{align*}
		&\expect{\text{runtime of }\RandomColorTrial(v)} 
		\\
		&= O(1)\cdot \Exp\abs{\mathcal{C}_{c_1}} + (1-\Pr(c_1\in \Avail{v}{\varphi}))\cdot \expect{\text{runtime of }\RandomColorTrial(v)};
	\end{align*}
	this is because if $c_1$ is not available, we are going to run the same exact process again. 
	Finally, the equation solves to 
	\begin{align*}
		\expect{\text{runtime of }\RandomColorTrial(v)} 
		&= O(1)\cdot 1/\Pr(c_1\in \Avail{v}{\varphi})\cdot \Exp\abs{\mathcal{C}_{c_1}} \\
		&= O(1)\cdot \frac{\Delta+1}{\avail{v}{\varphi}}\cdot \frac{n}{\Delta+1} = O\paren{\frac{n}{\avail{v}{\varphi}}},
	\end{align*}
	concluding the proof.
\end{proof}

%% file: technical-overview.tex

\section{Technical Overview}\label{sec:high-level}

We now give a streamlined overview of our proof of~\Cref{res:main} and a comparison with prior work in~\cite{BehnezhadRW25,FlinH25} as well as~\cite{AssadiY2026,HaeuplerMRST2026}. Moreover, given the importance of the dynamic entropy compression technique in our work---and our hope that it can be a general technique of its own independent interest for designing \emph{randomized} dynamic algorithms against adaptive adversaries---we attempt to give a full account of this technique in our overview. 
This section oversimplifies many details and the discussions will be informal for the sake of intuition. 

\subsection{Prior Approaches and Challenges}
\subsubsection*{Background: Sparse-Dense Decompositions and One-Shot Coloring} 

We start with the general paradigm of graph coloring via sparse-dense decompositions.  
At a high level, a sparse-dense decomposition partitions the vertices of the graph into
two separate parts: 
\begin{itemize}
	\item \textbf{sparse vertices}: vertices with $\Omega(\Delta^2)$ edges \emph{missing} from their neighborhoods, that is their neighborhoods are ``far from'' being a $(\Delta+1)$-clique; 
	\item \textbf{dense vertices}: remaining vertices, clustered into \textbf{almost-cliques} $K_1,...,K_k$, each of which can be turned a $(\Delta+1)$-clique by adding or removing $o(1)$ fraction of their vertices and edges.
\end{itemize}
See~\Cref{sec:sparse-dense-decomp} for the formal definition and~\Cref{fig:sparse-dense} for an illustration. 

\begin{wrapfigure}{r}{0.35\textwidth}
    \centering
    \vspace{-10pt}
    \includegraphics[width=\linewidth]{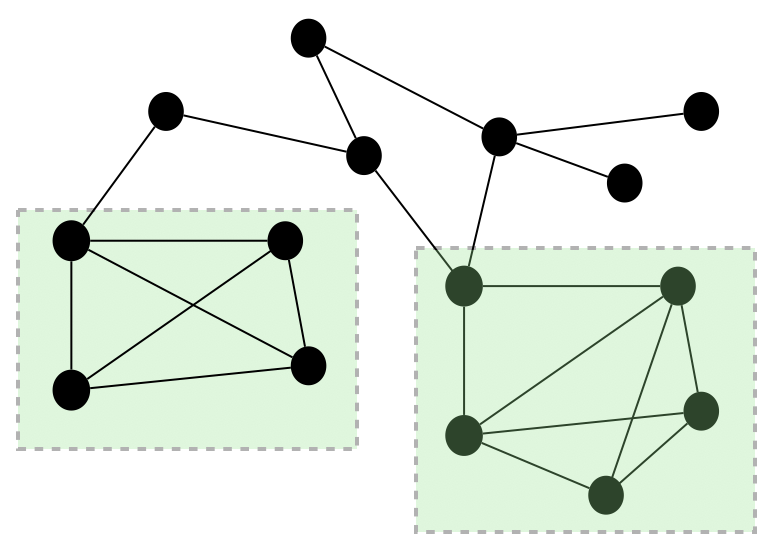}
    \caption{An illustration of a sparse-dense decomposition on a graph with $\Delta=4$ (each box is an almost-clique and remaining vertices are sparse).}
    \label{fig:sparse-dense}
    \vspace{-10pt}
\end{wrapfigure}

The appeal of this decomposition across various scenarios is as follows. Coloring sparse vertices can effectively be reduced to an $O(\Delta)$ coloring problem, that is, with a non-trivial slack compared to the ``rigid'' $(\Delta+1)$ coloring. 
On the other hand, almost-cliques have a lot more structure than arbitrary graphs that one can exploit to obtain a $(\Delta+1)$ coloring. We refer the interested reader to~\cite{MolloyR10,AssadiCK19,ChangLP2020} for quite different instantiations of this general paradigm. 

The reduction for sparse vertices is often obtained via the so-called \emph{one-shot coloring} technique: suppose we independently sample a color from $(\Delta+1)$ for each vertex and assign it if none of its neighbors
have sampled the same color. It is not hard to argue that each vertex has an $\Omega(1)$ probability of being colored in this process. What is more interesting---and quite harder to prove---is that in fact, in the neighborhood of each sparse vertex, 
there will be $\Omega(\Delta)$ \emph{anti-edges} whose both endpoints retain the \emph{same} color (see~\cite{MolloyR02, HarrisSS16, ChangLP2020} for various proofs of this). 
Hence, from hereon, each sparse vertex has a slack of $\Omega(\Delta)$ colors and one can exploit this slack effectively \emph{as if} the goal was an $O(\Delta)$ coloring instead. 

Finally, coloring the dense parts of the graph often involves the bulk of the efforts and are done using a wide range of techniques that are quite tailored to specific setting at hand. 

\subsubsection*{Sparse-Dense Decompositions and One-Shot Coloring in~\cite{BehnezhadRW25,FlinH25}}

The work of~\cite{BehnezhadRW25} is the first to use this paradigm for designing \emph{dynamic} algorithms, followed by~\cite{FlinH25} who further optimized this approach, especially in terms of handling dense vertices. 

Specifically,~\cite{BehnezhadRW25} developed an algorithm for maintaining a sparse-dense decomposition of a fully dynamic graph at all time steps. This algorithm is randomized and its 
update time depends on the sparsity parameter of the decomposition ($\eps$ in~\Cref{sec:sparse-dense-decomp}). The sparsity parameter is (substantially) sub-constant in~\cite{BehnezhadRW25} but it is a small constant
in~\cite{FlinH25} as well as ours and thus the decomposition can be maintained in $O(\log{n})$ amortized update time for~\cite{FlinH25} and us. Hence, for the rest of this discussion we assume having access to this decomposition for free. 

Both~\cite{BehnezhadRW25,FlinH25} color the sparse vertices in \emph{phases} as follows. Each phase consists of $\Theta(\Delta)$ updates of the adversary. 
At the beginning of each phase, they uncolor all vertices and run the one-shot coloring algorithm described earlier to create $\Omega(\Delta)$ slack for sparse vertices. This step can be implemented
in $\Ot(n^2/\Delta)$ time using the sublinear time $(\Delta+1)$ coloring algorithms of~\cite{AssadiCK19,AssadiY25}. 
Each update can reduce the slack of any fixed vertex by only one. As such,  
throughout the entire $\Theta(\Delta)$ updates of the phase, the sparse vertices still maintain an $\Omega(\Delta)$ slack.  

Given the maintained slack, dynamic coloring of sparse vertices can be done using \emph{random color trials} in $O(n/\Delta)$ expected time (by \Cref{prop:random-color-trial} and the $\RandomColorTrial$ subroutine). 
Of course, we also have to maintain this slack which
requires \emph{restarting} the entire process for each phase and spending $\Ot(n^2/\Delta)$ time to run one-shot coloring via~\cite{AssadiCK19,AssadiY25} (given the sublinear time lower bound of~\cite{AssadiCK19}, we also
cannot hope to run this step faster). 
Amortized over the $\Theta(\Delta)$ updates of the phase, this leads to an $\Ot(n^2/\Delta^2)$ amortized time which is the dominating 
term in the algorithm. As we can balance this update time against the trivial $\Delta$ update time algorithm, this leads to an $\Ot(n^{2/3})$ amortized update time algorithm for $(\Delta+1)$ coloring sparse vertices.

Of course,~\cite{BehnezhadRW25,FlinH25} also need to handle dense vertices which have their own unique challenges. But as is clear from the above discussion, the $\Ot(n^{2/3})$ bottleneck in the update time already applies to coloring
sparse vertices and a faster algorithm needs to handle these parts differently. We can thus already switch to discussing how to handle this challenge. 


\subsubsection*{Coloring Sparse Vertices without Phases and Restarts?} 

A natural goal is to let go of phases and restarts, while still maintaining the \emph{invariant} that $\slack{v}{\varphi} = \Omega(\Delta)$ for output coloring $\varphi$ and every sparse vertex $v$ at all times. 
Assuming we maintain this, the same random trial approaches of~\cite{BehnezhadRW25,FlinH25} ensures that we can recolor any vertex made blank by the adversary in $O(n/\Delta)$ expected time, which balanced out with the $O(\Delta)$ time algorithm leads to an $O(\sqrt{n})$ update time algorithm as desired. The challenge is clearly in maintaining this {invariant} as there are various ways for the adversary to force our coloring it to violate it across many vertices, e.g., by replacing a same-colored anti-edge
with an edge, that can affect slack of $\Omega(\Delta)$ vertices in one go. 

\clearpage

\begin{wrapfigure}{r}{0.5\textwidth}
    \centering
    \vspace{-10pt}
    \includegraphics[width=\linewidth]{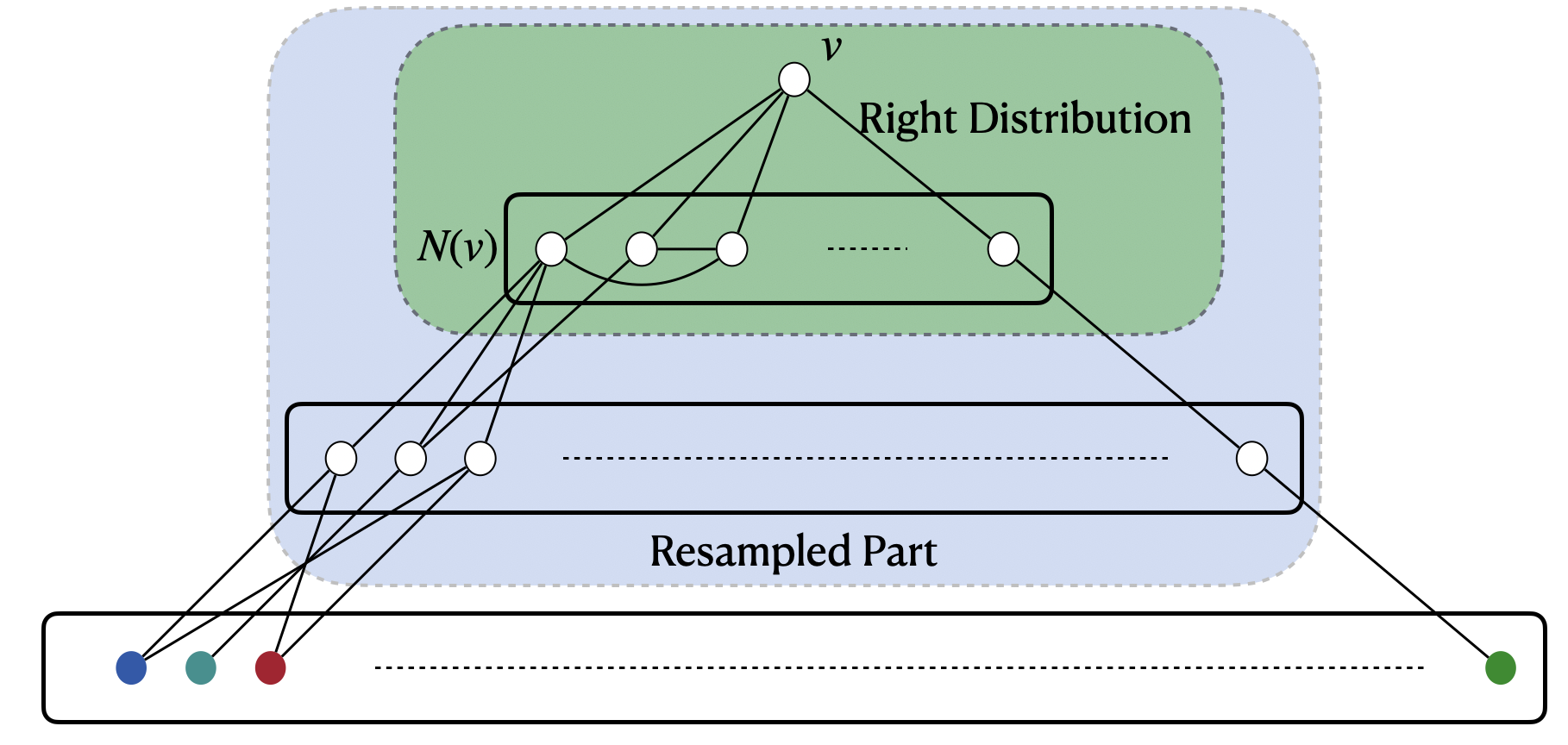}
    \caption{Resampling random colors in two-hop neighborhood of $v$ ensures neighbors of $v$ have the same distribution as one-shot coloring. This is not true
    for all resampled vertices given the colors blocked by their own neighbors that are not resampled.}
    \label{fig:two-hop}
    \vspace{-15pt}
\end{wrapfigure}
A natural approach---especially in light of the recent successes in~\cite{AssadiY2026,HaeuplerMRST2026}---is to \emph{``locally'' fix} any vertex $v$ when it violates the invariant. This means 
resampling the colors in the \emph{two-hop} neighborhood of $v$ to ensure the distribution of color assignments in $N(v)$ itself
follows that of one-shot coloring. Recoloring  fixes the invariant for the vertex $v$ with high probability (by same one-shot coloring arguments as before) but it has the  potential to violate it for vertices in the \emph{three-hop} neighborhood of $v$. We can then just continue this fixing for other violated vertices recursively. 

At first glance, it may not be clear why such an approach would terminate even on a \emph{static} graph as there is no clear notion of ``progress'' in these local fixes. This question however has been answered by an elegant line of work on algorithmic Lovasz Local Lemma (LLL) and the \emph{entropy compression} technique of~\cite{moser2009, moserTardos2010} provides the analytical tools for measuring such a progress. 
Very recently,~\cite{AssadiY2026,HaeuplerMRST2026} developed a \emph{dynamic} entropy compression as well that is tailored to analyzing dynamic algorithms of the type we described. 
In particular,~\cite{HaeuplerMRST2026} provides a general framework that allows one to dynamically maintain almost any assignment guaranteed to exist by LLL. Thus, the framework of~\cite{HaeuplerMRST2026} directly guarantees the convergence of
the dynamic algorithm we described after each update. 

The caveat is in the \emph{rate} of convergence: 
the update time this way will have some $\poly(\Delta)$ term dependence (it will be $O(\Delta^3)$ or $O(\Delta^2)$ at best depending on which implementation from~\cite{AssadiY2026,HaeuplerMRST2026} is used but certainly it cannot be $o(\Delta)$ time). 
This entirely defeats the purpose for $(\Delta+1)$ coloring as we could have just run the trivial $O(\Delta)$ update time algorithm instead%
\footnote{
As stated earlier~\cite{AssadiY2026} does in fact achieve an $n^{o(1)} = o(\Delta)$ update time for $O(\Delta/\ln(\Delta))$ coloring of triangle-free graphs. But, this is done in two steps: (1) designing an $O(\Delta^2\log{n})$
update time algorithm using dynamic entropy compression, and, then (2) using this as a \emph{black-box} in a reduction inspired by~\cite{BhattacharyaCHN18} to reduce the update time to $\Delta^{o(1)}\log{n}$ while losing a constant factor 
in the number of colors used. Such an ``approximate'' reduction is inherently not suitable for $(\Delta+1)$ coloring so this two-step approach is not applicable for us (in fact, the reduction does not even work in this context
for $\gamma$-sparse graphs in~\Cref{cor:locally-sparse} for $\gamma < 1$, even ignoring the approximation aspect).}.


Nevertheless, a simple observation is that
the problem we need to solve is \emph{not} a local problem---in the sense of requiring any LLL argument to begin with \emph{if} we were considering it in the \emph{static} setting---and thus we can deviate from 
the specific frameworks of~\cite{moser2009, moserTardos2010} and other LLL approaches and focus on the problem in an \emph{inherently dynamic} way. Our approach exploits this. 

\subsection{Our Approach for Sparse Vertices}
\subsubsection*{Our Algorithm for Coloring Sparse Vertices}
The goal in our algorithm is to have a partial coloring $\colS$ which can generate $\Omega(\Delta)$ slack for sparse vertices, and then further \emph{extend} it by coloring its blank vertices in the output coloring $\varphi$ (by definition, 
any slack generated by $\colS$ will also be inherited by $\varphi$). Using a similar language as in the LLL literature, we say a vertex $v$ has a \textbf{flaw} iff $\slack{v}{\colS} = o(\Delta)$. While the typical goal, e.g., in~\cite{HaeuplerMRST2026}, 
would be to maintain a flawless coloring, this is \emph{not} what we will do. Instead, we will 
only fix this flaw on a need-based basis. Specifically, we first have the following subroutine (for now, let us assume we only have sparse vertices; we will come back to dense vertices later): 
\begin{ourbox}
$\Attempt(v)$: attempts to color $v$ in $\varphi$ by sampling $\Theta(\log{n})$ random colors for it and returns \Fail if it could not find an available color for $v$ among any of them.  
\end{ourbox}
\vspace{-5pt}
Similar to random color trials (\Cref{prop:random-color-trial}),~\Attempt only requires $\Ot(n/\Delta)$ expected time for coloring $v$ \emph{assuming} $\slack{v}{\varphi} = \Omega(\Delta)$, i.e., $v$ is flawless. 

Of course, there will be time steps wherein $\Attempt(v)$ returns \Fail (essentially) when $v$ has a flaw. In that case, we will try to \emph{just} fix $v$ by recoloring its neighbors (as opposed to two-hop neighborhoods
required by ``LLL approaches'' as in~\cite{HaeuplerMRST2026}). For this resampling to be successful, we need the list of available colors for vertices to be large enough to ensure the probabilistic guarantees needed for creating enough slack for $v$ hold. 
We also need the {distribution} of resamplings here to be \emph{oblivious} to the current state of the graph and underlying colorings $\colS$ and $\varphi$.\footnote{We will discuss this in more details later but roughly speaking this is because while we are doing the resampling at some time $t$, we will be analyzing its effect at another time $t'$ and need the distributions to match.} Given these, we have the following resampling subroutine: 
\begin{ourbox}
	$\Resample(v)$: for $u \in N(v)$ in lexicographic order, run the following \textbf{sampling process}: 
	\begin{enumerate}
	\item with a small constant probability, sample a color $c_u \in [\Delta+1]$ and otherwise let $\colS(u) = \perp$;
	\item if $c_u \in \Avail{u}{\colS}$ (after updating for prior vertices in the loop), let $\colS(u) = c_u$. 
	\end{enumerate}
\end{ourbox}
We emphasize that $\Resample$ only looks at the colors in $\colS$ and not the entire $\varphi$. So, it is possible for it to assign a color to a vertex $u$ that also appears in $N(u)$ under $\varphi$ (but not $\colS$). 
Thus, after running $\Resample$, we also need a \emph{clean up} step. 

\begin{ourbox}
	$\Cleanup(v)$: after calling $\Resample(v)$: 
	\begin{enumerate}
	\item for any $u$ with a new color $c$ in $\colS$, $\varphi(u) \leftarrow c$ and \emph{uncolor} vertices in $N(u) \cap \mathcal{C}_c$ (w.r.t. $\varphi$); 
	\item for any $u$  uncolored in $\Resample$ or in the previous line, run $\Attempt(u)$. 
	\end{enumerate}
\end{ourbox}

We can now put these together with one additional component (whose role become clear in the analysis): whenever an edge $(u,v)$ is inserted or deleted, we recolor both $u$ and $v$ (in $\colS$ and by extension $\varphi$ also) using the same 
sampling process defined in $\Resample$ and do the same clean up for them (if needed). The whole dynamic algorithms is then as follows: update the edge as above, color any blank vertex by $\Attempt$ and if it returns \Fail, run $\Resample$ followed by $\Cleanup$.  

At this point, it is not even clear that $\Resample(v)$ is able to fix the flaw of the vertex $v$; for instance, if $\Avail{u}{\colS}$ is small for all $u \in N(v)$, this subroutine may end up instead just uncoloring all neighbors of $v$. So, there is no 
hope in that case. We can however prove the following result. The statement of this lemma may sound rather cryptic/over-generalized at first but this is precisely what is needed to make the magic of entropy compression work (we will discuss this later). 

\begin{lemma}[Formalized in~\Cref{lem:prob-slack-sparse}]\label{lem:informal-sampling}
	Fix any \emph{ordered} set of $\Delta$ vertices $u_1,\ldots,u_{\Delta}$ with $\Omega(\Delta^2)$ anti-edges between them. Each vertex is associated with a \textbf{blocked set} $B(u_i) \subseteq [\Delta+1]$ of size at most $\Delta/4$. 
	Suppose for every $u_i$, with a small constant probability, we sample a color $c(u_i)$ and assign it $u_i$ iff: $(1)$ the color $c(u_i) \notin B(u_i)$, and (2) no neighbor $u_j$ of $u_i$ with $j < i$ have also sampled this color. All other
	vertices are marked blank. 
	Then, 
	\[
		\Pr\paren{\text{there are less than $\Omega(\Delta)$ same-colored anti-edges between $u_1,\ldots,u_{\Delta}$}} \leq 2^{-\Omega(\Delta)}. 
	\]
\end{lemma}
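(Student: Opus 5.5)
We prove the statement by a standard one-shot-coloring argument. First we isolate a large, well-structured family of anti-edges. Then we show that many of them are monochromatic in expectation. Finally we use a martingale/Talagrand-type concentration bound with exponentially small failure probability.

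\emph{Step 1 (structure).} Since there are $\Omega(\Delta^2)$ anti-edges among $u_1,\ldots,u_\Delta$ and each vertex has at most $\Delta-1$ anti-edges inside the set, a greedy argument gives $\Omega(\Delta)$ vertices $u_i$ each with $\Omega(\Delta)$ anti-neighbors in the set. From these we extract a \emph{matching} $M$ of $\Omega(\Delta)$ vertex-disjoint anti-edges $(u_i,u_j)$. It suffices to show that $\Omega(\Delta)$ pairs of $M$ end up same-colored, because distinct pairs give distinct same-colored anti-edges.

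\emph{Step 2 (expectation).} Fix a pair $(u_i,u_j)\in M$ with $i<j$. Both vertices are activated with probability $p^2$ for the small constant $p$. The probability that they sample the same color $c$ is $1/(\Delta+1)$ per color. For $c$ we need $c\notin B(u_i)\cup B(u_j)$, which leaves at least $\Delta/2$ colors. We also need that no earlier neighbor of $u_i$ or $u_j$ samples $c$. Each vertex has at most $\Delta$ neighbors in the set, each activated with probability $p$ and sampling $c$ with probability $1/(\Delta+1)$. So this event has probability at least $(1-p/(\Delta+1))^{2\Delta}\geq e^{-3p}=\Omega(1)$, and it is monotone in the other vertices' samples. (Neighbors outside the ordered set are already folded into $B$.) Hence each pair is same-colored with probability $\Omega(p^2\cdot \tfrac{1}{2}\cdot e^{-3p}/\Delta)\cdot \Delta$ summed over colors, i.e.\ $\Omega(1)$. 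By linearity, the expected number of same-colored pairs of $M$ is $\Omega(\Delta)$.

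\emph{Step 3 (concentration).} This is the main obstacle, because the events for different pairs are dependent through shared neighbors, and a single vertex's choice can affect up to $\Delta$ pairs. We would use the standard trick from~\cite{MolloyR02,HarrisSS16}. Write $Z = X - Y$, where $X$ counts pairs in $M$ that are both activated, sample the same color, and avoid $B$. $Y$ counts those among them that are blocked by an earlier neighbor sampling the same color. $X$ is a sum over independent pairs, since $M$ is a matching, so Chernoff (\Cref{prop:chernoff}) gives $X\geq \Omega(\Delta)$ with probability $1-2^{-\Omega(\Delta)}$. For $Y$ we use Talagrand's inequality. Changing one vertex's sample changes $Y$ by $O(1)$: a vertex with color $c$ can only block the pair colored $c$ it neighbors, plus its own pair. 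Moreover, $Y\geq s$ is certified by $O(s)$ samples (the pair plus one blocker). This yields concentration of $Y$ around $\Exp[Y]$ within $o(\Delta)$ with failure probability $2^{-\Omega(\Delta)}$.

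\emph{Step 4 (conclusion).} We need $\Exp[X]-\Exp[Y]=\Omega(\Delta)$. This follows from Step 2 by taking $p$ small, since $\Exp[Y]=O(p)\cdot\Exp[X]$. The lexicographic order matters only in that blocking is determined by earlier vertices, and the certificate argument handles that.

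If Talagrand proves awkward because the blocking depends on the order, the fallback is the adaptive (Freedman-type) Chernoff bound mentioned in the preliminaries. We would expose vertices in order, and bound the martingale increments of the count of surviving same-colored pairs by $O(1)$ and the variance by $O(\Delta)$.
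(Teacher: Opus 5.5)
There is a genuine gap, and it sits in Step~2, before concentration even enters. Two vertices that sample independent uniform colors from $[\Delta+1]$ agree with probability only $1/(\Delta+1)$. For a fixed pair of $M$, the probability that both endpoints are activated and both sample a particular color $c$ is $p^2/(\Delta+1)^2$. Summing over the at least $\Delta/2$ colors outside $B(u_i)\cup B(u_j)$ and multiplying by your $e^{-3p}$ non-blocking factor gives $\Theta(p^2/\Delta)$ per pair, not $\Omega(1)$; your per-color term should be of order $p^2/\Delta^2$, not $p^2/\Delta$. A matching has at most $\Delta/2$ pairs, so the expected number of same-colored pairs of $M$ is $O(1)$. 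Neither the Chernoff bound in Step~3 nor any other concentration inequality can then yield $\Omega(\Delta)$. Step~1 discards exactly the resource the statement relies on: the lemma holds because there are $\Omega(\Delta^2)$ anti-edges, each same-colored with probability $\Theta(1/\Delta)$, and any vertex-disjoint subfamily contains only $O(\Delta)$ of them.

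If you keep all $\Omega(\Delta^2)$ anti-edges, the expectation becomes $\Omega(\Delta)$, but your Step~3 no longer applies. The pair indicators are not independent, and the number of same-colored anti-edges is not $O(1)$-Lipschitz in the samples, since one vertex lies in up to $\Theta(\Delta)$ anti-edges. For the same reason, the increments in your martingale fallback are no longer deterministically $O(1)$. The paper avoids this by counting \emph{colors} rather than pairs. It fixes $s=\Omega(\Delta^2)$ anti-edges and lets $X$ be the number of colors $c$ with three properties: both endpoints of one of these anti-edges sample $c$, $c$ lies outside both blocked sets, and no other vertex of the neighborhood samples $c$. Distinct such colors give distinct same-colored anti-edges, and a per-anti-edge bound of $\delta^2/(8\Delta)$ gives $\Exp[X]=\Omega(\Delta)$. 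The paper then writes $X=Y-Z$. Here $Y$ counts colors sampled by both endpoints of some chosen anti-edge, and $Z$ counts those colors that are also sampled by another vertex or are blocked. Changing one vertex's sample affects at most two colors, so $Y$ and $Z$ are $2$-Lipschitz; they are also $2$- and $3$-certifiable, respectively. Talagrand's inequality then concentrates each within $\Exp[X]/10$ with failure probability $2^{-\Omega(\Delta)}$, giving $X\geq \tfrac{4}{5}\Exp[X]$. Your ``same color minus blocked'' split has the right shape, but it must be applied to this color count over all anti-edges, not to a matching.
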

One way of interpreting~\Cref{lem:informal-sampling} is this: fix a time $t$ and a vertex $v$ in the current graph $G_t$ with coloring $\colS$ and let $u_1,\ldots,u_{\Delta}$ be $G_t[N(v)]$; define 
$B(u_i) = [\Delta+1] \setminus \Avail{u_i}{\colS}$. \Cref{lem:informal-sampling} says that as long as all $\avail{u_i}{\colS} \geq 3\Delta/4$, $\Resample(v)$ fixes the flaw of $v$ with high probability. 
The proof of this lemma uses this interpretation and that the assumption of $\avail{u_i}{\colS} \geq 3\Delta/4$ for all $u_i \in N(v)$ means for each anti-edge $(u,w)$ inside $N(v)$, $\card{\Avail{u}{\colS} \cap \Avail{w}{\colS}} = \Omega(\Delta)$, 
which is enough to prove this lemma similar to one-shot coloring proofs in~\cite{MolloyR02, HarrisSS16, ChangLP2020}. 

Nevertheless, this interpretation is \emph{not} what we will be using this lemma for, as it will become clear in the analysis sketch later. 

We have still not addressed how $\Resample(v)$ is helpful when $\Avail{u}{\colS}$ is small for some $u \in N(v)$. This is precisely the reason 
that our sampling process only even considers coloring a vertex in $\colS$ with a small constant probability. Intuitively, this means most vertices of the graph are blank and so $\Avail{\cdot}{\colS}$-sets should be large. 
More specifically, suppose before calling $\Resample(v)$, we first call $\Resample(u)$ for every neighbor $u \in N(v)$ where $\avail{u}{\colS} < 3\Delta/4$ (without caring about whether $u$ had a flaw to begin with). It is easy to see that this uncolors most neighbors of $u$ and leads to $\avail{u}{\colS} \geq 3\Delta/4$ with high probability; we can then go back to $\Resample(v)$
and apply~\Cref{lem:informal-sampling} to ensure we fixed the flaw of $v$ (again, this is not what we really need and that interpretation is only for the intuition). 

Formalizing this step requires additional care by introducing this as yet another flaw and then running a ``lazy check'' in the background (see~\Cref{sec:another-flaw}). 
We shall ignore this step in this high-level overview and in the following just \textbf{assume} $\avail{u}{\colS} \geq 3\Delta/4$ for all vertices $u \in V$ at all times. 
We can now get to the analysis of the algorithm. 

\subsubsection*{The Analysis of Our Algorithm for Coloring Sparse Vertices}

We already argued that the calls to $\Attempt$ take $\Ot(n/\Delta)$ expected time. So, analyzing our algorithm boils down to understanding the following
two parts for $\Resample$: (1) how much time a $\Resample$ call takes and how many vertices it uncolors (to account for time of the call to $\Attempt$ for them inside $\Cleanup$), and (2) how often
we have to call $\Resample$.  Among these the second one is the main part of the analysis, but let us start with the first part first. 

\paragraph{Runtime and recourse of $\Resample$.} The runtime of $\Resample$ involves going over all neighbors of $v$ in $O(\Delta)$ time to sample $O(\Delta)$ colors, and then for each sampled color checking its color class (to ensure
it is available to the vertex that sampled it). Given the expected size of a color class for a random color is $O(n/\Delta)$, the entire runtime of $\Resample$ will be $O(n)$ time in expectation. 

As for the number of vertices $\Resample$ uncolors, it may uncolor all $\Delta$ neighbors of $v$ and in addition, for any vertex that it colors, it needs to uncolor its neighbors with the same color. One can see that
the expected number of neighbors of a vertex that have the same color as the vertex itself, for a randomly chosen color, is $\Delta/(\Delta+1) = O(1)$, and thus, the total number of vertices uncolored by a call to $\Resample$ is $O(\Delta)$ in expectation. 

All in all, a  call to $\Resample$ requires $\Ot(n)$ expected time to handle through subsequent calls to $\Cleanup$ and $\Attempt$ \emph{ignoring} the potential other calls made to $\Resample$. 

\paragraph{Number of times $\Resample$ is called.} The heart of our analysis is the following lemma (this lemma needs to be proven in a more general form in the main body of the proof and is 
captured in the proofs of~\Cref{lem:recourse-updatesparse-whp,lem:runtime-UpdateSparseColor}; the following lemma however is correct and its proof sketch below can be fully formalized). 
\begin{lemma}\label{lem:informal-heart}
	Over a sequence of $t$ updates, the total number of times $\Resample$ is called in our dynamic algorithm is $O(t/\Delta)$ with high probability. 
\end{lemma}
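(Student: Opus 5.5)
The plan is an \emph{entropy compression} (Moser--Tardos style) argument over the algorithm's random tape. The tape is organized as a table $\randTable$ with one stream of fresh random colors and coins per vertex. Every random choice in the sampling process (the one used by $\Resample$ and by the recoloring of endpoints $u_t,v_t$ at each update) reads the next unused entries of the relevant vertex's stream. The choices inside $\Attempt$ come from a separate tape and never affect $\colS$. I will argue that if $\Resample$ is called $R \ge C\cdot t/\Delta$ times, then the prefix of $\randTable$ consumed by the $\colS$-process can be losslessly encoded with $\Omega(R\Delta)$ fewer bits than its length. This event therefore has probability $2^{-\Omega(R\Delta)}$.

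\textbf{Step 1: obliviousness.} First I will record the structural invariant that makes the argument work. At every time, the color $\colS(u)$ of every vertex $u$ was produced by the \emph{most recent} run of the sampling process on $u$, given the $\colS$-colors of its neighbors at that moment. This holds because whenever an edge at $u$ changes, $u$ is resampled, and $\Resample$ only ever resamples through this process. Consequently, at the moment $\Resample(v)$ is called because $v$ is flawed, the current colors of $N(v)$ are a deterministic function of the last sampling entries read for $u_1,\ldots,u_\Delta \in N(v)$. Here the blocked sets $B(u_i)$ are the colors of neighbors outside $N(v)$ at the time of sampling; by the standing assumption $\avail{\cdot}{\colS}\ge 3\Delta/4$ these have size at most $\Delta/4$. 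Since $v$ is sparse, $N(v)$ has $\Omega(\Delta^2)$ anti-edges, and $v$ being flawed means fewer than $\Omega(\Delta)$ same-colored anti-edges. So \Cref{lem:informal-sampling} applies: for fixed blocked sets, at most a $2^{-\Omega(\Delta)}$ fraction of the $\Theta(\Delta\log\Delta)$-bit assignments of those entries produce this outcome.

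\textbf{Step 2: the encoding.} Run the algorithm against the adversary. The adversary is adaptive, even clairvoyant, but given the tape prefix it is a deterministic function of the history, so the update sequence needs no encoding. Encode the consumed tape as follows. The entries read at updates, $O(t\log\Delta)$ bits in total, are written verbatim. For each $\Resample(v)$ call, write the identity of $v$ in $O(\log n)$ bits, plus the index of the flawed configuration of $N(v)$ within the $2^{-\Omega(\Delta)}$-fraction set, rather than the $\Theta(\Delta\log\Delta)$ raw bits that were overwritten. Decoding proceeds backwards through the calls: from the current $\colS$ and the log, one reconstructs the colors before each $\Resample$, hence the blocked sets, hence the overwritten entries. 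Comparing lengths, the encoding saves $R\cdot(\Omega(\Delta)-O(\log n)) - O(t)$ bits, where the $O(t)$ accounts for marking which steps triggered calls. Assuming $\Delta \ge C'\log n$ (for smaller $\Delta$ we run the trivial algorithm), this is $\Omega(R\Delta) - O(t)$. Thus $\Pr(R \ge Ct/\Delta) \le 2^{-\Omega(t)}$, after a union bound over $R$.

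\textbf{Step 3: high probability.} I still need $2^{-\Omega(t)} \le n^{-c}$. Resample is never called before some sparse vertex has degree $\Omega(\Delta)$, since its neighborhood must contain $\Omega(\Delta^2)$ anti-edges. The graph starts empty, so any time $t$ with $R>0$ has $t=\Omega(\Delta)=\Omega(\log n)$, which gives the bound $O(t/\Delta)$ with high probability, with no additive term.

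\textbf{Main obstacle.} The hard step is Step 2's decodability. It requires showing that the blocked sets at the original sampling times (which may lie far in the past, interleaved with adversarial updates and other $\Resample$ calls) can be recovered from the log. This is exactly why the distribution in $\Resample$ must be oblivious to $\varphi$ and must read only $\colS$. I would first try to handle it by processing the log in reverse and maintaining, for each vertex, a pointer to its last-used stream entry.
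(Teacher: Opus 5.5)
Your overall plan is the paper's. It uses a per-vertex random table, the observation that the current colors of $N(v)$ come from the last sampling of each $u_i$ (endpoints of updated edges are always resampled), and \Cref{lem:informal-sampling} applied per call. An encoding argument is just the dual of the paper's union bound over logs. The step that fails is Step 2's claim that the update sequence needs no encoding.

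\textbf{Why the updates must be encoded.} Your decoder runs backwards and needs $G_\tau$ at every \Resample call, to know $N(v)$, its anti-edges, the sampling order and the blocked sets. The adversary's updates are a deterministic function of the history, but that history is exactly the tape being decoded, so the updates cannot be recomputed in reverse. This is not mere bookkeeping. A clairvoyant adversary can, for every same-colored anti-edge $(a,b)$ inside $N(v)$, insert and then delete $(a,b)$, which resamples both endpoints. It repeats this until $N(v)$ is flawed, and only then lets $\Resample(v)$ be called. So the table cells that are current at a call are selected \emph{after} the adversary has seen their contents. A skeleton recording only where the $r$ calls occur can then be realized with probability far above $2^{-\Omega(r\Delta)}$. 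Only once the \emph{entire} log, including every update, is fixed does the set of cells read by each call become independent of their values. That is why the paper counts $n^{O(t+r)}$ logs, paying $\Theta(\log n)$ bits per update.

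\textbf{What the corrected argument gives.} With the updates encoded, your savings are $R\cdot\Omega(\Delta)-O((t+R)\log n)$. You then obtain $R=O(t\log n/\Delta)$ with probability $1-n^{-\Omega(t)}$. This is precisely the threshold $r\gtrsim t\log n/\Delta$ at which the paper's own sketch concludes, and it is all that is needed for $\Ot(\sqrt n)$. The $\log n$-free bound does not follow from this method, and your Step 3 becomes unnecessary.

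\textbf{Two further points.}
\begin{itemize}
\item What must be stored verbatim are the entries \emph{overwritten} at updates, plus the final entries, not those read at updates. Otherwise an entry read by a \Resample call and later overwritten by an update is never recovered.
\item The obstacle you flag is not resolved by keeping pointers. In reverse order, $B(u_i)$ at time $t(u_i)$ depends on colors of neighbors whose entries may have been overwritten between $t(u_i)$ and the call, and those entries have not been decoded yet. The paper never decodes. It fixes the full log and uses that each call is checked against fresh, pairwise-disjoint cells $(u,\mathrm{Index}_i(u))$. It then multiplies the per-call conditional bounds via the chain rule, treating the blocked sets as determined by the earlier randomness.
\end{itemize}
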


Combined with the bounds of $\Ot({n}/\Delta)$ on the runtime of $\Attempt$ (as well as the extra color samplings done per edge insertions and deletions)
 and the $\Ot(n)$ time spent per each $\Resample$ call,~\Cref{lem:informal-heart} implies an $\Ot(n/\Delta)$ bound on the amortized update time of our algorithm. 
Again, balancing this out against the $O(\Delta)$ time will give us the desired $\Ot(\sqrt{n})$ update time. Ensuring this bound holds with high probability rather than in expectation requires another argument which we will touch upon at the end of our 
overview discussion.

Our proof of~\Cref{lem:informal-heart} is where we use a dynamic entropy compression argument. Firstly, suppose we maintain a table 
 \[
\randTable:V \times \IN \rightarrow \{0, 1\}\times [\Delta+1],
\]
for the randomness used by the algorithm during calls to $\Resample$ (see~\Cref{fig:informal-indices}). 

\begin{figure}[h!]
    \centering
    \begin{subfigure}[b]{0.38\textwidth}
        \centering
        \includegraphics[width=\textwidth]{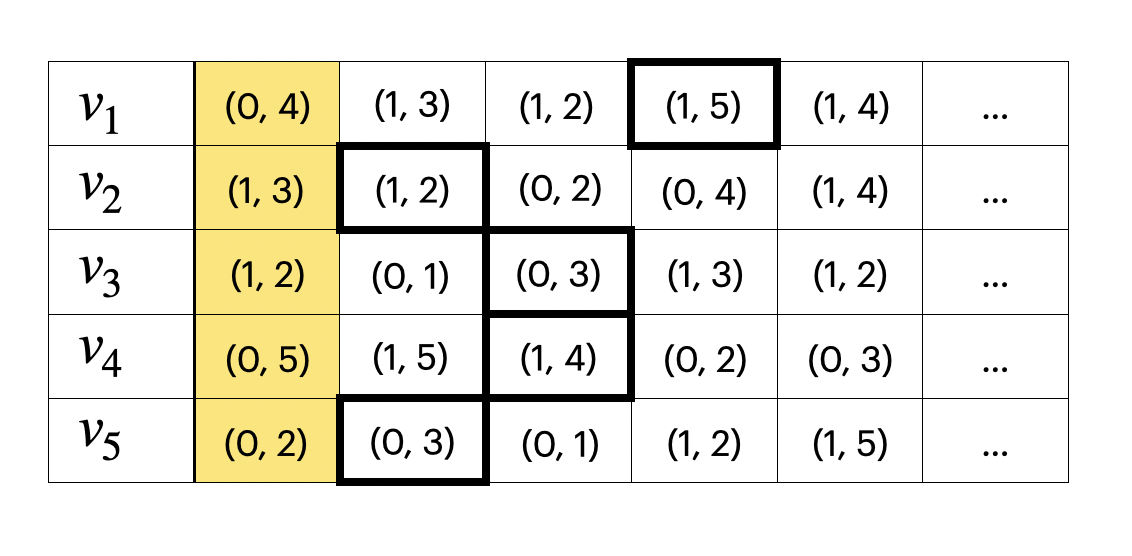}
        \caption{Before calling $\Resample(v_1)$.}
        \label{fig:index-sub1}
    \end{subfigure}
    \hfill
    \begin{subfigure}[b]{0.38\textwidth}
        \centering
        \includegraphics[width=\textwidth]{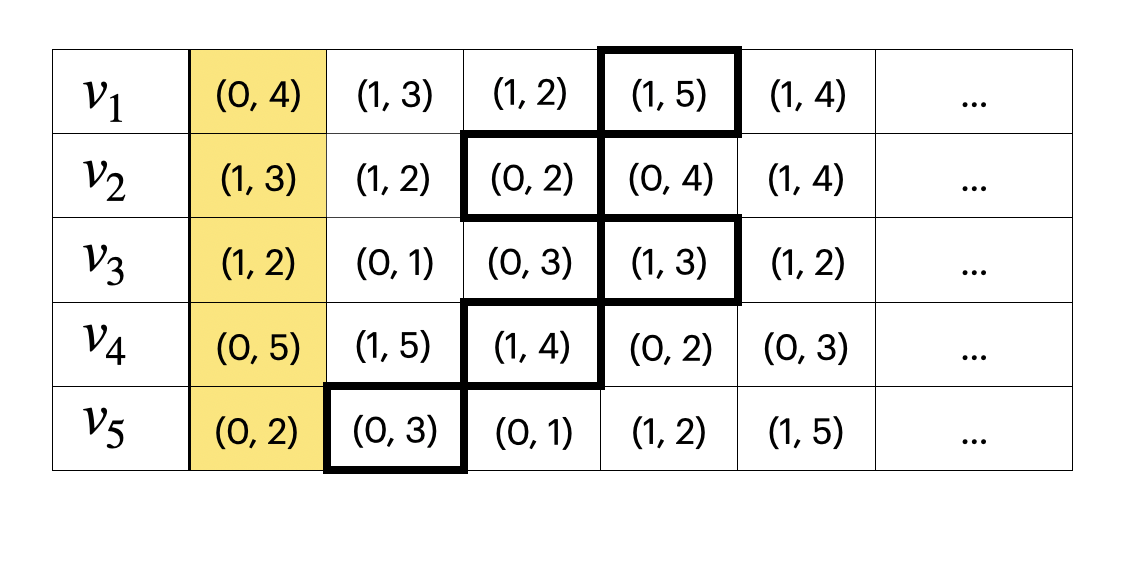}
        \caption{After calling $\Resample(v_1)$.}
        \label{fig:index-sub2}
    \end{subfigure}
    \begin{subfigure}[b]{0.20\textwidth}
        \centering
        \includegraphics[width=\textwidth]{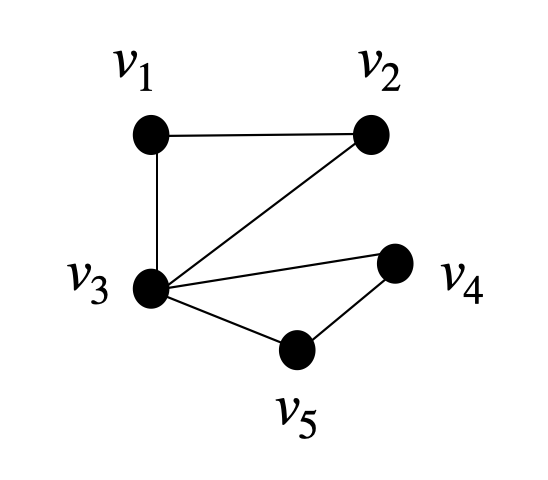}
        \caption{The Graph}
        \label{fig:informal-index-graph}
    \end{subfigure}
    \hfill
    \caption{The table show the random table of assignments for the graph in \Cref{fig:informal-index-graph} and the bold boxes show the ``Index'' of each vertex at the current state.}
    \label{fig:informal-indices}
\end{figure}

Specifically, whenever we call $\Resample$ and need to pick a random color $c_u$ for a vertex $u$, we look at entry $(u,i+1)$ of $\randTable$ where $i$ is the number of times
we have used randomness for $u$ (namely, number of times $u$ has been part of some sampling process either by an update or a $\Resample$ call); the first entry tells us whether $u$ gets to sample a color in $\Resample$ (with a small constant probability), and the second one tells us which 
color $c_u$ is going to be. We pick $\randTable$ so each entry independently matches the distribution we use in our sampling. 

At the same time, when running the algorithm, we can also maintain a \emph{log} of the behavior of the adversary and the algorithm, \emph{just for the purpose of the analysis}. Specifically, we write down a sequence of the form: 
\[
	\cdots, \texttt{insert}(u_1,v_1), \texttt{insert}(u_1,v_2), \Resample(u_1), \texttt{delete}(u_3,v_9), \Resample(u_4), \Resample(u_5), \cdots
\]
which allows one to keep track of the updates by the adversary at any point in time and the $\Resample$ calls that happened between them. A \textbf{crucial point} here is this log only maintains the information about $\colS$ (and not $\varphi$) and for that we needed our sampling process to entirely ignore $\varphi$ while $\Attempt$ is not allowed to change $\colS$ (only $\varphi$). Specifically, given a log and its corresponding table $\randTable$, we can
reconstruct all colors in $\colS$ sampled by our algorithm throughout. In the following, we exclusively refer to the coloring $\colS$ unless stated otherwise. 


We say a log $L$ is \textbf{consistent} with a table $\randTable$ if running the algorithm using the randomness of the table results in the given log. Our goal is to say that ``long'' logs cannot be consistent with the randomness of the table used by 
the algorithm with high probability and use it to bound the number of $\Resample$ calls. This is done in two steps as follows. 

\paragraph{Step 1: number of logs.} Consider any log consisting of $t$ updates and $r$ resample calls. We can bound the total number of such logs by $n^{O(t+r)}$ given there are $\binom{t+r}{r}$ choices for ordering 
the different types of instructions in the log and that each instruction can have $O(n^2)$ possibilities.

\paragraph{Step 2: probability for a fixed log.} Fix any possible log $L$. For $L$ to be consistent with a random $\randTable$, we check if each $\Resample(v)$ call is consistent with $\randTable$, as in, at that time step, 
$v$ has a flaw in the graph under the coloring $\colS$ (determined by using the table and log until this step). 

Consider a time $t$ when a $\Resample(v)$ call is logged and consider the graph $G_t$ which is fully determined by the log $L$. Let $u_1,\ldots,u_{\Delta}$ form $G_t[N(v)]$. For any $i \in [\Delta]$, 
let $t(u_i)$ denotes the last time step $u_i$ sampled a color and let that color be $c(u_i)$. Order $u_1,\ldots,u_{\Delta}$ in ithe ncreasing order of $t(u_i)$'s and break the ties in lexicographic order of vertex names (as is done in $\Resample$).
Define $B(u_i) := [\Delta+1] \setminus \Avail{u_i}{\colS}$ \emph{at the time $t(u_i)$}. 

We have a couple of \textbf{key observations}: 
\begin{enumerate}
\item	 Consider any pair of $u_i,u_j$ and let $t_{ij}$ be the last time step before $t$ when $(u_i,u_j)$ was changed, namely, inserted or deleted. Then, both $t(u_i),t(u_j) \geq t_{ij}$ since we always sample a new color in $\colS$ 
for endpoints of any updated edge; 
\item The color of $u_i$ at time $t$ is either $c(u_i)$ or $\perp$ and moreover, this decision was already fixed at the time $t(u_i)$ depending on whether or not $u_i \in B(u_i)$ (and if $t(u_i)=t(u_j)$ and $c(u_i)=c(u_j)$, whichever
of $u_i,u_j$ appears lexicographically first): the color of a vertex only changes as part of a sampling process and $t(u_i)$ was the last time we sampled a color for $u_i$. 
\end{enumerate}
The samplings of colors (but not necessarily their eventual assignment) is done independently and thus each of $c(u_i)$'s is sampled independently. 
Equipped with the observations above, and even though the samplings are done at different time steps, we can apply~\Cref{lem:informal-sampling} with the parameters above to say that the probability of having $o(\Delta)$ same-colored anti-edges between
$u_1,\ldots,u_{\Delta}$ \emph{at the time step $t$} is only $2^{-\Omega(\Delta)}$. Put differently, the probability $v$ has a flaw at time $t$ and thus $\Resample(v)$ is consistent with the table is $2^{-\Omega(\Delta)}$. 

In addition, the probability that each $\Resample$ call is consistent with $L$ is \emph{independent} of the rest; this is because for each call, we are looking at the different series of entries in $\randTable$ (as after a $\Resample$ happens, we move
to the next entry for each vertex) and all entries 
of the table are independent. This implies that for a fixed log $L$ with $r$ $\Resample$ calls: 
\[
	\Pr\paren{\text{$L$ is consistent with randomness of the algorithm}} \leq 2^{-\Omega(r \cdot \Delta)}. 
\]
\paragraph{Conclude the proof sketch of~\Cref{lem:informal-heart}.} By a union bound, 
\[
	\Pr\paren{\text{algorithm has more than $r$ $\Resample$ calls in $t$ updates}} \leq n^{O(r+t)} \cdot 2^{-\Omega(r \cdot \Delta)},
\]
which for $r \gtrsim (t \cdot \log{n})/\Delta$ is negligibly small (assuming $\Delta \gtrsim \log{n}$ which holds since we can assume w.l.o.g. $\Delta = \Omega(\sqrt{n})$ when running this algorithm). 
This concludes the proof sketch of~\Cref{lem:informal-heart}. 

At this point, we have stated an $\Ot(\sqrt{n})$ update time dynamic algorithm and its analysis for coloring sparse vertices in the decomposition. This algorithm, almost verbatim, is enough to establish~\Cref{cor:locally-sparse} (as we actually
never needed to use $(\Delta+1)$ colors here) but to prove~\Cref{res:main}, we need more work to handle dense vertices as well. 

\begin{remark}
\emph{Before moving on, we note that this particular idea of analyzing \underline{appearance} of flaws in $\Resample$ calls (as opposed to the more natural way of analyzing probability of a $\Resample$ call \underline{fixing} a flaw) is a core idea in the original Moser-Tardos framework~\cite{moser2009, moserTardos2010}. Those work however require quite a lot more different machineries---e.g., witness tree and their Galton-Watson sampling---that are crucial for LLL applications, but are not 
needed for our dynamic applications. Thus, stripping away those machinery, allows us to analyze dynamic algorithms directly and focus on their unique intricacies, e.g., role of updates and changes to the underlying graph in the middle of $\Resample$ calls. 
This has allowed us to provide a dynamic-specific approach (e.g., the generality of \Cref{lem:informal-sampling} and time-obliviousness of our sampling) 
that bypasses the $\poly(\Delta)$ update time barriers in the original applications of dynamic entropy 
compression~\cite{AssadiY2026,HaeuplerMRST2026}. }

\emph{We suspect dynamic entropy compression in general will find a place in the standard toolkit of dynamic algorithms, given it provides a rather unique way to analyze randomized algorithms against adaptive adversaries. See~\cite{AssadiY2026,HaeuplerMRST2026} for further justifications and discussions.}  
\end{remark}

\subsection{Handling Dense Vertices} 

Up until now, including in the discussion of prior work, we have entirely ignored the dense vertices, even though virtually in all scenarios, those are the technically more challenging parts to handle.
This continues to be the case in our algorithm as well, but these challenges are for the most part \emph{technical} and follow the prior work in~\cite{BehnezhadRW25,FlinH25} as well as~\cite{AssadiCK19,AssadiY25}; at the \emph{conceptual level}, the main 
novelty of our work lies in its handling of the sparse vertices. In the following, we briefly describe our approach on how to handle dense vertices especially as how they relate to our earlier treatment of sparse vertices. 
We start with a quick review of~\cite{FlinH25} which we follow more closely (although several techniques in both~\cite{FlinH25} and ours are also borrowed from~\cite{BehnezhadRW25}). 

\subsubsection*{Prior Approach of~\cite{FlinH25} for Dense Vertices}
Recall that dense vertices are partitioned into almost-cliques. For any almost-clique $K$ and a vertex $v \in K$, define $\out(v)$ as the degree of $v$ outside $K$ and $\antiDegree(v)$ as the anti-edge degree of $v$ inside $K$. 

A simple yet brilliant idea in~\cite{FlinH25} is to handle certain almost-cliques also as sparse vertices: almost-cliques wherein \emph{average} $\out$ or $\antiDegree$ of vertices is at least $\eta$ for some $\eta=o(\Delta)$ chosen later. 
Specifically, they prove that such vertices are $\gamma$-sparse for $\gamma \approx \eta/\Delta$, which means  one-shot coloring generates $\Omega(\eta)$ slack for these
vertices as well. For~\cite{FlinH25}, handling these vertices effectively only requires two changes: (1) reducing phase lengths to $\Theta(\eta)$ instead of $\Theta(\Delta)$ earlier, and (2) spending $O(n/\eta)$ time instead of $O(n/\Delta)$ 
to color these vertices using their $\Omega(\eta)$ slack throughout the phase. This means the update time of their algorithm on this part will be 
$
\Ot(\frac{n^2}{\Delta \cdot \eta} + \frac{n}{\eta})
$
by amortizing the $\Ot(n^2/\Delta)$ time of  one-shot coloring over the $\Theta(\eta)$ updates of a phase. 

For the remaining almost-cliques, they use an idea called \emph{inlier-outlier} partitioning with the essential effect of allowing one to assume $\out$ and $\antiDegree$ of \emph{every} vertex in the almost-clique is almost the same as 
the \emph{average} values of these quantities, and thus is upper bounded by $O(\eta)$ that are remained to be handled. Yet another technique called \emph{anti-edge matching}, introduced in~\cite{AssadiCK19} for design of sublinear time algorithms for $(\Delta+1)$ coloring, 
and subsequently used in~\cite{BehnezhadRW25} for dynamic algorithms reduces the problem to the following: whenever we need to color a vertex $v$ in a remaining almost-clique $K$, we only need to go over $\out(v)$ vertices of $N(v) \setminus K$ to find a color available to $v$. But then this step takes $O(\eta)$ time leading to the final update time of this algorithm for coloring dense vertices to be: 
\[
	\Ot\paren{\frac{n^2}{\Delta \cdot \eta} + \frac{n}{\eta} + \eta} = \Ot(\frac{n^{4/3}}{\Delta} + n^{2/3}), 
\]
for $\eta = n^{2/3}$. Balancing this with the $O(\Delta)$ time algorithm gives their $\Ot(n^{2/3})$ time algorithm. 

\subsubsection*{A Challenge for Improving Update Time on Dense Vertices?} 
While we have come up with a way to speed up coloring sparse vertices by removing phases and global restarts, 
there is another challenge to handle dense vertices when following~\cite{FlinH25} by delegating some almost-cliques to be handled
as sparse vertices. This is because the sparsity parameter of these vertices is $\gamma \approx \eta/\Delta$ which can be (polynomially) sub-constant, 
which our algorithm cannot handle without extra cost directly (see the statement of~\Cref{cor:locally-sparse} also). 

This bottleneck is quite fundamental for our earlier approach: for $\gamma$-sparse vertices,~\Cref{lem:informal-sampling}  (provably) can only obtain a $2^{-\Omega(\gamma \Delta)}$ probability, while we still need $\Ot(n)$ time
to run $\Resample$ which \emph{problematically} uncolors $O(\Delta)$ vertices that need to be handled. Going through the entropy compression arguments using these bounds then gives an 
\[
	\Ot\paren{\frac{n}{\gamma\Delta}}~\text{update time and}~\Ot\paren{\frac{1}{\gamma}}~\text{uncolored vertices per update}.\footnote{\label{fn:general-rule}As a general rule, if the flaw probability under resampling can be bounded by $2^{-\ell}$ (and everything aligns well for the argument), then entropy compression allows us to bound the number of resamples by $\approx t/\ell$ over $t$ updates.}
\]
Given $\gamma \approx \eta/\Delta$, we get an $\Ot(n/\eta)$ update time for $\Resample$ calls (which is perfectly fine) but also have to color $O(\Delta/\eta)$ new vertices per update. Since the time needed for coloring these
vertices by color trials is $\Ot(n/\eta)$ (given they have $\Theta(\eta)$ slack), the time needed to color the remaining vertices becomes $O(n\Delta/\eta^2)$ time. This means the update time of the algorithm is 
\[
	\Ot\paren{\frac{n\Delta}{\eta^2} + \frac{n}{\eta} + \eta},
\]
which unfortunately, balanced out against $O(\Delta)$ time algorithm, still does not get a better than $\Ot(n^{2/3})$ time algorithm!\footnote{Even when $\eta$ is as large as $n^{2/3}$ but $\Delta = \Theta(n)$, the first term will also be $n^{2/3}$ now.}

A \textbf{crucial observation} to our rescue is that the almost-cliques we are delegating here are \emph{not arbitrary} $\gamma$-sparse vertices, and we can further exploit their almost-clique properties. With this in mind, we are ready
to describe our approach for dense vertices. 

\subsubsection*{Our Approach for Dense Vertices} 

In our algorithm, we further partition the delegated almost-cliques based on whether average $\antiDegree$ was higher than $\eta$ or just the average $\out$. It turns out the former ones can be handled 
using the dense vertices approaches of~\cite{FlinH25} with an update time of $\Ot(\frac{n}{\eta} + \eta)$ with no particular difficulty. This basically means \emph{not} delegating 
those almost-cliques to sparse vertices and handling them as dense ones. But, the latter almost-cliques do need to be delegated to sparse vertices and we shall handle them using our dynamic entropy compression arguments. 
Remaining almost-cliques can be handled similar to~\cite{FlinH25} in $\Ot(\frac{n}{\eta} + \eta)$ update time. 

The only remaining case is then almost-cliques with average $\out$ of vertices larger than $\eta$, which we call \text{out-heavy} almost-cliques. Recall that the challenge with these almost-cliques is that the $\Theta(\Delta)$ \emph{recourse} of the $\Resample$ subroutine will be prohibitively large to handle the uncolored vertices individually using existing subroutines. 
Here, we can show that calling $\Resample(v)$ for $v$ in an out-heavy almost-clique $K$ only leads to expected $O(\out(v))$ uncolored vertices \emph{outside} the almost-clique and the remaining $\Theta(\Delta)$ uncolored are all inside $K$. 
We then design a new sublinear time algorithm---inspired by~\cite{AssadiY25}---for coloring all but $O(\eta)$ vertices in out-heavy almost-cliques in $\Ot(n)$ time only (instead of $O(n\Delta/\eta)$ time of coloring $O(\Delta)$ vertices one by one). 
This means that effectively we can think of this $\Resample$ call as 
spending $\Ot(n)$ time and generating $O(\out(v)+\eta)$ uncolored vertices instead of $O(\Delta)$ uncolored ones. 

The final part of the argument is to prove we actually do have the budget to color the remaining vertices using the $\Ot(\frac{n}{\eta} + \eta)$ update time for each one. 
This involves proving that (1) the sparsity of a vertex $v$ in an almost-heavy almost-clique is proportional to both $\out(v)$ and $\eta$ simultaneously; and, (2) number of $\Resample$ calls on 
vertices of such sparsity over $t$ updates is $\Ot(t/(\out(v)+\eta))$ essentially (recall~\Cref{fn:general-rule}). This means the amortized runtime of handling these $O(\out(v)+\eta)$ vertices 
is $\Ot(\frac{n}{\eta} + \eta)$ which is within our budget. 

All in all, putting everything together, the above arguments give an amortized update time of 
\[
	\Ot(\frac{n}{\Delta} + \frac{n}{\eta} + \eta), 
\]
for coloring dense vertices. Picking $\eta = \sqrt{n}$ and balancing out against the $O(\Delta)$ time algorithm gives an $\Ot(\sqrt{n})$ update time algorithm for coloring dense vertices as well. 

\subsubsection*{Oversimplified Details and High Probability Guarantees} 

We now briefly touch upon some oversimplified details in the previous discussions and point out how the high probability guarantee of our algorithm is proven. 

Firstly, when handling delegated almost-cliques as sparse vertices, we need to handle a spectrum of sparsities (given appearance of $\out(v)$-term in the bounds), which requires a more nuanced arguments in the entropy compression part. Specifically, we now associate a \emph{cost} with the logs we maintain (analytically) that is a proxy to the time/recourse needed to handle them and bound the probability of ``high'' cost logs. 

Secondly, we described our algorithms as if the decomposition was fixed and that sparse and dense vertices had no 
interaction with each other (not even being neighbor). In actuality, we need to handle vertices moving between different parts of the decomposition, as well as making sure handling of vertices in some parts do not violate invariants of the other; the most important 
part to maintain for entropy compression is to ensure that colorings of all vertices in $\colS$ are always done through a consistent and graph-oblivious sampling. These all lead to further technical challenges in the algorithm as which vertices are allowed
to participate in these sampling steps and how to prioritize their colorings over the rest.  

Finally, the high probability guarantees of prior work in~\cite{BehnezhadRW25,FlinH25} was obtained \emph{per update} (despite still being amortized) by maintaining invariants such as ensuring size of all color classes are fixed to $O(n/\Delta)$ (instead 
of in expectation for a random color that always holds) or only uncoloring $O(1)$ vertices in the worst case. However, our dynamic entropy compression argument relies on taking \emph{expected} time and recourse (due to the obliviousness of the distribution it 
crucially uses), and thus we cannot prove a high probability guarantee per update. Instead, we use a less standard concentration inequality due to~\cite{kuszmaul2025} (\Cref{prop:KQ-inequality}) 
to bound the runtime of the entire algorithm over an adaptive sequence of $t$ updates, which as a by-product gives us higher exponential probability of success for this part\footnote{Our algorithm still only succeeds with high polynomial probability and not exponential, because it uses subroutines like the decomposition by~\cite{BehnezhadRW25} whose (existing) proof only implies a high probability bound. We have not explored improving the probability bound of our overall algorithm given it is a secondary consideration.}.

%% file: algorithm.tex

\newcommand{\SampleColor}{\textnormal{\ensuremath{\algs{SampleColor}}}\xspace}

\section{A Fully Dynamic Algorithm for $(\Delta+1)$ Coloring}\label{sec:algorithm}

We present our fully dynamic algorithm for $(\Delta+1)$ coloring and its key subroutines and lemmas in this section. We postpone the analysis of the key subroutines to the subsequent sections. 

\begin{theorem}\label{thm:delta+1}
	There exists a randomized fully dynamic algorithm for maintaining a $(\Delta+1)$ coloring against adaptive adversaries with $O(\sqrt{n}\cdot \log^3{n})$ amortized update time with high probability. 
\end{theorem}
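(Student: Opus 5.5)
Theorem~\ref{thm:delta+1} is the top-level assembly of the pieces sketched in~\Cref{sec:high-level}, plus a final balancing step. If $\Delta \leq \sqrt{n}$, we run the trivial deterministic algorithm: on each update that creates a monochromatic edge, scan the neighbors of one endpoint and recolor it with a missing color. This takes $O(\Delta) = O(\sqrt{n})$ worst-case time and is trivially robust to adaptive adversaries. Since $\Delta$ is known a priori, the choice of algorithm is fixed up front. So the real work is the case $\Delta > \sqrt{n}$, where in particular $\Delta \gg \log n$. There we aim for $O((n/\Delta + n/\eta + \eta)\log^3 n)$ amortized time with $\eta := \sqrt{n}$, which is $O(\sqrt{n}\log^3 n)$.

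For $\Delta > \sqrt{n}$ the algorithm has four components.
\begin{enumerate}
\item Maintain the sparse-dense decomposition of~\cite{BehnezhadRW25} with a constant sparsity parameter, in $O(\log n)$ amortized time w.h.p.
\item Classify the almost-cliques by average $\out$ and $\antiDegree$ relative to $\eta$: out-heavy, anti-heavy, and the remaining ones.
\item Color sparse vertices and out-heavy almost-cliques with the entropy-compression process: $\Attempt$, plus $\Resample$ and $\Cleanup$ on \Fail. Every recoloring in $\colS$ goes through the graph-oblivious sampling process, and both endpoints of every updated edge are resampled.
\item Color anti-heavy and remaining almost-cliques as in~\cite{FlinH25}, using inlier/outlier partitioning and anti-edge matchings, so that recoloring a dense vertex costs $\Ot(n/\eta + \eta)$.
\end{enumerate}
The proof then sums three cost sources:
\begin{itemize}
\item \emph{Per-update work:} decomposition maintenance, resampling both endpoints, and the $\Attempt$/color-trial calls on vertices uncolored directly by the update. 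This is $\Ot(n/\Delta + n/\eta + \eta)$ in expectation per update, using $\slack = \Omega(\Delta)$ or $\Omega(\eta)$ and \Cref{prop:random-color-trial}.
\item \emph{$\Resample$ calls:} each costs $\Ot(n)$ in expectation. A call on a vertex of sparsity $\approx s$ (with $s \in [\eta,\Delta]$, roughly $\out(v)+\eta$) generates $O(s)$ uncolored vertices outside out-heavy cliques. Each of those costs $\Ot(n/\eta + \eta)$ to fix, and the new sublinear subroutine recolors the inside of an out-heavy clique in $\Ot(n)$ time.
\item \emph{Bookkeeping:} vertices moving between parts of the decomposition, each charged to updates via the decomposition's amortized recourse.
\end{itemize}

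The key step is a weighted version of~\Cref{lem:informal-heart}. Over $t$ updates, define the cost of a log as $\sum_{\text{calls}} (n + s\cdot(n/\eta+\eta))/\Ot(1)$; we want to show this is $\Ot(t\cdot(n/\eta+\eta))$ w.h.p. I would prove this by entropy compression as in the overview. There are at most $n^{O(t+r)}$ logs. By the formal version of~\Cref{lem:informal-sampling} (applied with blocked sets taken at the last sampling times, which is valid thanks to time-obliviousness), each call on a vertex of sparsity $s$ is consistent with probability $2^{-\Omega(s)}$, independently across calls because they read fresh table entries. Summing over logs, grouped by the multiset of sparsities, gives that total $\sum s$ is $O(t\log n)$ w.h.p. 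Since $s \geq \eta \geq \sqrt n$, the $\Ot(n)$-per-call term is at most $\Ot(n/\eta)\cdot\sum s$, so everything is $\Ot(t(n/\eta+\eta))$. This needs the auxiliary ``available set $\geq 3\Delta/4$'' flaw from~\Cref{sec:another-flaw}, handled with a lazy check and included in the same log.

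The main obstacle is converting these expected per-operation costs into a high-probability bound on total time against an adaptive adversary. Per-step costs (color-class sizes, number of same-colored neighbors) are only bounded in expectation conditioned on the past, and the adversary adapts. I would apply the adaptive Chernoff/Freedman-type inequality of~\cite{kuszmaul2025} (\Cref{prop:KQ-inequality}) to the martingale of per-operation runtimes. Each step cost is truncated at $\Ot(n)$ w.h.p., for example by capping $\Attempt$ at $\Theta(\log n)$ trials. The number of steps is controlled by the entropy-compression bound. A final union bound with the decomposition's failure probability gives total time $O(t\sqrt{n}\log^3 n)$ w.h.p. The $\log^3$ collects one factor each from $\Attempt$'s $\Theta(\log n)$ samples, the $r \gtrsim t\log n/s$ entropy slack, and the concentration/decomposition overhead.
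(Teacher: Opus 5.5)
Your overall architecture matches the paper's: greedy below a threshold, the decomposition with an $\eta$-classification, cost-weighted entropy compression for $\Vsparse\cup V(\KKoh)$ (with a lazy check keeping blocked sets small), the \cite{FlinH25} machinery for the remaining almost-cliques, and \Cref{prop:KQ-inequality} for concentration. But two steps fail as you state them.

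\emph{The sparse/dense interface.} You claim every \Resample call on a vertex of sparsity $s$ is consistent with probability $2^{-\Omega(s)}$. Denser vertices are never colored in $\colS$ (\Cref{inv:sparser}), so resampling can only create same-colored anti-edges among \emph{sparser} neighbors. Suppose $v\in\Vsparse$ has degree $\Delta$ and all its neighbors are denser. Then $\slack{v}{\colS}=1$ whatever is resampled, so its flaw is certain rather than rare. If those neighbors use $\Delta$ distinct colors in $\varphi$, \Attempt keeps failing on $v$ and nothing bounds the resulting \Resample calls. The paper fixes this in two ways:
(i) it adds $\deg_S(v)\geq\Delta-\delta\eps^2\Delta$ (resp.\ $\Delta-\delta\eps^2\eta$) to Flaws~\ref{flaw:sparse} and~\ref{flaw:out-heavy}, which is exactly the hypothesis of \Cref{lem:prob-slack-sparse};
(ii) \ColorSparse checks only sparser neighbors and evicts denser neighbors holding the sampled color (expected $O(1)$ of them, recolored later by \ColorDense).
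So sparser vertices take priority in $\varphi$, and a vertex failing the degree condition still has at least $\Delta+1-\deg_S(v)$ colors free of sparser neighbors.

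\emph{The number of samples in \Attempt.} $\Theta(\log n)$ samples are too few. An unflawed out-heavy vertex is only guaranteed $\Omega(\eta)$ usable colors, so each sample succeeds with probability $\Theta(\eta/\Delta)$. \Attempt then fails on unflawed vertices, and the ensuing \Resample calls are not certified by a flaw, which the consistency argument requires. You need $\Theta((\Delta/\eta)\log n)$ samples, i.e.\ $O(n\log n/\eta)$ expected time as in \Cref{lem:attempt}. Your own per-update estimate already assumes this cost.

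\emph{The concentration cap.} With $c=\Ot(n)$ and $\mu=\Ot(t\sqrt n)$, the tail in \Cref{prop:KQ-inequality} is $\exp(-\Ot(t/\sqrt n))$, which is vacuous for $t$ up to about $\sqrt n$. The paper instead caps every subroutine at $O(m)$ by falling back to a scan of the current edges (\Cref{lem:worstcase-subroutines}). Since the graph starts empty, $m\leq t$, so $\mu/c$ is polynomially large for every $t$.

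\emph{Smaller fixes.}
(i) The classification requires $\eta=o(\eps^2\Delta)$, which $\eta=\sqrt n$ violates for $\Delta$ just above $\sqrt n$. The paper runs greedy up to $\Delta=O(\sqrt n\log^3 n)$ and takes $\eta=\sqrt n\log n$; this also balances its $n\log^4 n/\eta$ and $\eta\log^2 n$ terms.
(ii) The decomposition is randomized, so updates plus \Resample calls do not determine $S$. Either log decomposition changes, as the paper does with $d=O(t\log n)$ entries (hence its $O(t\log^2 n)$ bound on log cost, not your $O(t\log n)$), or fix the decomposition's random bits before the union bound.
(iii) For $v\in V(\KKoh)$ the per-call weight must be $\Theta(\avgout(K)+\out(v))$, not roughly $\out(v)+\eta$. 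Vertices of $K$ recolored by \Resample evict about $\avgout(K)$ outside neighbors in expectation, and \ColorOH leaves up to $4\avgout(K)$ vertices of $K$ uncolored. This is affordable only because \Cref{lem:OH-is-out-sparse} makes the sparsity, and hence the entropy exponent, proportional to $\avgout(K)+\out(v)$.
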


\Cref{thm:delta+1} formalized \Cref{res:main}.
When proving \Cref{thm:delta+1}, we assume $\Delta=\Omega(\sqrt{n}\cdot \log^{3} n)$ as otherwise the proof follows trivially from the naive $O(\Delta)$ update time algorithm.

\subsection{The Sparse-Dense Decomposition}\label{sec:sparse-dense-decomp}
 
Our algorithm uses the sparse-dense decomposition that was introduced by \cite{Reed98} and extended
in~\cite{HarrisSS16,AssadiCK19} (we will also use an idea due to~\cite{FlinH25} in their refinement of this definition called ``sparser-denser decomposition'' but not as a black-box). 
We first need the following definitions for any graph $G=(V,E)$ and parameter $\eps \in (0,1)$: 
\begin{itemize}
	\item A vertex $v$ is \textbf{$\bm{\eps}$-sparse} iff $G[N(v)]$ contains at most $\binom{\Delta}{2} - \eps  \Delta^2$ edges; 
	\item A set $K \subseteq V$ of vertices is an \textbf{$\eps$-almost-clique} iff 
	\begin{enumerate}[label=$(\roman*)$]
		\item $(1-\eps) \cdot \Delta \leq \card{K} \leq (1+\eps) \cdot \Delta$; and, 
		\item every $v \in K$ has $\card{N(v) \cap K} \geq (1-\eps) \cdot \Delta$. 
	\end{enumerate}
\end{itemize}

A sparse-dense decomposition is then defined as follows. 
\phantomsection
\begin{Definition}[Sparse-dense Decomposition~\cite{Reed98,HarrisSS16,AssadiCK19}]\label{def:sparse-dense-decomp}
    Let $G$ be a graph with maximum degree $\Delta$ and $\eps\in (0,1)$ be a parameter. A sparse-dense decomposition of $G$ with parameter $\eps$ is a 
    partition of vertices of $G$ into sets $\Vsparse \sqcup K_1 \sqcup \cdots \sqcup K_k$  such that 
    \begin{enumerate}
        \item every vertex $v\in \Vsparse$ is $\Omega(\eps^2)$-sparse (for some hidden absolute constant), and
        \item for every $i\in [k]$, $K_i$ is an $\eps$-almost-clique. 
    \end{enumerate}
\end{Definition}

\noindent
For any $\eps$-almost-clique $K$ and a vertex $v \in K$, define: 
\begin{itemize}
	\item $\OUT(v)$: the set of neighbors of $v$ outside $K$, i.e., $\OUT(v) := N(v) \setminus K$; 
	\item $\ANTIDegree(v)$: the set of vertices of $K$ that are not neighbor to $v$, i.e., $\ANTIDegree(v) := K \setminus N(v)$.
\end{itemize}
\noindent 
We further set 
	\[
		\out(v) := \card{\OUT(v)} \qquad \text{and} \qquad \antiDegree(v) := \card{\ANTIDegree(v)}. 
	\]
Notice that $\out(v) \leq \eps \Delta$ and $\antiDegree(v) \leq 2\eps \Delta$ for every $v \in K$. 
Finally, we define the averaged versions of these quantities in the almost-clique as 
\[
	\avgout(K) := \frac1{\card{K}} \sum_{v \in K} \out(v) \qquad \text{and} \qquad \avgantiDegree(K) := \frac1{\card{K}} \sum_{v \in K} \antiDegree(v). 
\]

A fully dynamic algorithm for maintaining a sparse-dense decomposition was presented in~\cite{BehnezhadRW25} and further refined in~\cite{FlinH25}. 

\begin{proposition}[\!\!{\cite[Theorem~2]{BehnezhadRW25}}]\label{prop:maintaining-decomposition}
   Let $\eps > 0$ be sufficiently small. 
    There is a fully dynamic algorithm with $O(\eps^{-4}\log n)$ amortized update time that with high probability maintains 
    a sparse-dense decomposition of $G$ with parameter $\eps$. Specifically, the algorithm maintains: 
    \begin{itemize}
	\item an array that for every $v \in V$ indicates if $v$ is in $\Vsparse$ and if not, stores $K(v) \in [k]$ which is the index of the almost-clique $v$ belongs to;      
       	\item for each almost-clique $K_i$ for $i \in [k]$ and $v \in K_i$, two separate lists that contain $\ANTIDegree(v)$ and $\OUT(v)$; the algorithm also stores $\avgout(K_i),\avgantiDegree(K_i)$ for $K_i$.
    \end{itemize}
   The total number of changes to $\ANTIDegree(v),\OUT(v)$ across all $v \in V$ is amortized $O(1/\eps^4)$ per update. 
\end{proposition}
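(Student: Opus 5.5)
The plan is to take the algorithm of~\cite[Theorem~2]{BehnezhadRW25} as a black box for the decomposition itself, that is, for the array recording whether $v \in \Vsparse$ or which $K(v)$ it belongs to. On top of it I would add a thin bookkeeping layer for the lists $\ANTIDegree(v)$ and $\OUT(v)$ and for the averages $\avgout(K_i)$ and $\avgantiDegree(K_i)$. The only thing that needs genuine argument is the last sentence of the statement: the total number of changes to these lists is amortized $O(1/\eps^4)$ per update. The $O(\eps^{-4}\log n)$ update time and the high probability guarantee are inherited directly from~\cite{BehnezhadRW25}.

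\textbf{Step 1: classify what can change a list.} An entry of $\OUT(v)$ or $\ANTIDegree(v)$ changes for one of two reasons.
\begin{itemize}
\item An adversarial edge update $(u,w)$ occurs. This changes $O(1)$ entries: $u$ enters or leaves $\OUT(w)$ or $\ANTIDegree(w)$, and symmetrically for $w$.
\item A vertex $w$ changes its membership in the decomposition, by entering or leaving some $K_i$ or $\Vsparse$. This changes $\ANTIDegree(v)$ for the up to $(1+\eps)\Delta$ vertices $v \in K_i$, changes $\OUT(x)$ for the up to $\Delta$ neighbors $x$ of $w$, and rebuilds $w$'s own lists at cost $O(\Delta)$.
\end{itemize}
So each membership change costs $O(\Delta)$ list changes. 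The averages are stored as running sums $\sum_{v\in K_i}\out(v)$ and $\sum_{v\in K_i}\antiDegree(v)$ together with $\card{K_i}$, and each list change updates them in $O(1)$ time.

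\textbf{Step 2: bound the membership recourse.} This is the main obstacle. I need the decomposition of~\cite{BehnezhadRW25} to have amortized vertex-membership recourse $O(1/(\eps^4\Delta))$ per update. I expect to extract this from the lazy structure of their algorithm. Membership of a vertex, and the identity of an almost-clique, is recomputed only after $\Omega(\eps\Delta)$ updates have touched the relevant neighborhood. Their thresholds have hysteresis: a vertex is declared sparse or dense with a constant-factor gap in $\eps$, so a single update cannot flip it back and forth.

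A charging argument then works as follows. Each edge update deposits $O(1/\eps^{3})$ tokens on the $O(1)$ affected vertices and almost-cliques. A recomputation that moves up to $O(\Delta)$ vertices is paid for by the $\Omega(\eps\Delta)$ updates accumulated since the last recomputation. I would first try to read this bound directly off their potential or charging analysis, since it already controls the $O(\eps^{-4}\log n)$ running time. If that fails, I would redo the argument with the hysteresis thresholds.

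\textbf{Step 3: combine.} Steps 1 and 2 give $O(1) + O(\Delta)\cdot O(1/(\eps^4\Delta)) = O(1/\eps^4)$ amortized list changes per update. The time to perform these changes is of the same order, so it fits within the stated update time.
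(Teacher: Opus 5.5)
Your Steps~1 and~3 match the paper's own accounting. With no structural change, an update alters at most two list entries. A vertex joining or leaving an almost-clique costs $\Theta(\Delta)$ changes, and an almost-clique forming or collapsing costs $\Theta(\Delta^2)$. The paper does not reprove your Step~2 either. It points to Lemma~39 and Corollary~5 of~\cite{BehnezhadRW25}, whose update-time analysis already bounds these structural events, so the $O(1/\eps^4)$ count for $\ANTIDegree$ carries over verbatim to $\OUT$. Your primary route (``read the bound off their analysis'') is therefore the paper's route.

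You do need that structural bound from inside their analysis, not just the $\ANTIDegree$ bound stated in their Theorem~2. A vertex joining $K$ changes only $\antiDegree$-many $\ANTIDegree$ entries, which may be zero. It always changes at least $(1-\eps)\Delta$ entries of $\OUT$-sets.

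The charging argument you give as a fallback does not yield the rate you need. Suppose a recomputation that moves up to $O(\Delta)$ vertices is paid for by the $\Omega(\eps\Delta)$ updates accumulated since the previous recomputation. Then you get $O(1/\eps)$ membership changes per update, i.e.\ $O(\Delta/\eps)$ list changes, rather than $O(1/(\eps^4\Delta))$ and $O(1/\eps^4)$. In token terms, $\Omega(\eps\Delta)$ updates at $O(\eps^{-3})$ tokens each supply $O(\Delta/\eps^2)$ tokens against a cost of $\Theta(\Delta^2)$.

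The charging must be per vertex rather than per recomputation. Each vertex that changes membership has to be paid for by either of the following, accumulated since its last move:
\begin{itemize}
\item $\Omega(\eps^{c}\Delta)$ updates incident to it, or
\item $\Omega(\eps^{c}\Delta^2)$ updates inside its neighborhood, each of which lies in the neighborhoods of at most $\Delta$ vertices.
\end{itemize}
Here $c=4$ to match the statement. Only then does an almost-clique forming or collapsing, with its $\Theta(\Delta^2)$ list changes, require $\Omega(\eps^{c}\Delta^2)$ updates, which gives $O(1/(\eps^{c}\Delta))$ membership changes per update. Hysteresis in per-vertex thresholds is what makes such a statement plausible. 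As written, though, your Step~2 is an unproven assumption, and the proof you suggest for it would fail.
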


\cite[Theorem~2]{BehnezhadRW25} only proves the upper bound of $O(1/\eps^4)$ on changes to $\ANTIDegree(v)$ but their same exact analysis (Lemma 39 and Corollary 5) also
imply the desired bound on $\OUT(v)$ as well\footnote{If the almost-cliques do not change after an update, then clearly, at most two changes across all $\OUT$-sets can happen (for the endpoints of the updated edge). However, 
it is also possible that a vertex joins or leaves an almost-clique (leading to $\Theta(\Delta)$ changes) or an entire almost-clique forms or collapses (leading to $\Theta(\Delta^2)$ changes);~\cite{BehnezhadRW25} bounds the number of 
these changes as well in their analysis to bound the update time and the same analysis also applies verbatim to changes to $\OUT$-sets.}.

\subsubsection*{Classification of Almost-Cliques} 

For our purpose, we need to further classify the almost-cliques in the decomposition (similar ideas are used frequently in the literature, e.g. in~\cite{AssadiKM22,FischerHM23,FlinH25}). 

\begin{Definition}\label{def:classification}
    Let $G$ be a graph with maximum degree $\Delta$ and $\eps\in (0,1)$ and $\eta = o(\eps^2\Delta)$ be parameters. 
    Suppose $\Vsparse \sqcup K_1 \sqcup \cdots \sqcup K_k$ is a sparse-dense decomposition of $G$ with parameter $\eps$. 
    We further define a \textbf{class} for each almost-clique $K$ in this decomposition\footnote{It is possible for an almost-clique to belong to more than one class but it will belong to at least one class.}: 
    \begin{itemize}
	\item $K$ is \textbf{anti-heavy} iff $\avgantiDegree(K) \geq \eta/\eps^2$;
	\item $K$ is \textbf{out-heavy} iff it $\avgantiDegree(K) \leq 2\eta/\eps^2$ and $\avgout(K) \geq 4\eta/\eps^2$; 
	\item $K$ is \textbf{out-light} iff $\avgout(K) \leq 6\eta/\eps^2$.   
\end{itemize}
	We say a \emph{partition} $\KKah \sqcup \KKoh \sqcup \KKol$ of almost-cliques is an \textbf{$\eta$-classification} iff every 
	$K \in \KKah$ is anti-heavy, $K \in \KKoh$ is out-heavy, and $K \in \KKol$ is out-light. 
\end{Definition}

As stated in~\Cref{def:classification}, an almost-clique can belong to more than one class by definition; this (small) degree of overlap in the definitions is required for the dynamic handling of almost-cliques when they transition between these different classes. However, inside a fixed $\eta$-classification, each almost-clique belongs to exactly one of the three sets. 

\cite{FlinH25} proves that an out-heavy almost-clique $K$ is $\gamma$-sparse for some $\gamma$ proportional to $\avgout(K)$. We show a slight generalization to also include dependence on $\out(v)$ separately. The following lemma is a simple corollary of {\cite[Lemma 3.6]{FlinH25}} and its proof is postponed to \Cref{sec:lem:OH-is-out-sparse}. 

\begin{lemma}\label{lem:OH-is-out-sparse}
	Fix any sparse-dense decomposition and its classification. Any vertex $v$ in an out-heavy almost-clique is $\gamma$-sparse for
	\[
		\gamma = \Omega(\eps^2) \cdot \paren{\frac{\avgout(K) + \out(v)}{\Delta}}.
	\] 
\end{lemma}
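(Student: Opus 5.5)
Since $\gamma$-sparseness means $G[N(v)]$ has at most $\binom{\Delta}{2}-\gamma\Delta^2$ edges, I will count missing pairs in $N(v)$: these are the pairs of vertices of $N(v)$ that are non-adjacent, plus the pairs ``missing'' because $\deg(v)<\Delta$. I will prove two lower bounds on this count separately, one of order $\eps^2\Delta\cdot\avgout(K)$ and one of order $\eps^2\Delta\cdot\out(v)$. Taking their maximum, which is at least half their sum, gives the claimed $\gamma$.

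\textbf{Term $\avgout(K)$.} This is the content of \cite[Lemma 3.6]{FlinH25}, and I will invoke it as a black box. It states that vertices of an almost-clique with $\avgout(K)$ large and $\avgantiDegree(K)$ comparatively small (exactly the out-heavy condition $\avgantiDegree(K)\le 2\eta/\eps^2 \le \avgout(K)/2$) are $\Omega(\avgout(K)/\Delta)$-sparse, up to $\eps$-dependent factors. For intuition, and in case the constants need rechecking, here is the mechanism. We have $|N(v)\cap K|\ge (1-\eps)\Delta$. Summing $\out(u)$ over $u\in N(v)\cap K$ gives roughly $|K|\avgout(K)$, minus the contribution of $\ANTIDegree(v)$, which is at most $2\eps\Delta\cdot\eps\Delta$. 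This is where the case split against $\avgout\ge 4\eta/\eps^2$ enters. Each edge from $u\in N(v)\cap K$ to a vertex $w\in\OUT(u)\setminus N(v)$ occupies a slot of $u$'s degree budget. So $u$ has at most $\Delta-\out(u)+|\OUT(u)\cap N(v)|$ neighbors inside $N(v)$. Summing the deficits over $u$ gives $\Omega(\Delta\,\avgout(K))$ missing pairs, once the overlap term $\sum_u|\OUT(u)\cap N(v)|\le \out(v)\cdot\Delta$ is controlled. When $\out(v)$ is large this overlap is problematic, but then the second term below takes over.

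\textbf{Term $\out(v)$.} Every $w\in\OUT(v)$ is outside $K$. If $w$ is in another almost-clique $K'$, it has at least $(1-\eps)\Delta$ neighbors in $K'$, which is disjoint from $K$. Hence $w$ has at most $\eps\Delta$ neighbors in $N(v)\cap K$, giving at least $(1-3\eps)\Delta$ non-neighbors among the $\ge(1-\eps)\Delta$ vertices of $N(v)\cap K$. If $w$ is sparse, I need the analogous fact that $w$ cannot be adjacent to most of $K$. If it were adjacent to $(1-O(\eps))\Delta$ vertices of $K$, its neighborhood would be nearly a clique, contradicting its $\Omega(\eps^2)$-sparsity. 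So $w$ has $\Omega(\eps^2\Delta)$ non-neighbors in $N(v)\cap K$. Summing over $w\in\OUT(v)$ yields $\Omega(\eps^2\Delta\,\out(v))$ non-edges in $G[N(v)]$, each counted at most twice.

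\textbf{Main obstacle.} The delicate step is the sparse-$w$ case. Membership in $\Vsparse$ only guarantees that $w$'s neighborhood misses $\Omega(\eps^2\Delta^2)$ edges. So I argue by contradiction. If $w$ had more than $(1-c\eps^2)\Delta$ neighbors in $K$, then its neighborhood would contain a set of that size within $K$, where each vertex misses at most $\eps\Delta+2\eps\Delta$ pairs. This would bound the missing edges in $N(w)$ by $O(\eps\Delta^2)$. That bound is too weak against the $\Omega(\eps^2)$ sparsity threshold, so I must use the hidden constant in \Cref{def:sparse-dense-decomp}, or fall back on exactly the statement of \cite[Lemma 3.6]{FlinH25}, which already handles $\out$ via the same $\eps^2$ loss. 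I expect to rely on that lemma for this case, citing its proof technique, and absorb all constants into the $\Omega(\eps^2)$.
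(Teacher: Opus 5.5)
Your treatment of the $\avgout(K)$ term, using \cite[Lemma 3.6]{FlinH25} as a black box, is what the paper does. Your bound for $w\in\OUT(v)$ lying in another almost-clique is also correct.

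\textbf{The gap.} The sparse-$w$ case of the $\out(v)$ term fails, and no per-vertex argument can repair it. A vertex $w\in\Vsparse\cap\OUT(v)$ can be adjacent to \emph{all} of $N(v)\cap K$. For instance, suppose its remaining $\approx\eps\Delta$ neighbors lie outside $K$ and are non-adjacent to $K$. Then $N(w)$ already misses about $\eps\Delta^2\gg\eps^2\Delta^2$ pairs, so $w$ is legitimately sparse. Hence your claim of $\Omega(\eps^2\Delta)$ non-neighbors in $N(v)\cap K$ for every $w$ is false in general.

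Your own computation reflects this. It only produces $O(\eps\Delta^2)$ missing pairs in $N(w)$, which never contradicts an $\Omega(\eps^2\Delta^2)$ threshold, whatever the hidden constant.

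The fallback fails as well. \Cref{prop:almost-clique-to-sparse} only gives sparsity $\Omega(\eps^2)\cdot\avgout(K)/\Delta$. But $\out(v)$ can be far larger than $\avgout(K)$ (up to $\eps\Delta$ versus $4\eta/\eps^2$), which is exactly the regime where the $\out(v)$ term is needed.

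\textbf{The missing idea.} Count globally rather than per $w$. Let $X:=N(v)\cap K$. Every edge between $X$ and $\OUT(v)$ is an out-edge of a vertex of $K$. So there are at most $\sum_{x\in K}\out(x)=\card{K}\cdot\avgout(K)\le(1+\eps)\Delta\cdot\avgout(K)$ of them. Hence at least $(1-\eps)\Delta\cdot\out(v)-(1+\eps)\Delta\cdot\avgout(K)$ pairs in $X\times\OUT(v)$ are anti-edges inside $N(v)$.

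Now split into two cases, as the paper does:
\begin{itemize}
\item If $\out(v)\le 4\avgout(K)$, the black-box lemma already gives the claim, since $\avgout(K)+\out(v)\le 5\avgout(K)$.
\item If $\out(v)>4\avgout(K)$, the count above is at least $(3/4-2\eps)\Delta\cdot\out(v)$. So $v$ is $\Omega(\out(v)/\Delta)$-sparse, which suffices because $\out(v)\ge(\avgout(K)+\out(v))/2$.
\end{itemize}
The paper phrases the second case as a contradiction against the threshold $\eps^2\cdot\out(v)\cdot\Delta$, but it is the same counting.

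This argument never needs to know whether the vertices of $\OUT(v)$ are sparse or dense. A few $w$'s may be adjacent to nearly all of $X$, but their total contribution is capped by $\card{K}\cdot\avgout(K)$.
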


We have the following simple corollary of~\Cref{prop:maintaining-decomposition} and~\Cref{def:classification} that allows us to maintain an $\eta$-classification of almost-cliques as well.

\begin{lemma}\label{lem:decomp}
    Let $\eps > 0$ be small and $\eta = o(\eps^2 \Delta)$. 
    There is a fully dynamic algorithm that with high probability maintains 
    a sparse-dense decomposition of $G$ for $\eps$ together with its $\eta$-classification. 
    The algorithm provides all guarantees of~\Cref{prop:maintaining-decomposition} and additionally the number of almost-cliques that change from one set of the $\eta$-classification to another set is amortized $O(\frac{1}{\eps^2 \eta \Delta})$ per update. 
\end{lemma}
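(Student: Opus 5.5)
We maintain the decomposition using the algorithm of~\Cref{prop:maintaining-decomposition}, and on top of it keep the classification \emph{lazily}. Every almost-clique $K$ keeps its current class in $\KKah$, $\KKoh$, or $\KKol$. We move $K$ to a different class only when the defining inequality of that class stops holding; the thresholds of~\Cref{def:classification} overlap, so such a violation is possible. When a move is forced, we put $K$ into a class whose inequality holds with a margin: $\KKah$ if $\avgantiDegree(K) \geq 2\eta/\eps^2$, otherwise $\KKoh$ if $\avgout(K) \geq 5\eta/\eps^2$, and otherwise $\KKol$.

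\textbf{Validity.} In each case the chosen class is legal. If $\avgantiDegree(K) < 2\eta/\eps^2$ and $\avgout(K)\ge 5\eta/\eps^2$, then $K$ is out-heavy. If $\avgout(K) < 5\eta/\eps^2$, then $K$ is out-light. Hence every almost-clique always has at least one legal class.

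\textbf{Cost per update.} The values $\avgout(K)$ and $\avgantiDegree(K)$ are already stored by~\Cref{prop:maintaining-decomposition}. Checking the class of the affected cliques therefore costs $O(1)$ per change to an $\OUT$- or $\ANTIDegree$-set, which is amortized $O(1/\eps^4)$ per update. A clique that is newly formed is classified directly; this is charged to its formation and is not counted as a transition.

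\textbf{Counting transitions.} Consider the moment $K$ is (re)classified. Its current class holds with a slack of at least $\eta/\eps^2$ in the relevant average:
\begin{itemize}
\item entering $\KKah$ means $\avgantiDegree \geq 2\eta/\eps^2$, and leaving requires dropping below $\eta/\eps^2$;
\item entering $\KKoh$ means $\avgout\geq 5\eta/\eps^2$ and $\avgantiDegree < 2\eta/\eps^2$, and leaving requires $\avgout<4\eta/\eps^2$ or $\avgantiDegree>2\eta/\eps^2$.
\end{itemize}
The last case has no slack, so it needs care. The fix is to use four tiers of thresholds: assign to $\KKoh$ only when $\avgantiDegree \leq 3\eta/(2\eps^2)$, and send $K$ to $\KKah$ when $\avgantiDegree \geq 3\eta/(2\eps^2)$. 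Then every legal exit needs a change of $\Omega(\eta/\eps^2)$ in one of the averages.

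Since $\card{K}=\Theta(\Delta)$, a change of $\eta/(2\eps^2)$ in an average corresponds to $\Omega(\eta\Delta/\eps^2)$ unit changes to $\sum_{v\in K}\out(v)$ or $\sum_{v \in K}\antiDegree(v)$. We charge these changes to the $\OUT$/$\ANTIDegree$ changes.

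\textbf{Main obstacle.} The delicate part is changes in $\card{K}$ itself, since a vertex joining or leaving shifts the averages. One insertion or removal of a vertex changes the sum by $O(\eps\Delta)$ and also counts as $\Theta(\out(v)+\antiDegree(v))$ set changes. It changes the normalization by a factor $1\pm O(1/\Delta)$, which moves each average by $O(\eps)$, and this is again chargeable.

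Summing the charges, the number of transitions is amortized $O\!\left(\frac{\eps^2}{\eta\Delta}\cdot\frac{1}{\eps^4}\right)$ per update. This matches the claimed $O(\frac{1}{\eps^2\eta\Delta})$ bound.
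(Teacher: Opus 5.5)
Your route is the paper's: run \Cref{prop:maintaining-decomposition}, keep each almost-clique's class lazily with hysteresis, and charge every class change to $\Omega(\eta\Delta/\eps^2)$ changes of that clique's $\ANTIDegree$/$\OUT$-sets. Against the $O(1/\eps^4)$ amortized bound this gives $O(1/(\eps^2\eta\Delta))$. Your four-tier refinement is more careful than the paper's written thresholds. There, a clique enters $\KKoh$ or $\KKol$ with $\avgantiDegree(K)<2\eta/\eps^2$ and leaves once $\avgantiDegree(K)\ge 2\eta/\eps^2$. It also enters $\KKol$ with $\avgout(K)<6\eta/\eps^2$ and leaves once $\avgout(K)>6\eta/\eps^2$. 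These exits have no slack, so the paper's ``similar analysis'' sentence does not literally cover them. Your tiers make every exit require a constant-factor change of an average.

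The step that fails as written is your ``Main obstacle'' paragraph. When $v$ joins or leaves $K$, every $u\in N(v)\cap K$ gains or loses $v$ in $\OUT(u)$. So $\sum_{u\in K}\out(u)$ changes by $\Theta(\Delta)$, not $O(\eps\Delta)$, and the out-average moves by $\Theta(1)$, not $O(\eps)$. Correspondingly, the event involves $\Theta(\Delta)$ set changes, not $\Theta(\out(v)+\antiDegree(v))$. With the count you state, the charging does not close. The normalization shift of an average is roughly the average divided by $\card{K}$, which can be $\Theta(\eps)$ even when $\out(v)+\antiDegree(v)=0$, so ``this is again chargeable'' is unsupported.

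You can repair this with the correct $\Theta(\Delta)$ count. More cleanly, do as the paper does: compare sums rather than averages at the reclassification time and at the exit time, using $(1-\eps)\Delta\le\card{K}\le(1+\eps)\Delta$ at both moments. All your entry/exit ratios ($3/2$, $4/3$, $5/4$, $6/5$) exceed $(1+\eps)/(1-\eps)$, so the sum itself must move by $\Omega(\eta\Delta/\eps^2)$. For instance, for $\KKah$ it drops from at least $(1-\eps)\Delta\cdot 3\eta/(2\eps^2)$ to below $(1+\eps)\Delta\cdot\eta/\eps^2$. Every unit change of $\sum_{u\in K}\antiDegree(u)$ or $\sum_{u\in K}\out(u)$, including those caused by membership changes, is a distinct change to one of the sets counted in \Cref{prop:maintaining-decomposition}. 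Each such change belongs to a single clique, so no normalization term ever needs to be charged.
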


We emphasize that in~\Cref{lem:decomp}, the amortized number of times per update that almost-cliques change a class is (considerably) sub-constant and that is needed in our analysis (although, we only need this bound to be $O(1/\eta)$ for the 
rest of the analysis). 
We prove this lemma in \Cref{sec:lem:decomp}.

\subsection{Setup and Invariants}

Throughout the algorithm, we use the following parameters:
\begin{equation} 
\begin{aligned}
	\delta &:= \text{a small constant used to suppress various absolute constants appearing in the proofs,} \\ 
	\eps &:= \text{a small constant chosen for the decomposition } (\eps>\delta),\\
	\eta &:= \text{an integer in $o(\eps^2 \Delta)$ to be chosen later (it will be set to $\approx \sqrt{n}$)}. 
\end{aligned}\label{eq:parameters}
\end{equation}
The goal is to maintain a proper $(\Delta+1)$ coloring $\varphi: V \rightarrow [\Delta+1]$ of the graph $G_t$ at each time step $t$. For simplicity, we let
$\varphi$ to use $\perp$, namely, uncolor a vertex, at \emph{intermediate} states but will ensure it does not have blank vertices before the next update. 
Moreover, the algorithm maintains
a sparse-dense decomposition $(\Vsparse,K_1,\ldots,K_k)$ of $G_t$ plus an $\eta$-classification of its almost-cliques using~\Cref{lem:decomp}. We refer to this subroutine as $\Decomposition$ for ease of reference. 

Throughout, we use $V(\KKah)$ to denote the vertices in the almost-cliques of the class $\KKah$ (similarly $V(\KKoh)$ and $V(\KKol)$ for $\KKoh$ and $\KKol$). Our algorithm treats different types of vertices differently: vertices in $\Vsparse \cup V(\KKoh)$ in one category, and $V(\KKah) \cup V(\KKol)$ in another. Following~\cite{FlinH25}, we refer to former vertices as \textbf{sparser} vertices, denoted by $S$, and to later vertices as \textbf{denser} vertices\footnote{Technically speaking,~\cite{FlinH25} considers $V(\KKah)$ as sparser vertices but for us they are denser.}, denoted by $D$. 
We now go over the setup for each of sparser and denser vertices separately.

\subsubsection*{Sparser Vertices $S:= \Vsparse \cup \KKoh$}
The algorithm maintains a proper partial coloring $\colS: V \rightarrow [\Delta+1] \cup \set{\perp}$ that forms the backbone of our approach. We maintain the following two invariants for  this coloring. 

\begin{coloredbox}{ForestGreen}
\begin{invariant}\label{inv:sparser}
	For any $v \in D$, $\colS(v) = \perp$. 
\end{invariant}
\begin{invariant}\label{inv:extend}
	For any $v \in V$ with $\colS(v) \neq \perp$, we have $\varphi(v) = \colS(v)$. 
\end{invariant}
\end{coloredbox}


The more important aspect of $\colS$ is the notion of a \textbf{flaw} for vertices. We define the following flaws for vertices under the coloring $\colS$; a vertex is said to have a specific flaw if the condition of the flaw is satisfied for the vertex. 

\begin{coloredbox}{cyan}
\begin{flaw}\label{flaw:sparse}
	$v \in \Vsparse \quad {\wedge} \quad \deg_S(v) \geq \Delta- \delta \cdot \eps^2 \cdot \Delta \quad {{\wedge}} \quad \slack{v}{\colS} < \eps^2\cdot \delta^6 \cdot \Delta.$
\end{flaw}

\begin{flaw}\label{flaw:out-heavy}
$v \in V(\KKoh) \quad {{\wedge}} \quad \deg_S(v) \geq \Delta - \delta \cdot \eps^2 \cdot \eta \quad {{\wedge}} \quad \slack{v}{\colS} <\eps^2\cdot \delta^6 \cdot \eta. $
\end{flaw}
\end{coloredbox}

We should right away note that our algorithm does \emph{not} aim to keep the coloring $\colS$ flawless: it is entirely possible that throughout the algorithm, many vertices have one or more of the flaws during a long sequence of updates. 
Instead, our algorithm fixes these flaws on a need-based basis as will become clear later. Thus, it is best to see flaws (or rather their absence) separate from the invariants that are maintained at all times. 

Another remark about these flaws is that some of the conditions in them are ``out of our hand'', e.g., whether or not $v \in \Vsparse$ is dictated by the graph and not what our algorithm does; these conditions 
say which vertices are relevant for the flaw to begin with. But, the final condition of each flaw is a function of the coloring $\colS$ which we can control. 

\subsubsection*{Vertices in $D := V(\KKah) \cup V(\KKol)$}

We also recall the notion of \emph{colorful anti-matchings} defined by~\cite{AssadiCK19} and used in many graph coloring algorithms 
that rely on sparse-dense decompositions, e.g., in~\cite{FlinGHKN24,BehnezhadRW25,FlinH25}. 

\begin{Definition}\label{def:colorful-anti-matching}
Given an almost-clique $K$ and a coloring $\varphi$, a \textbf{colorful anti-matching} $M$ is a matching over anti-edges inside $K$ such that for each anti-edge $(u,v) \in M$, both
	$u$ and $v$ are colored the same $\varphi(u)=\varphi(v)$ and this color does not appear on any other vertex of $K$. 
\end{Definition}

\begin{figure}[htbp]
    \centering
    \includegraphics[width=0.4\textwidth]{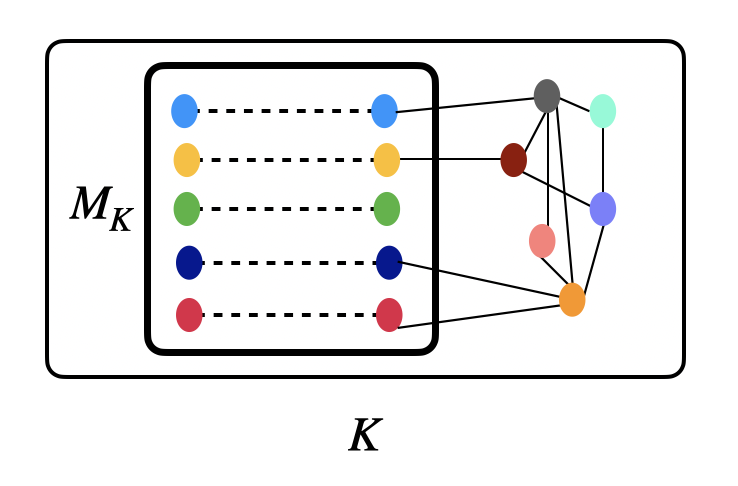}
    \caption{An illustration of a colorful anti-matching $M_K$ inside an almost-clique $K$. Colors in $M_K$ cannot be used by other vertics in $K$.}
    \label{fig:anti-matching}
\end{figure}

Colorful anti-matchings allow the algorithm to save extra colors on almost-cliques---at a rate that is proportional to their number of anti-edges---which will be crucial for coloring the rest of the almost-clique. The next invariant we maintain is regarding the size of colorful anti-matchings in each almost-clique in $\KKah \cup \KKol$.  Similar invariants have been maintained in \cite{BehnezhadRW25,FlinH25}.

\begin{coloredbox}{ForestGreen}
\begin{invariant}\label{inv:dense}
	For any almost-clique $K \in \KKah \cup \KKol$, the algorithm maintains a colorful anti-matching $M_K$ of size $\card{M_K} = \floor{8 \cdot \avgantiDegree(K)}$ using the output coloring $\varphi$. 
\end{invariant}
\end{coloredbox}

We list our key notation and parameters used in our algorithm for ease of reference in~\Cref{tab:notation}.

\input{Table-symbols.tex}

\subsection{Key Subroutines}  

We now list our key subroutines and their objectives. Their implementation and analysis will appear in subsequent sections. 
Recall that the algorithm will be maintaining a sparse-dense decomposition and its classification using $\Decomposition$ (\Cref{lem:decomp}) in the background and the subroutines below may use this decomposition
explicitly or implicitly. 

We partition the subroutines into ``old'' ones that are closely related to or borrowed from~\cite{BehnezhadRW25,FlinH25} and the ``new'' ones that entirely deviate
from those work and are new to our work.  


\subsubsection*{Old Subroutines} 

Our first subroutine is for maintaining~\Cref{inv:dense} after updates to the decomposition.
\begin{ourbox}
	$\ExpandMatching(K)$: Given an almost-clique $K \in \KKah \cup \KKol$ that violates~\Cref{inv:dense}, increases size of the colorful anti-matching $M_K$ until $K$ respects~\Cref{inv:dense}
	again. The subroutine may recolor or uncolor some vertices of $K$ in coloring $\varphi$ and will return a (possibly empty) list of vertices in $K$ which are now uncolored. 
\end{ourbox}


\begin{lemma}\label{lem:anti-matching}
	Let $K$ be an almost-clique in the decomposition violating~\Cref{inv:dense}. The subroutine $\ExpandMatching(K)$ increases
	size of $M_K$ until it again satisfies~\Cref{inv:dense} in $O(b \cdot n/\Delta)$ expected time where $b$ is the number of anti-edges added to $M_K$. 
	This subroutine may recolor vertices of $K$ in $\varphi$, while ensuring properness of the (partial) coloring $\varphi$, and creates $O(b)$ new uncolored vertices in $K$ which it also returns. 
\end{lemma}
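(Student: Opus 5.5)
The plan is to build $\ExpandMatching(K)$ one anti-edge at a time, following the sublinear-time anti-matching construction of~\cite{AssadiCK19} as adapted to the dynamic setting in~\cite{BehnezhadRW25,FlinH25}. Each iteration adds a single anti-edge $(u,w)$ to $M_K$, and we show that one iteration takes $O(n/\Delta)$ expected time and uncolors $O(1)$ vertices in expectation. Summing over the $b$ iterations then gives the claimed bounds.

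\textbf{Step 1: sampling an anti-edge.} Since $K \in \KKah\cup\KKol$ violates~\Cref{inv:dense}, we have $\card{M_K} < \floor{8\avgantiDegree(K)}$, so $K$ has $\Theta(\avgantiDegree(K)\cdot\card{K})$ anti-edges. Using the lists $\ANTIDegree(\cdot)$ from~\Cref{prop:maintaining-decomposition}, sample an anti-edge $(u,w)$ of $K$ nearly uniformly: pick $u$ with probability proportional to $\antiDegree(u)$, then pick $w \in \ANTIDegree(u)$ uniformly. Accept the pair only if neither $u$ nor $w$ is already matched in $M_K$. The matched vertices number at most $2\card{M_K} = O(\avgantiDegree(K))$, and each has at most $2\eps\Delta$ anti-edges. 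So for small $\eps$ only a constant fraction of the anti-edge mass touches matched vertices, and a constant fraction of samples succeed.

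\textbf{Step 2: choosing a shared color.} Sample a color $c\in[\Delta+1]$ uniformly. Accept $c$ if it is not used by $M_K$, and if it is not used by any vertex of $N(u)\cup N(w)$ outside $K$ (i.e., on $\OUT(u)\cup\OUT(w)$). We check this by scanning the color class $\mathcal{C}_c$ in $O(\card{\mathcal{C}_c})$ time, which is $O(n/\Delta)$ in expectation by the same argument as in~\Cref{prop:random-color-trial}. Then set $\varphi(u)=\varphi(w)=c$, uncolor every other vertex of $K$ currently holding $c$, and add $(u,w)$ to $M_K$. This keeps $\varphi$ proper: $u$ and $w$ are non-adjacent, there are no conflicts outside $K$ by the check, and no conflicts inside $K$ because those holders were uncolored. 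It also preserves colorfulness, since $c$ now appears in $K$ only on $u$ and $w$.

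The uncolored vertices are the at most $O(1)$ previous holders of $c$ in $K$, plus possibly the vertices of an $M_K$ edge we decline to touch. Since almost-cliques are near-cliques, a color class meets $K$ in $O(1)$ vertices in expectation (indeed $\le 1+$ anti-degree effects). So each iteration creates $O(1)$ expected new blanks, all inside $K$, and we return them.

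\textbf{Main obstacle.} The hard part is lower-bounding the success probability of a color trial in Step 2. The colors forbidden by $\OUT(u)\cup\OUT(w)$ and by $M_K$ number at most $\out(u)+\out(w)+\card{M_K}$. For the classes $\KKah\cup\KKol$ this could be a large fraction of $[\Delta+1]$ if $\out$ is close to $\eps\Delta$. To handle this, I would restrict Step 1 to vertices with $\out(\cdot)\le \Delta/10$ (say) and argue by Markov's inequality, using $\avgout(K),\avgantiDegree(K)=O(\eps\Delta)$, that most anti-edge mass survives the restriction. Then a constant fraction of colors are valid, and the expected number of trials is $O(1)$. Combining Steps 1 and 2 by the same geometric-retry calculation as~\Cref{prop:random-color-trial} gives $O(n/\Delta)$ expected time per added anti-edge, and hence $O(b\cdot n/\Delta)$ in total with $O(b)$ uncolored vertices.
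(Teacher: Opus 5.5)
Your proposal is correct. Its skeleton matches the paper's: sample a uniform anti-edge that avoids $M_K$, sample a color by scanning its class, and steal it. The difference is in the color-acceptance rule.

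\textbf{The paper's rule.} It rejects $c$ if $c$ appears on \emph{two or more} vertices of $K$, which automatically excludes the colors of $M_K$, or on $\OUT(u)\cup\OUT(v)$. At most $\card{K}/2$ colors repeat in $K$, so at least $\Delta/2-3\eps\Delta\geq \Delta/3$ colors are good. Each added anti-edge then uncolors \emph{at most one} vertex with certainty.

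\textbf{Your rule.} You exclude only the colors of $M_K$ and of $\OUT(u)\cup\OUT(w)$, and then uncolor every holder of $c$ in $K$. This leaves a larger pool of good colors, at least $\Delta+1-\card{M_K}-2\eps\Delta\geq \Delta+1-18\eps\Delta$. The price is that you get only $O(1)$ uncolored vertices per step in expectation, and in the worst case up to the size of an independent set in $K$. Expected recourse is all the lemma is used for later: \Cref{lem:whp-runtime-other-subroutines} only needs $\expect{Y_i}=O(1)$.

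\textbf{Make the conditioning explicit.} Your expectation is over an \emph{accepted} color, so write
\[
\expect{\card{\mathcal{C}_c\cap K}\mid c\text{ accepted}}\leq \frac{\expect{\card{\mathcal{C}_c\cap K}}}{\Pr(c\text{ accepted})}\leq \frac{1+\eps}{1-18\eps}=O(1).
\]

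\textbf{The obstacle you flag is not real.} Every $x\in K$ has $\out(x)\leq \eps\Delta$, since $\card{N(x)\cap K}\geq (1-\eps)\Delta$ and $\deg(x)\leq \Delta$. Also $\card{M_K}\leq 8\avgantiDegree(K)\leq 16\eps\Delta$. So the forbidden colors are always an $O(\eps)$ fraction of $[\Delta+1]$, and your Markov restriction to vertices with $\out(\cdot)\leq \Delta/10$ is vacuous; drop it.

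\textbf{A stray phrase.} Since $c$ is never an $M_K$ color, no endpoint of $M_K$ is ever uncolored in your scheme. Remove the remark about ``an $M_K$ edge we decline to touch.''
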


\medskip
The next subroutine is for coloring denser vertices. 

\begin{ourbox}
	$\ColorDense(v)$: Given a denser vertex $v \in D$ with $\varphi(v) = \,\perp$ in some almost-clique $K = K(v)$, assuming~\Cref{inv:dense} holds for $K$, it will color $v$ by recoloring some vertices of $K$ in $\varphi$ (without uncoloring
	any vertex at the end).  
\end{ourbox}

\begin{lemma}\label{lem:dense}
	Let $v \in D$ be a denser vertex with $\varphi(v) = \perp$ such that~\Cref{inv:dense} holds for the almost-clique $K(v)$ that $v$ belongs to. The subroutine $\ColorDense(v)$ extends the coloring $\varphi$ to color
	$v$ as well in expected $O(n/\Delta + n/\eta+ \eta)$ time by recoloring vertices inside $K$, while ensuring properness of the (partial) coloring $\varphi$, without creating new uncolored vertices. 
\end{lemma}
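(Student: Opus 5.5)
We will design $\ColorDense(v)$ following the approach of~\cite{AssadiCK19,BehnezhadRW25,FlinH25} and analyze it by counting colors available to $v$ \emph{inside} $K$. Let $K = K(v) \in \KKah \cup \KKol$ and let $M_K$ be the colorful anti-matching guaranteed by~\Cref{inv:dense}. The high-level plan is a two-stage search: first try to find a color for $v$ that is unused in $K$ and missing from $\OUT(v)$, and if that is too slow, find a color via a ``swap'' with a vertex $u \in K$, recoloring $u$ with a color that is free for $u$.

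The first step is a counting bound. The vertices of $K$ use at most $\card{K} - \card{M_K}$ distinct colors, since each anti-edge of $M_K$ shares one color. Also $\card{K} \leq \Delta + 1 - \deg(v) + \deg_K(v) + \antiDegree(v)$, and $\deg_K(v) = \deg(v) - \out(v)$. So the number of colors in $[\Delta+1]$ unused in $K$ is at least $\card{M_K} - \antiDegree(v) + \out(v) + (\Delta + 1 - \deg(v)) - (\text{small terms})$. Those not blocked by $\OUT(v)$ number at least $\card{M_K} - \antiDegree(v) + 1$, roughly. The handling then splits on the vertex:
\begin{itemize}
\item If $\antiDegree(v) \leq 4\avgantiDegree(K)$ (an \emph{inlier} in the sense of~\cite{FlinH25}), then $\card{M_K} = \floor{8\avgantiDegree(K)}$ gives $\Omega(\avgantiDegree(K))$ colors free for $v$.
\item For outliers, and in general, we use the swap: pick a uniformly random $u \in K$ and a random color $c$. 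If $c$ is free in $K$ and not in $\OUT(u)$, recolor $u$ to $c$ and give $v$ the old color $\varphi(u)$, provided $\varphi(u)$ is absent from $\OUT(v)$ and $u \notin \ANTIDegree(v)$ issues are handled. Vertices not in $M_K$ have distinct colors within $K$, so $\varphi(u)$ appears in $K$ only on $u$, and removing it from $u$ frees it for $v$ inside $K$.
\end{itemize}

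For the time bound, in the direct case a random color succeeds with probability $\Omega(\max(1,\avgantiDegree(K))/\Delta)$. Checking a color costs $O(n/\Delta)$ via its color class, as in~\Cref{prop:random-color-trial}. Restricting checks to $K$ and $\OUT(v)$, which are stored by~\Cref{prop:maintaining-decomposition}, instead costs $O(\out(v))$ by scanning $\OUT(v)$. For $K \in \KKol$ we have $\out$ values averaging $O(\eta/\eps^2)$. For anti-heavy $K$, $\avgantiDegree(K) \geq \eta/\eps^2$ gives $\Omega(\eta)$ free colors, yielding $O(n/\eta)$ expected time with color-class checks. For out-light $K$, the swap with a random inlier $u$ (with $\out(u) = O(\eta)$ by Markov) costs $O(\eta)$ to verify $c \notin \varphi(\OUT(u))$. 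Success is constant probability because most colors $\varphi(u)$ are absent from $\OUT(v)$ when $\out(v)$ is small, and $v$ has $\Omega(1)$-fraction neighbors $u$ with $\Delta+1 - \card{\text{colors in }K} \geq 1$ free colors... together giving $O(n/\Delta + n/\eta + \eta)$.

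The main obstacle is the out-light outlier case where $\out(v)$ is large: $v$ itself cannot afford to scan $\OUT(v)$, and the swap must guarantee $\varphi(u) \notin \varphi(\OUT(v))$. I would handle this by sampling $u$ and checking $\varphi(u)$ against its color class ($O(n/\Delta)$ expected), arguing that at most $\out(v) \leq \eps\Delta$ colors are blocked so a constant fraction of $u \in K \setminus (V(M_K) \cup \ANTIDegree(v))$ work. I would then find the new color for $u$, which is an inlier with high constant probability, using the slack of one plus the $\card{M_K}$ savings in $O(n/\Delta + \eta)$ time. Properness follows since each recoloring is verified against $K$ and the relevant $\OUT$ set, and no vertex is left blank.
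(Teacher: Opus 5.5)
Your overall plan is right: give an outlier $v$ a color stolen from an inlier $u\in K$, then recolor $u$ with the cheap inlier procedure. The counting bound $\card{M_K}-\antiDegree(v)+1$ for inliers is also right. The gap is in how you implement the steal, and that step carries the whole outlier case.

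\begin{itemize}
\item \textbf{Distinctness claim.} Your premise that vertices not in $M_K$ have distinct colors within $K$ is false. \Cref{inv:dense} only fixes $\card{M_K}$, and colors can repeat on non-adjacent vertices of $K$ outside $M_K$. The paper's own subroutines create such repeats, which is why $\ExpandMatching$ and $\ColorDense$ explicitly test whether a color appears on more than one vertex of $K$. So giving $\varphi(u)$ to $v$ can clash with another neighbor of $v$ in $K$ that has the same color, and your constant success probability is not established.

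\item \textbf{Cost of a check.} You claim that checking $\varphi(u)$ against its color class takes expected $O(n/\Delta)$. This does not follow from \Cref{prop:random-color-trial}, which needs the sampled color to be uniform on $[\Delta+1]$. The color of a uniformly random vertex is size-biased toward colors held by many vertices of $K$. Its class therefore has expected size only $O(\max_c k_c\cdot n/\Delta)$, where $k_c$ is the number of vertices of $K$ colored $c$.

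\item \textbf{Blank vertices.} You never handle $u$ being blank. $K$ can be almost entirely blank when $\ColorDense$ is called: if an out-heavy almost-clique turns out-light, all its vertices are uncolored when they join $D$, and only $M_K$ is rebuilt before the dense vertices are processed. In that situation a random $u$ essentially never works.

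\item \textbf{Unfinished argument.} The success-probability argument trails off (``\dots'') and is never completed.
\end{itemize}

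The missing idea is to sample the \emph{color}, not the vertex. For an outlier $v$, the paper draws $c$ uniformly from $[\Delta+1]$, so each check scans a class of expected size $O(n/\Delta)$. It accepts $c$ iff all of the following hold:
\begin{itemize}
\item $c$ is not an $M_K$ color;
\item $c$ is not used on $\OUT(v)$ or on an outlier neighbor of $v$ in $K$;
\item $c$ is used by at most one inlier of $K$.
\end{itemize}
At most $\card{M_K}+\out(v)+\card{O_K}+\card{I_K}/2$ colors are rejected. Markov's inequality applied to $\out$ and $\antiDegree$ bounds $\card{O_K}$, so a constant fraction of colors is accepted and $O(1)$ trials suffice in expectation. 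Blank vertices only shrink the forbidden set, and repeats among inliers are counted rather than assumed away.

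Then $v$ takes $c$, and the at most one inlier $w$ holding $c$ is uncolored and recolored:
\begin{itemize}
\item If $K\in\KKah$, random trials take $O(n/\eta)$, since at least $\card{M_K}-\antiDegree(w)+1=\Omega(\eta)$ colors are unused on $K\cup\OUT(w)$.
\item If $K\in\KKol$, scan $\OUT(w)$, whose size is at most $4\avgout(K)=O(\eta)$, against a maintained list $\Avail{K}{\varphi}$ of colors unused in $K$. Your proposal also needs this list for its $O(\eta)$ scans, since \Cref{prop:maintaining-decomposition} does not store the palette of $K$.
\end{itemize}
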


We also have a subroutine for coloring sparser vertices, which is allowed to fail in certain cases. 

\begin{ourbox}
	$\ColorSparse(v)$: Given a sparser vertex $v \in S$ with $\varphi(v)=\perp$, it either colors $v$ in $\varphi$ (not $\colS$) by potentially uncoloring some denser vertices, or outputs \Fail. 
\end{ourbox}

\begin{lemma}\label{lem:attempt}
    For a sparser vertex $v \in S$, subroutine $\ColorSparse(v)$ takes $O(n\log{n}/\eta)$ time in expectation. If $v$ does not have any of~\Cref{flaw:sparse} and \Cref{flaw:out-heavy}
	under $\colS$, $\ColorSparse(v)$ with high probability colors $v$ successfully in $\varphi$ without returning \Fail. Moreover, this subroutine uncolors at most $O(1)$ many denser vertices in expectation.
\end{lemma}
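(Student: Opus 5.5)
We first fix an implementation of $\ColorSparse(v)$ that matches the overview's $\Attempt$. Sample $k=\Theta(\log n)$ colors $c_1,\ldots,c_k$ independently and uniformly from $[\Delta+1]$. For each $c_i$, scan the color class $\mathcal{C}_{c_i}$ under $\varphi$ and check whether some vertex of $N(v)\cap\mathcal{C}_{c_i}$ is sparser, i.e.\ lies in $S$. If there is no such neighbor, assign $\varphi(v)\leftarrow c_i$, uncolor every denser vertex in $N(v)\cap\mathcal{C}_{c_i}$, and stop. If no $c_i$ succeeds, return \Fail.

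\textbf{Running time.} Each scan costs $O(\card{\mathcal{C}_{c_i}})$ if adjacency is tested in $O(1)$ time (e.g.\ by hashing). Since $\sum_c \card{\mathcal{C}_c}\le n$, a uniform color has $\Exp\card{\mathcal{C}_{c}}\le n/(\Delta+1)$. This holds regardless of the adversary's adaptivity, because the fresh samples $c_i$ are independent of the current state. Over $O(\log n)$ samples the expected time is $O(n\log n/\Delta)$, which is at most $O(n\log n/\eta)$ because $\eta=o(\Delta)$.

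\textbf{Success without flaws.} The key point is that colors used only by denser neighbors count as available for this procedure. Let $A(v)$ be the set of colors not used by any sparser neighbor under $\varphi$. Uncolored neighbors and denser neighbors do not block any color, so $\card{A(v)}\ge \Delta+1-\card{\{\text{colored sparser neighbors}\}}$. We lower bound this by $\Omega(\eta)$ using the absence of flaws.

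Every sparser neighbor $u$ with $\colS(u)\neq\perp$ satisfies $\varphi(u)=\colS(u)$ by \Cref{inv:extend}. The key quantity is $\slack{v}{\colS}=\avail{v}{\colS}-\free{v}{\colS}$. Consider the same-colored anti-edge structure inside $N(v)$ that the slack captures: the $\Delta+1$ colors minus the distinct colors appearing on $\colS$-colored neighbors is at least $\slack{v}{\colS}$ plus the number of $\colS$-blank neighbors. The colored sparser neighbors under $\varphi$ consist of the $\colS$-colored ones together with some $\colS$-blank sparser neighbors that $\varphi$ colored. The latter are at most $\free{v}{\colS}$ in number. Hence the number of distinct colors on sparser neighbors is at most $(\Delta+1-\avail{v}{\colS})+\free{v}{\colS}$. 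Here the counting must account for $\colS$-blank denser neighbors (all denser vertices are $\colS$-blank by \Cref{inv:sparser}); they are counted in $\free{v}{\colS}$, which only helps. This gives $\card{A(v)}\ge \slack{v}{\colS}$.

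We now split by the type of $v$:
\begin{enumerate}
\item If $v\in\Vsparse$ with $\deg_S(v)\ge\Delta-\delta\eps^2\Delta$, no \Cref{flaw:sparse} gives $\slack{v}{\colS}\ge \eps^2\delta^6\Delta$.
\item If $v\in\Vsparse$ with $\deg_S(v)<\Delta-\delta\eps^2\Delta$, then $v$ has fewer than $\Delta-\delta\eps^2\Delta$ sparser neighbors, so trivially $\card{A(v)}\ge\delta\eps^2\Delta$.
\item If $v\in V(\KKoh)$, the same argument with \Cref{flaw:out-heavy} gives $\card{A(v)}\ge \eps^2\delta^6\eta$.
\end{enumerate}
In every case $\card{A(v)}/(\Delta+1)=\Omega(\eta/\Delta)$. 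This is where I expect the main obstacle. With only this fraction, $\Theta(\log n)$ samples do not suffice. The fix I would try is to take $k=\Theta((\Delta/\eta)\log n)$ samples for out-heavy $v$ and $\Theta(\log n)$ for sparse $v$. The time then becomes $O((\Delta/\eta)\log n\cdot n/\Delta)=O(n\log n/\eta)$, exactly matching the claimed bound. The all-fail probability becomes $(1-\Omega(\eta/\Delta))^{k}\le n^{-c}$. So I would set $k=\Theta(\frac{\Delta}{\eta}\log n)$ uniformly, which is consistent with the stated time bound.

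\textbf{Uncolored denser vertices.} The vertices uncolored are the denser vertices in $N(v)\cap\mathcal{C}_{c}$ for the one chosen color $c$. Conditioned on success, $c$ is uniform over the successful set. The delicate part is bounding the expectation of $\card{N(v)\cap\mathcal{C}_c}$ under this conditioning. I would bound the expected total $\sum_{i}\card{N(v)\cap \mathcal{C}_{c_i}}$ over all trials that are inspected: each trial contributes at most $\deg(v)/(\Delta+1)\le1$ in expectation. To avoid the $k$ factor, I would bound only the accepted trial. The accepted color is uniform in $A(v)$, and $\sum_{c\in A(v)}\card{N_D(v)\cap\mathcal{C}_c}\le\Delta$, giving $\Delta/\card{A(v)}=O(\Delta/\eta)$. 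This is too weak, so instead I would modify the acceptance rule: accept $c$ only if it is also used by at most $O(1)$ denser neighbors. By Markov's inequality, at least half of $A(v)$ minus $O(\Delta/C)$ colors qualify. This keeps the success fraction $\Omega(\eta/\Delta)$ after adjusting constants, provided the constant $C$ is large relative to $\eps^{-2}\delta^{-6}\Delta/\eta$. If that becomes problematic, I would restrict the count to $O(1)$ uncolorings deterministically by this thresholding.
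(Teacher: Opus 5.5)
Your running-time and success arguments are sound. Once you switch to $k=\Theta(\frac{\Delta}{\eta}\log n)$ trials, they coincide with the paper's proof: the paper uses $\Delta\log n/(\delta^8\eta)$ trials, the bound $\card{A(v)}\ge\slack{v}{\colS}$, and the trivial bound from $\deg_S(v)$ when the degree condition of the flaw fails.

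The gap is in the last claim. Bounding $\sum_{c\in A(v)}\card{N_D(v)\cap\mathcal{C}_c}$ by $\Delta$ only gives $O(\Delta/\eta)$ uncolored denser vertices for $v\in V(\KKoh)$, and your proposed repair does not close this.
\begin{itemize}
\item By your own accounting, the threshold $C$ would have to be large compared to $\eps^{-2}\delta^{-6}\Delta/\eta$, which is not a constant.
\item The fallback of thresholding ``deterministically'' comes with no lower bound on how many colors of $A(v)$ survive, so it would destroy the high-probability success guarantee you just proved.
\item It also changes $\ColorSparse$, whereas the lemma is about the subroutine as defined.
\end{itemize}

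The missing observation is that the denser neighbors pay for themselves in $A(v)$. Since $\deg(v)\le\Delta$, we have $\deg_D(v)\le\Delta-\deg_S(v)$. The sparser neighbors occupy at most $\deg_S(v)$ colors, so
\[
\card{A(v)} \;\ge\; \Delta+1-\deg_S(v) \;>\; \deg_D(v).
\]
As you noted, the accepted color is uniform on $A(v)$, because the trials are i.i.d.\ and nothing changes before acceptance. Hence the expected number of uncolored denser vertices is at most $\frac{1}{\card{A(v)}}\sum_{c\in A(v)}\card{N_D(v)\cap\mathcal{C}_c}\le \deg_D(v)/\card{A(v)}<1$. This holds for either type of $v$ and needs no modification of the algorithm; it also shows your threshold variant would already work with $C=2$.

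For comparison, the paper's one-line justification gives a bound of $O(1/(\delta^6\eps^2))$. It appears to divide by the $\Omega(\Delta)$ lower bound on $\card{A(v)}$, which is only available for $v\in\Vsparse$. So for out-heavy $v$ the crude estimate you tried really is insufficient, and an argument like the inequality above is needed.
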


As stated earlier, subroutines $\ExpandMatching$, $\ColorDense$, and $\ColorSparse$ and their corresponding lemmas are closely related to the work of~\cite{FlinH25} (as well as~\cite{BehnezhadRW25})  and almost (but not entirely) follow from the results there. Since we will be applying them with different guarantees and parameters, we provide their proofs in \Cref{sec:sparse-subroutines,sec:dense} for completeness. 

\subsubsection*{New Subroutines} 

The first new subroutine we need is a way to rather quickly color most uncolored vertices in an almost-clique $K \in \KKoh$. This is one of the parts that we deviate from
the previous work in~\cite{BehnezhadRW25,FlinH25} as there is no analogue of this algorithm in those work. 

\begin{ourbox}
	$\ColorOH(K)$: Given a partially colored almost-clique $K \in \KKoh$, it will color all but ${4\cdot \avgout(K)}$ uncolored vertices of $K$ in $\varphi$ without recoloring or uncoloring any vertex. 
\end{ourbox}

\phantomsection 
\begin{lemma}\label{lem:colorOH}
	Let $K \in \KKoh$ be any out-heavy almost-clique in the decomposition with at least $4\Delta/5$ uncolored vertices in $\varphi$. Calling $\ColorOH(K)$ extends the coloring $\varphi$ to color all but ${4\cdot \avgout(K)}$ vertices of $K$ in expected $O(n\log{n})$ time without recoloring or uncoloring any vertex. 
\end{lemma}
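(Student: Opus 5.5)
We need a plan for Lemma (colorOH): given an out-heavy almost-clique $K$ with at least $4\Delta/5$ uncolored vertices, color all but $4\avgout(K)$ of them in expected $O(n\log n)$ time, without recoloring or uncoloring.

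The plan mirrors the sublinear-time palette-sparsification and matching approach of~\cite{AssadiY25}, specialized to almost-cliques. Let $U\subseteq K$ be the uncolored vertices. Since $K$ is out-heavy, $\avgantiDegree(K)\le 2\eta/\eps^2$ is small relative to $\avgout(K)\ge 4\eta/\eps^2$, so the main obstruction to coloring $U$ is the colors already used by vertices outside $K$ or inside $K$.

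\textbf{Step 1: count the free colors.} At least $4\Delta/5$ vertices of $K$ are uncolored, so at most $|K|-4\Delta/5\le (1/5+\eps)\Delta$ colors appear on $K$. Let $F$ be the set of colors not used inside $K$; then $|F|\ge (4/5-\eps)\Delta$, which is at least $|U|-O(\eps\Delta)$ roughly. We do not compute $F$ explicitly. Instead, we scan $K$ once in $O(\Delta)$ time to mark the used colors.

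\textbf{Step 2: build a bipartite graph between $U$ and $F$.} Put an edge $(u,c)$ whenever $c$ is available to $u$, meaning no vertex of $\OUT(u)$ has color $c$. Testing a pair $(u,c)$ costs $O(\out(u))$ using the stored list $\OUT(u)$ from \Cref{prop:maintaining-decomposition}, or alternatively $O(|\mathcal{C}_c|)$ by scanning the color class. Each $u$ has at least $|F|-\out(u)$ neighbors in this graph. Assigning colors of $F$ to distinct vertices of $U$ is a matching, because vertices inside $K$ receiving distinct colors never conflict, and colors in $F$ are unused in $K$. So it suffices to find a matching leaving at most $4\avgout(K)$ vertices of $U$ unmatched.

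A greedy or augmenting argument gives this. Consider vertices with $\out(u)\le 2\avgout(K)$: they have degree at least $|F|-2\avgout(K)$. By Markov's inequality, at most $|K|/2$ vertices have larger $\out$; handling these requires using $|U|$ versus $|F|$ more carefully. Hall's condition fails only on deficiency of order $\max_u \out(u)$-weighted, and I would bound the deficiency by $\sum_u \out(u)/|F|$-type counting to get at most $4\avgout(K)$ unmatched vertices.

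\textbf{Step 3: implement sublinearly (the main obstacle).} We cannot afford all $|U||F|$ pair tests. Following~\cite{AssadiY25}, each $u\in U$ samples $\Theta(\log n)$ random colors from $F$. By sampling random colors of $[\Delta+1]$ and rejecting those used in $K$, the expected cost per sample is $O(n/\Delta)$ for the color-class scan. This gives $O(\Delta\log n\cdot n/\Delta)=O(n\log n)$ total time. We then run a greedy maximal matching on the sampled graph, followed by a random-color augmenting phase: an unmatched $u$ tries a random free color $c$, and if $c$ is matched to $w$, checks whether $w$ can take a sampled alternative. The hard part is proving that the sampled graph still has a near-perfect matching up to $4\avgout(K)$, given that high-$\out$ vertices have fewer available colors. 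I would argue via a fractional-matching/Hall argument on the sampled graph: any set $X\subseteq U$ with $|X|>|N(X)|+4\avgout(K)$ requires many sampled colors to land in a small set, which has probability $n^{-\Omega(1)}$ per set, and union-bound over the relevant set sizes as in palette sparsification. Finally, we assign the matched colors in $\varphi$. This never recolors or uncolors any vertex and keeps $\varphi$ proper.
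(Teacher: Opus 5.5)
\textbf{Main gap: the slack inequality is missing.} Your Steps 1--2 never show that there are more free colors than uncolored vertices, so the bound of $4\cdot\avgout(K)$ leftover vertices cannot follow from them.

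\begin{itemize}
\item Your count only gives $|F|\ge \Delta+1-(|K|-|U|)=|U|-(|K|-\Delta-1)$, that is, $|F|\ge |U|-\eps\Delta$ in the worst case.
\item A matching into $F$ covers at most $|F|$ vertices, so this leaves room for $\Theta(\eps\Delta)$ unmatched vertices. By contrast, $4\cdot\avgout(K)$ may be as small as $16\eta/\eps^2=o(\Delta)$.
\item No Hall-type counting over the $\OUT$-sets can repair this, because this part of the deficiency comes from $|K|$ versus $\Delta+1$, not from the out-neighborhoods.
\end{itemize}

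The missing idea is a structural consequence of out-heaviness that you mention but never use. For every $v\in K$ we have $\deg(v)=|K|-\antiDegree(v)+\out(v)-1\le\Delta$. Averaging over $K$ and using $\avgantiDegree(K)\le 2\eta/\eps^2\le\avgout(K)/2$ gives $|K|\le\Delta+1-\avgout(K)/2$, and hence $|F|\ge|U|+\avgout(K)/2$.

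With this in hand, your Hall argument does work. The deficiency is at most $\max_k (k+o_k)-|F|$, where $o_k$ is the $k$-th largest $\out$-value in $U$.
\begin{itemize}
\item For $k>|K|/2$, Markov gives $o_k\le 2\avgout(K)$.
\item For $k\le|K|/2$, the bounds $o_k\le\eps\Delta$ and $|U|\ge 4\Delta/5$ make the expression negative.
\end{itemize}
So at most $\tfrac32\avgout(K)$ vertices go unmatched.

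\textbf{Step 3 is still not a proof, even granting the inequality.}
\begin{itemize}
\item The palette-sparsification union bound is only gestured at. It is also delicate: a vertex $u\in X$ whose $\out(u)$ is comparable to $|F\setminus Y|$ contributes no decay to the probability that all samples of $X$ land in $Y$.
\item A greedy maximal matching only guarantees half the optimum, and the random augmenting phase is unanalyzed.
\item Computing a maximum matching on the $\Theta(\Delta\log n)$-edge sampled graph via the $O(E\sqrt{V})$ bound of Hopcroft--Karp costs $O(\Delta^{3/2}\log n)$. This is not $O(n\log n)$ when $\Delta\gg n^{2/3}$.
\end{itemize}

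\textbf{None of this machinery is needed once you have the slack inequality.} The paper processes the uncolored vertices one by one in non-increasing order of $\out(\cdot)$ and colors each with $\RandomColorTrial$.
\begin{itemize}
\item When $|B_i|$ vertices remain, the colored part of $K$ uses at most $|K|-|B_i|$ colors. So $v_i$ has at least $\Delta+1-|K|+|B_i|-\out(v_i)$ available colors.
\item For the first $|K|/2$ vertices this is $\Omega(\Delta)$, since initially at most $(1/5+\eps)\Delta$ vertices of $K$ are colored. Each costs $O(n/\Delta)$ by \Cref{prop:random-color-trial}.
\item Afterwards $\out(v_i)\le 2\avgout(K)$, giving at least $|B_i|-\tfrac32\avgout(K)$ available colors and cost $O\bigl(n/(|B_i|-\tfrac32\avgout(K))\bigr)$.
\item Stopping at $4\cdot\avgout(K)$ leftovers, these costs sum to a harmonic series, $O(n\log n)$.
\end{itemize}
Processing high-$\out$ vertices first is exactly what lets plain greedy achieve the Hall deficiency bound without any matching computation.
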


Up until now, all subroutines we have are more sublinear time algorithms than dynamic ones in that they do not need
to maintain any explicit state and we can analyze their runtime for each invocation directly (although their input is updated dynamically by 
our main algorithm and their outputs will be used dynamically by other subroutines). 


We now get to our final subroutine which is the heart of our new algorithm. This subroutine is called to color sparser vertices
when $\ColorSparse$ fails to do so. This part entirely departs from prior work in~\cite{BehnezhadRW25,FlinH25} that relied on execution of 
sublinear time algorithms such as~\cite{AssadiCK19,AssadiY25} and instead handles this part dynamically.  
\begin{ourbox}
	$\UpdateSparseColor(v)$: Given a sparser vertex $v \in S$ with $\colS(v) = \perp$, it updates the coloring in $N(v)$ by recoloring and uncoloring some vertices in $\colS$ and returns the uncolored vertices. 
\end{ourbox}

We will not provide a standalone lemma for update time and recourse of $\UpdateSparseColor$ as it is closely intertwined with the overall performance of the entire dynamic algorithm. 
However, the following lemma captures the correctness guarantees we need from this subroutine. 
\phantomsection
\begin{lemma}\label{lem:update-sparser}
	Let $v \in S$ with $\colS(v) = \perp$.
    If $\UpdateSparseColor(v)$ terminates, 
    it has recolored or uncolored some vertices in $\colS$ and additionally uncolored some vertices in $\varphi$ that it will return, while ensuring properness of the partial colorings $\colS$ and $\varphi$. This subroutine does not color denser vertices in $\colS$ and ensures $\colS$ and $\varphi$ agree on recolored vertices (maintains~\Cref{inv:extend,inv:sparser}).  
\end{lemma}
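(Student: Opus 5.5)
The plan is to prove \Cref{lem:update-sparser} as a purely structural statement, by induction over the elementary operations that $\UpdateSparseColor(v)$ performs. The runtime and termination questions are handled elsewhere, so only correctness is needed here. I will fix the implementation as in the overview: $\UpdateSparseColor(v)$ is a recursive combination of $\Resample$-style sampling steps on $N(v)$ and on neighbors $u$ whose $\avail{u}{\colS}$ is too small, followed by a $\Cleanup$-style step. The proof then amounts to checking a loop invariant.

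The invariant after every elementary step is the conjunction of four conditions:
\begin{enumerate}[label=$(\alph*)$]
\item $\colS$ is a proper partial coloring;
\item \Cref{inv:sparser} holds;
\item $\varphi$ is proper on all vertices it colors;
\item \Cref{inv:extend} holds.
\end{enumerate}
The sampling step for a vertex $u$ comes in two cases.
\begin{itemize}
\item If $u\in D$, the step is skipped or sets $\colS(u)=\perp$. Since only vertices in $S$ participate in sampling, $(b)$ is preserved.
\item If $u\in S$, it first sets $\colS(u)=\perp$, which trivially keeps $(a)$. It then assigns $c_u$ only if $c_u\in\Avail{u}{\colS}$, checked against the \emph{current} $\colS$, which already includes updates for earlier vertices in the loop. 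So $(a)$ is preserved.
\end{itemize}
Because the processing is sequential, correctness of the availability check reduces to scanning the color class $\mathcal{C}_{c_u}$ restricted to $\colS$.

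The only place where $(c)$ and $(d)$ can temporarily break is immediately after $\colS(u)$ changes. If $\colS(u)$ became $\perp$, then $(d)$ holds vacuously for $u$. We then also set $\varphi(u)=\perp$ and add $u$ to the returned list; this keeps $(c)$, since uncoloring never creates monochromatic edges. If $\colS(u)=c$, the cleanup sets $\varphi(u)=c$, which restores $(d)$ for $u$. It then uncolors every $w\in N(u)\cap\mathcal{C}_c$ with respect to $\varphi$, which restores $(c)$, and each such $w$ is appended to the returned list.

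The key check is that uncoloring such a $w$ in $\varphi$ does not violate $(d)$ for $w$. This needs $\colS(w)\neq c$, which holds because $c$ was available to $u$ in $\colS$ and $\colS$ is proper. Hence $\colS(w)=\perp$, since by $(d)$ any colored $w$ would have $\colS(w)=\varphi(w)=c$. So only vertices blank in $\colS$ are uncolored in $\varphi$.

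Finally, no denser vertex is ever colored in $\colS$, because sampling is restricted to $S$. Recolored vertices satisfy $\varphi=\colS$ by construction. The returned list contains exactly the vertices newly uncolored in $\varphi$.

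I expect the main obstacle to be the interaction between recursion and changes to $\colS$. Nested calls on neighbors interleave with the outer loop over $N(v)$, and the outer loop's availability checks must use the state left by the inner calls. I would handle this by proving the invariant per elementary step rather than per call, so that any interleaving is covered. A second point to guard is that the decomposition is fixed during the call, so membership in $S$ or $D$ does not change mid-call.
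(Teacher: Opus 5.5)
Your per-step invariant argument for the sampling step is correct and matches the paper's reasoning. Each $\SampleColor(u)$ colors only a sparser $u$ in $\colS$, uses a color from $\Avail{u}{\colS}$, copies it to $\varphi$, and then uncolors the $\varphi$-conflicts in $N(u)$. You also observe that every such uncolored $w$ must already be blank in $\colS$: otherwise \Cref{inv:extend} would give $\colS(w)=c$, contradicting $c\in\Avail{u}{\colS}$. The paper leaves this implicit, and it is exactly what keeps \Cref{inv:extend} intact.

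The gap is that you prove the lemma for the overview's informal procedure, not for the formal $\UpdateSparseColor(v)$. The formal subroutine has no recursive resampling of neighbors with small $\avail{u}{\colS}$; that role is played by \Cref{flaw:all} and the background \algs{CheckFlaw3} calls. It also has no $\Attempt$-based cleanup, since the $\varphi$-uncoloring is built into $\SampleColor$. What it actually does is run $\ColorSparse(v)$, then $\Resample(v)$ (that is, $\SampleColor(u)$ for each $u\in N(v)\cap S$ in lexicographic order), and then $\ColorOH(K(v))$ if $v\in V(\KKoh)$. Your proof never examines the first and last of these, and they are precisely the steps that change $\varphi$ without changing $\colS$.

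Both fit easily into your framework, but each needs its own check.
\begin{itemize}
\item $\ColorSparse(v)$ gives $v$ a color held in $\varphi$ by no sparser neighbor and uncolors the \emph{denser} neighbors that hold it. Here the relevant fact is \Cref{inv:sparser}, not \Cref{inv:extend}. Those denser vertices are blank in $\colS$, so uncoloring them in $\varphi$ cannot violate \Cref{inv:extend}. Coloring $v$ only in $\varphi$ is allowed because $\colS(v)=\perp$.
\item $\ColorOH(K(v))$ applies $\RandomColorTrial$ only to $\varphi$-blank vertices, which are $\colS$-blank by \Cref{inv:extend}. It assigns only colors available in $\varphi$, uncolors nothing, and leaves $\colS$ untouched.
\item The $\Resample$ calls issued by \algs{CheckFlaw3} are again sequences of $\SampleColor$ calls, so your per-step invariant already covers them.
\end{itemize}
Once these cases are added, the argument is complete.
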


We mention one last subroutine used in $\UpdateSparseColor$ and our dynamic algorithm. 
\begin{ourbox}
	$\SampleColor(v)$: Given a sparser vertex $v \in S$, it will sample a color $c_v$ from a to-be-determined distribution and set $\colS(v)=c_v$ if possible and otherwise $\colS(v)=\perp$. 
	This subroutine might also uncolor some vertices in $\set{v} \cup N(v)$ and returns them.
\end{ourbox}
\begin{lemma}\label{lem:SampleColor}
	For any vertex $v\in S$, the subroutine $\SampleColor(v)$ takes expected $O(n/\Delta)$ time to sample a color for $v$ and it uncolors $O(1)$ vertices in expectation.
\end{lemma}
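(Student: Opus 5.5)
We plan to prove \Cref{lem:SampleColor} by fixing the natural implementation of $\SampleColor(v)$ and then bounding its two costs separately. The implementation is as in the overview's sampling process. Using the next entry of the table $\randTable$ for $v$, with a small constant probability $p$ (say $p=\delta$) we draw $c_v$ uniformly from $[\Delta+1]$; otherwise we set $\colS(v)=\perp$ and stop.

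If a color $c_v$ was drawn, we scan the color class $\mathcal{C}_{c_v}$ restricted to $\colS$, which is kept as a list per color. We test each member for adjacency to $v$ via a hash set of $N(v)$. If no member is a neighbor, we set $\colS(v)=c_v$ and $\varphi(v)=c_v$. Every vertex $u\in N(v)$ with $\varphi(u)=c_v$ must then have $\colS(u)=\perp$, so it lies only in the $\varphi$-class. We uncolor all such $u$ in $\varphi$ and return them together with $v$; $v$ itself is returned if it lost its old color. This keeps $\colS$ and $\varphi$ proper and preserves \Cref{inv:extend}. The case $v\in D$ cannot arise because $v\in S$, so \Cref{inv:sparser} also holds.

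For the running time, the cost is $O(1)$ plus $O(|\mathcal{C}_{c_v}|)$ for scanning the $\varphi$-class of $c_v$, which also covers the $\colS$-class. Since $c_v$ is uniform over $[\Delta+1]$ and the classes partition at most $n$ vertices, $\Exp|\mathcal{C}_{c_v}|\le n/(\Delta+1)$, exactly as in the proof of \Cref{prop:random-color-trial}. Hence the expected time is $O(n/\Delta)$. The key point is that $c_v$ is drawn independently of the current coloring, since it comes from a fresh table entry. This independence holds even against an adaptive adversary.

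For the recourse, the uncolored vertices are $v$ plus the neighbors $u\in N(v)$ with $\varphi(u)=c_v$. Their expected number is
\[
\sum_{u\in N(v)}\Pr(\varphi(u)=c_v)\le \frac{\Delta}{\Delta+1}\le 1,
\]
again because $c_v$ is uniform and independent of $\varphi$. This gives $O(1)$ uncolored vertices in expectation.

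The main subtlety is making sure that the random color is truly independent of the state being probed. This requires that the table entry is fresh and unseen by the adversary before it is drawn. Given that, the argument is a direct averaging over the $\Delta+1$ color classes.
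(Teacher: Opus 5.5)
Your proof is correct and follows essentially the same argument as the paper. The running time is dominated by scanning the color class of a uniformly random color, whose expected size is at most $n/(\Delta+1)$. The recourse is $v$ plus at most $\Delta$ neighbors, each holding the sampled color with probability at most $1/(\Delta+1)$. Your extra remarks only make explicit what the paper's short proof leaves implicit: the $\colS$-class is contained in the $\varphi$-class by \Cref{inv:extend}, and the color comes from a fresh table entry, so it is independent of the current coloring.
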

 

The last comment is a \emph{deterministic} upper bound on the \emph{worst-case} runtime of these subroutines.  
We need this worst-case runtime bound to prove the high probability argument of \Cref{thm:delta+1}.

\begin{lemma}\label{lem:worstcase-subroutines}
	The worst-case runtime of each subroutine \ExpandMatching, \ColorDense, \ColorSparse, and \ColorOH is $O(m)$, where $m$ is the number of edges in the graph at the time the subroutine is called.
\end{lemma}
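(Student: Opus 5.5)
The natural route is to make every potentially unbounded loop inside these four subroutines a \emph{guarded} random trial. Each trial either succeeds or, after a fixed budget, falls back to a deterministic scan of the relevant neighborhoods. We then bound the total deterministic work by the size of the graph at call time. The implementations are only given in \Cref{sec:sparse-subroutines,sec:dense}, so the plan is a template applied to each of them. Every step that touches the graph reads either a color class $\mathcal{C}_c$ (of total size at most $n$ over all colors), the adjacency list of some vertex, or the lists $\OUT(\cdot)$ and $\ANTIDegree(\cdot)$ maintained by \Cref{prop:maintaining-decomposition}. The only obstacle to an $O(m)$ bound is the number of repetitions of such steps.

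First, I would fix an implementation of a \emph{single} random color trial for a vertex $u$. Checking whether $N(u)\cap\mathcal{C}_c=\emptyset$ is done by scanning $\mathcal{C}_c$ while testing adjacency, but the scan is cut off after $\min\{|\mathcal{C}_c|,\deg(u)+1\}$ steps. Alternatively, one scans $N(u)$ directly, since $\deg(u)\le \Delta$. Either way a trial costs $O(\deg(u)+1)$ deterministically. I would then cap the number of trials per invocation. \ColorSparse already samples only $\Theta(\log n)$ colors by design. For \ColorDense and \ExpandMatching, after $\Theta(\log n)$ failed trials (or once the total trial cost exceeds $\deg(u)$) the algorithm switches to the deterministic fallback: mark the colors used in $N(u)$ in $O(\deg(u))$ time and pick an available one. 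Such a color always exists, because $\deg(u)\le\Delta$ and there are $\Delta+1$ colors, or because of the anti-matching slack guaranteed in the dense setting. This fallback changes the expectation only by a lower-order amount, because it is triggered with probability $1/\poly(n)$, so the bounds of \Cref{lem:dense,lem:anti-matching,lem:attempt} are preserved.

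Second, I would bound the number of vertices processed per call. \ExpandMatching adds $b$ anti-edges and touches $O(b)\subseteq O(|K|)$ vertices, each at fallback cost $O(\deg)$. Summing degrees over $K$ gives $O(\sum_{u\in K}\deg(u))=O(m)$. \ColorDense colors one vertex via a constant number of recolorings inside $K$, and again each costs at most a degree. \ColorOH processes each vertex of $K$ a bounded number of times plus $O(\log n)$ guarded trials, for total $O(\sum_{u\in K}\deg(u)\log n)$. To remove the $\log n$ factor I would instead cap the total work of \ColorOH at $\sum_{u\in K}\deg(u)$ and then run the greedy completion of all uncolored vertices of $K$ in $O(\sum_{u\in K}\deg(u))$ time. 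This greedy step is valid since it only colors vertices and never uncolors or recolors any, so it still meets the specification of \ColorOH.

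The main obstacle is the $\log n$ factors and the occasional $O(n)$ term from scanning full color classes or lists of sizes unrelated to the graph. The edge case $m\ll n$ needs care: if we ever iterate over $[\Delta+1]$ or over $V$, that costs more than $O(m)$. I would handle this by bounding every scan by adjacency-list sizes and by keeping a global count, so that each subroutine aborts randomized work once that count reaches $m$ and finishes with the degree-based deterministic fallback. This yields the deterministic $O(m)$ worst-case bound without affecting the expected-time lemmas.
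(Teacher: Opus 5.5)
Your closing mechanism is exactly the paper's proof: run each subroutine's randomized procedure until a work counter reaches $m$, then finish deterministically by scanning the relevant adjacency lists. The fallbacks are greedy completion of $K$ for \ColorOH, a neighborhood scan for \ColorSparse and \ColorDense, and a scan of the edges at $K$ for \ExpandMatching. This does give the $O(m)$ worst-case bound.

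Two of your intermediate claims are wrong, though, and should be dropped rather than relied on.

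First, \ColorSparse does not sample $\Theta(\log n)$ colors; that is the informal \Attempt of the overview. The actual subroutine makes $\Delta\log n/(\delta^8\eta)$ trials. A per-trial $O(\deg(v))$ cap alone therefore gives only $O(\deg(v)\cdot\Delta\log n/\eta)$, and the global $m$ budget is genuinely needed there.

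Second, a fallback triggered after $\Theta(\log n)$ failed trials is not a $1/\poly(n)$ event in general. An inlier of an anti-heavy almost-clique in \ColorDense is only guaranteed $\Omega(\eta)$ available colors out of $\Delta+1$. Likewise, the last vertices processed by \ColorOH may have only $O(1)$ available colors. So when $\eta\log n\ll\Delta$, this trigger fires with probability close to $1$. Its $O(\deg(v))=\Theta(\Delta)$ fallback then dominates the expected cost of \ColorDense, which breaks \Cref{lem:dense}.

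The correct reason the expected-time lemmas survive is the cost-based trigger, either your ``total trial cost exceeds $\deg(u)$'' rule or the paper's budget of $m$. Under that trigger, the deterministic fallback runs only after the randomized work already spent is at least a constant fraction of the fallback's own cost. Hence the modified running time is at most a constant times that of the unmodified run on the same random bits.
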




\subsection{The Dynamic Algorithm} 

We can now present our dynamic algorithm. Note that at the beginning the graph is empty and thus, any coloring is proper. We use the following initialization to simplify the exposition.

\begin{ourbox}
	\label{alg:initial}
	\textbf{Initialization phase:} For all $v\in V$, sample $c\in [\Delta+1]$ and let $\varphi(v)=c$. Also, with probability $\delta$, let $\colS(v)=c$; otherwise let $\colS(v)=\perp$.
\end{ourbox}

We can now describe the algorithm (without getting too much into implementation details). 

\begin{Algorithm}\label{alg:delta+1}
Called when $e_t=(u_t,v_t)$ is updated in $G_{t-1}$ to obtain $G_t$. 

\begin{enumerate}[leftmargin=15pt]
\item\label{line:F} Let $B$ be the set of blank vertices, initialized to $B=\emptyset$. In lexicographical order of $v_t$ and $u_t$, call $\SampleColor$ for each one that is in $S$.
If $e_t=(u_t,v_t)$ is inserted, $u_t, v_t\in D$, and $\varphi(u_t)=\varphi(v_t)$ set $\varphi(u_t) = \perp$. Add all uncolored vertices to $B$.

\item\label{line:decomposition} Update the decomposition and classification by calling \Decomposition and do:
\begin{enumerate}
	\item For any $v\in V$ previously in $S$ that now joins $D$, set $\colS(v) = \varphi(v) = \perp$ and add $v$ to $B$.
	\item For any $v\in V$ previously in $D$ that now joins $S$, call $\SampleColor(v)$ and if it gets color $\perp$ add it to $B$.
\end{enumerate}  
\item\label{line:sparse} While there is a sparser vertex $v \in B \cap S$: run $\ColorSparse(v)$; if it returns \Fail, run $\UpdateSparseColor(v)$; add all uncolored vertices to $B$. 
\item\label{line:expand} For any almost-clique $K \in \KKah \cup \KKol$ violating~\Cref{inv:dense}: run $\ExpandMatching(K)$ and add uncolored vertices to $B$. 
\item\label{line:dense} While there is a denser vertex $v \in B \cap D$: run $\ColorDense(v)$. 
\end{enumerate}
\end{Algorithm}

The following lemma ensures the correctness of~\Cref{alg:delta+1}. 

\begin{lemma}\label{lem:delta+1-correct}
Assuming that \Cref{alg:delta+1} terminates, then with high probability, every update, it maintains a proper $(\Delta+1)$-coloring of the underlying graph.
\end{lemma}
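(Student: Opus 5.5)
The plan is a direct invariant-based argument that follows the steps of \Cref{alg:delta+1}. Since we assume termination, it suffices to show two things for every update: (i) after all five steps, $\varphi$ is proper and has no blank vertex; (ii) \Cref{inv:sparser,inv:extend,inv:dense} hold at the end of the update, because the subroutine lemmas we invoke require them. We prove (i) and (ii) together by induction on $t$. The base case is the initialization, which uses the empty graph $G_0$: every $\varphi(v)$ is a color, $\colS(v)\in\{\varphi(v),\perp\}$, and all almost-cliques are absent. The high probability qualifier comes only from \Decomposition (\Cref{lem:decomp}) and the whp guarantees of the subroutines. We union bound over the polynomially many updates/calls and condition on all of them succeeding.

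For the inductive step, the main tool is the following \emph{local invariant}: at every intermediate point, $\varphi$ and $\colS$ are proper partial colorings, \Cref{inv:sparser,inv:extend} hold, and every blank vertex of $\varphi$ lies in $B$. We verify this line by line.

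In Line~\ref{line:F}, an insertion can only create a monochromatic edge $(u_t,v_t)$. If one endpoint is in $S$, \SampleColor recolors or uncolors it, and by \Cref{lem:SampleColor} it uncolors conflicting vertices and returns them. If both endpoints are in $D$, we uncolor $u_t$ explicitly. Deletions preserve properness trivially.

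Line~\ref{line:decomposition} explicitly uncolors vertices moving into $D$, which restores \Cref{inv:sparser}. It calls \SampleColor on vertices moving into $S$, and adds all blanks to $B$.

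In Line~\ref{line:sparse}, \ColorSparse only colors in $\varphi$ properly and puts the uncolored denser vertices into $B$. \UpdateSparseColor preserves both invariants and properness by \Cref{lem:update-sparser}. The loop exits only when $B\cap S=\emptyset$, and termination is assumed. Hence after this line every sparser vertex is colored. This is the one point where we must check that a \Fail of \ColorSparse is always followed by \UpdateSparseColor, so that no sparser vertex is dropped from $B$ while still blank.

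Line~\ref{line:expand} restores \Cref{inv:dense} for each violating $K$ by \Cref{lem:anti-matching}. It returns the uncolored vertices, which lie inside $K\subseteq D$, so no sparser vertex becomes blank. We also need that this recoloring does not touch $\colS$. This holds because $K\subseteq D$ and \Cref{inv:sparser}, so \Cref{inv:extend} survives. We then argue that fixing one $K$ does not break \Cref{inv:dense} for another $K'$. Its recoloring stays inside $K$ and keeps $\varphi$ proper, and colors of $M_{K'}$ are unique within $K'$, so they are unaffected.

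In Line~\ref{line:dense}, \ColorDense colors each $v\in B\cap D$ by recoloring only inside $K(v)$ without uncoloring (\Cref{lem:dense}), which keeps $B$ shrinking and all sparser vertices colored.

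The main obstacle is checking that the recolorings in Lines~\ref{line:dense} and~\ref{line:expand} do not invalidate earlier guarantees. Concretely: that \ColorDense preserves \Cref{inv:dense} for $K(v)$ and other almost-cliques, so later calls still meet its precondition, and that it never touches a sparser vertex's color. I would handle this by appealing to the guarantees in \Cref{lem:anti-matching,lem:dense}: both act only on $D$-vertices of a single almost-clique, maintain a colorful anti-matching, and preserve properness. This is enough to conclude that at the end $\varphi$ has no blank vertex, is proper, and uses colors in $[\Delta+1]$.
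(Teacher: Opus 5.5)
Your proposal is correct and follows essentially the same route as the paper. It is a line-by-line pass through Lines~\ref{line:F}--\ref{line:dense}, with the high-probability qualifier coming from \Cref{lem:decomp}, Line~\ref{line:sparse} handled by \Cref{lem:attempt,lem:update-sparser}, Line~\ref{line:expand} by \Cref{lem:anti-matching} plus the fact that \ExpandMatching only touches vertices of its own almost-clique, and Line~\ref{line:dense} by \Cref{lem:dense}; your induction over updates makes explicit what the paper leaves implicit.

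One step cites more than the lemma states: \Cref{lem:dense} does not say that \ColorDense preserves \Cref{inv:dense}. This has to be read off the subroutine itself: outliers reject $M_K$-colors via (R1), inliers get colors unused in $K$ as in the proof of \Cref{lem:dense}, and all recoloring stays inside $K(v)$. The paper also passes over this point silently.
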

\begin{proof}
	After Line~\eqref{line:F}, $\varphi$ is a proper partial coloring of $G_t$ and all blank vertices are in $B$.
	In Line~\eqref{line:decomposition}, \Decomposition correctly updates the sparser-denser decomposition 
	and its classification with high probability by~\Cref{lem:decomp} and also, it ensures \Cref{inv:sparser} is not violated.
	Line~\eqref{line:sparse} maintains properness of the coloring by~\Cref{lem:attempt} and \Cref{lem:update-sparser} and only finishes when there are no sparser blank vertices left while maintaining \Cref{inv:sparser,inv:extend}. 

	Then, running Line~\eqref{line:expand}, by~\Cref{lem:anti-matching}, ensures that~\Cref{inv:dense} holds for all almost-cliques; this is because once an almost-clique has \Cref{inv:dense} satisfied it will remain so in this for-loop as $\ExpandMatching(K)$ does not modify vertices outside $K$ for each almost-clique $K$. Moreover, by~\Cref{lem:anti-matching} we know this step does not uncolor sparser vertices. 
	
	Finally, moving into Line~\eqref{line:dense} knowing \Cref{inv:dense} is maintained,   all denser vertices get properly colored by~\Cref{lem:dense}. Note that \ColorDense does not uncolor any vertex. Therefore, by the time the algorithm finishes, all vertices are colored and the coloring is proper. 
\end{proof}

The main part of the analysis is to bound the update time of~\Cref{alg:delta+1} and in particular \UpdateSparseColor in the following two lemmas. The proofs appear in the next section. 

\begin{lemma}\label{lem:recourse-updatesparse-whp}
    The total number of uncolored vertices caused by subroutine \UpdateSparseColor over $t$ updates is at most $O(t \log^2 n)$ with high probability.
\end{lemma}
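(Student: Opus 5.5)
We plan to prove the bound with the dynamic entropy compression argument from the overview, strengthened to account for the varying \emph{cost} of a $\Resample$ call. We first fix the analysis-only bookkeeping. All randomness used in calls to $\SampleColor$ (from updates, from vertices moving $D \to S$, and inside $\UpdateSparseColor$) is read from a table $\randTable: V \times \IN \to \{0,1\}\times[\Delta+1]$, where each vertex advances its own pointer whenever it samples. We record a log of adversarial updates, $\Decomposition$ changes, and $\UpdateSparseColor(v)$ calls, including which flaw triggered each call. Since $\ColorSparse$, $\ColorOH$, $\ColorDense$ and $\ExpandMatching$ only change $\varphi$, and by \Cref{inv:sparser,inv:extend} never change $\colS$, the log together with $\randTable$ determines $\colS$ at every step.

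To each call we assign a \emph{weight} $w$: $w=\Delta$ for a \Cref{flaw:sparse} call, and $w = \Theta(\out(v)+\eta)$ for a \Cref{flaw:out-heavy} call on $v \in K\in\KKoh$. We then show that the expected number of vertices uncolored by the call (in $\varphi$, after the subsequent $\ColorOH$ step) is $O(w)$. For sparse vertices this follows from $\Resample$ touching $\Delta$ neighbours, with $O(1)$ extra uncolorings each in expectation (\Cref{lem:SampleColor}). For out-heavy vertices, at most $\out(v)$ uncolored vertices lie outside $K$. The ones inside $K$ are reduced by $\ColorOH$ (\Cref{lem:colorOH}) to $4\avgout(K)$, and we must charge this to $O(\out(v)+\eta)$. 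This uses that a flawed out-heavy call has sparsity $\gamma\Delta = \Omega(\eps^2(\avgout(K)+\out(v)))$ by \Cref{lem:OH-is-out-sparse}, so we take $w$ proportional to $\avgout(K)+\out(v)+\eta$.

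The key probabilistic step is a formal version of \Cref{lem:informal-sampling}, corresponding to \Cref{lem:prob-slack-sparse}. A logged call $\UpdateSparseColor(v)$ is consistent with $\randTable$ only if $v$ has the stated flaw at that time. We order $N(v)\cap S$ by last sampling time and take the blocked sets $B(u)$ at those times. The flaw then has probability $2^{-\Omega(\delta^{O(1)} w)}$ over the fresh entries read since the previous call. We need $\gamma$-sparsity $\gamma\Delta=\Omega(w)$, which \Cref{lem:OH-is-out-sparse} gives for out-heavy vertices, together with the condition $\deg_S(v)\ge \Delta-\delta\eps^2 w$, so that few neighbours lie in $D$ (where $\colS=\perp$). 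Disjoint table entries give independence across calls, so a log with calls of weights $w_1,\dots,w_r$ is consistent with probability at most $2^{-\Omega(\sum_i w_i)}$.

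We finish with a union bound over logs by total weight $W=\sum_i w_i$. A log with $t$ updates, the decomposition changes (amortized $O(1/\eps^4)$ by \Cref{prop:maintaining-decomposition,lem:decomp}), and $r\le W/\eta$ calls can be encoded in $n^{O(t+W/\eta)}$ ways. Since $\eta\gtrsim\log n$, the probability that $W \ge C\,t\log n$ is at most $n^{O(t+W/\eta)}2^{-\Omega(W)} = n^{-\omega(1)}$. We then convert expected uncolorings per call, $O(w_i)$, to a high-probability bound over the adaptive sequence using the Kuszmaul–Qi style martingale inequality (\Cref{prop:KQ-inequality}), with each term bounded by $O(\Delta)$ and summed over $W=O(t\log n)$. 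This costs an extra $\log n$ and yields $O(t\log^2 n)$.

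The main obstacle is the out-heavy accounting. We must make the sampling distribution and the ordering/time-obliviousness argument work when $S$ membership and almost-clique classes change mid-sequence, and we must justify charging $4\avgout(K)$ leftovers to a single call. We would first try handling class changes by forcing a fresh $\SampleColor$ whenever a vertex enters $S$, which is Line~\eqref{line:decomposition}(b), so that the blocked-set argument remains valid.
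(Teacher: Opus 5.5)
Your outline follows the paper's strategy: a random table $\randTable$, a log, a cost-weighted union bound over logs, then \Cref{prop:KQ-inequality}. But it leaves out a component without which your key probabilistic step is unjustified.

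\textbf{The blocked-set hypothesis.} \Cref{lem:prob-slack-sparse} needs every $u\in N_S(v)$ to have a blocked set with $|B(u)|\le\Delta/4$ at its last sampling time, and you invoke it without establishing this. Nothing in the parts of the algorithm you analyze guarantees it. The adversary is adaptive and sees the algorithm's random choices, so it can repeatedly insert an edge $(u,w)$ and delete it again unless the fresh sample of $w$ left it colored in $\colS$. Within $O(\Delta/\delta)$ updates, $N(u)$ then carries far more than $\Delta/4$ distinct $\colS$-colors. A later $\Resample$ of a vertex whose neighborhood contains such $u$'s rarely colors them, and the slack flaw need not go away.

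The paper handles this with \Cref{flaw:all} ($\mathsf{colored}_{\colS}(v)>4\delta\Delta$), the $O(\log n)$-time sampling test \texttt{TestFlaw3}, and the background procedure \texttt{CheckFlaw3}. The latter calls $\Resample$ on vertices detected to have too many $\colS$-colored neighbors. Conditioned on the high-probability event $\mathcal{E}$ (\Cref{lem:checkFlaw-correct}), no sparser vertex ever exceeds $5\delta\Delta$ colored neighbors in $\colS$ before being resampled. This is exactly what gives $|B(u)|\le 5\delta\Delta\le\Delta/4$ in \Cref{lem:bounding-conditional-prob}. The same event $\mathcal{E}$ also contains the correctness of \ColorSparse, which you use implicitly when you say a logged call is consistent only if $v$ has the stated flaw.

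\textbf{Accounting for the extra calls.} This mechanism must also be paid for inside the entropy compression, which your proposal omits. The $\Resample$ calls issued by \texttt{CheckFlaw3} change $\colS$, consume entries of $\randTable$, and can each uncolor $\Theta(\Delta)$ vertices. So they must be recorded in the log, since otherwise your claim that the log together with $\randTable$ determines $\colS$ is false, and they must be charged in the weight. The paper logs them with source `C' and gives each weight $\delta\Delta/3$ in $\mathsf{cost}(L)$. It bounds the conditional probability that such a call is consistent by $2^{-\delta\Delta/3}$ via \Cref{lem:prob-flaw-sparse}. The call can only occur if $v$ is not far from \Cref{flaw:all}, i.e.\ $\mathsf{colored}_{\colS}(v)\ge 2\delta\Delta$, while each sparser neighbor is colored with probability at most $\delta$ at its last sampling. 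With these calls added to the log and to $W$, the rest of your argument goes through as in the paper.

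\textbf{Two smaller points.} First, the out-heavy charging you single out as the main obstacle is routine. Your weight already includes $\overline{\mathsf{out}}(K)$, so it covers the $4\,\overline{\mathsf{out}}(K)$ vertices left uncolored by \ColorOH. Second, the per-call cap in \Cref{prop:KQ-inequality} is not $O(\Delta)$. Each of up to $\Delta$ calls to \texttt{SampleColor} inside one $\Resample$ can uncolor up to $\Delta$ vertices of $\varphi$. Take $c=m\le t$ as the paper does; this still suffices since $\mu=\Omega(t\log n)$.
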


\begin{lemma}\label{lem:runtime-UpdateSparseColor}
    Throughout running \Cref{alg:delta+1} over $t$ updates, with high probability the time spent on subroutine \UpdateSparseColor is $O(t\cdot n\log^4 n/\eta)$ time. 
\end{lemma}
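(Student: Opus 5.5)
We prove \Cref{lem:runtime-UpdateSparseColor} by dynamic entropy compression, as in the proof sketch of \Cref{lem:informal-heart}, with each $\UpdateSparseColor$ call weighted by its cost. A call $\UpdateSparseColor(v)$ (that is, $\Resample(v)$ together with any auxiliary resamples from the second flaw of \Cref{sec:another-flaw}) spends $O(\Delta)$ time iterating over $N(v)$ and $O(n/\Delta)$ expected time per sampled color to scan its color class. Its expected cost is therefore $O(n)$. It is charged according to the flaw that justified it: $\Delta$ \emph{credits} for \Cref{flaw:sparse}, and $\eta + \out(v)$ credits for \Cref{flaw:out-heavy}, where $v$ lies in an out-heavy almost-clique $K$. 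The target is a high-probability bound of $O(t\log^2 n)$ on the total credit of all logged resample calls over $t$ updates, and essentially the same argument gives \Cref{lem:recourse-updatesparse-whp}. Since each credit unit stands for at least $\eta$ work units out of $O(n)$, the number of calls is at most $O(t\log^2 n/\eta)$. Multiplying by the $O(n)$ expected cost per call, plus a $\log$ factor from concentration, gives $O(t\, n\log^4 n/\eta)$.

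\textbf{Step 1 (flaw probability).} Fix a log $L$ containing $t$ updates and resample calls $\Resample(v_1),\ldots,\Resample(v_r)$. For each $v_j$, order $N(v_j)$ by the last sampling time $t(u)$ of each neighbor, set $B(u)$ to the colors blocked at time $t(u)$, and use the key observations: each sampled color is independent and fresh from the table $\randTable$, and an edge update forces resampling of both endpoints. With these, the formal version of \Cref{lem:informal-sampling} (\Cref{lem:prob-slack-sparse}) applies. For \Cref{flaw:sparse} the vertex is $\Omega(\eps^2)$-sparse, so the flaw probability is $2^{-\Omega(\delta^c\eps^2\Delta)}$. For \Cref{flaw:out-heavy}, \Cref{lem:OH-is-out-sparse} gives sparsity $\gamma\Delta=\Omega(\eps^2(\avgout(K)+\out(v)))\ge\Omega(\eps^2(\eta+\out(v)))$, so the probability is $2^{-\Omega(\eta+\out(v))}$. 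In both cases the probability is $2^{-\Omega(\text{credit})}$. Independence across calls holds because each call consumes fresh entries of $\randTable$.

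\textbf{Step 2 (counting and union bound).} Logs with $t$ updates, $r$ calls and total credit $C$ number at most $n^{O(t+r)}$. Every call has credit at least $\eta\ge\log^2 n$, so $r\le C/\eta$, and therefore $n^{O(t+r)}\cdot 2^{-\Omega(C)}$ is negligible once $C\gtrsim t\log n$. Summing over all $r$ and all $C=\omega(t\log n)$ bounds the total credit by $O(t\log n)$ with high probability.

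\textbf{Step 3 (from counts to time).} Per-call time is bounded only in expectation, and the adversary, together with the distribution of later calls, adapts to the randomness. To convert total expected time into a high-probability bound I would use the adaptive Chernoff/Freedman-type inequality (\Cref{prop:KQ-inequality}), applied to the sequence of call costs. Each cost has conditional expectation $O(n)$. For boundedness, truncate using the worst-case $O(m)\le O(n\Delta)$ bound and a geometric tail on color-class scans.

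\textbf{Main obstacle.} The hardest part is making Step 1 rigorous. The vertices of $N(v)$ were sampled at different times, under graphs and decompositions that kept changing (vertices moving between $S$ and $D$, and $\colS$ cleared for vertices joining $D$). The task is to show that the blocked sets $B(u)$ are small, namely of size at most $\Delta/4$. I expect this to follow from the second flaw and the lazy check, and to reduce to a statement about $\colS$ alone, independent of $\varphi$. Showing that the resulting process is still exactly the ordered sampling process of \Cref{lem:prob-slack-sparse} will require careful bookkeeping.
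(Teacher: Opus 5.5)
Your route is the paper's: price each logged $\Resample$ by the exponent of its flaw probability, union-bound over logs using $r=O(C/\eta)$, then apply \Cref{prop:KQ-inequality}. Your Step~1 for \Cref{flaw:sparse} and \Cref{flaw:out-heavy} matches \Cref{lem:bounding-conditional-prob}. But the accounting has a genuine hole at \Cref{flaw:all}.

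\textbf{The missing credit for \Cref{flaw:all}.} Resamples issued by $\checkFlaw$ get no credit in your scheme. Folding them into the $\UpdateSparseColor$ call with an $O(n)$ expected cost is unjustified, for two reasons. First, $\checkFlaw$ fires on every change to $G[S]$ or $\colS$, including the direct $\SampleColor$ calls on update endpoints and on vertices entering $S$, which happen outside any $\UpdateSparseColor$. Second, each such resample recolors up to $\Delta$ vertices, and these trigger further $\checkFlaw$ calls.

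Either choice breaks the argument. If these resamples are not log records, the log no longer determines $\colS$. If they are records without credit, $r$ is no longer $O(C/\eta)$ and $n^{O(t+r)}2^{-\Omega(C)}$ stops being small.

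The fix is the paper's. Log them separately with consistency event ``$v_i$ is not far from \Cref{flaw:all}''. Its conditional probability is at most $2^{-\delta\Delta/3}$ by \Cref{lem:prob-flaw-sparse}. Give each such record credit $\delta\Delta/3\ge\eta$; this is the $\card{R^{(C)}}\cdot\delta\Delta/3$ term of \Cref{def:cost-log}.

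Your ``main obstacle'' is settled by the same conditioning on $\eventTwo$. By \Cref{lem:checkFlaw-correct}, no sparser vertex exceeds $5\delta\Delta$ colored neighbors in $\colS$ before being resampled. Hence $\card{B(u)}\le 5\delta\Delta\le\Delta/4$ at time $t_u$.

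\textbf{Missing time terms.} The background $\checkFlaw$/$\test$ work is never charged. Each $\Resample$ makes about $\Delta$ calls to $\SampleColor$, and each triggers a $\checkFlaw$ of cost $O(\log^2 n)$ w.h.p.\ (\Cref{lem:checkFlaw}). Updates and decomposition moves add another $O((t+d)\log^2 n)$. With $O(t\log^2 n/\eta)$ resamples this gives $O(t\Delta\log^4 n/\eta)$. That is where the $\log^4 n$ actually comes from; \Cref{prop:KQ-inequality} with deviation parameter $1$ costs no log factor.

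Two further charges are also missing. $\ColorOH$ costs $O(n\log n)$ per out-heavy resample (\Cref{lem:recourse-runtime}), not $O(n)$. The $\ColorSparse$ call at the start of each $\UpdateSparseColor$ must also be counted.

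\textbf{Smaller points.} The log must determine which vertices lie in $S$. Either record the $d=O(t\log n)$ decomposition moves, as the paper does (this is why its cost bound is $O(t\log^2 n)$ rather than your $O(t\log n)$), or condition on the decomposition's independent coins.

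In \Cref{prop:KQ-inequality}, use $c=m\le t$ (the graph starts empty), not $O(n\Delta)$. With $c=n\Delta$ the tail $\exp(-\Theta(t\log^3 n/(\eta\Delta)))$ is not small for short update sequences.
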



\subsection{Proof of \Cref{thm:delta+1}}
We need the following concentration inequality which generalizes Chernoff bound by allowing the distribution in
one time step to depend on the outcomes of previous steps (see~\Cref{app:KQ-inequality}). 

\begin{proposition}[\!\!{\cite[Corollary~11]{kuszmaul2025}}]\label{prop:KQ-inequality}
	Suppose Alice constructs a sequence of random variables $X_1, \ldots, X_n$, with $X_i\in [0, c]$ using the following iterative process. For each $i\in [n]$, once the outcome of $X_1, \ldots, X_{i-1}$ are determined, Alice then selects the probability distribution $\mathcal{D}_i$---as a function of the outcomes of $X_1,\ldots,X_{i-1}$---and $X_i$ is drawn from $\mathcal{D}_i$. The only constraint on Alice is that
	if we define $a_i := \expect{X_i\mid \mathcal{D}_i}$ and $A = \sum_i a_i$, then, for some $\gamma \in (0,1)$ and $\mu > 0$, $\Pr\paren{A \geq \mu} \leq \gamma$; 
	that is, the sum of the means of the probability distributions $\mathcal{D}_1, \ldots, \mathcal{D}_n$ must be at most $\mu$ with probability $1-\gamma$. If $X=\sum_i X_i$, then for any $\delta>0$, we have
	\[
	\Pr\paren{X\geq (1+\delta) \cdot \mu}\leq \gamma + \exp\paren{-\frac{\delta^2 \cdot \mu}{(2+\delta) \cdot c}}.
	\]
\end{proposition}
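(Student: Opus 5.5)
The plan is to prove the statement through the standard exponential supermartingale argument behind Chernoff bounds. The one twist is a truncation step that turns the random constraint on $A$ into a deterministic one, at the price of the additive $\gamma$. First, rescale: replacing $X_i$ by $X_i/c$ and $\mu$ by $\mu/c$ reduces everything to the case $c=1$, since both sides of the claimed bound transform consistently. So assume $X_i \in [0,1]$ from now on.

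\textbf{Truncation.} The key observation is that $a_i = \expect{X_i \mid \mathcal{D}_i}$ is determined before $X_i$ is drawn, because $\mathcal{D}_i$ is a function of $X_1,\ldots,X_{i-1}$. So the prefix sums $A_i := \sum_{j \leq i} a_j$ are predictable. Define a modified process: $X'_i := X_i$ if $A_i < \mu$, and $X'_i := 0$ otherwise, which amounts to replacing $\mathcal{D}_i$ by a point mass at $0$. The means of this modified process satisfy $a'_i = a_i \cdot \mathbb{I}[A_i < \mu]$. Since the $a_i$ are nonnegative, $A_i$ is nondecreasing, and hence $A' := \sum_i a'_i < \mu$ holds deterministically. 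On the event $\set{A < \mu}$ no truncation ever occurs, so there $X' = X$. Since this event has probability at least $1-\gamma$,
\[
\Pr\paren{X \geq (1+\delta)\mu} \leq \gamma + \Pr\paren{X' \geq (1+\delta)\mu}.
\]

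\textbf{Supermartingale bound.} For $\lambda > 0$, convexity of $x \mapsto e^{\lambda x}$ on $[0,1]$ gives
\[
\expect{e^{\lambda X'_i} \mid \mathcal{D}_i} \leq 1 + (e^\lambda - 1)\, a'_i \leq \exp\paren{(e^\lambda-1)\, a'_i}.
\]
It follows that $M_k := \exp\paren{\lambda \sum_{i\leq k} X'_i - (e^\lambda-1)\sum_{i \leq k} a'_i}$ is a supermartingale with $M_0=1$, and therefore $\expect{M_n} \leq 1$. Using $A' \leq \mu$, we get $\expect{e^{\lambda X'}} \leq \exp\paren{(e^\lambda-1)\mu}\cdot \expect{M_n} \leq \exp\paren{(e^\lambda-1)\mu}$. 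Markov's inequality then gives
\[
\Pr\paren{X' \geq (1+\delta)\mu} \leq \exp\paren{(e^\lambda - 1)\mu - \lambda(1+\delta)\mu}.
\]
Choosing $\lambda = \ln(1+\delta)$ yields the familiar bound $\paren{e^\delta/(1+\delta)^{1+\delta}}^\mu$.

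\textbf{Final estimate.} Finally, apply the elementary inequality $(1+\delta)\ln(1+\delta) - \delta \geq \delta^2/(2+\delta)$ for $\delta > 0$. This follows from $\ln(1+\delta) \geq 2\delta/(2+\delta)$, since then $(1+\delta)\ln(1+\delta)-\delta \geq \delta(2+2\delta-2-\delta)/(2+\delta) = \delta^2/(2+\delta)$. This gives $\exp\paren{-\delta^2\mu/(2+\delta)}$. Undoing the scaling by $c$ yields the claimed bound.

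The main point requiring care is the truncation step. One must check that the predictability of $a_i$ makes the truncated process a legitimate instance of Alice's game, whose distributions are still chosen as functions of the past. One must also check that the event $\set{A \geq \mu}$, whose probability is at most $\gamma$, exactly covers all outcomes where truncation changes anything. The rest is the textbook Chernoff computation.
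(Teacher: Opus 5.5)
Your proof is correct and follows the same route as the paper. You truncate Alice's process as soon as the accumulated mean (including the predictable $a_i$) reaches $\mu$, so the sum of means is bounded deterministically, and then pay $\gamma$ for the event on which the truncation changes anything. The only difference is that the paper invokes the cited Corollary~11 as a black box for the $\gamma=0$ case, whereas you prove that case yourself with the standard exponential-supermartingale Chernoff argument, which makes your version self-contained.
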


In the following lemma we bound the runtime of the remaining subroutines over all updates.

\begin{lemma}\label{lem:whp-runtime-other-subroutines}
    Throughout running \Cref{alg:delta+1} over $t$ updates, with high probability, the time spent and number of uncolored vertices generated by subroutine: 
    \begin{enumerate}
		\item\label{item-attempt} \ColorSparse is $O(t\cdot n\cdot \log^3 n /\eta)$ time and $O(t \cdot \log^2{n})$ vertices; 
		\item\label{item-sampleColor} \SampleColor is $O(t \cdot n\cdot \log n/\Delta)$ time and $O(t\log{n})$ vertices;
        \item\label{item-expand-matching}\ExpandMatching is $O(t\cdot \log^2 n \cdot n/\Delta)$ time and $O(t\log^2{n})$ vertices;
        \item\label{item-dense} \ColorDense is $O(t\cdot \log^2 n\cdot (n/\Delta+n/\eta+\eta))$ time and zero vertices. 
    \end{enumerate}
\end{lemma}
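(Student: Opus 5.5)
The plan is to count how many times each subroutine is invoked over the $t$ updates, and then add up their costs using \Cref{prop:KQ-inequality}. That inequality is needed because the adversary, and so the choice of which subroutine runs on which vertex, adapts to earlier outcomes. The bounds then follow in a fixed order: first \SampleColor, then \ColorSparse, then \ExpandMatching, and last \ColorDense. At each stage the number of calls is controlled by the number of uncolored vertices produced at earlier stages.

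\textbf{Invocation counts.} \SampleColor is called in two places. Line~\eqref{line:F} calls it at most twice per update. Line~\eqref{line:decomposition}(b) calls it once per vertex that moves from $D$ to $S$. By \Cref{lem:decomp} and \Cref{prop:maintaining-decomposition}, these moves amount to amortized $O(1/\eps^4)=O(1)$ per update, so there are $O(t)$ calls in total.

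\ColorSparse is called once per sparser vertex removed from $B$. Each such vertex was uncolored by one of the following: the update itself, \SampleColor, \UpdateSparseColor (at most $O(t\log^2 n)$ by \Cref{lem:recourse-updatesparse-whp}), a vertex moving between classes, or an earlier \ColorSparse call. By \Cref{lem:attempt}, \ColorSparse uncolors only \emph{denser} vertices, so its calls do not feed back into further \ColorSparse calls. Hence the number of \ColorSparse calls is $O(t\log^2 n)$ w.h.p. once the \SampleColor recourse has been bounded.

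\ExpandMatching adds a total number $b$ of anti-edges that is bounded by the total change in $\floor{8\avgantiDegree(K)}$ plus the anti-matching edges destroyed by uncolorings. Both are bounded by $O(1)$ per change in the $\ANTIDegree$ lists or per uncolored vertex, which gives $O(t\log^2 n)$. Finally, \ColorDense is called once per uncolored denser vertex. These vertices come from \ColorSparse ($O(1)$ each in expectation), \ExpandMatching ($O(b)$), the update itself, and decomposition moves, which again totals $O(t\log^2 n)$.

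\textbf{Concentration.} For each subroutine, I would let $X_i$ be the runtime (or the recourse) of the $i$-th call. Conditioned on the history, $\expect{X_i}$ is bounded by the corresponding lemma: \Cref{lem:SampleColor}, \Cref{lem:attempt}, \Cref{lem:anti-matching} and \Cref{lem:dense}. The runtimes $X_i$ are bounded in the worst case by $O(n^2)$ via \Cref{lem:worstcase-subroutines}.

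The sum of conditional means $A$ is at most (number of calls) $\times$ (per-call mean). The number of calls is itself bounded w.h.p. by the counts above, and this supplies the $\gamma$ term in \Cref{prop:KQ-inequality}.

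The obstacle is the range parameter $c$. With $c=n^2$, the exponent $\mu/c$ is far too small. To fix this I would truncate each variable to $c=\mathrm{mean}\cdot O(\log n)$, since the runtime of a color trial is geometric-like. Concretely, I would argue that each call exceeds $\log n$ times its mean only with probability $1/\poly(n)$, by repeating independent trials, and handle those rare events with a union bound over the $\poly(n)$ calls. This truncation is the source of the extra $\log n$ factors in the statement. For example, for \SampleColor we take $\mu=O(t\,n/\Delta)$ and $c=O(n\log n/\Delta)$, and the lemma's bound $O(t\,n\log n/\Delta)$ then follows.

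For \ColorSparse, the per-call mean is $O(n\log n/\eta)$. Multiplying by the $O(t\log^2 n)$ calls and then by one more $\log n$ factor from concentration matches the claimed $O(t\,n\log^3 n/\eta)$. \ExpandMatching is handled the same way, with cost $n/\Delta$ per added anti-edge. For \ColorDense, the mean $O(n/\Delta+n/\eta+\eta)$ times $O(t\log^2 n)$ calls gives the claimed bound. For each subroutine, the recourse bounds are obtained by the same truncation-plus-\Cref{prop:KQ-inequality} argument with constant per-call means.
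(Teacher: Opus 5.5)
There is a genuine gap in your concentration step. You truncate each $X_i$ at $O(\log n)$ times its mean, claiming that a single call exceeds this only with probability $1/\poly(n)$ ``by repeating independent trials.'' No such per-call tail bound holds for this algorithm. Repetition only makes the \emph{number} of trials geometric. The cost of each trial is the size of the color class of a uniformly random color, and only the \emph{average} class size is $n/(\Delta+1)$. Nothing keeps individual classes balanced; the paper explicitly gives up the balanced-class invariants of prior work, which is exactly why it cannot prove per-update high-probability bounds. If some class has size $\Theta(n)$, then one call to \SampleColor, or one trial inside \ColorSparse or \ColorDense, costs $\Theta(n)\gg (n/\Delta)\log n$ with probability $\Theta(1/\Delta)$. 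That is far too large to union bound over $t=\poly(n)$ calls. Your recourse truncation fails the same way: the number of $\varphi$-neighbors of $v$ sharing the sampled color has mean $O(1)$ but can be $\Theta(\Delta)$ with probability $\Theta(1/\Delta)$.

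The missing observation is that the obstacle you identified is not there. \Cref{lem:worstcase-subroutines} bounds each call by $O(m)$, and $m\le t$ because $G_0$ is empty and each update changes a single edge, so the range is $O(t)$ rather than $n^2$. The paper therefore applies \Cref{prop:KQ-inequality} directly with the following choices:
\begin{itemize}
\item range $c=m\le t$, or $c\le\min\{m,n\}$ for recourse;
\item $\mu$ equal to the w.h.p.\ bound on (number of calls)$\,\times\,$(per-call conditional mean);
\item $\gamma$ equal to the failure probability of that call count.
\end{itemize}
Every such $\mu$ is $t$ times a factor of at least $\log n$; for example, $\mu=O(t\,n\log^3 n/\eta)$ for \ColorSparse and $\mu=O(t\,n\log n/\Delta)$ for \SampleColor. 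Hence $\mu/c=\Omega(\log n)$ and the exponential term is $n^{-\Omega(1)}$, with no per-call tail control needed. In this argument the polylogarithmic factors come from the call counts, not from concentration: $O(t\log^2 n)$ uncolorings via \Cref{lem:recourse-updatesparse-whp} and $O(\log n)$ decomposition recourse per update. Your extra $\log n$ from truncation would in fact overshoot the claimed bound for \ColorSparse, since mean times count is already $O(t\,n\log^3 n/\eta)$. A smaller point: your list of sources of uncolored denser vertices for \ColorDense omits \SampleColor and \UpdateSparseColor (through \Resample), both of which uncolor same-colored $\varphi$-neighbors. These are already bounded by $O(t\log^2 n)$, so your count still holds.
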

\begin{proof}
Proof of all the listed claims follows from similar structure and arguments. We present the proof of (\ref{item-attempt}) completely and 
postpone proof sketches of the remaining ones to~\Cref{sec:lem:whp-runtime-other-subroutines} to avoid repetition.
Also, as a notation, we let $m$ denote the \emph{largest} number of edges at any point during the $t$ updates. Clearly, we have $m \leq t$. 

\paragraph{Proof of (\ref{item-attempt}):} 
	Let $s_a$ be the number of times \ColorSparse has been called in $t$ updates. 
	For each $i\in [s_a]$, let $X_i$ be the random variable showing the time spent on the $i$-th call to \ColorSparse and let $Y_i$ be the random variable showing the number of uncolored vertices caused by the $i$-th call to \ColorSparse.

    By \Cref{lem:attempt} we know that $\expect{X_i}=O(n\log n/\eta)$ and $\expect{Y_i}=O(1)$ for each $i$. 
	By \Cref{lem:worstcase-subroutines} we know $X_i=O(m)$. Also, the number of uncolored vertices is bounded by $\min\set{m,n}$, that is $Y_i\leq \min\set{m,n}$. Let $X=\sum_{i\in [s_a]}X_i$ and $Y=\sum_{i\in [s_a]}Y_i$. By linearity of expectation, we have 
	\[
	\expect{X}=s_a\cdot O(n\log n/\eta) \quad \text{and} \quad \expect{Y}=s_a\cdot O(1).
	\]
	\ColorSparse is invoked for a new uncolored sparser vertex. An uncolored sparser vertex is caused either by \SampleColor after an update (bounded by $O(t)$) or by \UpdateSparseColor (bounded by $O(t\cdot \log^2 n)$ by \Cref{lem:recourse-updatesparse-whp}). Thus, with high probability $s_a=O(t\cdot \log^2 n)$.

	For each $i\in [s_a]$ define $\mathcal{D}_i$ as the distribution from which $X_i$  is drawn from. We know that this distribution depends on the output of the algorithm at that point as the updates are given by an adaptive adversary. Thus, $\mathcal{D}_i$ depends on the value of $X_1,\ldots, X_{i-1}$.  

	These random variables follow the structure of \Cref{prop:KQ-inequality} with parameters $c=m$ and $\gamma=n^{-\Omega(1)}$, $\mu=O(t\cdot n\log^3 n/\eta)$ and, $\delta=1$. Therefore, by \Cref{prop:KQ-inequality} we get
	\begin{align*}
	\Pr\paren{X = \omega\paren{\frac{ t\cdot n\cdot \log^3 n}{\eta}}}
	&\leq n^{-\Omega(1)}+\exp\paren{-\frac{ t\cdot n\cdot \log^3 n}{m\cdot \eta}} \tag{by \Cref{lem:recourse-updatesparse-whp} we have $\gamma=n^{-\Omega(1)}$}	\\
	&\leq  n^{-\Omega(1)}. \tag{as $\eta=o(n)$ and $t=\Omega(m)$}
	\end{align*}
	Thus, the total time spent on this subroutine is $O(t\cdot n\log^3 n/\eta)$ with high probability. Similarly, we can use the same analysis to bound the number of uncolored vertices by \Cref{prop:KQ-inequality} with parameters $c=\min\set{m,n}$, $\gamma=n^{-\Omega(1)}$, $\mu=O(t\cdot \log^2 n)$ and $\delta=1$. We get
		\begin{align*}
	\Pr\paren{Y = \omega\paren{t\cdot \log^2 n}}
	&\leq n^{-\Omega(1)}+\exp\paren{\frac{t\cdot \log^2 n}{\min\set{n,m}}} \tag{by \Cref{lem:recourse-updatesparse-whp} we have $\gamma=n^{-\Omega(1)}$}	\\
	&\leq  n^{-\Omega(1)}. \tag{as $\min\set{n,m} \leq t$}
	\end{align*}
	Thus, with high probability the total number of uncolored vertices caused by \ColorSparse over $t$ updates is $O(t\cdot \log^2 n)$, concluding the proof of this part. 
\end{proof}

\begin{proof}[Proof of \Cref{thm:delta+1}]
    If $\Delta=O(\sqrt{n} \cdot \log^{3} n)$ we  run the greedy $O(\Delta)$ update time algorithm. 
    For $\Delta=\omega(\sqrt{n} \cdot \log^{3} n)$, we set $\eta=\sqrt{n} \cdot {\log n}$ and run \Cref{alg:delta+1}. 

    By \Cref{lem:decomp}, maintaining the decomposition has $O(\log n)$ amortized update time with high probability.
    By \Cref{lem:runtime-UpdateSparseColor,lem:whp-runtime-other-subroutines}, the time spent on  remaining subroutines over $t$ updates is
    \[
    O\paren{t\cdot \log^2 n\cdot \paren{\frac{n\log^2 n}{\eta}+\frac{n}{\Delta}+\frac{n}{\eta}+\eta}} = O(t\cdot \sqrt{n}\cdot \log^{3}{n}), 
    \]
    time with high probability, concluding the proof. 
\end{proof}





%% file: Table-symbols.tex
\begin{table}[h]
\centering
\renewcommand{\arraystretch}{1.2}
\begin{tabular}{|c|p{12cm}|}
\hline
\textbf{Symbol} & \centering\textbf{Description}\arraybackslash \\ \hline

\multicolumn{2}{|c|}{\textbf{Sets}} \\ \hline

$\Vsparse$ &
set of $\Omega(\eps^2)$-sparse vertices
\\ 
$\KKah$ &
set of anti-heavy almost-cliques
\\
$\KKoh$ &
set of out-heavy almost-cliques
\\
$\KKol$ &
set of out-light almost-cliques
\\
$S$&
set of sparser vertices i.e. $\Vsparse\cup V(\KKoh)$
\\
$D$&
set of denser vertices i.e. $V(\KKah)\cup V(\KKol)$
\\
$M_K$&
the colorful anti-matching maintained for almost-clique $K$
\\
$\colS$&
the main \emph{partial} coloring maintained in the algorithm
\\
\hline

\multicolumn{2}{|c|}{\textbf{Parameters}} \\ \hline
$\delta$ &
a small constant used throughout the proof 
\\
$\eps$ &
a small constant $(\delta<\eps)$ used for as the parameter of decomposition
\\
$\eta$ &
an integer in $o(\eps^2\Delta)$ used to define the classification
\\
\hline
\end{tabular}
\caption{A summary of notation and parameters used in our algorithm.}
\label{tab:notation}
\end{table}

%% file: sparse.tex
\newcommand{\test}{\ensuremath{\algs{TestFlaw3}}\xspace}
\newcommand{\cost}{\textnormal{\ensuremath{\varbs{cost}}}}
\newcommand{\LOG}{\ensuremath{L}\xspace}

\newcommand{\INDEX}[2]{\ensuremath{\text{Index}_{#2}(#1)}\xspace}
\newcommand{\eventTwo}{\ensuremath{\mathcal{E}}\xspace}

\newcommand{\checkFlaw}{\ensuremath{\algs{CheckFlaw3}}\xspace}

\section{Update Time Analysis of our Dynamic Algorithm}\label{sec:sparser}

We now get to the main part of the proof where we use dynamic entropy compression for handling the update time over sparser vertices. In the following, 
we describe $\UpdateSparseColor$ in detail and then move to proving its overall recourse and runtime.

\subsection{The \UpdateSparseColor Subroutine}

The subroutine \SampleColor is used to color a single sparser vertex in $\colS$. 
Here we present the details of subroutine \SampleColor and prove \Cref{lem:SampleColor}.

\begin{Algorithm} $\SampleColor(v)$: With probability $\delta$ sample $c_v\in [\Delta+1]$ and go over color class $\mathcal{C}_{c_v}$ in $\colS$, and if there is no neighbor of $v$ in $\mathcal{C}_{c_v}$ set $\colS(v)=c_v$ (and $\varphi(v)=c_v$) and 
go over color class $\mathcal{C}_{c_v}$ in $\varphi$ now and uncolor (in $\varphi$) $\mathcal{C}_{c_v} \cap N(v)$. In all other cases, let $\colS(v)=\perp$. 
\end{Algorithm}
\begin{lemma*}[Restatement of \Cref{lem:SampleColor}]
    	For any vertex $v\in S$, the subroutine $\SampleColor(v)$ takes expected $O(n/\Delta)$ time to sample a color for $v$ and it uncolors $O(1)$ many vertices in expectation.
\end{lemma*}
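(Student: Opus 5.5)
The plan is to write the cost of $\SampleColor(v)$ as the work done in its two scans. If the coin with probability $\delta$ comes up tails, we do $O(1)$ work and uncolor nothing. So we may condition on the coin coming up heads, after which $c_v$ is uniform over $[\Delta+1]$. We first scan the color class $\mathcal{C}_{c_v}$ in $\colS$, checking adjacency to $v$ for each member; with an adjacency-matrix or hash-set representation each check costs $O(1)$. By \Cref{inv:extend}, every vertex colored $c$ in $\colS$ is also colored $c$ in $\varphi$, so the $\colS$-class of $c$ is contained in the $\varphi$-class of $c$. Hence both scans cost $O(1+\card{\mathcal{C}_{c_v}})$, where $\mathcal{C}_{c_v}$ is the class under $\varphi$. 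Because $\varphi$ is a function, the classes are disjoint and $\sum_{c\in[\Delta+1]}\card{\mathcal{C}_c}\le n$. Averaging over the uniform $c_v$, exactly as in the proof of \Cref{prop:random-color-trial}, gives $\Exp\card{\mathcal{C}_{c_v}}\le n/(\Delta+1)$. The expected running time is therefore $O(1+n/\Delta)=O(n/\Delta)$.

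For the recourse, the vertices uncolored are $v$ itself (if it ends up $\perp$) together with $\mathcal{C}_{c_v}\cap N(v)$ under $\varphi$. We bound the expectation by a double counting. Since $\varphi$ assigns each vertex at most one color,
\[
\sum_{c\in[\Delta+1]}\card{\mathcal{C}_c\cap N(v)}\le \deg(v)\le\Delta.
\]
Thus for uniform $c_v$ we get $\Exp\card{\mathcal{C}_{c_v}\cap N(v)}\le \Delta/(\Delta+1)<1$. Adding the at most one vertex $v$ shows that the expected number of uncolored vertices is at most $2=O(1)$.

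The one subtle point, and the step to check carefully, is that these expectations are taken over the fresh randomness $c_v$ only, with the current colorings $\colS,\varphi$ and the graph treated as fixed. Those objects may have been chosen adaptively by the adversary, but they are determined before $c_v$ is drawn. The two bounds above hold for every fixed partial coloring and every fixed neighborhood, because each is just an average over all $\Delta+1$ colors. So the conditioning poses no problem. The only other concern is implementation: we need to iterate over a color class in time proportional to its size, which holds if we maintain, for each color, a list of the vertices currently holding it under both $\colS$ and $\varphi$, updated in $O(1)$ per recoloring.
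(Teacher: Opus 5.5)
Your proof is correct and follows the same route as the paper: you average over the uniform color $c_v$, which gives $\Exp\card{\mathcal{C}_{c_v}} \le n/(\Delta+1)$ for the running time and $\Exp\card{\mathcal{C}_{c_v}\cap N(v)} \le \Delta/(\Delta+1)$ for the recourse, plus possibly $v$ itself. Your remarks on treating the adaptively chosen state as fixed before $c_v$ is drawn, and on maintaining color-class lists, make explicit what the paper leaves implicit; one small nit is that in the tails case $v$ itself may become blank, not ``nothing'', but your final bound of at most $2$ already covers this.
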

\begin{proof}
    Runtime of \SampleColor depends only on the size of the color classes it searches.
    As $c_v$ is sampled uniformly at random, then $\Exp \abs{\mathcal{C}_{c_v}}=O(n/\Delta)$ in either of $\colS$ and $\varphi$, and therefore the expected runtime of \SampleColor is $O(n/\Delta)$. 
    Moreover, \SampleColor might only uncolor vertex $v$ and if not, it might uncolor neighbors of $v$ in $\mathcal{C}_{c_v}$ in $\varphi$. As there are at most $\Delta$ neighbors in $N(v)$ and $c_v$ is uniformly sampled from $[\Delta+1]$, then the expected number of vertices that might get uncolored in $\varphi$ is $O(1)$, concluding the proof.
\end{proof}

Since existence of any of \Cref{flaw:sparse}, \ref{flaw:out-heavy} for a vertex $v$ is a function of the coloring of its neighbors, to fix these flaws, we need to recolor the neighbors of $v$. This is the role of the following subroutine which is the key subroutine in \UpdateSparseColor.

\begin{Algorithm} $\Resample(v)$: For $u\in N(v)\cap S$ (in lexicographical order), call $\SampleColor(u)$.
\end{Algorithm} 

The following lemmas capture the main probabilistic guarantees when we sample colors for vertices using \SampleColor. The first lemma is a generalized version of bounding the probability of \Cref{flaw:all} remaining to hold for a vertex after a call to \Resample. The proof is a simple application of Chernoff bound and is postponed to \Cref{sec:lem:prob-flaw-sparse}.

\begin{lemma}\label{lem:prob-flaw-sparse}
  Fix a vertex $v\in S$ along with a set $B(u) \subseteq [\Delta+1]$ and a time $t_u$ associated to each vertex $u\in N_S(v)$. Suppose for every $u\in N(v)$, with probability $\delta$, we sample a color $c(u)$ and assign it to $u$ iff: $(1)$ the color $c(u) \notin B(u)$, and (2) no neighbor $w\in N_S(v)\cap N_S(u)$ with $t_w<t_u$ or with lower lexicographical order when $t_w=t_u$ has also sampled this color. All other
	vertices are marked blank. Let $\chi$ be the partial coloring function obtained this way. 
    Then,  
    \[
    \Pr(\Ncolor{v}{\chi}> 2\delta\Delta)\leq 2^{-{\delta\cdot \Delta}/{3}}.
    \]
\end{lemma}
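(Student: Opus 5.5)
The key observation is that a neighbor $u$ of $v$ can be colored in $\chi$ only if it was selected, with probability $\delta$, to sample a color at all. Everything else in the process (the blocked set $B(u)$, the order given by the times $t_u$, and the conflict rule with earlier neighbors) can only turn a selected vertex blank. It can never make an unselected vertex colored. So I will dominate $\Ncolor{v}{\chi}$ by the number of selected neighbors and apply the Chernoff bound of \Cref{prop:chernoff} to that count.

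Concretely, for each $u \in N(v)$ let $I_u \in \{0,1\}$ be the indicator that $u$ samples a color, i.e., the outcome of the $\delta$-biased coin in the sampling process. By construction these coins are mutually independent across vertices, and each has $\Pr(I_u = 1) = \delta$. This holds even though the coins are tossed at different times $t_u$, since the statement fixes the sets $B(u)$ and the times $t_u$ in advance, and the coins do not depend on them. By definition of $\chi$, any $u$ with $I_u = 0$ is blank, so deterministically
\[
\Ncolor{v}{\chi} \le X := \sum_{u \in N(v)} I_u .
\]
Since $\deg(v) \le \Delta$, we have $\expect{X} \le \delta \Delta =: \mu_{max}$.

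Applying the upper-tail bound of \Cref{prop:chernoff} with $b = 1$, deviation parameter $1$ (formally a value arbitrarily close to $1$, or using the standard form for $\delta_{\text{Chernoff}} = 1$), and $\mu_{max} = \delta\Delta$ gives
\[
\Pr\paren{X > 2\delta\Delta} \le \exp\paren{-\frac{\delta\Delta}{3}} \le 2^{-\delta\Delta/3},
\]
where the last inequality uses $e^{-x} \le 2^{-x}$ for $x \ge 0$. Combined with the domination $\Ncolor{v}{\chi} \le X$, this yields the claim.

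There is no real obstacle; the only point needing care is the independence of the indicators $I_u$. The subsequent assignment rule is adaptive (it depends on $B(u)$, on the order, and on other vertices' sampled colors), but the coins $I_u$ themselves are independent draws, for instance independent entries of the table $\randTable$. The bound therefore never needs to analyze the assignment rule at all.
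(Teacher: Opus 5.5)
Your proposal is correct and is essentially the paper's proof: it also bounds $\Ncolor{v}{\chi}$ by the number of neighbors whose $\delta$-coin comes up (expectation at most $\delta\Delta$), then applies the upper-tail Chernoff bound with deviation $1$ to get $\exp(-\delta\Delta/3)\le 2^{-\delta\Delta/3}$. Your note that deviation $1$ sits at the boundary of \Cref{prop:chernoff}, and is handled by letting the deviation parameter tend to $1$, is a small rigor point the paper leaves implicit.
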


The second lemma applies to vertices that have enough available colors (we will come back to this condition soon when defining~\Cref{flaw:all}). 
We prove that when vertices in a neighborhood of a vertex $v$ do not have many blocked colors, sampling their colors according to $\SampleColor$ can ensure $v$ does not satisfy \Cref{flaw:sparse} or \Cref{flaw:out-heavy} with high probability.

\begin{lemma}\label{lem:prob-slack-sparse}
  Let $\gamma\in (0, 1)$ be sufficiently small. Fix a $\gamma$-sparse vertex $v$ with $\deg_S(v)>\Delta-\delta\cdot\gamma\cdot \Delta$ along with a set $B(u) \subseteq [\Delta+1]$ and a time $t_u$ associated to each vertex $u\in N_S(v)$ such that $\abs{B(u)}\leq \Delta/4$.
	Suppose for every $u\in N(v)$, with probability $\delta$, we sample a color $c(u)$ and assign it to $u$ iff: $(1)$ the color $c(u) \notin B(u)$, and (2) no neighbor $w\in N_S(v)\cap N_S(u)$ with $t_w<t_u$ or with lower lexicographical order when $t_w=t_u$ has sampled this color. 
	Let $\chi$ be the partial coloring  obtained this way. 
	Then, 
    \[
    	\Pr\paren{\slack{v}{\chi} < \delta^6\cdot\gamma\cdot \Delta} \leq 2^{-\delta^5\cdot \gamma \cdot \Delta}. 
    \]
\end{lemma}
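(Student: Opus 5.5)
We follow the standard one-shot coloring template, in which slack comes from same-colored anti-edges, adapted to blocked sets and a time-based ordering. Write $\slack{v}{\chi} = \avail{v}{\chi} - \free{v}{\chi}$, and note $\card{N(v)} \le \Delta$. Every color used by two or more colored neighbors of $v$ saves a color. More precisely, $\avail{v}{\chi} \geq \Delta+1 - \card{\chi(N(v))}$ and $\free{v}{\chi} = \deg(v) - \Ncolor{v}{\chi}$, so
\[
\slack{v}{\chi} \geq (\Delta + 1 - \deg(v)) + \paren{\Ncolor{v}{\chi} - \card{\chi(N(v))}}.
\]
The second term is at least the number $Z$ of colors $c$ that appear on at least two vertices of $N(v)$. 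If $\deg(v) \le \Delta - \delta^6\gamma\Delta$, we are done deterministically. Otherwise we restrict attention to $N_S(v)$, which has size at least $\Delta - \delta\gamma\Delta$; vertices in $N(v)\setminus N_S(v)$ can only add repeats or be blank, so they only help. It then suffices to show $\Pr(Z < \delta^6 \gamma\Delta) \le 2^{-\delta^5\gamma\Delta}$.

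\textbf{Step 1: many anti-edges.} Since $v$ is $\gamma$-sparse and at most $\delta\gamma\Delta^2$ pairs of $N(v)$ involve $N(v)\setminus N_S(v)$, the set $N_S(v)$ spans at least $(\gamma - \delta\gamma - o(1))\Delta^2 \ge \gamma\Delta^2/2$ anti-edges.

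\textbf{Step 2: expected number of repeated colors.} For each color $c$, let $Z_c$ indicate that some anti-edge $(x,y)$ in $N_S(v)$ has both endpoints sampling $c$, both assigned, and all other vertices of $N(v)$ not sampling $c$. Conditions (1) and (2) for $x$ and $y$ depend only on $B(\cdot)$ and on samples of neighbors in $N_S(v)$. If nobody else in $N(v)$ samples $c$, condition (2) holds automatically for $x,y$, since they are non-adjacent. So $\Pr(Z_c$ via $(x,y)) \ge \delta^2 (\Delta+1)^{-2} (1-\delta/(\Delta+1))^{\Delta} \cdot \mathbb{1}[c\notin B(x)\cup B(y)]$. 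Summing over anti-edges and colors, using $\card{B(x)\cup B(y)}\le\Delta/2$, and noting that the events for distinct anti-edges with the same $c$ are disjoint (each requires exactly two samplers of $c$), we get $\expect{Z} \ge \sum_{(x,y)} \delta^2(\Delta/2)/(e(\Delta+1)^2) \ge \delta^2\gamma\Delta/(8e)$. This is comfortably larger than $2\delta^6\gamma\Delta$.

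\textbf{Step 3: concentration.} This is the main obstacle. $Z$ is a function of the independent pairs $(\text{coin}_u, c(u))$ for $u \in N(v)$. The ordering by $t_u$ and lexicographic order is fixed in advance, and $B(u)$ is fixed, so the samples are genuinely independent even though they "happen" at different times. The lower bound we use, counting colors sampled by exactly two vertices that form an anti-edge with no one else sampling them, changes by at most $2$ when one vertex's sample changes. However, Chernoff-style McDiarmid with $\Delta$ coordinates gives only $\exp(-\Theta(\delta^4\gamma^2\Delta))$, which is too weak when $\gamma$ is small.

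I would therefore use the certifiable version of Talagrand's inequality cited in the preliminaries, as in~\cite{MolloyR02}. Let $Z'$ be the number of colors $c$ sampled by exactly two vertices of $N(v)$ that form an anti-edge in $N_S(v)$ with $c$ outside both blocked sets. The event $Z'\ge s$ is certified by $2s$ samples. Then $Z\ge Z' - W$, where $W$ counts colors sampled by at least three vertices; $W$ is certified by $3s$ samples and has mean $O(\delta^3)$, which is negligible. Talagrand gives deviation probability $\exp(-\Omega(\expect{Z'}))=\exp(-\Omega(\delta^2\gamma\Delta))\le 2^{-\delta^5\gamma\Delta}$ for small $\delta$.

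If the lemma is used for $\gamma=\Theta(\eps^2)$, a constant, the simpler Chernoff route already suffices. An alternative that avoids Talagrand entirely is to expose the colors one at a time and apply a bounded-differences martingale bound with variance proxy $O(\expect{Z})$, i.e., a Freedman-type inequality.
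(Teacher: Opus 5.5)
Steps 1 and 2 are sound and follow the paper's template: the reduction to counting repeated colors, the anti-edge count, and the expectation bound with the disjointness observation. The concentration step, which you rightly call the crux, does not go through as written.

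Taken literally, your $Z'$ (colors sampled by \emph{exactly} two vertices of $N(v)$ forming an anti-edge) is not $2$-certifiable. A certificate for $Z'\ge s$ must also show that no other vertex of $N(v)$ chose those colors, and that involves all $\Delta$ coordinates. If $Z'$ were certifiable you would not need $W$ at all, since $Z\ge Z'$ already.

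Suppose you instead meant $Z'$ to count colors $c$ chosen by both endpoints of \emph{some} anti-edge $(x,y)$ of $N_S(v)$ with $c\notin B(x)\cup B(y)$. That count is indeed $2$-certifiable and $2$-Lipschitz, but then your correction term is far too large. $W$, the number of colors sampled by at least three vertices of $N(v)$, has mean about $(\Delta+1)\binom{\Delta}{3}\paren{\delta/(\Delta+1)}^3\approx\delta^3\Delta/6$, not $O(\delta^3)$. When $N_S(v)$ spans only $\Theta(\gamma\Delta^2)$ anti-edges, $\expect{Z'}=\Theta(\delta^2\gamma\Delta)$. So for $\gamma\lesssim\delta$ you get $\expect{W}\ge\expect{Z'}$, and $Z\ge Z'-W$ yields nothing. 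That is exactly the regime where the lemma is applied: $\gamma=\delta\eps^2$ for vertices of $\Vsparse$, and $\gamma=\Theta(\eps^2\cost(v)/\Delta)$, possibly as small as $\Theta(\eta/\Delta)$, for out-heavy vertices.

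The fix is the standard step from~\cite{MolloyR02}, which is also the idea behind the paper's decomposition $X=Y-Z$: subtract only colors already counted in $Z'$. Let $W'$ be the number of colors $c$ that are chosen by both endpoints of some anti-edge $(x,y)$ of $N_S(v)$ with $c\notin B(x)\cup B(y)$ \emph{and} by at least one further vertex of $N(v)$. Then:
\begin{itemize}
\item $Z'-W'$ is exactly your exactly-two count, so $Z\ge Z'-W'$.
\item $W'$ is $3$-certifiable and $2$-Lipschitz.
\item The extra factor $\delta\Delta/(\Delta+1)$ for a third sampler gives $\expect{W'}=O(\delta)\cdot\expect{Z'-W'}$.
\end{itemize}
Apply Talagrand's inequality to $Z'$ and $W'$ separately, each with deviation a small constant fraction of $\expect{Z'-W'}=\Omega(\delta^2\gamma\Delta)$. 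This bounds the failure probability by $\exp(-\Omega(\delta^2\gamma\Delta))\le 2^{-\delta^5\gamma\Delta}$ for small $\delta$. This needs $\delta^2\gamma\Delta$ large enough to absorb Talagrand's additive $O(\sqrt{\expect{f}})$ term, which holds in every application.

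Your fallbacks do not close the gap. McDiarmid gives exponent $\Theta(\delta^4\gamma^2\Delta)$, which falls short of $\delta^5\gamma\Delta$ whenever $\gamma\lesssim\delta$ (including $\gamma=\delta\eps^2$) and is useless for polynomially small $\gamma$. The Freedman-type route is only asserted, without the predictable-variance bound it requires.
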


The proof of this lemma is detailed and technical, and is closely related to the proofs of similar results in prior work \cite{MolloyR97, MolloyR02,AssadiCK19}; to avoid distracting from the main argument, we defer the proof of \Cref{lem:prob-slack-sparse} to \Cref{sec:lem:prob-slack-sparse}.

We are now ready to describe the $\UpdateSparseColor$ algorithm itself. Recall that the goal of this algorithm is to ensure a given sparser vertex $v$ does not have \Cref{flaw:sparse} or \Cref{flaw:out-heavy}. 

\begin{Algorithm} 
$\UpdateSparseColor(v):$  Run $\ColorSparse(v)$, if it returns \Fail run $\Resample(v)$ and in case $v \in V(\KKoh)$ call $\ColorOH(v)$; otherwise terminate and return all uncolored vertices.
\end{Algorithm}

\phantomsection
\begin{lemma*}[Restatement of \Cref{lem:update-sparser}]
	Let $v \in S$ with $\colS(v) = \perp$.
    If $\UpdateSparseColor(v)$ terminates, 
    it has recolored or uncolored some vertices in $\colS$ and additionally uncolored some vertices in $\varphi$ that it will also return, while ensuring properness of the partial colorings $\colS$ and $\varphi$.
	Moreover, this subroutine does not color denser vertices in $\colS$ and ensures $\colS$ and $\varphi$ agree on recolored vertices (maintains~\Cref{inv:extend,inv:sparser}).   
\end{lemma*}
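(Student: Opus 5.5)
The argument is structural. We trace the finitely many primitive operations that $\UpdateSparseColor(v)$ performs and check that each preserves properness of $\colS$ and $\varphi$ together with \Cref{inv:sparser,inv:extend}. Termination is assumed in the statement, so there are only finitely many such steps. We proceed by induction over the sequence of primitive calls: $\ColorSparse$, $\SampleColor$ (through $\Resample$), and $\ColorOH$.

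\textbf{Step 1: $\ColorSparse(v)$.} By \Cref{lem:attempt}, this call only colors $v$ in $\varphi$ and possibly uncolors some denser vertices. It never touches $\colS$, so \Cref{inv:sparser} is unaffected. For \Cref{inv:extend}, the vertices whose $\varphi$-color changes are either denser or $v$ itself. Denser vertices have $\colS=\perp$ by \Cref{inv:sparser}, and $\colS(v)=\perp$ by hypothesis. So no vertex with a $\colS$-color has its $\varphi$-color changed. Properness of $\varphi$ is guaranteed by \Cref{lem:attempt}. Any uncolored vertices are recorded for the returned list.

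\textbf{Step 2: $\SampleColor(u)$ inside $\Resample(v)$.} This is the main step, and we check it directly from the description of $\SampleColor$. Only $u\in N(v)\cap S$ is sampled, so denser vertices never receive a $\colS$-color, which preserves \Cref{inv:sparser}. $\SampleColor(u)$ either sets $\colS(u)=\perp$ or sets $\colS(u)=\varphi(u)=c_u$.

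In the second case, $c_u$ was verified absent from $N(u)$ in the $\colS$-color class, so $\colS$ stays proper. Simultaneously, $\varphi$ assigns $c_u$ to $u$ and uncolors every vertex of $\mathcal{C}_{c_u}\cap N(u)$ in $\varphi$, so $\varphi$ stays proper. The uncolored neighbors $w$ had $\colS(w)\neq c_u$ and therefore $\colS(w)=\perp$: otherwise \Cref{inv:extend} would give $\varphi(w)=\colS(w)=c_u$, contradicting the $\colS$-check. Hence uncoloring them in $\varphi$ does not break \Cref{inv:extend}. Finally, $u$ itself satisfies $\varphi(u)=\colS(u)$.

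In the case where $\colS(u)$ becomes $\perp$, \Cref{inv:extend} imposes nothing on $u$. We do need to decide what happens to $\varphi(u)$. Either $\SampleColor$ also sets $\varphi(u)=\perp$, which is how I read ``uncolors vertices in $\set{v}\cup N(v)$'' and is consistent with the returned list, or $\varphi(u)$ is kept. In both cases $\varphi$ remains proper, since we only remove a color.

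\textbf{Step 3: $\ColorOH(K(v))$.} By \Cref{lem:colorOH}, this call only colors vertices of $K$ in $\varphi$ properly, with no recoloring or uncoloring. The newly colored vertices had $\varphi=\perp$, hence $\colS=\perp$ by \Cref{inv:extend}. So \Cref{inv:extend} is unaffected and $\colS$ is untouched.

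Collecting all vertices uncolored in Steps 1--3 gives the returned list. The main obstacle is Step 2, and in particular the argument that the $\varphi$-uncolored neighbors must have $\colS=\perp$, since \Cref{inv:extend} is exactly what makes the $\colS$-check sufficient there. A secondary point is confirming that $\ColorOH$ is only invoked when its preconditions hold, which is needed for \Cref{lem:colorOH} to apply. Because the lemma asserts only properness and invariants and not the coloring guarantee, it suffices that $\ColorOH$ never recolors or uncolors vertices and only colors blank ones properly.
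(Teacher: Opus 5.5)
Your proposal is correct and follows essentially the same route as the paper. Both arguments trace the subroutines one by one (\ColorSparse, then \SampleColor inside \Resample, then \ColorOH) and check that each step preserves properness of $\colS$ and $\varphi$ together with \Cref{inv:sparser,inv:extend}. Your explicit use of \Cref{inv:extend} to show that the neighbors uncolored in $\varphi$ by \SampleColor must already be blank in $\colS$ spells out what the paper compresses into the phrase ``$\colS$ and $\varphi$ agree on all vertices colored under $\colS$.''
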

\begin{proof}
    The algorithm returns all vertices that receive a new color or get uncolored in $\colS$ during $\UpdateSparseColor$ and ensures $\varphi(u)=\colS(u)$ for each $u\in N(v)\cap S$, maintaining \Cref{inv:extend}.

    By construction, the subroutines \Resample only colors the sparser vertices, and the subroutines \ColorOH assigns colors to vertices only in $\varphi$, consequently maintaining \Cref{inv:sparser}. 
    
    Moreover, whenever a vertex is colored in $\colS$ in a call to \Resample, the color is from $\Avail{v}{\colS}$. By construction, all neighbors of $v$ that have the same color under $\varphi$ are subsequently uncolored, thereby preserving the properness of both $\colS$ and $\varphi$.
    The subroutines \ColorSparse and \ColorOH only recolor vertices in $\varphi$. Since $\colS$ and $\varphi$ agree on all vertices colored under $\colS$ and each subroutine either assigns an available color to a vertex or uncolor one endpoint of new monochromatic edges, these subroutines also preserve the properness of both colorings.
\end{proof}

The subroutines $\Resample$ and $\ColorOH$ are always invoked consecutively. Therefore, we analyze them jointly. We first prove that if $v\in V(\KKoh)$ then $\Resample(v)$ result in most vertices of $K(v)$ to get uncolored.

\begin{claim}\label{clm:resample-uncolors-in-OH}
For $v\in V(\KKoh)$ after $\Resample(v)$ at least $4\Delta/5$ vertices of $K(v)$ are uncolored in $\varphi$.
\end{claim}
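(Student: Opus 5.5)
The plan is to count directly: vertices of $K := K(v)$ that are not neighbors of $v$ are few, and neighbors of $v$ inside $K$ are almost all left blank by $\Resample(v)$. I will use \Cref{lem:prob-flaw-sparse} to control how many neighbors of $v$ end up colored. So the claim will be shown with probability at least $1-2^{-\delta\Delta/3}$, which is high probability since $\Delta=\omega(\sqrt n\log^3 n)$. I expect this is the intended reading of the claim, since $\Resample$ is randomized.

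\emph{Step 1: structure.} Since $K\in\KKoh$, every vertex of $K$ is sparser, so $N(v)\cap K\subseteq N_S(v)$. Hence $\Resample(v)$ calls $\SampleColor(u)$ on every $u\in N(v)\cap K$. By the almost-clique definition, $\card{N(v)\cap K}\geq(1-\eps)\Delta$ and $\card{K}\leq(1+\eps)\Delta$. So at most $2\eps\Delta$ vertices of $K$ lie outside $N(v)$, and I will pessimistically count all of them as colored.

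\emph{Step 2: colored in $\varphi$ implies colored in the sampled coloring.} Consider $u\in N(v)\cap K$ after its call $\SampleColor(u)$. Either $\colS(u)=\varphi(u)=c_u$, or $u$ is blank in $\colS$. I will argue from the description of $\SampleColor$ that it is then blank in $\varphi$ too. Later calls $\SampleColor(u')$ within the same $\Resample$ only \emph{uncolor} vertices of $\varphi$ (namely $\mathcal{C}_{c_{u'}}\cap N(u')$), and never color $u$ again. Hence, at the end, every $u\in N(v)\cap K$ colored in $\varphi$ is also colored in the partial coloring $\chi$ generated by the samples.

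I will check that $\chi$ is exactly of the form in \Cref{lem:prob-flaw-sparse}. Take $t_u$ to be the position of $u$ in the lexicographic loop, and $B(u)$ to be the $\colS$-colors on neighbors of $u$ that are not resampled before $u$. Then $u$ keeps $c(u)$ iff $c(u)\notin B(u)$ and no earlier-processed neighbor in $N_S(v)$ holds $c(u)$. A subtle point is that the lemma's condition (2) speaks of earlier neighbors that \emph{sampled} the color. This only makes the lemma's colored set larger than or equal to the actual one, and upper bounds are all we need, so it is harmless.

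This identification of the actual process with the abstract one of \Cref{lem:prob-flaw-sparse} is the only delicate step. In particular, the fresh color samples must be independent, with probability $\delta$ each, and that holds by construction of $\SampleColor$.

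\emph{Step 3: conclude.} \Cref{lem:prob-flaw-sparse} gives $\Ncolor{v}{\chi}\leq 2\delta\Delta$ with probability at least $1-2^{-\delta\Delta/3}$. On this event, the number of vertices of $K$ colored in $\varphi$ is at most $2\eps\Delta+2\delta\Delta$. Thus at least $(1-\eps)\Delta-2\delta\Delta\geq 4\Delta/5$ vertices of $K$ are uncolored, using that $\eps$ and $\delta<\eps$ are small constants (e.g.\ $3\eps\leq 1/5$).
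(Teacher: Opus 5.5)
Your overall strategy matches the paper's. Every vertex of $N(v)\cap K(v)\subseteq N_S(v)$ independently samples a color with probability only $\delta$. A Chernoff bound then shows that at most about $2\delta\Delta$ of the at least $(1-\eps)\Delta$ neighbors of $v$ inside $K(v)$ stay colored. (The paper applies Chernoff directly to the indicators of these neighbors becoming blank.)

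However, your justification in Step 2 for invoking \Cref{lem:prob-flaw-sparse} as a black box has the comparison backwards. In that lemma, $u$ is blocked if an earlier neighbor $w$ merely \emph{sampled} $c(u)$. In the real $\Resample$ loop, $u$ checks the current $\colS$-color class, so it is blocked only if $w$ \emph{kept} $c(u)$. Couple the two processes on the same samples. Suppose $w$ samples $c$ but discards it because $c\in B(w)$, and $u$ later samples $c\notin B(u)$. Then $u$ is colored in the real process but not in the lemma's. By induction over the processing order, the lemma's colored set is always contained in the actual one, not the reverse. So an upper bound on $\Ncolor{v}{\chi}$ from the lemma's statement does not transfer to the actual number of colored neighbors, and your step ``upper bounds are all we need, so it is harmless'' fails as written.

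The repair is immediate, and it is exactly what the proof of \Cref{lem:prob-flaw-sparse} (and the paper's proof of this claim) actually uses. Under either blocking rule, a vertex can end up colored only if it sampled a color at all. The number of samplers in $N(v)\cap K(v)$ is a sum of independent Bernoulli$(\delta)$ variables, so it is at most $2\delta\Delta$ with probability at least $1-e^{-\delta\Delta/3}$. With that replacement, your Steps 1 and 3 go through unchanged. This includes your correct observation that later \SampleColor calls in the loop only uncolor vertices in $\varphi$.
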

\begin{proof}
When $\Resample(v)$ is called, every vertex $u\in N(v)$ receives the color $\perp$ with probability at least $1-\delta$, independently of
the others.

By properties of almost-cliques, $v$ has at least $(1-\eps)\Delta$ neighbors in $K(v)$. 
For $i\in[\deg_{K(v)}(v)]$, let $X_i$ be the indicator random variable that the $i$-th neighbor of $v$ in $K(v)$ becomes blank during $\Resample(v)$, and let
\[
X=\sum_{i\in[\deg_{K(v)}(v)]}X_i.
\]
We have
\[
\expect{X}= \sum_{i\in [\deg_{K(v)}(v)]} \Pr(X_i=1)\geq \deg_{K(v)}(v)\cdot (1-\delta)\geq (1-\delta)(1-\eps)\Delta.
\]
Since $\delta$ and $\eps$ are sufficiently small and $\eps >\delta$ we get $\expect{X}\geq (1-2\eps)\Delta\geq 9\Delta/10$. By Chernoff bound (\Cref{prop:chernoff}), we obtain 
\[
\Pr\paren{X\leq \frac{4\Delta}{5}}\leq \exp\paren{-\frac{\Delta}{200}}.
\]
Therefore, as without loss of generality we assumed $\Delta=\Omega(\sqrt{n})$ (see after \Cref{thm:delta+1}) , with high probability $X > 4\Delta/5$, concluding the proof.
\end{proof}

The following lemma allows us to bound the runtime of $\UpdateSparseColor$ based on the number of times it invokes the subroutine $\Resample$. 

\begin{lemma}\label{lem:recourse-runtime}
    A call to $\Resample(v)$ for  $v\in S$, together with a subsequent call to the subroutine $\ColorOH(K(v))$ when $v\in V(\KKoh)$, has expected running time $O(n\log n)$.
\end{lemma}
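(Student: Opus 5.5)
The proof splits the cost into the calls to \SampleColor made inside $\Resample(v)$ and the single possible call to $\ColorOH(K(v))$.

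\textbf{Resampling part.} $\Resample(v)$ iterates over $u \in N(v)\cap S$ in lexicographic order, which is at most $\Delta$ vertices, and calls $\SampleColor(u)$ on each. Enumerating $N(v)\cap S$ takes $O(\Delta)=O(n)$ time using the membership array of \Cref{prop:maintaining-decomposition}. By \Cref{lem:SampleColor}, each call $\SampleColor(u)$ takes expected $O(n/\Delta)$ time. That bound holds conditioned on the state of $\colS$ and $\varphi$ just before the call, because $c_u$ is drawn uniformly from $[\Delta+1]$ independently of the current coloring. Hence the expected size of the scanned color class is at most $n/(\Delta+1)$ in both $\colS$ and $\varphi$, whatever the earlier calls in the loop produced. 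Linearity of expectation (via the tower rule over the sequential calls) then bounds the expected total time of the loop by $\Delta\cdot O(n/\Delta)=O(n)$.

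\textbf{\ColorOH part.} This applies only when $v\in V(\KKoh)$. I would invoke \Cref{lem:colorOH}, which gives expected $O(n\log n)$ time provided $K(v)$ has at least $4\Delta/5$ uncolored vertices in $\varphi$. \Cref{clm:resample-uncolors-in-OH} guarantees this precondition with probability at least $1-\exp(-\Delta/200)$, so on that event the expected cost is $O(n\log n)$.

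\textbf{Failure event.} When the precondition fails, I would fall back on the worst-case bound of \Cref{lem:worstcase-subroutines}, which gives $O(m)=O(n\Delta)$ time. Its contribution to the expectation is $O(n\Delta)\cdot e^{-\Delta/200}=o(1)$, since $\Delta=\Omega(\sqrt n)$. Conditioning on the event of the claim does not distort the expectation in \Cref{lem:colorOH}, because that lemma's bound is over $\ColorOH$'s own fresh randomness for any fixed input state meeting the precondition.

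Summing the three contributions gives $O(n)+O(n\log n)+o(1)=O(n\log n)$. I expect the only delicate step to be this conditioning: making sure the expectation bounds of \Cref{lem:SampleColor} and \Cref{lem:colorOH} apply state-by-state rather than only for a fixed initial state, and that the low-probability failure of the claim is absorbed by the deterministic worst-case bound.
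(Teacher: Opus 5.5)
Your proof is correct and follows the paper's route: you bound $\Resample(v)$ by at most $\Delta$ calls to \texttt{SampleColor} at expected $O(n/\Delta)$ each, then invoke \Cref{lem:colorOH}, using \Cref{clm:resample-uncolors-in-OH} to establish its precondition. You also explicitly absorb the low-probability failure of that claim via the worst-case bound of \Cref{lem:worstcase-subroutines}, which contributes $O(n\Delta)\cdot e^{-\Delta/200}=o(1)$; this is extra rigor the paper's proof skips when it simply asserts that the precondition holds with high probability.
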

\begin{proof}
Each call to \Resample samples a color for at most $\Delta$ neighbors. To determine whether a sampled color is available, the algorithm searches the corresponding color class. Since the expected size of a color class is $O(n/\Delta)$, a call to \Resample takes expected time $O(n)$.


By \Cref{clm:resample-uncolors-in-OH}, $K(v)$ contains at least $4\Delta/5$ uncolored vertices after the call to $\Resample(v)$ with high probability. Therefore, the assumptions of \Cref{lem:colorOH} are satisfied, and the subsequent call to $\ColorOH(K(v))$ has expected running time $O(n\log n)$.

Combining the above two yields a total expected $O(n\log n)$ runtime.
\end{proof}

It is not enough to bound the runtime of \Resample and \ColorOH given that they uncolor vertices which need to be handled later. In order to account for these uncolored vertices, we need the following definition which we then show is a proxy to the number of vertices they uncolor. 


\begin{Definition}\label{def:cost-vertex}
    For each $v\in S$, define function $\cost(v)$ as below
    \[
    \cost(v)=
    \begin{cases}
    \Delta & v\in \Vsparse, \\
    \max\paren{\avgout(K(v)), \out(v)} & v\in V(\KKoh).
    \end{cases}
    \]
\end{Definition}

The following lemma allows us to bound the total number of uncolored vertices caused by $\UpdateSparseColor$ based on the number of times it invokes the subroutine $\Resample$. 

\begin{figure}[htbp]
    \centering
    \includegraphics[width=0.7\textwidth]{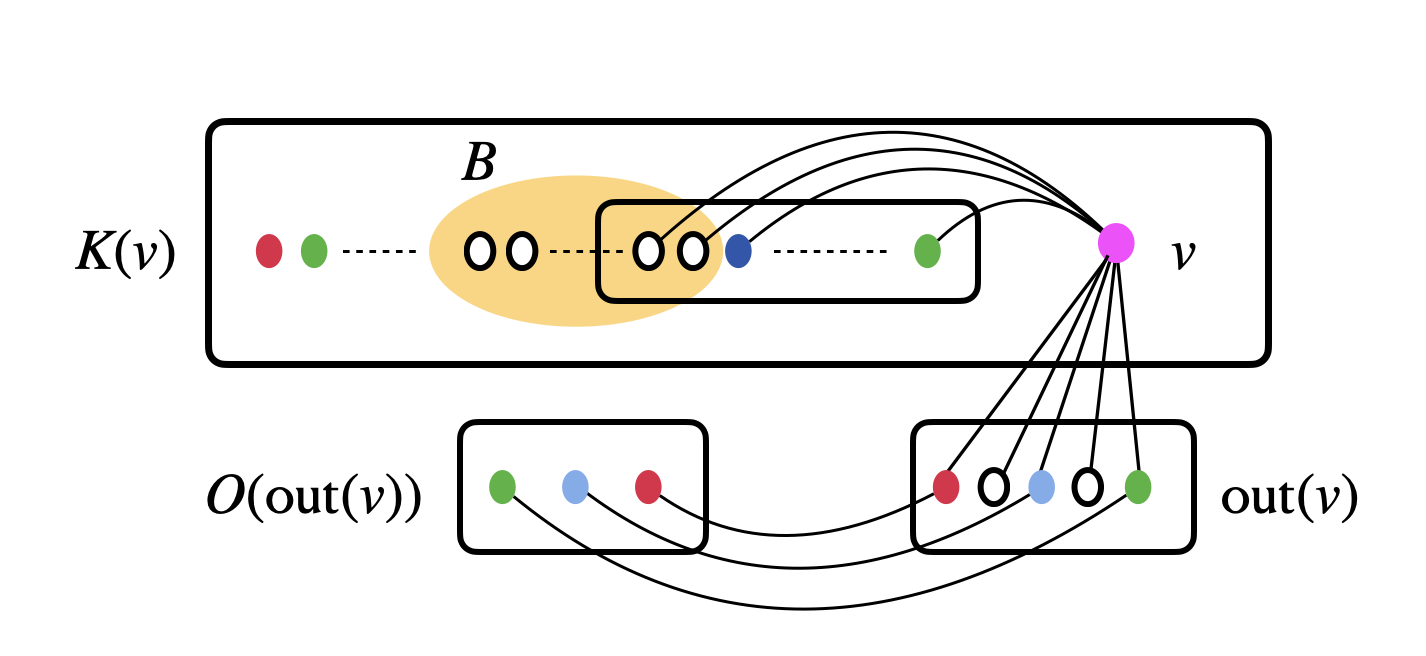}
    \caption{For an out-heavy vertex $v$, after running $\Resample(v)$ and then $\ColorOH(K(v))$ the possible uncolored vertices are either in (I) vertices uncolored by $\Resample(v)$ in $\OUT(v)$ or (II) the vertices in the neighborhood of the ones in $\OUT(v)$ that steal their colors, or (III) set $B$ which shows the vertices left uncolord by $\ColorOH(K(v))$ in $K(v)$.}
    \label{fig:recourse-colorOH}
\end{figure}

\begin{lemma}\label{clm:recourse-of-resample}
    A call to $\Resample(v)$ for  $v\in S$, together with a subsequent call to the subroutine $\ColorOH(K(v))$ when $v\in V(\KKoh)$, uncolors  $O(\cost(v))$ vertices in expectation.
\end{lemma}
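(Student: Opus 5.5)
The proof splits into the two cases of \Cref{def:cost-vertex}. In both cases I will use one accounting of what $\Resample(v)$ uncolors in $\varphi$. It calls $\SampleColor(u)$ once for each $u \in N(v)\cap S$. By \Cref{lem:SampleColor}, each such call uncolors $O(1)$ vertices in expectation: possibly $u$ itself, plus the neighbors of $u$ in $\mathcal{C}_{c_u}$ under $\varphi$. Here $c_u$ is uniform over $[\Delta+1]$ and $u$ has at most $\Delta$ neighbors.

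\textbf{Case $v \in \Vsparse$.} Here $\cost(v)=\Delta$ and no $\ColorOH$ call is made. Linearity of expectation over the at most $\Delta$ calls to $\SampleColor$ gives $O(\Delta)$ uncolored vertices in expectation, which is the claim.

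\textbf{Case $v \in V(\KKoh)$, with $K=K(v)$.} I will sort every vertex uncolored in $\varphi$ at the end into the three types of \Cref{fig:recourse-colorOH}. By \Cref{lem:colorOH}, $\ColorOH(K)$ never uncolors or recolors anything. So after it runs, the uncolored vertices inside $K$ number at most $4\avgout(K)$; this covers type (III) together with every vertex of $K$ that $\Resample$ uncolored.

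The precondition of \Cref{lem:colorOH} (at least $4\Delta/5$ blank vertices in $K$) holds with high probability by \Cref{clm:resample-uncolors-in-OH}. On the complementary event, of probability $n^{-\Omega(1)}$, I bound the count trivially by $n$. This adds only $o(1)$ to the expectation, so that case is harmless.

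It remains to count uncolored vertices outside $K$, of types (I) and (II). Type (I) vertices lie in $N(v)\setminus K=\OUT(v)$. Type (II) vertices are uncolored because some $u \in N(v)\cap S$ took a color $c_u$, and they lie outside $K$. I will charge each such vertex to the call $\SampleColor(u)$ that caused it.

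The calls with $u \in \OUT(v)$ contribute $O(1)$ each in expectation, for $O(\out(v))$ in total.

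\textbf{Main obstacle: calls with $u \in K$.} These may uncolor neighbors of $u$ outside $K$, namely vertices of $\OUT(u)$ colored $c_u$. Since $c_u$ is uniform, the expected number of such vertices is at most $\delta\cdot \out(u)/(\Delta+1)$. Summing over $u \in K$ gives
\[
O\!\left(\sum_{u\in K}\frac{\out(u)}{\Delta}\right)=O\!\left(\frac{|K|\,\avgout(K)}{\Delta}\right)=O(\avgout(K)),
\]
using $|K|\le(1+\eps)\Delta$. This uses that the sampled color is independent of the current coloring, which holds because $\SampleColor$ draws fresh uniform randomness. The conditioning on earlier calls in the lexicographic loop only changes which vertices are currently colored, not the distribution of $c_u$, so the bound holds conditionally for each call and linearity applies.

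Adding the three contributions gives $O(\out(v)+\avgout(K)) = O(\cost(v))$ in expectation, completing the proof.
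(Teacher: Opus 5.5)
Your proof is correct and follows the paper's argument essentially step for step. In the $\Vsparse$ case you sum \Cref{lem:SampleColor} over $N(v)$; in the out-heavy case you use the same three-way accounting: at most $4\avgout(K)$ blank vertices left in $K$ by \Cref{lem:colorOH}, $O(\out(v))$ from the calls on $\OUT(v)$, and $\sum_{u\in K}\out(u)/(\Delta+1)=O(\avgout(K))$ from the calls inside $K$. Your handling of the failure event of \Cref{clm:resample-uncolors-in-OH} and of conditioning across the lexicographic loop is a harmless refinement the paper omits; note that the failure probability is actually $e^{-\Delta/200}$, so $n$ times it is $o(1)$ because $\Delta=\Omega(\sqrt{n})$.
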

\begin{proof}
If $v\in\Vsparse$, then $\Resample(v)$ may modify the colors of all neighbors of $v$. By \Cref{lem:SampleColor}, for each $u\in N_S(v)$, the call to $\SampleColor(u)$ inside $\Resample(v)$ might cause only $O(1)$ additional vertices to become uncolored in expectation. 
In addition, vertices in $N(v)$ may become uncolored directly by \Resample. Since $\deg(v)\leq \Delta=\cost(v)$, the number of vertices that can become uncolored is at most $O(\cost(v))$ in expectation. 

Suppose instead that $v\in V(\KKoh)$. By \Cref{clm:resample-uncolors-in-OH}, we have that after running $\Resample(v)$ the almost-clique $K(v)$ has at least $4\Delta/5$ many uncolored vertices with high probability and therefore satisfies the assumptions of \Cref{lem:colorOH}. Hence, by \Cref{lem:colorOH}, at most ${4\cdot \avgout(K(v))}$ vertices of $K(v)$ remain uncolored after $\ColorOH(K(v))$ terminates.

Furthermore, $\Resample(v)$ modifies the colors of vertices in
$N(v)$. Since $v$ has at most $\out(v)$ neighbors outside of $K(v)$,
the number of vertices outside $K(v)$ that may be uncolored directly by $\Resample(v)$ is at
most $\out(v)$. Additionally, if a vertex $u\in \OUT(v)$ gets recolored by $\Resample(v)$, then in expectation $O(1)$ many vertices in $N(u)$ have the same color and become uncolored in $\varphi$; resulting in $O(\out(v))$ many uncolored vertices in expectation (see \Cref{fig:recourse-colorOH}).

Let $Q\subseteq N(v)\cap K(v)$ be the set of vertices inside $K(v)$ that have received a color by $\Resample(v)$. We showed above that each recolored vertex may cause only $O(1)$ additional vertices to become uncolored in expectation. However, any vertex that becomes uncolored inside $K(v)$ is subsequently handled by $\ColorOH(K(v))$, which leaves at most ${4\cdot \avgout(K(v))}$ vertices uncolored. Therefore, it suffices to account for vertices that become uncolored in $\OUT(u)$ for each $u\in Q$.

For a vertex $u\in Q$, the probability that its sampled color is used by a vertex in $\OUT(u)$ is at most $\out(u)/(\Delta+1)$. By linearity of expectation
\begin{align*}
    \expect{\text{ number of vertices outside of } K(v)\text{ uncolored by }Q}
    &=\sum_{u\in Q}\frac{\out(u)}{\Delta+1}\\
    &\leq \sum_{u \in K(v)}\frac{\out(u)}{\Delta+1}\tag{as $Q\subseteq K(v)$}\\
    &\leq (1+\eps)\cdot \avgout(K(v)),
\end{align*}
where the last inequality follows from $\abs{K} \leq (1+\eps)\Delta$.
Thus, in this case at most $(1+\eps)\cdot \avgout(K(v))$ many vertices are uncolored in expectation.

Consequently, the total number of uncolored vertices
by the joint execution of $\Resample(v)$ and $\ColorOH(K(v))$ is in expectation
\[
(1+\eps)\cdot \avgout(K(v))+ {4\cdot \avgout(K(v))}+O(\out(v))=O(\cost(v)),
\]
concluding the proof.
\end{proof}

\subsection{Avoiding Yet Another Flaw}\label{sec:another-flaw}

Recall the discussion before~\Cref{lem:prob-slack-sparse}: a property we would need in our algorithm is that in the coloring $\colS$, the number of blocked colors for each vertex is not too large. 
This is guaranteed by~\Cref{lem:prob-flaw-sparse} for a vertex $v$ right after a $\Resample(v)$ call. But, we would like to maintain this property in the algorithm for all vertices and at all times. 

Given the above discussion, we define the following additional flaw. 

\begin{coloredbox}{cyan}
	\begin{flaw}\label{flaw:all}
	$v \in S \quad {{\wedge}} \quad \Ncolor{v}{\colS} > 4\delta \cdot \Delta. $
	\end{flaw}
	\noindent We say a vertex $v\in S$ is \textbf{far from} having \Cref{flaw:all} if $ \Ncolor{v}{\colS} < 2\delta \cdot \Delta. $
\end{coloredbox}

We should note that unlike~\Cref{flaw:sparse} and~\Cref{flaw:out-heavy}, our aim is to maintain the graph free of \Cref{flaw:all} (or at least some approximation of it) at all times.  

We can test if a vertex has this flaw quickly (with high probability) using the following lemma.  

\begin{lemma}\label{clm:test-works}
    For any sparser vertex $v\in S$, if $v$ has \Cref{flaw:all}, then $\test(v)$ will return \Fail with high probability. Conversely,  if $v$ is far from having \Cref{flaw:all}, then $\test(v)$ will output \Pass with high probability. Moreover, the running time of $\test(v)$ is $O(\log n)$ with certainty.
\end{lemma}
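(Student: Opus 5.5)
The subroutine $\test(v)$ is not spelled out before the statement, so I first fix the natural sampling-based implementation and then analyze it. $\test(v)$ samples $k := C\log n$ neighbors $u_1,\ldots,u_k$ of $v$ independently and uniformly from $N(v)$, with replacement, for a sufficiently large constant $C$. For each sample it checks in $O(1)$ time, via the stored array for $\colS$, whether $\colS(u_i)\neq\perp$. Let $Z$ be the number of colored samples. $\test(v)$ returns \Fail iff $Z > 3\delta\cdot k\cdot \Delta/\deg(v)$, and \Pass otherwise.

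The sampling can equally be done against $\Delta$ by drawing an index in $[\Delta]$ and treating indices beyond $\deg(v)$ as blank. This avoids the $\deg(v)$ normalization and makes the threshold simply $3\delta k$. I would use this variant, so that $\Pr(u_i \text{ colored}) = \Ncolor{v}{\colS}/\Delta =: p$.

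The running time is then clearly $O(k)=O(\log n)$ deterministically. This assumes the adjacency list of $v$ supports uniform access to its $j$-th entry, for example by storing neighbors in an array with swap-deletion.

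For correctness, $Z$ is a sum of $k$ independent Bernoulli$(p)$ indicators.

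\emph{Flaw case.} If $v$ has \Cref{flaw:all}, then $p > 4\delta$ and $\expect{Z} \geq 4\delta k$. The lower-tail Chernoff bound of \Cref{prop:chernoff}, with $\mu_{min}=4\delta k$ and deviation $1/4$, gives $\Pr(Z\leq 3\delta k)\leq \exp(-\delta k/32)$.

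\emph{Far case.} If $v$ is far from having the flaw, then $p<2\delta$ and $\expect{Z}\leq 2\delta k$. The upper-tail bound with $\mu_{max}=2\delta k$ and deviation $1/2$ gives $\Pr(Z> 3\delta k)\leq \exp(-\delta k/6)$.

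Since $\delta$ is an absolute constant, choosing $C = \Theta(1/\delta)$ large enough makes both error probabilities $n^{-\Omega(1)}$, which is exactly what the statement requires.

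There is no real obstacle here. The only point needing care is ensuring the constant-gap threshold sits strictly between $2\delta$ and $4\delta$, so that Chernoff applies with constant relative deviation. A second point is the implementation detail that uniform neighbor sampling, and the lookup of $\colS$, are $O(1)$ operations; this is what makes the $O(\log n)$ bound hold with certainty rather than only in expectation.
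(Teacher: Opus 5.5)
Your proposal is correct and matches the paper's proof. The paper's test samples $\log n/\delta^2$ indices from $[\Delta]$, counts those that hit a neighbor colored in $\colS$, and returns \Fail iff the count exceeds $3\log n/\delta$, which is your threshold $3\delta k$ with $k=\log n/\delta^2$; it then applies the same two Chernoff tails, and your smaller choice $k=\Theta(\log n/\delta)$ works equally well since only a constant relative gap between $2\delta$ and $4\delta$ is needed.
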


We prove this lemma by a simple sampling argument and standard concentration inequalities. We postpone this proof to \Cref{sec:omitted-proofs}.

The way to fix a vertex $v$ with a~\Cref{flaw:all} is to simply call $\Resample(v)$ (as proven~\Cref{lem:prob-flaw-sparse}). We can analyze the time spent for these calls subsequently in our proof. But, the catch is 
how to find vertices that violate this flaw? We do this next. 

\paragraph{Checking for~\Cref{flaw:all}.} We have one final subroutine, called $\checkFlaw$,  whose goal is to handle this task.  
The idea behind $\checkFlaw$ is to use a ``lazy'' checking combined with $\test$ in~\Cref{clm:test-works}. A difference of $\checkFlaw$ with our other subroutines is that we do not run this \emph{sequentially} with other subroutines but rather
it is always running in the \emph{background}; think of $\checkFlaw$ as being called in multiple of our other subroutines as we will describe below. Although, to avoid the clutter in our algorithms' description, we do not explicitly write these calls to $\checkFlaw$ 
inside them and only here specify how they should be. 

Specifically, whenever any vertex $v \in S$ changes color in $\colS$, or joins or leaves $S$, or adds and deletes an edge to another vertex (basically whenever $G[S]$ or $\colS$ changes), 
we call $\checkFlaw(v)$: 
\begin{Algorithm}\label{alg:check-flaw}
	$\checkFlaw(v)$: For any vertex $u \in N(v)$, with probability ${\log{n}}/{(\delta^2 \cdot \Delta)}$, run $\test(u)$,  and if it returned $\Fail$, call $\Resample(u)$; go back to $\test(u)$ step.  
\end{Algorithm}
Again, we emphasize that from now on, we consider $\checkFlaw$ as having being incorporated in all our algorithms that deal with $\colS$ without us including this part to each one explicitly. 

The following lemma shows $\checkFlaw$ can actually be implemented quite fast when it does not call $\Resample$. 
\begin{lemma}\label{lem:checkFlaw} 
	A call to $\checkFlaw(v)$ take $O(\log^2{n})$ time with high probability ignoring the time spent per $\Resample$ calls. 
\end{lemma}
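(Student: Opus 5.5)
The claim is that $\checkFlaw(v)$, excluding time spent inside $\Resample$ calls, runs in $O(\log^2 n)$ time with high probability. The time outside $\Resample$ has two sources: the sampling loop over $N(v)$, and the calls to $\test$. By \Cref{clm:test-works}, each call to $\test$ costs $O(\log n)$ with certainty. So it suffices to show that $O(\log n)$ calls to $\test$ are made, w.h.p. These come in two kinds: first calls on sampled neighbors, and repeated calls on a vertex $u$ after $\Resample(u)$ returns.

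\textbf{Step 1: sampled neighbors.} Each $u \in N(v)$ is sampled independently with probability $\log n/(\delta^2\Delta)$, and $\deg(v)\le\Delta$. Hence the expected number of sampled neighbors is at most $\log n/\delta^2 = O(\log n)$. The Chernoff bound of \Cref{prop:chernoff} (upper tail with $\mu_{max}=\log n/\delta^2$) then shows that at most $O(\log n)$ neighbors are sampled w.h.p.

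The sampling itself should not cost $\Theta(\Delta)$ time. I would implement it by skipping ahead with geometric jumps, i.e., drawing the gap to the next sampled index in the adjacency list. The loop then costs time proportional to the number of sampled vertices, which is $O(\log n)$ plus $O(1)$ overhead. This assumes the neighbor list supports indexing; otherwise one can sample the number of chosen neighbors from a binomial and pick random positions.

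\textbf{Step 2: repeated tests on the same $u$.} For a sampled $u$, the loop ends once $\test(u)$ returns \Pass. After each $\Resample(u)$, \Cref{lem:prob-flaw-sparse} (with $B(\cdot)$ the blocked sets at the time of sampling) gives $\Ncolor{u}{\colS}\le 2\delta\Delta$ except with probability $2^{-\delta\Delta/3}$. This is $n^{-\omega(1)}$, since we assumed $\Delta=\Omega(\sqrt n)$.

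I expect this step to be the main obstacle. When $\Ncolor{u}{\colS}\le 2\delta\Delta$, the vertex $u$ is at the boundary of being far from \Cref{flaw:all} rather than strictly below it. To make \Cref{clm:test-works} apply, I would either use the strict inequality in the lemma, or rerun the Chernoff step of \Cref{lem:prob-flaw-sparse} with a slightly smaller threshold (e.g. $1.5\delta\Delta$, since the expectation is at most $\delta\Delta$). Either way $u$ is far from the flaw, and \Cref{clm:test-works} says $\test(u)$ returns \Pass w.h.p. So each sampled $u$ triggers at most two calls to $\test$ w.h.p.

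\textbf{Conclusion.} A union bound over the $O(\log n)$ sampled neighbors gives $O(\log n)$ calls to $\test$ in total, each costing $O(\log n)$. This yields $O(\log^2 n)$ time w.h.p., excluding the time spent in $\Resample$.
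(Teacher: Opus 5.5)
Your proof is correct, and Step~1 together with your conclusion is exactly the paper's argument: a Chernoff bound shows that only $O(\log n)$ neighbors of $v$ are selected w.h.p. Their names are drawn directly instead of looping over $N(v)$, and each call to $\test$ costs $O(\log n)$ with certainty by \Cref{clm:test-works}. The only difference is your Step~2. Inside this lemma the paper does not separately bound the repeated $\test(u)$ that follows a $\Resample(u)$; in the proof of \Cref{lem:runtime-UpdateSparseColor} it simply charges tests that accompany a $\Resample$ to that call. Your use of \Cref{lem:prob-flaw-sparse} with a slightly lower threshold, showing at most two tests per selected neighbor w.h.p., is a valid and more self-contained way to cover these re-tests.
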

\begin{proof}
	Recall that $\delta$ is a constant. An easy application of Chernoff bound proves that $\checkFlaw(v)$, with high probability, only ever needs to check $O(\log{n})$ of neighbors of $v$ for $\test$; thus, even though the algorithm is stated as a for-loop, 
	we can first sample the names of these vertices (from the proper distribution), and then jump directly to testing those vertices instead of looping through all neighbors. Since the runtime of $\test$ is $O(\log{n})$ with certainty (by~\Cref{clm:test-works}), we obtain the final bound.  
\end{proof}

The following lemma captures the correctness of \checkFlaw. 

\begin{lemma}\label{lem:checkFlaw-correct}
	Throughout the first polynomially many calls to $\checkFlaw$, with high probability, $\Resample(v)$ is called 
	on any vertex $v \in S$ before $\Ncolor{v}{\colS} > 5\delta \cdot \Delta$ ever. 
\end{lemma}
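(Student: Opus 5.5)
Fix a vertex $v \in S$ and follow $\Ncolor{v}{\colS}$ over time. The plan is to show that once $v$ has more than $4\delta\Delta$ colored neighbors, it is tested quickly, long before the count can reach $5\delta\Delta$. A union bound over the polynomially many vertices and $\checkFlaw$ calls then finishes the argument.

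The basic observation is that $\Ncolor{v}{\colS}$ increases only when some neighbor $u\in N(v)$ gets a new color in $\colS$, or when $v$ gains a colored sparser neighbor, for example through an edge insertion or a vertex joining $S$. Each such event raises the count by at most one. Every such event also changes $\colS$ or $G[S]$ at $u$, so $\checkFlaw(u)$ is called. Since $v\in N(u)$, that call runs $\test(v)$ independently with probability $p:={\log n}/{(\delta^2\Delta)}$. This probability is \emph{fresh} each time, and the adversary cannot influence which calls actually test $v$.

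Now consider any stretch of time in which $\Ncolor{v}{\colS}$ climbs from at most $4\delta\Delta$ to above $5\delta\Delta$ with no $\Resample(v)$ in between. Such a stretch contains at least $\delta\Delta$ increment events. Each of them triggers $\checkFlaw$ at a neighbor, and that call tests $v$ with probability $p$, independently of everything before it. I would handle the adaptivity by coupling: each increment event comes with its own independent $p$-coin, so the probability that all $\delta\Delta$ coins fail is at most
\[
(1-p)^{\delta\Delta}\le \exp\paren{-\frac{\log n}{\delta}},
\]
which is $n^{-\Omega(1/\delta)}$ and hence a high polynomial because $\delta$ is a small constant. The plan needs more care for the tests that happen before the count exceeds $4\delta\Delta$, since $\test$ may correctly return \Pass there. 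So I would look only at the last $\tfrac{1}{2}\delta\Delta$ increments, during which the count already exceeds $4.5\delta\Delta>4\delta\Delta$. In that window, by \Cref{clm:test-works}, any successful test returns \Fail with high probability and so triggers $\Resample(v)$. Adjusting the constant still leaves a failure probability of $n^{-\Omega(1/\delta)}$.

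The delicate step is making this windowing rigorous against an adaptive adversary. The test coins have to be independent of the history, which they are because they are fresh randomness. The \Fail guarantee of $\test$ also has to hold for every invocation simultaneously. Since we only consider polynomially many $\checkFlaw$ calls, hence polynomially many $\test$ invocations, each succeeding with high probability, a union bound handles both issues. A final union bound over all vertices and all starting times of windows, which are polynomially many, gives the lemma.
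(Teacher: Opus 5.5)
Your proposal is correct and takes essentially the same route as the paper. Every change to $\Ncolor{v}{\colS}$ triggers a $\checkFlaw$ at a neighbor, which tests $v$ with fresh probability $\log n/(\delta^2\Delta)$; by \Cref{clm:test-works}, any test made while the count exceeds $4\delta\Delta$ returns \Fail and triggers $\Resample(v)$, so $\Theta(\delta\Delta)$ changes with no test of $v$ have probability $n^{-\Omega(1/\delta)}$, and a union bound finishes. The only difference is bookkeeping: the paper argues inductively that each \Pass certifies $\Ncolor{v}{\colS}\le 4\delta\Delta$ and the next test comes within $\delta\Delta$ changes, whereas you look directly at the last $\tfrac12\delta\Delta$ increments, where the count does stay above $4.5\delta\Delta$ because at most that many increments remain, and both versions are equally valid.
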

\begin{proof}
	Fix any vertex $v$ and suppose $\test(v)$ has been called at some time step and it output \Pass. By~\Cref{clm:test-works}, we know that at this point $\Ncolor{v}{\colS} \leq 4\delta \cdot \Delta$ assuming correctness of \test. 
	Now consider the changes in the neighborhood of $v$ that can affect $\Ncolor{v}{\colS}$; anytime such a change happens, $\checkFlaw$ tosses a coin to re-run $\test(v)$ with probability ${\log{n}}/{(\delta^2 \cdot \Delta)}$. 
	This implies that before another $\delta \cdot \Delta$ changes to $\Ncolor{v}{\colS}$ happens, with high probability, $\test(v)$ again will be called. Conditioned on this, we can conclude that between any two calls to $\test(v)$, 
	there can be at most $\delta \Delta$ changes to $\Ncolor{v}{\colS}$ which implies that as long as $\test$ does not output wrong, vertex $v$ should never have $\Ncolor{v}{\colS} > 5\delta\Delta$ before $\Resample(v)$ is called 
	(and the algorithm goes into a go-back loop of running $\test$ again). As such, since first polynomially many calls to $\test$ succeed with high probability, a union bound over all vertices and calls to $\checkFlaw$ concludes the proof. 
\end{proof}

\Cref{lem:checkFlaw-correct} allows us to maintain a state of being approximately~\Cref{flaw:all}-free at all times.


\subsection{Dynamic Entropy Compression: Bounding Number of Calls to $\Resample$}

We are now ready to analyze the runtime and recourse of $\UpdateSparseColor$, throughout all updates of our algorithm. 
The key is to bound the ``cost'' the calls to $\Resample$ and $\ColorOH$ throughout all updates. We use dynamic entropy compression for this, following the description in~\Cref{sec:high-level} (with various changes for the formal proof). 


\paragraph{Maintaining a log.}
We need to record an accurate journal of the algorithm in order to account for all possible sequence of events that result in many calls to $\Resample$. 
We do this by maintaining a log which records every update to the graph along with the changes applied to the decomposition and every call to $\Resample$ throughout the execution of \Cref{alg:delta+1}. 
Every entry of this log is of one of the following forms:
\begin{itemize}
    \item $U_i = (\texttt{type = update}, u_i, v_i, \pm)$: which records the $i$-th update to the graph for some integer $i$, where the edge $(u_i,v_i)$ is either inserted or deleted according to the sign $\pm$.
    
    \item $R_j = (\texttt{type = resample},\texttt{source}=\ell\in \set{\text{`U',`C'}}, v_j)$: which records the $j$-th call to $\Resample$ for some integer $j$, called on vertex $v_j$, and $\ell=$`U' if the call was from \UpdateSparseColor and $\ell=$`C' if it came
    from $\checkFlaw$. 
    
    \item $D_k=(\texttt{type = decomp}, v_k, X_k)$: which represents the $k$-th change applied to the decomposition as a tuple which shows a vertex $v_k$ that have moved in the decomposition and the new set $X_k$ it belongs to (which is either $\Vsparse$ or $K(v_k)$).
\end{itemize}

Note that each update incident on sparser vertices causes a call to \SampleColor in \Cref{alg:delta+1}. Thus, as we are recording the updates in the log, we can also recover these calls to \SampleColor as well (this also holds 
for the moves between $S$ and $D$ in the decomposition that are logged). 

Using the construction of this log and given the number of updates, total changes applied to the decomposition, and the calls to \Resample, one can bound the number of possible logs as below.

\begin{claim}\label{obs:number-of-logs}
    The number of distinct logs containing $t$ updates, $d$ changes to the decomposition, and $r$ calls to \Resample is at most $n^{5(t+r+d)}$.
\end{claim}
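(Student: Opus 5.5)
We bound the number of logs by a direct encoding argument: a log is a sequence of $t+r+d$ entries, so it suffices to bound (i) the number of ways to interleave the three entry types and (ii) the number of possible contents of each entry, and then multiply.

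For the interleaving, the positions of the $t$ update entries, the $r$ resample entries and the $d$ decomposition entries among the $t+r+d$ slots are determined by a sequence over $\set{U,R,D}$. There are at most $3^{t+r+d}$ such sequences, which is at most $n^{t+r+d}$ for $n\geq 3$.

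For the contents, we bound each entry type separately.
\begin{itemize}
	\item An update entry $U_i=(\texttt{update},u_i,v_i,\pm)$ is specified by an ordered pair of vertices and a sign. That gives at most $2n^2\leq n^3$ choices.
	\item A resample entry $R_j$ is specified by a source label in $\set{\text{`U'},\text{`C'}}$ and a vertex $v_j$. That gives at most $2n\leq n^2$ choices.
	\item A decomposition entry $D_k=(\texttt{decomp},v_k,X_k)$ needs a vertex $v_k$ and the target set $X_k$. We encode $X_k$ either as the symbol $\Vsparse$ or as the index of the almost-clique. The one subtle point is that we cannot just name a set of vertices, since that would give exponentially many options. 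Instead we name a representative vertex of $K(v_k)$, or use the index $K(v_k)\in[k]$ with $k\leq n$, as stored by \Decomposition. That gives at most $n(n+1)\leq n^3$ choices.
\end{itemize}
We also fold into $X_k$ the class of the almost-clique in the $\eta$-classification, a constant factor of $3$ per entry. This keeps the moves between $S$ and $D$ recoverable, as the surrounding discussion requires, and stays within $n^4$.

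Multiplying, each entry contributes at most $n^4$ options and the interleaving contributes at most $n^{t+r+d}$. The total number of logs is therefore at most $n^{t+r+d}\cdot n^{4(t+r+d)}=n^{5(t+r+d)}$.

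The only step needing care is the encoding of $X_k$. We must argue that the log only has to identify the new part by an index or a representative, not by its vertex set, and that this suffices to reconstruct the decomposition in the later analysis. The remaining calculations are routine.
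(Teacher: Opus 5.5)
Your proof is correct and follows essentially the same route as the paper: count the interleavings of the three entry types, then multiply by per-entry content bounds ($2n^2$ for updates, $2n$ for resamples, and about $n^2$ for decomposition changes, naming the target set by a representative vertex). The differences are cosmetic: you bound the interleavings by the multinomial estimate $3^{t+r+d}$ where the paper uses $\binom{t+r}{r}\binom{t+r+d}{d}$ together with $\binom{a+b}{b}\le 2^{a+b}$, and your extra classification label only adds harmless slack that still fits in the $n^4$ per-entry budget.
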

\begin{proof}
    For each $i\in [t]$ there are at most $2n^2$ many different content for the bits of record $U_i$ as there is $n$ choice for each endpoint of the edge and 2 choices for insertion or deletion. And for each $j\in [r]$ there are at most $2n$ different content possible for $R_j$ as there are $n$ possible choices for $v_j$ and 2 choices for the line on which \Resample was called. Each change to the decomposition is written as a tuple of a vertex and its corresponding set, so there are $n^2$ different possible contents for each tuple (assuming each set of the decomposition is represented by one of its vertices).

    As all possible arrangements between $\{U_i\}_{i\in [t]}$, $\{D_k\}_{k\in [d]}$, and  $\{R_j\}_{j\in [r]}$ is at most ${t+r \choose r}\cdot {t+r+d \choose d}$, we get that the number of possible distinct contents for such log is at most 
    \[
    {t+r \choose r}\cdot {t+r+d \choose d}\cdot (2n^2)^t\cdot (2n)^r\cdot (n^2)^d \leq n^{5(t+r+d)}.
    \]
    The inequality is true as ${a+b \choose b}\leq 2^{(a+b)}\leq n^{a+b}$, for integers $a$ and $b$,
    concluding the proof.
\end{proof}

\Cref{obs:number-of-logs} bounds the number of possible logs. However, this is not enough for our purpose as not all logs have equal ``cost’’ in terms of update time and recourse. To measure this cost, we extend the definition of cost of vertices to logs as well. 
Formally, we define the cost of a log as follows. 
\begin{Definition}[Cost of a log]\label{def:cost-log}
    For a fixed log \LOG with $\{R_i\}_{i\in [r]}$ as its set of recorded calls to \Resample, let $v_i$ denote the vertex written in $R_i$. Define $R^{(U)}$ to be the set of indices $j$ such that parameter $\ell$ of $R_j$ is `U' (so sourced from \UpdateSparseColor) and the set $R^{(C)}$ as the set of indices referring $\ell=$`C' (so sourced from \checkFlaw). Define 
    \[
    \cost(\LOG):=\abs{R^{(C)}} \cdot \frac{\delta\Delta}{3}+\delta^6\cdot \eps^2\cdot \sum_{i\in R^{(U)}}\cost(v_i).
    \]
\end{Definition}


\paragraph{Random table of assignments.}
In order to analyze the behavior of the algorithm we need to control the randomness used in the calls to \SampleColor and therefore \Resample.
We simulate all random bits used in these calls via a table of random assignments
\[
\randTable:V \times \IN \rightarrow \{0, 1\}\times [\Delta+1],
\]
where each entry $\randTable(u, i)$ contains a tuple whose first element is 1 with probability $\delta$ and 0 otherwise, and whose second element is drawn uniformly at random from $[\Delta+1]$.

We may assume the entries of $\randTable$ are sampled before the algorithm begins. That is, instead of sampling new randomness each time \SampleColor is called, the algorithm samples a table of assignments \randTable uniformly at random
at the beginning and uses its randomness throughout. Note that all changes to $\colS$, for vertices currently inside $S$, comes through calls to $\SampleColor$ (when a vertex $v$ moves from $S$ to $D$, the algorithm immediately sets $\colS(v) = \perp$). 
The algorithm will then use this random table of assignments as follows:
\begin{itemize}
    \item The initial color assigned to a vertex $v$ is determined by $\randTable(v, 1)$: if the first element of $\randTable(v, 1)$ is 0, then $\colS(v)=\perp$; otherwise, $v$ is assigned the color given by the second element of $\randTable(v, 1)$ 
    (same as in the initiation phase of \Cref{alg:initial}). 
     
    \item After starting the algorithm, each call to \SampleColor consumes fresh randomness in the table, that is, it uses a cell in $\randTable$ that has not been used (or even read) up to that point. Consequently, each call to $\Resample(v)$ for some $v\in S$ consumes fresh randomness for every vertex $u\in N_S(v)$. We simulate this by advancing to the next unused entry in row $u$ of $\randTable$ (see \Cref{fig:indices}). 
    
    \item Let $\INDEX{u}{i}$ denote the number of times that $\SampleColor(u)$ has been called on $u$ by the time the algorithm reaches the call $R_i$, i.e. the $i$-th call to \Resample as recorded in \LOG. 
    We can recover this information from the log directly. 

\end{itemize}

\begin{figure}[htbp]
    \centering
    \begin{subfigure}[b]{0.38\textwidth}
        \centering
        \includegraphics[width=\textwidth]{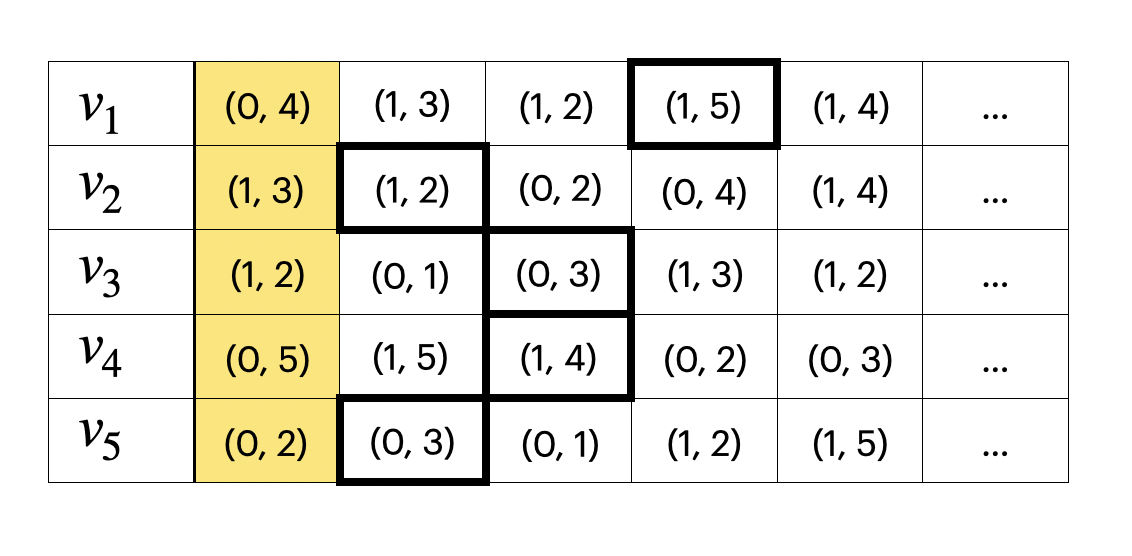}
        \caption{Before calling $\Resample(v_1)$.}
        \label{fig:index-sub1}
    \end{subfigure}
    \hfill
    \begin{subfigure}[b]{0.38\textwidth}
        \centering
        \includegraphics[width=\textwidth]{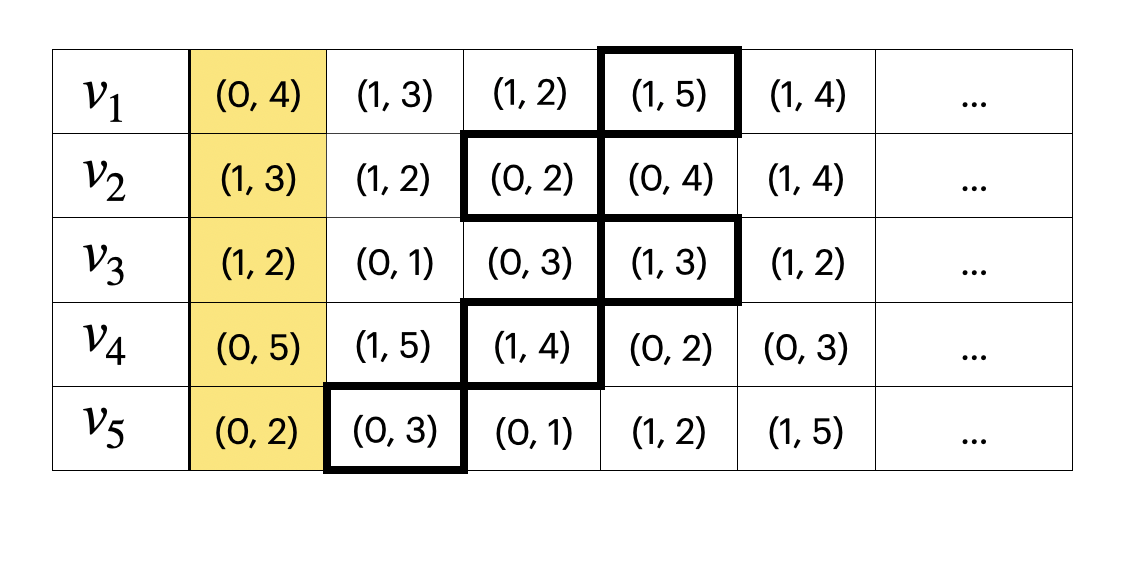}
        \caption{After calling $\Resample(v_1)$.}
        \label{fig:index-sub2}
    \end{subfigure}
    \begin{subfigure}[b]{0.20\textwidth}
        \centering
        \includegraphics[width=\textwidth]{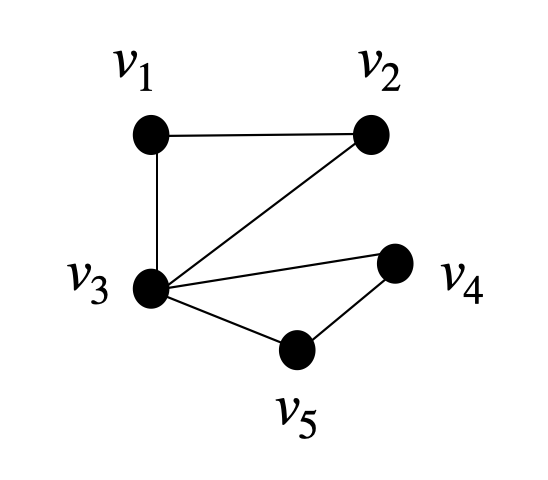}
        \caption{The Graph}
        \label{fig:index-graph}
    \end{subfigure}
    \hfill
    \caption{The table show the random table of assignments for the graph in \Cref{fig:index-graph} where all its vertices are in $S$ at this point of time. The bold boxes show the ``Index'' of each vertex at the current state. The first (yellow) column shows the initial colors assigned to the vertices.}
    \label{fig:indices}
\end{figure}

Using $\randTable$ as the source of randomness for \Resample does not affect the behavior of the algorithm. As all entries are mutually independent, we may therefore assume that \Cref{alg:delta+1} draws the randomness that it needs for calls to \Resample and \SampleColor from $\randTable$, as described above.

By \Cref{lem:attempt}, we know that the subroutines \ColorSparse correctly detect flaws with high probability. This is similarly true for~\Cref{clm:test-works} and $\test$. 
Finally \checkFlaw also does its job correctly with high probability by~\Cref{lem:checkFlaw-correct}. 
By a union bound over the first polynomially many calls to them, these guarantees continue to hold. Define the following event:
\begin{align*}
\eventTwo:=~&\text{all three of \ColorSparse, \test, and~\checkFlaw behave correctly as} \\
&\text{in~\Cref{lem:attempt}, \Cref{clm:test-works}, and~\Cref{lem:checkFlaw-correct}, respectively}. 
\end{align*}
We can condition on $\eventTwo$ and fix randomness of these three subroutines (without $\Resample$ calls inside the last one) and treat them as deterministic in the remainder of the analysis. The randomness of \SampleColor and therefore \randTable remain entirely independent.


\paragraph{Consistency of a log with \Cref{alg:delta+1}.}
We say that a log $\LOG$ is \textbf{consistent} with the random table $\randTable$ if, when \Cref{alg:delta+1} is executed using the randomness stored in $\randTable$, it invokes the subroutine \Resample exactly as recorded in $\LOG$. Formally, let $R_i$ denote the $i$-th \Resample record in $\LOG$, say, $\Resample(v_i)$. Then $\LOG$ is consistent with $\randTable$ if, at the time of $R_i$, the randomness in $\randTable$ certifies the flaw responsible for that invocation: 
\begin{itemize}
\item if $R_i$ logs a call to \Resample with $\ell=$`C', then $v_i$ is \underline{not} far from having \Cref{flaw:all}, and
\item if $R_i$ logs a call to \Resample with $\ell=$`U', then $v_i$ has one of \Cref{flaw:sparse} or \Cref{flaw:out-heavy}.
\end{itemize}

This notion allows us to reduce the analysis to bounding the probability that a fixed candidate log is consistent with the random table, and then union bounding over all logs. 

To verify consistency, it suffices to reconstruct the state of each recorded vertex $v_i$ immediately before the corresponding call to \Resample. The decomposition stored in the log uniquely determines the neighborhood of $v_i$
and following the log and using $\randTable$ we can determine the current coloring in the neighborhood of $v_i$ and check the consistency.

This definition immediately implies the following observation.
\begin{observation}
    A log $\LOG$ of an execution of \Cref{alg:delta+1} using \randTable is consistent with $\randTable$.
\end{observation}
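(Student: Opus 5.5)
This observation is essentially immediate from the definitions, and I would prove it by unwinding the notion of consistency record by record. Fix an execution of \Cref{alg:delta+1} that draws its \SampleColor randomness from $\randTable$, and let $\LOG$ be the log it produces. The plan is to show that every $\Resample$ record $R_i$ in $\LOG$ satisfies the consistency condition for its source, using that the algorithm only issues a $\Resample$ call when the corresponding trigger fires. Throughout I work conditioned on $\eventTwo$, as set up just before the definition of consistency, so that \ColorSparse, \test, and \checkFlaw behave as in \Cref{lem:attempt}, \Cref{clm:test-works}, and \Cref{lem:checkFlaw-correct}.

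The first step is to argue that the state reconstructed from $\LOG$ and $\randTable$ just before $R_i$ coincides with the actual state of the execution at that moment. Every change to $\colS$ on vertices of $S$ goes through \SampleColor, and each call reads the next unused cell of its row of $\randTable$. Every graph change is an update record $U_j$, and every change to the decomposition is a record $D_k$. Vertices leaving $S$ are deterministically set to $\perp$. A simple induction over the records therefore shows the following: replaying the records in order, while advancing $\INDEX{u}{i}$ on each implied \SampleColor call, reproduces exactly the graph, the decomposition, and $\colS$ that the real execution had before $R_i$.

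Next I split on the source of $R_i$. If $\ell=$`C', the call was made by \checkFlaw only after $\test(v_i)$ returned \Fail. By \Cref{clm:test-works}, under $\eventTwo$, a vertex far from \Cref{flaw:all} would have produced \Pass. Hence $v_i$ is not far from having \Cref{flaw:all} in the reconstructed state. If $\ell=$`U', the call was made inside \UpdateSparseColor only after $\ColorSparse(v_i)$ returned \Fail. By the contrapositive of \Cref{lem:attempt}, under $\eventTwo$, a vertex without \Cref{flaw:sparse} or \Cref{flaw:out-heavy} would have been colored successfully. Hence $v_i$ has one of these flaws.

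The only point needing care, and the step I expect to be the real obstacle, is the first: confirming that no hidden randomness or unlogged event alters $\colS$. That is, \ColorSparse, \ColorOH, \ExpandMatching, and \ColorDense modify only $\varphi$, never $\colS$, as guaranteed by \Cref{lem:update-sparser} and \Cref{inv:sparser,inv:extend}. With that checked, the consistency conditions follow directly from the triggering rules, which establishes the observation.
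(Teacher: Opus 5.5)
Your proof is correct and matches the paper, which treats the observation as immediate from the definition of consistency. Each logged \Resample call is issued only after \test or \ColorSparse returns \Fail, and under $\eventTwo$ this certifies the flaw required for that record's source; you are right to condition on $\eventTwo$, since the flaw-certification form of consistency needs it. Your check that $\colS$ changes only through log-determined \SampleColor calls and $S\to D$ moves is exactly the reconstruction the paper asserts when it says the log and $\randTable$ determine the coloring around each $v_i$.
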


For each record $R_i$ and sparser neighbor $u\in N_S(v_i)$, consider the mapping
\[
(R_i, u) \longmapsto (u, \INDEX{u}{i}).
\]
This mapping is injective by the definition of $\INDEX{u}{i}$: every call to \Resample consumes fresh randomness for each participating vertex and advances the corresponding index. Consequently, no entry of $\randTable$ is associated with two distinct pairs $(R_i,u)$. Thus, for a fixed log $\LOG$, every recorded call accesses a unique collection of random-table entries.


We bound the probability that a fixed log with ``large cost'' is consistent with the random table. Let $\LOG$ be a fixed log describing  resampling records $R_1,\ldots,R_r$. By chain rule,
\begin{align*}
    \Pr_{\randTable}\paren{\LOG \text{ is consistent with $\randTable$}\mid \eventTwo} 
    &=\Pr\paren{R_1, \ldots, R_r\text{ all are consistent with $\randTable$}\mid \eventTwo} \\
    & =\prod_{i\in [r]} \Pr\paren{R_i \text{ consistent with $\randTable$}\mid R_{<i}\text{ consistent}, \eventTwo};
\end{align*}
when we condition on $R_{<i}$, we can even all the randomness used by the algorithm before the call to $R_i$.
Therefore, it suffices to bound each conditional probability
\[
\Pr(R_i\text{ is consistent}\mid R_{<i}\text{ are consistent},\eventTwo),
\]
that is, conditioned on all earlier randomness from \randTable used by the algorithm prescribed by the log, we bound the probability that the next recorded call is also justified by the randomness in $\randTable$.

\begin{lemma}\label{lem:bounding-conditional-prob}
    Given $t$ updates on the graph, a random table of assignments \randTable, and a fixed log \LOG with the same $t$ updates and $R_1, \ldots, R_r$ set of its records for calls to \Resample, let $v_i$ be the vertex in $R_i$ for each $i\in [r]$; then the following holds:
    \begin{enumerate}
        \item If $v_i\in S$ and $R_i$ is called with $\ell=$`C', 
        \[
        \Pr(R_i \mid R_{<i}, \eventTwo)\leq 2^{-{\delta\cdot\Delta}/{3}}.
        \]
        \item If $v_i\in \Vsparse$ and $R_i$ is called with $\ell=$`U', 
        \[
        \Pr(R_i \mid R_{<i}, \eventTwo)\leq 2^{-\delta^6\eps^4\cdot \cost(v)}.
        \]
        \item If $v_i\in V(\KKoh)$ and $R_i$ is called with $\ell=$`U', 
        \[
        \Pr(R_i \mid R_{<i}, \eventTwo)\leq 2^{-\delta^5\eps^4\cdot \cost(v)}.
        \]
    \end{enumerate}
\end{lemma}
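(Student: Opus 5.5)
The plan is to reduce each of the three cases to one of the two sampling lemmas, \Cref{lem:prob-flaw-sparse} or \Cref{lem:prob-slack-sparse}, applied to the neighborhood of $v_i$ at the moment $R_i$ is logged. Fix $i$ and condition on $R_{<i}$ being consistent and on $\eventTwo$. The log determines the graph $G$ at the time of $R_i$, the decomposition (hence $S$, $\Vsparse$, $\KKoh$), and for every $u \in N_S(v_i)$ the index of the \emph{last} call to $\SampleColor(u)$ before $R_i$. Call that time $t_u$, with table entry $\randTable(u,\INDEX{u}{i}-1)$ (or the initialization entry). Set $B(u) := [\Delta+1]\setminus \Avail{u}{\colS}$ at time $t_u$, computed from earlier entries of $\randTable$ only.

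The key structural facts are these.
\begin{itemize}
\item Every edge update and every move into $S$ triggers a fresh $\SampleColor$. So for an anti-edge or edge $(u,w)$ inside $N_S(v_i)$, its current status was already present at $\max(t_u,t_w)$.
\item The color of $u$ at the time of $R_i$ is either $c(u)$ or $\perp$, decided at $t_u$ by the rule ``$c(u)\notin B(u)$ and no earlier-sampled neighbor in the common neighborhood took it''.
\end{itemize}
Hence $\colS$ restricted to $N_S(v_i)$ is exactly the coloring $\chi$ of the two lemmas, except for one subtlety that I must handle. $B(u)$ itself depends on table entries of neighbors sampled before $t_u$, which may be older entries of vertices in $N_S(v_i)$. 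I would argue that the entries $\randTable(u,\INDEX{u}{i}-1)$ used for the final colors are distinct from everything determining the $B(u)$'s and the ordering. They are also entries never examined by $R_1,\dots,R_{i-1}$'s consistency checks beyond their own fixed values, so they remain independent uniform conditioned on the past. If this fails, I would instead take the worst case over $B(\cdot)$, since the lemmas hold for \emph{every} fixed choice of $B(u)$ and $t_u$. The argument then becomes: condition on everything except the final entries, and the bound holds pointwise.

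\textbf{Case 1 ($\ell=$`C').} Consistency requires $\Ncolor{v_i}{\colS}\ge 2\delta\Delta$, since $v_i$ is not far from \Cref{flaw:all}. Apply \Cref{lem:prob-flaw-sparse} with the $B(u),t_u$ above. This needs no bound on $|B(u)|$ and gives $2^{-\delta\Delta/3}$ (up to the strict/non-strict inequality, which is harmless).

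\textbf{Cases 2 and 3 ($\ell=$`U').} First I need $|B(u)|\le \Delta/4$ for all $u\in N_S(v_i)$. Here I use $\eventTwo$ via \Cref{lem:checkFlaw-correct}: at every time, and in particular at each $t_u$, $\Ncolor{u}{\colS}\le 5\delta\Delta\le\Delta/4$, so $|B(u)|\le\Delta/4$. For $v_i\in\Vsparse$, \Cref{flaw:sparse} requires $\deg_S(v_i)\ge\Delta-\delta\eps^2\Delta$ and slack below $\eps^2\delta^6\Delta$. Since $v_i$ is $\gamma$-sparse with $\gamma=\Theta(\eps^2)$ (taking the hidden constant so that $\gamma\ge\eps^2$ after adjustment), \Cref{lem:prob-slack-sparse} gives $2^{-\delta^5\gamma\Delta}\le 2^{-\delta^6\eps^4\Delta}$, where the extra $\delta\eps^2$ absorbs the hidden constant in $\Omega(\eps^2)$. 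For $v_i\in V(\KKoh)$, \Cref{lem:OH-is-out-sparse} gives $\gamma=\Omega(\eps^2)(\avgout(K)+\out(v_i))/\Delta\ge \Omega(\eps^2)\cost(v_i)/\Delta$. Because out-heavy means $\avgout\ge4\eta/\eps^2$, we get $\gamma\Delta\gtrsim \eta$, which matches the flaw thresholds $\delta\eps^2\eta$ for degree and $\eps^2\delta^6\eta$ for slack after constant bookkeeping. The lemma then yields $2^{-\delta^5\gamma\Delta}\le 2^{-\delta^5\eps^4\cost(v_i)}$.

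I expect the main obstacle to be the conditioning/independence step. I need to justify that, given $R_{<i}$ consistent and the adaptive adversary's updates (which the fixed log pins down), the final-sample entries are still fresh uniform randomness, with $B(u)$ and $t_u$ measurable with respect to earlier entries. I also need the strict ordering used in the lemmas to match the lexicographic tie-breaking inside $\Resample$. The constant matching in Case 3, especially verifying the degree hypothesis $\deg_S(v)>\Delta-\delta\gamma\Delta$ from $\deg_S(v)\ge\Delta-\delta\eps^2\eta$, is routine but fiddly.
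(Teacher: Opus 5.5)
Your proposal follows the paper's proof essentially step for step. It uses the same three-way case split and the same choice of $t_u$ as the last $\SampleColor(u)$ call, with $B(u)$ the colors blocked for $u$ at that time, and freshness of the final entries via injectivity of $(R_i,u)\mapsto(u,\INDEX{u}{i})$; Case~1 goes through \Cref{lem:prob-flaw-sparse}, and Cases~2 and~3 go through \Cref{lem:prob-slack-sparse} after using $\eventTwo$ (via \Cref{lem:checkFlaw-correct}) to get $|B(u)|\le 5\delta\Delta\le\Delta/4$, with \Cref{lem:OH-is-out-sparse} supplying $\gamma$ for out-heavy vertices. The conditioning step you flag as the main obstacle is handled in the paper by the same assertion you make, namely that conditioning on $R_{<i}$ only reveals randomness outside the fresh entries $\mathcal{R}_i$ and thus fixes the blocked sets, and the paper's matching of the flaw thresholds to the hypotheses of \Cref{lem:prob-slack-sparse} is no more detailed than yours.
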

\begin{proof}
    Throughout the proof we assume everything is calculated conditioned on \eventTwo.

    \textbf{Case 1: $R_i$ is called with $\ell=$\textnormal{`C'}.} 
    
    As we conditioned on \eventTwo, this case happens only when $v_i$ is not far from having~\Cref{flaw:all}. 
    Therefore, conditioned on the randomness $R_{<i}$, the probability that $R_i$ occurs is at most the probability that $v_i$ is not far from \Cref{flaw:all}, conditioned on the coloring outside of $N_S(v_i)$.

    Let $\mathcal{R}_i$ denote the collection of most recent randomness used by vertices in $N_S(v_i)$. By the definition of $\INDEX{\cdot}{i}$, the mapping
    \[
    (R_i, u) \longmapsto (u, \INDEX{u}{i}),
    \]
    is an injection over $u\in N_S(v_i)$. Consequently, the randomness of each $\mathcal{R}_i$ corresponds to distinct entries of \randTable and are therefore independent.

    Moreover, the condition $R_{<i}$ can only reveal information about randomness outside of $\mathcal{R}_i$, and thus can only fix the coloring outside of $N_S(v_i)$. By construction, the random entries in $\mathcal{R}_i$ are distributed exactly as the 
    randomness used by $\SampleColor$ (and thus also $\Resample(v_i)$). 

    The content of $A(u, \INDEX{u}{i})$ was sampled by a call to $\SampleColor(u)$ at some time $t_u$. At this time $t_u$, there were some colors that were not available for $u$ to keep: (1) colors used by sparser neighbors of $u$ under $\colS$ or (2) 
    colors sampled and retained at the same time $t_u$ by a vertex $w$ with lower lexicographical order. Let $B(u)$ be the set of these colors that are not available to $u$. 
    
    Consequently, we can apply \Cref{lem:prob-flaw-sparse} as vertices in $N_S(v_i)$ follow the same process for sampling colors. Specifically, even though a vertex $u \in N_S(v_i)$ may have sampled its color at time $t_u$ 
    and the decision to retain that color is taken at that time, the color of $u$ at the time we are checking $R_i$ and $N_S(v_i)$ is the same, as all changes to colors of $u$ can only be done through $\SampleColor(u)$ (thus, the 
    coloring $\colS$, restricted to $N_S(v_i)$, at this time is the same as the coloring $\chi$ of the lemma). As such, we obtain that 
    \[
    \Pr\paren{\Ncolor{v_i}{\colS}>2\delta\Delta}\leq 2^{-\delta\cdot\Delta/3}.
    \]
    Equivalently, the probability that $v_i$ is not far from \Cref{flaw:all} is at most $2^{-\delta\cdot\Delta/3}$. Therefore,
    \[
    \Pr(R_i\mid R_{<i})\leq \Pr(\Ncolor{v_i}{\colS}\geq 2\delta\Delta \mid R_{<i})\leq 2^{-\delta\cdot\Delta/3}.
    \]

    \textbf{Case 2:  $R_i$ is called with $\ell=$\textnormal{`U'}.} 
    
    Under the condition $\eventTwo$, the probability that $R_i$ is consistent with \randTable is at most the probability that $v_i$ has \Cref{flaw:sparse} or \Cref{flaw:out-heavy}. 
     This is because $\ColorSparse(v_i)$ is invoked immediately before this call to $\Resample$. Consequently, if $v_i$ has neither \Cref{flaw:sparse} nor \Cref{flaw:out-heavy}, then by the conditioning
      we have, \ColorSparse would color $v_i$ in $\varphi$, and the call to \Resample would not occur. Moreover, we have conditioned in $\eventTwo$ that no vertices $u \in S$ having $\Ncolor{u}{\colS} > 5\delta\Delta$ at any point until now. 

    Similar to the previous case, we know that $R_{<i}$ fixes the randomness sampled up to this point, except for the most recent randomness corresponding to $\mathcal{R}_i$ as described above.
    For each $u\in N_S(v_i)$ let $t_u$ be the last time $\SampleColor(u)$ was called and define $B(u)$ as its blocked colors at time $t_u$.
    By our conditioning earlier, we know that $\card{B(u)} \leq 5\delta\Delta$ at time $t_u$. 

    Therefore, the random colors corresponding to random bits of $\mathcal{R}_i$ follow the same process as \Cref{lem:prob-slack-sparse}. 
    As argued above, the random bits in $\mathcal{R}_i$ are mutually independent and conditioning on $R_{<i}$ only fixes randomness used outside $N_S(v_i)$ and consequently fixes the blocked sets.

    Recall that a vertex $v_i \in \Vsparse$ has \Cref{flaw:sparse} when $\deg_S(v)\geq \Delta- \delta\cdot \eps^2\cdot \Delta$ and $\slack{v}{\colS} < \eps^2\cdot \delta^6\cdot \Delta$. Moreover, we know that $v_i \in \Vsparse$ is $(\delta\cdot \eps^2)$-sparse for small constant $\delta$ and we have defined $\cost(v_i)=\Delta$.
    If $\deg_S(v_i)< \Delta-\delta\cdot \eps^2\cdot \Delta$ then $v_i$ does not have \Cref{flaw:sparse} with certainty. 
    Thus, assuming $\deg_S(v_i)\geq \Delta-\delta\cdot \eps^2\cdot \Delta$, using \Cref{lem:prob-slack-sparse} with $\gamma= \delta\cdot \eps^2$ we have (again, notice that $\chi$ of that lemma corresponds to $\colS$ here by our argument earlier
    that the colors in $N_S(v_i)$ have not changed since their sampling time), 
    \[
    \Pr(R_i\mid R_{<i})\leq \Pr(\slack{v}{\colS} < \eps^2\cdot \delta^6\cdot \Delta)\leq 2^{-\delta^6\cdot \eps^2\cost(v)}.
    \] 

    Similarly a vertex $v_i\in V(\KKoh)$ has \Cref{flaw:out-heavy} if $\deg_S(v_i) \geq \Delta-\delta\cdot \eps^2\cdot \eta$ and $\slack{v_i}{\colS} < \eps^2\cdot \delta^6\cdot \eta$
    By definition of cost for $v_i \in V(\KKoh)$ we have $\cost(v_i) = \max (\out(v_i),\avgout(K(v_i)))$. Therefore, by \Cref{lem:OH-is-out-sparse}, we get that each $v_i\in V(\KKoh)$ is $(\delta\cdot \eps^2\cdot \cost(v_i)/\Delta)$-sparse. Hence, given $v_i \in V(\KKoh)$ with $\deg_S(v_i) \geq \Delta-\delta\cdot \eps^2\cdot \eta$, by \Cref{lem:prob-slack-sparse} for $\gamma= \eps^2\cdot \cost(v_i)/\Delta$ we have
    \[
    \Pr(R_i\mid R_{<i})\leq \Pr(\slack{v}{\colS} < \eps^2\cdot \delta^6\cdot \eta) \leq 2^{-\delta^5\cdot \eps^4\cost(v)},
    \] 
    concluding the proof.
\end{proof}

We can now bound the probability that an entire specific log \LOG is consistent with the table.

\begin{lemma}\label{lem:prob-of-log}
    For a random table of assignments \randTable, the probability that a given fixed log \LOG becomes consistent with \randTable, conditioned on \eventTwo, is at most $2^{-\cost(\LOG)}$.
\end{lemma}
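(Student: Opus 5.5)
The plan is to combine the chain-rule factorization stated just before \Cref{lem:bounding-conditional-prob} with the per-record bounds of that lemma, and then compare the resulting exponent to $\cost(\LOG)$ term by term. Fix the log $\LOG$ with resampling records $R_1,\ldots,R_r$. By the chain rule,
\[
\Pr_{\randTable}\paren{\LOG \text{ consistent}\mid \eventTwo}=\prod_{i\in[r]}\Pr\paren{R_i\mid R_{<i},\eventTwo}.
\]
Each factor refers to the fresh entries $(u,\INDEX{u}{i})$ for $u\in N_S(v_i)$. These entries are distinct across $i$ by the injectivity of $(R_i,u)\mapsto(u,\INDEX{u}{i})$, and under $\eventTwo$ they are independent of everything already fixed by $R_{<i}$. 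So each factor can be bounded by \Cref{lem:bounding-conditional-prob} with no further dependence issues.

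Next, split the indices into $R^{(C)}$ and $R^{(U)}$.

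\textbf{Records in $R^{(C)}$.} Case 1 of \Cref{lem:bounding-conditional-prob} gives a factor $2^{-\delta\Delta/3}$ each, which matches exactly the $\abs{R^{(C)}}\cdot\delta\Delta/3$ term of $\cost(\LOG)$.

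\textbf{Records in $R^{(U)}$.} Cases 2 and 3 give factors $2^{-\delta^6\eps^4\cost(v_i)}$ and $2^{-\delta^5\eps^4\cost(v_i)}$, respectively. Since $\delta<1$, both are at most $2^{-\delta^6\eps^4\cost(v_i)}$. Multiplying over all records gives
\[
\Pr\paren{\LOG\text{ consistent}\mid\eventTwo}\leq 2^{-\abs{R^{(C)}}\delta\Delta/3-\delta^6\eps^4\sum_{i\in R^{(U)}}\cost(v_i)}.
\]

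The main obstacle is a constant mismatch: \Cref{def:cost-log} weights the $R^{(U)}$ term by $\delta^6\eps^2$, while \Cref{lem:bounding-conditional-prob} only provides $\delta^6\eps^4$ (or $\delta^5\eps^4$). I would handle this in one of two ways.

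First, I would try to recheck the parameters in Cases 2 and 3. For $v_i\in\Vsparse$, the sparsity parameter is $\gamma=\delta\eps^2$, and \Cref{lem:prob-slack-sparse} gives exponent $\delta^5\gamma\Delta=\delta^6\eps^2\Delta=\delta^6\eps^2\cost(v_i)$. That matches \Cref{def:cost-log} exactly, so the $\eps^4$ in Case 2 looks like slack in that statement. For $v_i\in V(\KKoh)$, \Cref{lem:OH-is-out-sparse} gives $\gamma=\Omega(\eps^2)\cost(v_i)/\Delta$. With the hidden constant absorbed into $\delta$, the exponent becomes $\delta^5\gamma\Delta\geq\delta^6\eps^2\cost(v_i)$.

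If that recheck does not go through, the fallback is to absorb the discrepancy by reading the $\eps$-power in $\cost(\LOG)$ as an absolute-constant rescaling. Since $\eps$ and $\delta$ are constants, this changes only the constants later used with \Cref{obs:number-of-logs}.

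Either way, every factor is at most $2^{-(\text{its contribution to }\cost(\LOG))}$, and the product is at most $2^{-\cost(\LOG)}$.
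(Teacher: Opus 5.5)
Your proposal matches the paper's proof: chain rule over $R_1,\ldots,R_r$, then the per-record bounds of \Cref{lem:bounding-conditional-prob} split by source `C' and `U', then multiplication to reach $2^{-\cost(\LOG)}$. The paper's proof uses the exponent $\delta^6\eps^2\cost(v_j)$ for every `U' record directly, which is what the proof of Case~2 derives, so the $\eps^4$ versus $\eps^2$ mismatch you flagged is a constant inconsistency in that lemma's statement. Your first fix (re-deriving the exponent via $\gamma=\Omega(\eps^2)\cost(v_i)/\Delta$ and absorbing the hidden constant into $\delta$) is the right one, whereas your rescaling fallback would change the statement rather than prove it.
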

\begin{proof}
    Throughout this proof we assume everything is calculated conditioned on \eventTwo.

    Let $\{R_i\}_{i=1}^r$ be the set of records in \LOG for calls to \Resample. For each $i\in [r]$, let $v_i$ be the vertex on which $R_i$ was called for. We assume conditioning on \eventTwo throughout this proof and avoid writing it repeatedly.
    We already saw that by chain rule, 
    \begin{align*}
    \Pr\paren{\LOG \text{ is consistent with }\randTable} 
    &=\Pr\paren{R_1, \ldots, R_r\text{ consistent with }\randTable} \\
    & = \Pr(R_1)\cdot \Pr(R_2\mid R_1) \cdot \ldots \cdot \Pr(R_r \mid R_{<r}).
    \end{align*}
    Let $Z$ be the set of indices $i\in [r]$ where $R_j$ is called with $\ell=$`C' of \UpdateSparseColor. And let $Y$ be the ones with $\ell=$`U'.
    By \Cref{lem:bounding-conditional-prob},  for each $j\in Z$ we have
    \[
    \Pr(R_j\mid R_{<j})\leq 2^{-\delta\cdot \Delta/3},
    \]
    and for each $j\in Y$ we have 
    \[
    \Pr(R_j\mid R_{<j})\leq 2^{-\delta^6\cdot \eps^2\cdot \cost(v_j)}.
    \] 
    Combining the above inequalities, we have
    \begin{align*}
        \Pr\paren{\LOG \text{ is consistent with }\randTable}
        &\leq \prod_{i\in Z} 2^{-\delta\cdot \Delta/3} \cdot \prod_{j\in Y} 2^{-\delta^6\cdot \eps^2\cdot \cost(v_j)}
        \leq 2^{-\paren{\abs{Z}\cdot(\delta\cdot \Delta/3) + \delta^6\cdot \eps^2\cdot \sum_{j\in Y} \cost(v_j)}}.
    \end{align*}
    Therefore, by definition of $\cost(\LOG)$ we can conclude that
    \[
    \Pr\paren{\LOG \text{ is consistent with }\randTable}
    \leq
    2^{-\cost(\LOG)}. \qed
    \]

\end{proof}

We now prove that over $t$ updates, $\cost(\LOG)$ is limited with high probability.

\begin{lemma}\label{lem:cost-log-limited}
    If \LOG is a log created through the execution of \Cref{alg:delta+1} over $t$ updates with $d$ changes to the decomposition, we have
    \[
    \Pr\paren{\cost(\LOG) \geq 100 (t+d)\cdot\log n\mid \eventTwo}< n^{-t}.
    \]
\end{lemma}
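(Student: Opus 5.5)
This is the standard entropy-compression union bound: combine the per-log probability bound of \Cref{lem:prob-of-log} with the log count of \Cref{obs:number-of-logs}. Since cost is not the same as the number of $\Resample$ calls, the count has to be converted into a cost threshold.

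First, the event $\cost(\LOG) \geq 100(t+d)\log n$ is contained in the event that \emph{some} log with exactly $t$ updates, $d$ decomposition changes, some number $r$ of $\Resample$ records, and cost at least this threshold is consistent with $\randTable$. The realized log of the execution is always consistent with the table, by the observation preceding \Cref{lem:bounding-conditional-prob}. A subtlety is that the algorithm's log need not be fully determined in advance, because the adversary is adaptive. I would handle this by noting that the adversary's strategy together with $\randTable$ (and the fixed randomness of the other subroutines under $\eventTwo$) determines the execution. The union bound is then over all syntactically possible logs, and for each fixed log we use \Cref{lem:prob-of-log}, whose proof only needs the log to be fixed.

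Second, I would relate $r$ to the cost. Every $\Resample$ record contributes at least a fixed amount to $\cost(\LOG)$. A `C' record contributes $\delta\Delta/3$. A `U' record contributes $\delta^6\eps^2\cost(v_i)$, and this is $\geq \delta^6\eps^2\cdot 4\eta/\eps^2 = 4\delta^6\eta$, because $\cost(v_i)\geq \avgout(K(v_i))\geq 4\eta/\eps^2$ for out-heavy cliques and $\cost(v_i)=\Delta$ for sparse vertices. Since $\eta\approx\sqrt n\log n$ and $\Delta\geq\eta$, each record contributes at least $c_0\eta \gg \log n$. Hence $r \leq \cost(\LOG)/(c_0\eta)$.

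Third, I would do the union bound by dyadic grouping of cost levels, or simply by summing over $r$. For fixed $r$, every log with $r$ records and cost $C \ge \max\{100(t+d)\log n,\ r c_0\eta\}$ is consistent with probability at most $2^{-C}$. There are at most $n^{5(t+r+d)}$ such logs by \Cref{obs:number-of-logs}. So the contribution of fixed $r$ is at most
\[
n^{5(t+d)}\cdot n^{5r}\cdot 2^{-\max\{100(t+d)\log n,\ r c_0 \eta\}}.
\]
I would bound the $2^{-\max}$ factor by the product of $2^{-\frac12\cdot100(t+d)\log n}$ and $2^{-\frac12 rc_0\eta}$. The first factor gives $n^{-50(t+d)}$, which beats $n^{5(t+d)}$ with room $n^{-45(t+d)}$. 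The second factor beats $n^{5r}$ because $c_0\eta/2 \gg 5\log n$. Summing the resulting geometric series over $r\geq 0$ gives at most $2n^{-45(t+d)} < n^{-t}$.

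The main obstacle is the second step: ensuring every `U' record has cost $\Omega(\eta)$ rather than something tiny. This relies on out-heavy cliques having $\avgout \ge 4\eta/\eps^2$ and on $\eta = \omega(\log n)$. I would verify it using \Cref{def:classification}, noting that the classification at the time of the call is recoverable from the log. A secondary point is that costs can be non-integral. This is harmless: I would just sum over $r$ rather than over exact cost values.
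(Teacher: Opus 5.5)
Your proof is correct and follows the same route as the paper. It is a union bound over all logs that uses the per-log bound $2^{-\cost(\LOG)}$ from \Cref{lem:prob-of-log}, the count $n^{5(t+r+d)}$ from \Cref{obs:number-of-logs}, and the fact that each \Resample record adds $\Omega(\eta)\gg\log n$ to the cost, so that $r=O(\cost(\LOG)/\eta)$. The only difference is bookkeeping: the paper groups logs by $(r,C)$ and also counts the possible cost values, while you sum over $r$ alone using $\cost(\LOG)\geq\max\{100(t+d)\log n,\,r c_0\eta\}$, which is slightly cleaner and avoids the issue of non-integral cost values.
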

\begin{proof}
    In this proof we condition on the event \eventTwo and avoid writing it repeatedly.
    
    Assume \randTable is the table of random assignments used by \Cref{alg:delta+1}.
    By \Cref{lem:prob-of-log}, we saw that given a fixed log \LOG with $\cost(\LOG)=C$, the probability that this log becomes consistent with \randTable, and therefore is a possible log for the algorithm, is at most $2^{-C}$. 

    Let $\mathcal{L}_{r, C}$ be the set of all possible logs with cost exactly $C$ and $r$ calls to resample over $t$ updates with $d$ many changes to the decomposition.
    Using union bound we have
    \begin{align*}
        \Pr(\cost(\LOG)> 100(t+d)\cdot\log n)
        &\leq \sum_{\substack{r\geq 0,\\
        C>100\cdot (t+d)\log n}}\sum_{L\in \mathcal{L}_{r, C}} \Pr(L \text{ is consistent with }\randTable)\\
        &\leq \sum_{\substack{r\geq 0,\\
        C>100\cdot (t+d)\log n}}\sum_{L\in \mathcal{L}_{r, C}} 2^{-C}. \tag{by \Cref{lem:prob-of-log}}
    \end{align*}

    We next bound the number of possible pairs consisting of a log and its corresponding cost value when there are $t$ updates, $d$ changes in decomposition, and $r$ calls to \Resample.
    
    By \Cref{obs:number-of-logs} we know that there are at most $n^{5(t+r+d)}$ many logs with $t$ updates, $d$ changes in decomposition, and $r$ calls to \Resample.

    Consider a fixed log. By \Cref{def:cost-log}, each call to \Resample contributes either a fixed quantity $\delta\cdot\Delta/3$ (if it is invoked with $\ell=$`C') or a quantity of the form $\delta^6\eps^2\cdot\cost(v)$ (if it is invoked 
    with $\ell=$`U') to $\cost(\LOG)$.
    
    If $v\in V(\KKoh)$, then
    $\cost(v)\in \left[4\eta/\eps^2,\eps\Delta\right]$,
    while if $v\in\Vsparse$, then $\cost(v)=\Delta$. Consequently, each call to \Resample can contribute at most
    \[   
    \left(\eps\Delta-\frac{4\eta}{\eps^2}\right)+2
    \le
    \eps\Delta
    \]
    different values to the cost, where the additional two possibilities account for the fixed contribution $\delta\cdot\Delta/3$ and the value $\cost(v)=\Delta$. Therefore, for a fixed set of $r$ calls to \Resample recorded in a log, the number of possible values of $\cost(\LOG)$ is at most
    \[
    (\eps\Delta)^r \leq 2^{r\log(\eps\Delta)} 
    \leq 2^{r\log n}.
    \] 
    Combining this with the bound on the number of logs, the total number of possible $(\LOG,\cost(\LOG))$ pairs is at most
    \[
    2^{r\log n}\cdot 2^{5(t+r+d)\log n}
    \leq
    2^{6(t+r+d)\log n}.
    \]

    As each $v_i$ has $\cost(v_i)\geq \eta$, by definition of $\cost(L)$ for some log $L$ with $r$ many calls to \Resample, we know that 
    \[
    r \leq \frac{\cost(L)}{\eps^2\delta^6\eta} \leq \frac{\cost(L)}{\delta^8\eta} 
    \]
    as $\delta<\eps$. Therefore, the union bound can be stated as below 
    \begin{align*}
        \Pr(\cost(\LOG)> 100(t+d)\cdot \log n)
        &\leq \sum_{C>100(t+d)\log n}\ 
        \sum_{0\leq r\leq C/(\delta^8\eta)}\
        \sum_{L\in\mathcal{L}_{r,C}} 2^{-C}\\
        &\leq \sum_{
        C>100(t+d)\log n}
        \paren{\sum_{0\leq r\leq C/(\delta^8\eta)} 2^{6(t+r+d)\log n}\cdot  2^{-C}}\tag{as there are at most $2^{6(r+t+d)\log n}$ such logs}\\
        &\leq \sum_{
        C>100(t+d)\log n} \frac{C}{\delta^8\eta}\cdot 2^{6(t+d)\log n}\cdot 2^{-C + 6C\log n/(\delta^8\eta) } \tag{as $r\leq C/(\delta^8\eta)$}\\
        &\leq \sum_{
        C>100(t+d)\log n} 2^{6(t+d)\log n}\cdot 2^{-C/2}\tag{as $\eta\gg \log n$}\\
        &\leq \sum_{
        C>100(t+d)\log n}  2^{-C/4}\tag{as $C\geq 100(t+d) \cdot \log n$}\\
        &\leq 2^{-t\log n},
    \end{align*}
    where the last inequality is true as the first term in the geometric series is $2\cdot 2^{-25\cdot t\log n}$ which dominates the other terms, concluding the proof.
\end{proof}

\subsection{Proof of~\Cref{lem:recourse-updatesparse-whp,lem:runtime-UpdateSparseColor}}

We are now ready to bound the number of uncolored vertices caused by \UpdateSparseColor and subsequently its total runtime. 
We emphasize that since we are running $\checkFlaw$ in the background, we have to be extra careful with analyzing its runtime and recourse. 
However, at least for the recourse, $\checkFlaw$ does not change any of the colorings and instead interacts with $\colS$ through calls to $\Resample$
which we are accounting for here.

\begin{lemma*}[Restatement of ~\Cref{lem:recourse-updatesparse-whp}]
    The total number of uncolored vertices caused by subroutine \UpdateSparseColor over $t$ updates is at most $O(t \log^2 n)$ with high probability.
\end{lemma*}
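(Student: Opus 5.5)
We will bound the total recourse of \UpdateSparseColor through the cost of the log, using \Cref{lem:cost-log-limited} as the main tool. Every uncolored vertex produced by \UpdateSparseColor comes from some $\Resample(v_i)$ call (together with the subsequent $\ColorOH(K(v_i))$ when $v_i \in V(\KKoh)$). These calls are logged either with $\ell=$`U' or with $\ell=$`C'; the $\ColorSparse$ call at the start of \UpdateSparseColor is already accounted for in \Cref{lem:whp-runtime-other-subroutines}. By \Cref{clm:recourse-of-resample}, each such call uncolors $O(\cost(v_i))$ vertices in expectation. For `U'-calls, $\cost(v_i)$ is, up to the constant factor $\delta^6\eps^2$, exactly the contribution of the call to $\cost(\LOG)$. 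For `C'-calls, $\cost(v_i)\le \Delta$, which is within a constant factor $3/\delta$ of their contribution $\delta\Delta/3$. Hence the expected recourse of each call is $O(\text{its contribution to } \cost(\LOG))$, with constants depending only on $\delta$ and $\eps$.

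Next we bound $\cost(\LOG)$. Conditioned on $\eventTwo$, which holds with high probability, \Cref{lem:cost-log-limited} gives $\cost(\LOG) < 100(t+d)\log n$ with probability $1-n^{-t}$. Here $d$ is the number of decomposition changes. By \Cref{lem:decomp} and \Cref{prop:maintaining-decomposition}, $d$ is amortized $O(1/\eps^4)=O(1)$ per update, up to at most a $\log n$ factor from the whp decomposition guarantee. We will interpret $d$ as the number of logged vertex moves, which the decomposition algorithm's update time bounds by $O(t\log n)$. It follows that $\sum_i \cost(v_i) = O(\cost(\LOG)) = O(t\log^2 n)$ with high probability.

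Finally, we convert these expectations into a high-probability bound via \Cref{prop:KQ-inequality}. Let $Y_i$ be the number of vertices uncolored by the $i$-th $\Resample$ (plus $\ColorOH$) call, so $Y_i \in [0,n]$. Its conditional mean given the past is $a_i = O(\cost(v_i))$. This is where the adaptive inequality is needed, because the adversary and the algorithm's earlier choices determine which $v_i$ is resampled next. By the previous paragraph, $\Pr(\sum_i a_i \ge \mu)\le \gamma = n^{-\Omega(1)}$ with $\mu = \Theta(t\log^2 n)$. Applying \Cref{prop:KQ-inequality} with $\delta=1$ and $c=n$ (or $c=\min\{m,n\}\le t$, as in \Cref{lem:whp-runtime-other-subroutines}) gives failure probability $\gamma + \exp(-\mu/(3c)) = n^{-\Omega(1)}$, since $\mu/c \ge \log^2 n$. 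This proves the claimed $O(t\log^2 n)$ bound.

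The main obstacle is the first step: rigorously matching per-call expected recourse to log cost. Two issues arise. First, \Cref{clm:recourse-of-resample} relies on \Cref{clm:resample-uncolors-in-OH}, which holds only with high probability; we absorb its failure into the error term, since on failure the recourse is trivially at most $n$. Second, the cost of an out-heavy vertex, $\max(\avgout,\out)$, must dominate the $4\avgout(K)$ vertices that $\ColorOH$ leaves uncolored; it does, because $\cost(v)\ge\avgout(K(v))$. We also need the conditional expectation used in the Kuszmaul inequality to be valid despite the adversary's adaptivity. This holds because the $\SampleColor$ randomness is fresh at each call, so the bounds of \Cref{lem:SampleColor} and \Cref{lem:colorOH} apply conditionally on the entire history.
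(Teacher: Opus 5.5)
Your proposal is correct and follows essentially the same route as the paper. You bound each $\Resample$ (plus $\ColorOH$) call's expected recourse by $O(\cost(v_i))$ via \Cref{clm:recourse-of-resample}, compare $\sum_i \cost(v_i)$ to $\cost(\LOG)=O(t\log^2 n)$ using \Cref{lem:cost-log-limited} with $d=O(t\log n)$, and finish with \Cref{prop:KQ-inequality} at $\delta=1$; your explicit handling of the `C'-calls and of $\Pr(\neg\eventTwo)$ inside $\gamma$ just spells out what the paper does implicitly. One point to tighten is the choice of $c$: with $c=n$ the claim $\mu/c\ge\log^2 n$ fails when $t<n$, so commit to $c=\min\{m,n\}$ (or $c=m$, as the paper does), which gives $\mu/c=\Omega(\log^2 n)$ since $m\le t$.
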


\begin{proof}
    Let \LOG be a log created after running \Cref{alg:delta+1} over $t$ updates.

    By \Cref{lem:decomp}, maintaining the decomposition has $O(\log n)$ amortized update time and recourse, and the number of almost-cliques that change from one set of the $\eta$-classification into another set is amortized $O(\frac{1}{\eta\Delta})$. As each almost-clique contains at most $(1+\eps)\Delta$ vertices, the amortized number of vertices that change sets is $O(\frac{1}{\eta\Delta})\cdot (1+\eps)\Delta$. Therefore, up to the end of the first two lines of \Cref{alg:delta+1} after a single update the runtime is $O(\log n)$ and the number of uncolored vertices is 
        \[
        O(\log n)+ O(\frac{1}{\eta\Delta})\cdot (1+\eps)\Delta = O(\log n). 
        \]
    By \Cref{clm:recourse-of-resample} we know that for each joint call to $\Resample(v)$ and $\ColorOH(K(v))$ (if $v \notin V(\KKoh)$ we just consider the first subroutine), the number of uncolored vertices caused by them is $O(\cost(v))$ in expectation. On the other hand, by definition of $\cost$ we have
    \begin{equation*}
        \cost(\LOG) \geq \delta^6\eps^2\cdot  \sum_{i\in[r]}\cost(v_i),
    \end{equation*}
    where $v_i$ is the vertex recorded in $R_i$.
    Therefore, the total uncolored vertices caused by the subroutine \UpdateSparseColor over $t$ updates, according to \LOG is at most $O(\cost(\LOG))$ in expectation.

    Throughout running \Cref{alg:delta+1} over $t$ updates, let $v_i$ be the vertex on which the $i$-th call to \Resample was invoked.
    Define $X_i$ to be the random variable showing the number of uncolored vertices caused by the $i$-th joint call to \Resample and \ColorOH (if $v\in V(\KKoh)$). We know by \Cref{clm:recourse-of-resample} that $\expect{X_i}=O(\cost(v_i))$. Note that $\cost(v_i)$ depends on the graph at that point and also $v_i$ which depends on the outcome of the algorithm up to this point. Define $\mathcal{D}_i$ as the distribution on which $X_i$ is drawn from where this distribution depends on the outcome of the algorithm up to this point, and therefore depends on the values of $X_1, \ldots, X_{i-1}$.

    As \ColorOH deterministically leaves at most ${4\cdot \cost(v_i)}\leq \Delta$ many vertices uncolored and \Resample recolors at its neighbors  where each can also result in uncoloring their neighbors with the same color in $\varphi$, therefore we always have $X_i\leq m$ where $m$ is the maximum number of edges that are present in the graph after some update.

	Let $X=\sum_{i\in[r]}X_i$ for whatever value $r$ has on this execution of the algorithm. For some fixed log \LOG and by linearity of expectation we have $\expect{X}=O(\cost(\LOG))$. However, we do not have a fixed log and this is also determined by the output of the algorithm. By \Cref{lem:cost-log-limited}, for some log \LOG created over $t$ updates which contains $d$ changes of decomposition recorded, we have
    \[
    \Pr\paren{\cost(\LOG)\geq 100(t+d)\cdot \log n\mid \eventTwo}< n^{-t},
    \]
    and by \Cref{lem:decomp} we know $d=O(t\log n)$, thus we have
    \[
    \Pr\paren{\cost(\LOG)\geq 200\cdot t \cdot \log^2 n\mid \eventTwo}< n^{-t}. 
    \]
    Therefore, all random variables $\{X_i\}_i$ follow the same structure as stated in \Cref{prop:KQ-inequality} with parameters $c=m$ (as each $X_i\leq m$), $\gamma=n^{-t}$, and $\mu=200\cdot t \cdot \log^2 n$. Consequently, with $\delta=1$,
	\begin{align*}
		\Pr\paren{X\geq 400\cdot t \cdot \log^2 n\mid \eventTwo}
		&\leq n^{-t} + \exp\paren{-\frac{200\cdot t \cdot \log^2 n}{3m}}\tag{as $X_i\leq m$ for each $i$ and $\gamma=n^{-t}$}\\
		&\leq n^{-\Omega(t\cdot \log n/m)}. 
	\end{align*}
    Finally, we know that $\eventTwo$ holds with high probability over $t=\poly(n)$ updates. 
    Therefore, as we always have $t\geq m$, there are $O(t \cdot \log^2{n})$ uncolored vertices caused by \UpdateSparseColor over $t$ updates with high probability.
\end{proof}

Next, we prove the bound on total runtime of \UpdateSparseColor. Here, we have to account for the runtime of $\checkFlaw$ in the background as well. 

\begin{lemma*}[Restatement of~\Cref{lem:runtime-UpdateSparseColor}]
  Throughout running \Cref{alg:delta+1} over $t$ updates, with high probability the time spent on subroutine \UpdateSparseColor is $O(t\cdot n\log^4 n/\eta)$ time. 
\end{lemma*}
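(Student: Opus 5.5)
The total time of $\UpdateSparseColor$ splits into three parts, each bounded separately before summing. The first part is the calls to $\ColorSparse$ made inside $\UpdateSparseColor$. The second is the joint calls to $\Resample(v)$ followed by $\ColorOH(K(v))$, whether they come from $\UpdateSparseColor$ (source `U') or from $\checkFlaw$ (source `C'). The third is the background bookkeeping of $\checkFlaw$ itself, excluding the $\Resample$ calls it triggers.

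For the $\Resample$/$\ColorOH$ part, I will use \Cref{lem:recourse-runtime}: each joint call costs $O(n\log n)$ time in expectation. The goal is then to bound the number $r$ of such calls through $\cost(\LOG)$. Every call with source `C' contributes $\delta\Delta/3$ to $\cost(\LOG)$. Every call with source `U' contributes $\delta^6\eps^2\cost(v_i)$, and $\cost(v_i) \geq 4\eta/\eps^2 \geq \eta$ for $v_i\in V(\KKoh)$ and $\cost(v_i)=\Delta\ge\eta$ for $v_i\in\Vsparse$. Hence $r = O(\cost(\LOG)/\eta)$. By \Cref{lem:cost-log-limited} together with $d = O(t\log n)$ from \Cref{lem:decomp}, we get $\cost(\LOG) = O(t\log^2 n)$ with probability $1-n^{-t}$, conditioned on $\eventTwo$. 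This gives $r = O(t\log^2 n/\eta)$, so the expected total time of this part is $O(t\cdot n\log^3 n/\eta)$.

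To turn expectation into a high-probability bound against the adaptive adversary, I will apply \Cref{prop:KQ-inequality} exactly as in the proof of \Cref{lem:recourse-updatesparse-whp}. Let $X_i$ be the time of the $i$-th joint call. Its distribution depends on the history, and $X_i = O(m\log n)$ always, using the worst-case bound of \Cref{lem:worstcase-subroutines} for $\ColorOH$ and $O(m)$ for $\Resample$. The sum of conditional means is at most $\mu = O(t\, n\log^3 n/\eta)$ except with probability $n^{-t}$. The tail term is then $\exp(-\Omega(\mu/(m\log n)))$, which is negligible because $n/\eta \ge 1$ and $t \ge m$. I expect this step to be the main obstacle. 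One must make sure the conditional-mean bound is justified even though $r$ is itself random: the cost bound on $\LOG$ has to be converted into a bound on $\sum_i \expect{X_i \mid \mathcal{D}_i}$ that holds with probability $1-\gamma$.

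The remaining two parts are routine. Each call to $\UpdateSparseColor$ starts with a single $\ColorSparse$ call, and these are accounted for by item~\ref{item-attempt} of \Cref{lem:whp-runtime-other-subroutines}, which gives $O(t\,n\log^3 n/\eta)$. For $\checkFlaw$, it is triggered once per change to $G[S]$ or $\colS$. The number of such changes is bounded by the number of $\SampleColor$ calls, updates, and decomposition moves. That number is $O(t\log^2 n + \Delta r)$, since each $\Resample$ calls $\SampleColor$ at most $\Delta$ times, so it is $O(t\log^2 n\cdot \Delta/\eta)$. By \Cref{lem:checkFlaw} each trigger costs $O(\log^2 n)$ w.h.p., giving $O(t\log^4 n\cdot\Delta/\eta) = O(t\,n\log^4 n/\eta)$. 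This is the source of the $\log^4 n$ in the statement.

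Summing the three parts and taking a union bound with the high-probability event $\eventTwo$ yields the claimed $O(t\cdot n\log^4 n/\eta)$ bound.
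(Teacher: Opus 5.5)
Your proposal is correct and is essentially the paper's proof. You bound the number $r$ of joint \Resample/\ColorOH calls deterministically by $O(\cost(\LOG)/\eta)$; this also settles the conditional-mean concern you raise, since each conditional mean is $O(n\log n)$. You then control $\cost(\LOG)$ via \Cref{lem:cost-log-limited} with $d=O(t\log n)$, apply \Cref{prop:KQ-inequality}, and charge $O(\log^2 n)$ per \checkFlaw trigger over $O(t\log n+\Delta r)$ triggers, which gives the $\log^4 n$ term.

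The only differences are bookkeeping. You fold the \test calls into the \checkFlaw accounting, and you delegate the inner \ColorSparse calls to item~(1) of \Cref{lem:whp-runtime-other-subroutines}. That delegation is fine, since these extra calls number only $O(r)$. The paper instead bounds both separately in terms of $r$.
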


\begin{proof}[Proof of~\Cref{lem:runtime-UpdateSparseColor}]
    To bound the time spent on \UpdateSparseColor, we bound the time spent on its subroutines. That is \ColorSparse, \test, \Resample, and \ColorOH plus separately accounted for $\checkFlaw$.

    \begin{itemize}
    \item Runtime of \Resample and \ColorOH: For the $i$-th call to \Resample and \ColorOH (if the input vertex is in $V(\KKoh)$), define $X_i$ be the time spent on this call. By \Cref{lem:recourse-runtime} we know that $\expect{X_i}=O(n\log n)$. Moreover, $X_i\leq m$ because we can check at most all neighbors of all vertices. If the log constructed by the algorithm is \LOG and it has $r$ calls to resample, by \Cref{def:cost-log} we know that $r\leq \cost(\LOG)/(\delta^8\eta)$. If $X=\sum_{i\in[r]}X_i$, then $\expect{X} = O(r\cdot n) = O(\cost(\LOG)\cdot n\log n/\eta)$. 
    By \Cref{lem:cost-log-limited}, if the log \LOG created over $t$ updates contains $d$ changes of decomposition, we have
    \[
    \Pr\paren{\cost(\LOG)\geq 100(t+d)\cdot \log n\mid \eventTwo}< n^{-t},
    \]
    and by \Cref{lem:decomp} we know $d=O(t\log n)$, thus we have
    \[
    \Pr\paren{\cost(\LOG)\geq 200\cdot t \cdot \log^2 n\mid \eventTwo}< n^{-t}. 
    \]
    Therefore, all random variables $\{X_i\}_{i\in[r]
    }$ follow the same structure as stated in \Cref{prop:KQ-inequality} with parameters $c=m$ (as each $X_i\leq m$), $\gamma=n^{-t}$, and $\mu=200\cdot t\cdot n \cdot \log^3 n/\eta$. Consequently, with $\delta=1$ we get
	\begin{align*}
		\Pr\paren{X\geq 400\cdot \frac{t \cdot n \cdot \log^3 n}{\eta} \mid \eventTwo}
		&\leq n^{-t} + \exp\paren{-\frac{200\cdot t \cdot n \cdot \log^3 n}{3m\cdot \eta}}\tag{as $X_i\leq m$ for each $i$ and $\gamma=n^{-t}$}\\
		&\leq n^{-\Omega(t\cdot \log n/(m\eta))}. 
	\end{align*}
    We know that with high probability \eventTwo holds over $t=\poly(n)$ updates and setting $\eta=o(n)$ and $t=\Omega(m)$ we get that with high probability the total time spent on the joint calls to \Resample and \ColorOH over $t$ updates is 
    $O(t\cdot n \log^3 n/\eta)$ time.
    
    \item Runtime of \test: As each $\UpdateSparseColor(v)$ is invoked when $v$ had a flaw, therefore at least one call to \Resample is invoked inside \UpdateSparseColor to resolve the flaw. Thus, we can bound the number of calls to \UpdateSparseColor by $r$. For each \UpdateSparseColor, there are at most $\Delta$ calls to subroutine \test that do not result in a call to \Resample, and all the other calls to \test are accompanied with a \Resample and can be bounded by $r$. As there are at most $r$ many calls to \UpdateSparseColor, then \test is called at most $r\cdot \Delta+r$ times over all updates. By \Cref{clm:test-works}, each call to $\test$ takes $O(\log n)$ time. Therefore, the total time spent on this subroutine is $O(r\cdot \Delta\log n)$. 
    As $r\leq \cost(\LOG)/(\delta^8\eta)$ (argued earlier) and since with high probability $\cost(\LOG)=O(t\log^2 n)$ by~\Cref{lem:cost-log-limited}, we have  $O(r\cdot \Delta\log n) = O(t\cdot \Delta\log^3/\eta)$ with high probability.

    \item Runtime of \ColorSparse invoked inside \UpdateSparseColor: We know that the subroutine \ColorSparse is invoked at most once in \UpdateSparseColor. Moreover, if there are $r$ calls to \Resample recorded in the log, then there are at most $r$ calls to \UpdateSparseColor, and therefore $r$ calls to \ColorSparse inside this subroutine.
    
    Redefine $X_i$ to be the time spent on the $i$-th call to \ColorSparse.
    By \Cref{lem:attempt} we know that $\expect{X_i}=O(n\log n/\eta)$ for each $i$. Also, by \Cref{lem:worstcase-subroutines} we have $X_i\leq m$. And if $X=\sum_{i\in [r]} X_i$, we know that $\expect{X}=O(r \cdot n\log n/\eta)$. By \Cref{def:cost-log} we have $r\leq \cost(\LOG)/(\delta^8\cdot \eta)$ and by \Cref{lem:cost-log-limited} and \Cref{lem:decomp} with high probability $\cost(\LOG)\leq 400\cdot t\cdot \log^2 n$. Thus, using the same argument as above, we get that the total time spent on this subroutine is $O(t\cdot n\log^3 n/\eta)$ with high probability.
    
    \item Runtime of \checkFlaw in the background: By~\Cref{lem:checkFlaw}, the time to run $\checkFlaw$ with high probability is $O(\log^2{n})$ once we ignore $\Resample$ calls (which we already accounted for). 
    The number of calls to \checkFlaw include $O(t)$ for the updates to the graph and $O(d)$ for the changes in the decomposition, plus every call to $\SampleColor$. These themselves can be split into calls to $\SampleColor$ outside
    and inside $\Resample$. The outside calls are similarly upper bounded by $O(t+d)$ whereas in the inside calls are at most $O(\Delta)$ per each of $O(\cost(\LOG)/\eta)$ many $\Resample$ calls (the bound on the $\Resample$ is calculated above a few times already). By~\Cref{lem:cost-log-limited}, we have $\cost(\LOG) = O(t \cdot \log^2{n})$ and by~\Cref{lem:decomp}, $d=O(t\log{n})$, both with high probability. Thus, the total runtime in this step is 
    \[
    	O(t \cdot \log{n}) \cdot O(\log^2{n}) + O(t \cdot \log^2{n}) \cdot \frac{\Delta}{\eta} \cdot O(\log^2{n}) = O(t \cdot \frac{\Delta \cdot \log^4{n}}{\eta}). 
    \]
    \end{itemize}
    Therefore, the total time spent on \UpdateSparseColor is 
    \[
    O(t\cdot n\log^3 n/\eta) + O(t\cdot \Delta\log^4/\eta) =O(t\cdot n\log^4 n/\eta). \qed
    \]
    
\end{proof}

%% file: sparse-subroutines.tex
\section{Handling Remaining Sparser Vertices}\label{sec:sparse-subroutines}

We now describe and analyze the $\ColorSparse$ and $\ColorOH$ subroutines.

\subsection{The \ColorSparse Subroutine}
 
We use the following subroutine to color in $\varphi$ any sparser vertex $v$ that does not have \Cref{flaw:sparse} or \Cref{flaw:out-heavy} and is blank in $\colS$. 

\begin{Algorithm}\label{alg:attempt-color-sparse} 
$\ColorSparse(v):$  
\begin{enumerate}
	\item For $i=1$ to $\Delta \cdot \log n/(\delta^8\cdot \eta)$ trials:
	\begin{enumerate}
        \item Sample a color $c\in [\Delta+1]$ uniformly at random.
        \item Iterate over all vertices in $\mathcal{C}_c$ under $\varphi$ and if any is a sparser neighbor of $v$ continue to the next iteration of for-loop. Otherwise, set $\varphi(v)=c$, uncolor the denser neighbors with color $c$ in $\varphi$, return \emph{Success}, and terminate.
    \end{enumerate}
    \item If did not terminate until now, return \Fail.
\end{enumerate}
\end{Algorithm} 
\begin{lemma*}[Restatement of \Cref{lem:attempt}]
    For a sparser vertex $v \in S$, $\ColorSparse(v)$ takes $O(n\log{n}/\eta)$ time in expectation. If $v$ does not have any of~\Cref{flaw:sparse} and \Cref{flaw:out-heavy}
	under $\colS$, $\ColorSparse(v)$ with high probability colors $v$ successfully in $\varphi$ without returning \Fail. Moreover, this subroutine uncolors at most $O(1)$ many denser vertices in expectation.
\end{lemma*}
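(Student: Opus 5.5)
The plan is to bound the expected cost of one trial, then count how many trials are needed and how many succeed, treating the runtime, correctness and recourse claims separately.

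\emph{Runtime.} The loop runs $T := \Delta\log n/(\delta^8\eta)$ trials. In each trial the color $c$ is uniform in $[\Delta+1]$. Scanning $\mathcal{C}_c$ in $\varphi$ therefore costs $O(\Exp\card{\mathcal{C}_c}) = O(n/(\Delta+1))$ in expectation, since the class sizes sum to at most $n$. The adjacency check per scanned vertex is $O(1)$, or is done by scanning $N(v)$ once if preferred. By linearity of expectation over at most $T$ trials, the total expected time is
\[
O\paren{T\cdot \frac{n}{\Delta}} = O\paren{\frac{n\log n}{\delta^8\eta}} = O\paren{\frac{n\log n}{\eta}},
\]
since $\delta$ is constant. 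Early termination only helps.

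\emph{Correctness.} A trial succeeds exactly when $c$ is not used in $\varphi$ by any sparser neighbor of $v$. Denser neighbors are ignored, because they get uncolored.

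By \Cref{inv:sparser,inv:extend}, the sparser neighbors of $v$ colored in $\colS$ carry the same color in $\varphi$. Sparser neighbors that are blank in $\colS$ may still be colored in $\varphi$. Hence the number of good colors is at least
\[
(\Delta+1) - \Ncolor{v}{\colS} - \card{\set{u\in N_S(v): \colS(u)=\perp}} \;\geq\; \slack{v}{\colS} - (\Delta - \deg_S(v)).
\]
Indeed, $\avail{v}{\colS}$ counts colors not used by neighbors in $\colS$, denser neighbors are blank in $\colS$, and $\free{v}{\colS}$ includes both the $\colS$-blank sparser neighbors and all denser neighbors. More precisely, the $\colS$-blank sparser neighbors number at most $\free{v}{\colS}$, and the bound uses $\avail{v}{\colS}\ge \Delta+1-\Ncolor{v}{\colS}$.

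We split into cases according to the flaw definitions. If $v\in\Vsparse$ and $\deg_S(v) \geq \Delta-\delta\eps^2\Delta$, then the absence of \Cref{flaw:sparse} gives slack at least $\eps^2\delta^6\Delta$. The key step is then to show that the good colors number $\Omega(\eps^2\delta^6\Delta)\geq \Omega(\delta^8\eta)$. If instead $\deg_S(v)<\Delta-\delta\eps^2\Delta$, then $v$ has many denser (or missing) neighbors, and the bound on good colors should come directly from $\Delta+1-\deg_S(v)$. Here I would argue that the sparser neighbors occupy at most $\deg_S(v)$ colors, leaving at least $\delta\eps^2\Delta$ good colors. The case $v\in V(\KKoh)$ is identical with $\eta$ in place of $\Delta$.

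In every case the success probability per trial is $p \gtrsim \delta^8\eta/\Delta$ (using $\delta<\eps$). The main obstacle is making this counting rigorous against the $\free{}{}$ and $\Delta-\deg_S$ bookkeeping; I would lean on the simpler count that sparser neighbors block at most $\deg_S(v)$ colors. Since $\varphi$ is fixed during the loop and trials are independent, the failure probability is at most
\[
(1-p)^T \leq e^{-pT} \leq n^{-\Omega(1)}.
\]

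\emph{Recourse.} Only the successful color $c$ uncolors anything, namely the denser neighbors in $\mathcal{C}_c$. The color of a successful trial is uniform over the good colors, so the expected number of denser neighbors with that color is at most $\Delta$ divided by the number of good colors. This is not $O(1)$ in general. Instead I would bound the unconditional quantity: summing over trials, $\sum_{i} \Pr(\text{trial } i \text{ reached})\cdot \Exp[\card{N_D(v)\cap\mathcal{C}_c}] \le T\cdot \Delta/(\Delta+1)$. This is too weak, so I would fall back to the observation that, conditioned on success, each denser neighbor has the color with probability at most (number of good colors)$^{-1}$ while being colored with a good color. I expect this recourse bound to require the cleanest care, possibly using that a random good color hits each denser neighbor with probability at most $1/|\text{good}|$ and that denser neighbors colored in good colors number at most $|\text{good}|$ up to constants, giving $O(1)$.
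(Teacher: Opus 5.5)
Your runtime analysis matches the paper's. The correctness argument, however, has a genuine gap at exactly the step you call key. By \Cref{inv:sparser}, every $\colS$-colored neighbor of $v$ is sparser, so $\Ncolor{v}{\colS}+\card{\set{u\in N_S(v):\colS(u)=\perp}}=\deg_S(v)$. Your first lower bound is therefore just the trivial count $\Delta+1-\deg_S(v)$.

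The inequality you claim next is false. For $\deg_S(v)=\Delta$ the left side equals $1$, while $\slack{v}{\colS}-(\Delta-\deg_S(v))=\slack{v}{\colS}$ may be $\eps^2\delta^6\Delta$. Even the target quantity would be useless in the \Cref{flaw:sparse} regime, since $\Delta-\deg_S(v)$ can be as large as $\delta\eps^2\Delta\gg\eps^2\delta^6\Delta$. Your fallback (sparser neighbors block at most $\deg_S(v)$ colors) again gives only $\Delta+1-\deg_S(v)$, possibly $1$.

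The loss happens when you replace $\avail{v}{\colS}$ by $\Delta+1-\Ncolor{v}{\colS}$. That step throws away precisely the repeated colors that slack measures. The missing idea, which is the paper's argument, is to keep $\avail{v}{\colS}$:
\begin{itemize}
\item by \Cref{inv:extend}, the $\colS$-colored neighbors occupy in $\varphi$ exactly the $(\Delta+1)-\avail{v}{\colS}$ colors they occupy in $\colS$;
\item each $\colS$-blank sparser neighbor blocks at most one further color, and there are exactly $\free{v}{\colS}-\deg_D(v)$ of them, since denser neighbors are $\colS$-blank by \Cref{inv:sparser}.
\end{itemize}
Hence the set $G$ of good colors satisfies $\card{G}\ge\avail{v}{\colS}-\free{v}{\colS}+\deg_D(v)=\slack{v}{\colS}+\deg_D(v)$. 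When $\deg_S(v)$ is large and the flaw is absent, this is at least $\eps^2\delta^6\Delta$ (resp.\ $\eps^2\delta^6\eta$). The small-$\deg_S(v)$ case is handled by your trivial count.

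Your recourse argument is left as a hope rather than a proof: you need $\card{G}=\Omega(\deg_D(v))$ and never establish it. It follows at once from your own trivial count, $\card{G}\ge\Delta+1-\deg_S(v)\ge\deg_D(v)+1$, because $\deg_S(v)+\deg_D(v)\le\Delta$. Since $\varphi$ is unchanged during the loop, the color of the (at most one) successful trial is uniform on $G$. So the expected number of denser neighbors it uncolors is at most $\deg_D(v)/\card{G}<1$.

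This route also covers out-heavy and flawed $v$. The paper's one-line justification uses a factor $1/(\delta^6\eps^2)$, implicitly relying on $\card{G}=\Omega(\Delta)$, so it directly applies only to unflawed $v\in\Vsparse$.
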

\begin{proof}
    For a random color $c\in [\Delta+1]$, size of $\mathcal{C}_c$ is at most
	$n/(\Delta+1)$ in expectation. Therefore, checking availability of each color takes $O(n/\Delta)$ time in expectation. As we have $O(\Delta \cdot \log{n}/\eta)$ trials, the expected runtime of \ColorSparse is $O(n\log n/\eta)$ ($\delta$ is constant).

    Assuming vertex $v$ does not have any of~\Cref{flaw:sparse} and \Cref{flaw:out-heavy}, then if $v\in \Vsparse$ we either have $\deg_S(v) <\Delta-\delta\cdot \eps^2\cdot \Delta$ or $\slack{v}{\colS}>\delta^6\cdot \eps^2\cdot \Delta$. 
    If $\slack{v}{\colS}>\delta^6\cdot \eps^2\cdot \Delta$ then, there are at least $\delta^6\cdot \eps^2\cdot \Delta$ colors not used by any neighbor of $v$. This is because, the value of $\slack{v}{\colS}$ shows the minimum number of available colors left for $v$ even when all vertices of $\Free{v}{\colS}$ receive distinct colors by $\varphi$.
    If $\deg_S(v) <\Delta-\delta\cdot \eps^2\cdot \Delta$, then there are at least $\delta\cdot\eps^2\cdot \Delta$ many colors that are not used by any sparser neighbor of $v$.
    
    Thus, the probability that $v$ does not get colored in $\ColorSparse$ is at most
    \begin{align*}
        \prob{v \text{ does not get colored}}
        &\leq (1-\frac{\delta^6\cdot \eps^2\cdot \Delta}{\Delta+1})^{\Delta\cdot \log n/(\delta^8\cdot \eta)}\tag{as $\slack{v}{\colS}>\delta^6\cdot \eps^2\cdot \Delta$}\\
        &\leq \exp\paren{-\frac{\eps^2\cdot \Delta\log n}{\delta^2\cdot \eta}}\tag{as $1-x\leq e^{-x}$ for $x \in [0,1]$}\\
        &\leq n^{-1/\delta}\tag{as $\eps^2\gg\delta$ and $\Delta>\eta$}
    \end{align*}
    Similarly, when $v\in V(\KKoh)$, assuming $v$ does not have \Cref{flaw:out-heavy} we have $\slack{v}{\colS}\geq \delta^6\cdot \eps^2\cdot \eta$ or $\deg_S(v) <\Delta-\delta\cdot \eps^2\cdot \eta$ and thus, there are at least $\delta^6\cdot \eps^2\cdot \eta$ colors not used by any sparser neighbor of $v$ in $\varphi$, therefore we have 
    \begin{align*}
        \prob{v \text{ does not get colored}}
        &\leq (1-\frac{\delta^6\cdot \eps^2 \cdot \eta}{\Delta+1})^{\Delta\cdot \log n/(\delta^8\cdot \eta)}\\
        &\leq \exp\paren{-\frac{\eps^2\cdot\log n}{\delta^2}}\leq n^{-1/\delta}.
    \end{align*}
    Therefore, with high probability $v$ successfully gets colored in $\ColorSparse$.

    As a random color is used by $O(1/(\delta^6\cdot\eps^2))=O(1)$ denser neighbors of $v$ in expectation, $v$ uncolors at most $O(1)$ denser neighbors in expectation.
\end{proof}

We also prove the worst-case runtime of this subroutine with the bound mentioned in \Cref{lem:worstcase-subroutines}.
\begin{proof}[Proof of \Cref{lem:worstcase-subroutines} for \ColorSparse]
    In $\ColorSparse(v)$, as an alternative to checking color classes we can always check the color of all neighbors of $v$. Whenever the runtime of $\ColorSparse(v)$ reaches the limit of $m$, go over all neighbors of $v$ (which is trivially bounded by the number of edges). Thus, the total runtime is $O(m)$ in the worst-case.   
\end{proof}

\subsection{The \ColorOH Subroutine}

The subroutine $\ColorOH$ is called when an almost-clique $K\in \KKoh$ have many uncolored vertices. The algorithm uses random color trials to color uncolored vertices in $K$. 

\begin{Algorithm}\label{alg:color-OH} 
$\ColorOH(K):$  
\begin{enumerate}
    \item Let $B$ be the set of uncolored vertices in $K$.
	\item While $\abs{B}> {4\cdot \avgout(K)}$, let $v\in B$ be a vertex maximizing $\out(v)$, and do:
	\begin{enumerate}
        \item Run $\RandomColorTrial(v)$, and remove $v$ from $B$.
    \end{enumerate}
\end{enumerate}
\end{Algorithm}

\begin{lemma*}[Restatement of \Cref{lem:colorOH}]
	Let $K \in \KKoh$ be any out-heavy almost-clique in the decomposition with at least $4\Delta/5$ uncolored vertices in $\varphi$. Calling $\ColorOH(K)$ extends the coloring $\varphi$ to color all but ${4\cdot \avgout(K)}$ vertices of $K$ in expected $O(n\log{n})$ time without recoloring or uncoloring any vertex.
\end{lemma*}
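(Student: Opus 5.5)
Fix the uncolored set $B$ at the start of $\ColorOH(K)$, with $\card{B}\geq 4\Delta/5$. The loop processes vertices of $B$ in decreasing order of $\out(\cdot)$ and stops once $4\cdot\avgout(K)$ vertices remain. Correctness is immediate. $\RandomColorTrial$ only assigns a color $c$ to $v$ when $N(v)\cap\mathcal{C}_c=\emptyset$, so the coloring stays proper and no vertex is recolored or uncolored. The whole task is the runtime. By \Cref{prop:random-color-trial}, the $j$-th processed vertex $v_j$ costs $O(n/\avail{v_j}{\varphi})$ in expectation, with $\varphi$ the coloring at that moment. Summing, it suffices to lower bound $\avail{v_j}{\varphi}$ so that
\[
\sum_j \frac{n}{\avail{v_j}{\varphi}} = O(n\log n).
\]

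I will lower bound availability by counting colors still unused inside $K$. The number of colors used by vertices of $K$ is at most $\card{K}$ minus the number of currently uncolored vertices of $K$. Suppose that, when $v_j$ is processed, $b_j$ vertices of $B$ are still uncolored. Then at most $(1+\eps)\Delta - b_j$ colors appear in $K$, and $v_j$ additionally loses at most $\out(v_j)$ colors to neighbors outside $K$. Hence
\[
\avail{v_j}{\varphi} \geq (\Delta+1) - \big((1+\eps)\Delta - b_j\big) - \out(v_j) \geq b_j - \eps\Delta - \out(v_j).
\]
This bound alone is useless, since $\eps\Delta$ may be much larger than $b_j$. So I will refine it using that $K$ is out-heavy, hence $\avgantiDegree(K)\leq 2\eta/\eps^2$: the almost-clique is close to a clique of size $\card{K}$. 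The refined count is as follows. The colored vertices of $K$ use at least roughly their number minus the anti-matching contribution of distinct colors. So the number of colors missing from $N_K(v_j)$ is at least $(\Delta+1) - \deg_K(v_j) + (\text{uncolored neighbors of } v_j \text{ in } K)$. Since $\deg_K(v_j) = \Delta - \out(v_j) - (\Delta - \deg(v_j))$, this gives
\[
\avail{v_j}{\varphi} \geq 1 + \out(v_j) + \card{B_j\cap N(v_j)} - \out(v_j) \approx \card{B_j\cap N(v_j)},
\]
where $B_j$ denotes the uncolored vertices of $B$ at that moment; the two $\out(v_j)$ terms cancel. Since $\antiDegree(v_j)$ is small on average, $\card{B_j\cap N(v_j)} \geq b_j - \antiDegree(v_j)$.

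With this, I can bound the sum. Processing in decreasing $\out$ order ensures that while $b_j > 4\avgout(K)$, the vertices left are the low-$\out$ ones; that ordering matters for the $\out$ loss in the cruder bound. I then obtain $\avail{v_j}{\varphi} = \Omega(b_j)$ for all but a few vertices with $\antiDegree(v_j)$ comparable to $b_j$. Since $b_j$ decreases by one per step, from $\card{B}$ down to $4\avgout(K)$, the sum is $\sum_b O(n/b) = O(n\log n)$.

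The main obstacle is handling vertices with large $\antiDegree$, for which $b_j - \antiDegree(v_j)$ may be tiny. I would first try Markov's inequality using $\avgantiDegree(K)\leq 2\eta/\eps^2 \ll \avgout(K)$. At most $b/4$ vertices have $\antiDegree \geq 4\avgantiDegree\cdot\card{K}/b$, and such a vertex still sees $\avail{}{} \geq \slack \geq 1$; I would show their total contribution is also $O(n\log n)$. If this fails, I would fall back on the $\out$-ordering and argue via the $\out\geq 4\avgout$ slack.
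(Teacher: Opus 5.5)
There is a genuine gap. Your refined bound $\avail{v_j}{\varphi}\ge 1+\card{B_j\cap N(v_j)}\ge 1+b_j-\antiDegree(v_j)$ is just the generic fact $\avail{v}{\varphi}\ge 1+\free{v}{\varphi}$, and it does not give $\Omega(b_j)$ for the processed vertices. Three things stand in the way:
\begin{itemize}
\item individual anti-degrees can be $\Theta(\eps\Delta)$;
\item the processing order is by $\out$ and says nothing about $\antiDegree$;
\item out-heaviness only gives $\avgantiDegree(K)\le 2\eta/\eps^2\le \avgout(K)/2$, a factor of $2$, not $\ll$.
\end{itemize}
Markov bounds how many high-anti-degree vertices $K$ contains, not when they are processed. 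Take $\avgout(K)=\Theta(\eta)$. Nothing in the hypotheses prevents $\Theta(\eta)$ uncolored vertices of anti-degree $\Theta(\eps\Delta)$ and small $\out$ from being the last ones processed. At that point $b_j\ll\eps\Delta$, so your bound degenerates to $\avail{v_j}{\varphi}\ge 1$. Your accounting then gives $\Theta(n\eta)$ instead of $O(n\log n)$. The fallback in your last sentence is not an argument.

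The missing idea is to use $\deg(v)\le\Delta$ averaged over $K$ rather than vertex by vertex. Averaging $\deg(v)=\card{K}-\antiDegree(v)+\out(v)-1\le\Delta$ over $v\in K$ gives
\[
\card{K}\le \Delta+1-\avgout(K)+\avgantiDegree(K)\le\Delta+1-\avgout(K)/2
\]
for $K\in\KKoh$. This repairs your first (``crude'') count: the useless $-\eps\Delta$ becomes $+\avgout(K)/2$, that is,
\[
\avail{v_j}{\varphi}\ge \Delta+1-\card{K\setminus B_j}-\out(v_j)\ge b_j+\avgout(K)/2-\out(v_j).
\]
In your refined count, if you keep $\Delta+1-\deg(v_j)$ instead of lower-bounding it by $1$, the anti-degrees cancel and you land on exactly this expression. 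So the bound on $\card{K}$ cannot be avoided.

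The decreasing-$\out$ order then finishes the job. By Markov, fewer than $\card{K}/2$ vertices have $\out>2\avgout(K)$. Hence for $j>\card{K}/2$ you get $\avail{v_j}{\varphi}\ge b_j-\tfrac32\avgout(K)\ge\tfrac58 b_j$ while $b_j>4\avgout(K)$. For $j\le\card{K}/2$, at most $(\eps+\tfrac15)\Delta+j$ vertices of $K$ are colored, so $\avail{v_j}{\varphi}=\Omega(\Delta)$. Summing $O(n/\avail{v_j}{\varphi})$ over both ranges gives $O(n)+O\bigl(\sum_b n/b\bigr)=O(n\log n)$.
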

\begin{proof}
    For each $v\in K$, by definition, we have
    \[
    \deg(v) = \abs{K} - \antiDegree(v) + \out(v) -1 \leq \Delta.
    \]
    Averaging this inequality over all $v\in K$ yields
    $\abs{K} + \avgout(K) - \avgantiDegree(K) \leq \Delta+1.$
    Since $K\in\KKoh$, we have $\avgantiDegree(K)\leq 2\eta/\eps^2\leq \avgout(K)/2$. Thus, we have 
    \begin{equation}\label{eq:bounding-K}
        \abs{K} \leq \Delta+1 - \frac{\avgout(K)}{2}.  
    \end{equation}

    Initially, $\abs{B}\geq 4\Delta/5$ and $\abs{K}\leq (1+\eps)\Delta$, hence the number of initially colored vertices in $K$ is at most $\eps\Delta+\Delta/5$.

    Let $s:=\abs{B}$ and $v_1, \ldots, v_s$ be all vertices in the initial set $B$ in non-increasing order of $\out(\cdot)$. We have $s\geq 4\Delta/5$. Since $\avgout(K)$ is the average of all out degrees in $K$, at least half of the vertices $K$ have $\out(v)\leq 2\cdot \avgout(K)$. Thus, there are at most $\abs{K}/2$ many indices $i\in [s]$ such that $\out(v_i)>2\cdot \avgout(K)$.
    Let $B_i$ denote the set of vertices remaining in $B$ at the time $v_i$ is processed.
    When \ColorOH reaches $v_i$, we have
    \[
    \avail{v_i}{\varphi} \geq \Delta+1 - \abs{K\setminus B_i} - \out(v_i). 
    \]
    This is because all the colored neighbors of $v_i$ are in the set $(K\setminus B_i) \cup \OUT(v_i)$.

    \textbf{Case 1:} For an index $i\leq\abs{K}/2$, we know 
    \begin{align*}
    \avail{v_i}{\varphi} 
    &\geq \Delta+1 - \abs{K\setminus B_i} - \out(v_i)\\
    &\geq
    \Delta+1 - (i+(\eps+1/5)\cdot \Delta) -\eps\Delta \tag{as initially colored vertices in $K$ are at most $(\eps+1/5)\cdot \Delta$, and $\out(v_i)\leq\eps\Delta$}\\
    &\geq
    \frac{4\Delta}{5} - \frac{\abs{K}}{2}-2\eps\Delta \tag{as $i\leq \abs{K}/2$}\\
    &\geq 
    \frac{\Delta}{5} -3\eps\Delta.\tag{as $\abs{K}\leq (1+\eps)\Delta$} 
    \end{align*}

    Therefore these vertices have at least $(1/5-3\eps)\Delta$ many available colors. Consequently, as the algorithm runs \RandomColorTrial , by \Cref{prop:random-color-trial} the time spent for each of these vertices is $O(n/\Delta)$
    in expectation.

    \textbf{Case 2:} For an index $i>\abs{K}/2$, we know that $\out(v_i)\leq 2\cdot \avgout(K)$. At the time $v_i$ is processed,
    \begin{align*}
    \avail{v_i}{\varphi}  
    &\geq \Delta+1 - \abs{K} + \abs{B_i} - \out(v_i)\\
    &\geq s-i + \avgout(K)/2 -\out(v_i)\tag{as $\abs{B_i}=s-i$ and by \Cref{eq:bounding-K}}\\
    &\geq s-i - 4\cdot \avgout(K).\tag{as $\out(v_i)\leq 2\cdot \avgout(K)$}
    \end{align*}
    In particular, the success probability of a random color trial is at least
    \[
    \frac{s-i - 4\cdot \avgout(K)}{\Delta+1}.
    \]
    Thus, the expected number of samples needed is $O(\Delta/(i - 4\cdot \avgout(K)))$.
    We continue this process as long as $i < s- 4\avgout(K)$, ensuring that sufficient available colors remain. Consequently, at most $4\cdot \avgout(K)$ vertices in $K$ remain uncolored.

    Finally, as the algorithm uses \RandomColorTrial, by \Cref{prop:random-color-trial},  
    the time spent for the $i$-th vertex is $O(n/(s-i - 4\cdot \avgout(K)))$ in expectation.
    Therefore, the expected running time is
    \begin{align*}
    \sum_{i\le |K|/2} O(\frac{n}{\Delta})
    +
    \sum_{i>|K|/2}^{s-4 \cdot \avgout(K)-1}
    O\left(\frac{n}{s-i - 4\cdot \avgout(K)}\right)\\
    = \frac{(1+\eps)\cdot \Delta}{2}\cdot O(\frac{n}{\Delta}) + O\paren{\sum_{j=1}^{s-\abs{K}/2+1} \frac{n}{j}}\tag{as $\abs{K}\leq (1+\eps)\Delta$}\\
    = O(n\log n),
    \end{align*}
    where we get the final bound by the bounds on harmonic series, concluding the proof.
\end{proof}

We also prove the worst-case runtime of \ColorOH with the bound mentioned in \Cref{lem:worstcase-subroutines}.
\begin{proof}[Proof of \Cref{lem:worstcase-subroutines} for \ColorOH]
    In $\ColorOH(K)$ for some almost-clique $K$, as an alternative to checking color classes in \RandomColorTrial we can always check the colors used on endpoints of all edges incident on every vertex $v\in K$. Therefore, we can deterministically find an available color for each of them. Whenever the runtime of $\ColorOH(K)$ reaches the limit of $m$, we can stop the algorithm and go over all edges to do so. Thus, the total runtime is $O(m)$ in the worst-case.   
\end{proof}

%% file: dense.tex

\section{Handling Dense Vertices}\label{sec:dense}

In this section, we fill out the remaining parts of the proof of~\Cref{thm:delta+1} by describing and analyzing how the algorithm handles denser vertices. 
This part of the proof is similar to the approach of~\cite{FlinH25} and we borrow many ideas from that work. We provide these parts mainly for completeness as the way the algorithm of~\cite{FlinH25} and ours are structured globally is quite 
different and thus we need to carefully modify different steps in~\cite{FlinH25} to make sure they work for us as well. 

\newcommand{\matched}{\ensuremath{\texttt{matched}}\xspace}

Before moving on to the details of the subroutines, we note that throughout \Cref{alg:delta+1} we maintain the following related data structures:
\begin{itemize}
	\item An array $\matched$ for all matched vertices in colorful anti-matching(\Cref{def:colorful-anti-matching}; see also \Cref{inv:dense}), where for each vertex $u$, if $u$ is unmatched we have $\matched[u]=\perp$ and if it is matched to some vertex $v$ we have $\matched[u]=v$.
	\item  A list of available colors $\Avail{K}{\varphi}$ for each almost-clique $K$, which shows all the colors in $[\Delta+1]$ that is not used by any vertex in $K$.
\end{itemize}
Note that both of these data structures can be updated in constant time throughout \Cref{alg:delta+1} whenever a vertex is assigned a new color.

\subsection{The \ExpandMatching Subroutine} 

The goal of $\ExpandMatching$ is to ensure~\Cref{inv:dense} holds for a given almost-clique $K$. 

\begin{Algorithm}\label{alg:expand-matching} 
$\ExpandMatching(K):$ for ensuring~\Cref{inv:dense} for an almost-clique $K$. 
\begin{enumerate}
	\item While $\abs{M_K} < \floor{8\cdot \avgantiDegree(K)}$ do the following:
\begin{enumerate}
	\item Sample an anti-edge $(u,v)$ from anti-edges in $K$ uniformly at random\footnote{This can be done by sampling $v \in K$ proportional to ${\antiDegree(v)}$, and picking a non-edge from $\ANTIDegree(v)$ uniformly.}. If $(u,v)$ is incident on $M_K$, repeat this step, otherwise continue. 
	\item Sample a color $c \in [\Delta+1]$ uniformly at random. Go over all vertices $w$ with color $\varphi(w)=c$ to check if (1) $c$ is used on more than one vertex in $K$, or (2) $c$ is used to color a vertex $w \in \OUT(u) \cup \OUT(v)$; if so, repeat this step, otherwise continue. 
	\item Color $\varphi(u) = \varphi(v) = c$, add $(u,v)$ to $M_K$ and if there is a single vertex $w \in K$ with color $\varphi(w) = c$, uncolor $w$ to $\varphi(w) = \perp$. 
\end{enumerate}
\item Return all uncolored vertices.
\end{enumerate}
\end{Algorithm} 

Given~\Cref{alg:expand-matching}, we can now restate and prove~\Cref{lem:anti-matching}. 

\begin{lemma*}[Restatement of~\Cref{lem:anti-matching}]
	Let $K$ be an almost-clique in the decomposition violating~\Cref{inv:dense}. The subroutine $\ExpandMatching(K)$ increases
	size of $M_K$ until it again satisfies~\Cref{inv:dense} in $O(b \cdot n/\Delta)$ expected time where $b$ is the number of anti-edges added to $M_K$. 
	This subroutine may recolor vertices of $K$ in $\varphi$, while ensuring properness of the (partial) coloring $\varphi$, and creates $O(b)$ new uncolored vertices in $K$ which it also returns. 
\end{lemma*}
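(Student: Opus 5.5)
The plan is to analyze one iteration of the while-loop in \Cref{alg:expand-matching}, i.e., one successful addition of an anti-edge to $M_K$, and show it takes $O(n/\Delta)$ expected time and uncolors at most one vertex. Summing over the $b$ iterations then gives both bounds. Correctness is checked first, and it is immediate from the construction. In step (c), $u$ and $v$ receive a color $c$ that appears on no vertex of $\OUT(u)\cup\OUT(v)$. Within $K$, the color appears on at most one vertex $w$, and $w$ is uncolored. Since $(u,v)$ is an anti-edge, coloring both endpoints with $c$ is proper, and afterwards $c$ appears only on $u,v$ inside $K$. So $M_K$ stays a colorful anti-matching and $\varphi$ stays proper. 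Previously matched pairs keep their colors: their colors are used exactly twice in $K$, so condition (1) rejects them as $c$. The new pair is disjoint from $M_K$ by step (a). Hence each iteration uncolors at most one vertex $w$ of $K$, plus possibly the old colors of $u,v$ if they were matched. That last case is excluded, so the uncoloring count is $O(b)$.

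For step (a), I would bound the number of anti-edges incident on $M_K$ by $2|M_K|\cdot\max_x\antiDegree(x)$. This is too crude, since $\antiDegree(x)$ can be as large as $2\eps\Delta$. Instead I would argue via the total: $K$ has $|K|\cdot\avgantiDegree(K)/2$ anti-edges, while $M_K$ covers at most $2|M_K|\le 16\,\avgantiDegree(K)$ vertices. In the worst case these vertices could carry many anti-edges, so the cleaner route is to sample $v$ proportional to $\antiDegree(v)$. I would attempt to show that a constant fraction of anti-edges avoid $V(M_K)$, using that the matched vertices total at most $16\,\avgantiDegree(K)\ll |K|$, and I would fall back on the sparse-dense structure if needed. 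Each trial costs $O(1)$ given the lists in \Cref{prop:maintaining-decomposition}, so step (a) takes expected $O(1)$ trials. I expect this step to be the main obstacle, because the matched vertices might concentrate the anti-degree. If the averaging argument fails, I would modify the sampling to exclude matched vertices via the \matched array, sampling $v$ from unmatched vertices proportional to anti-degree among unmatched ones.

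For step (b), a trial on a uniformly random $c$ costs $O(|\mathcal{C}_c|)$, which has expectation $O(n/\Delta)$ (as in \Cref{prop:random-color-trial}). The trial succeeds whenever $c$ is used on at most one vertex of $K$ and on none of $\OUT(u)\cup\OUT(v)$. The second set has size at most $2\eps\Delta$, so it forbids at most $2\eps\Delta$ colors. For the first condition, since $|K|\le(1+\eps)\Delta$, the number of colors appearing at least twice in $K$ is at most $(1+\eps)\Delta/2$. Thus at least $(\Delta+1)-(1+\eps)\Delta/2-2\eps\Delta\ge\Delta/3$ colors are good, which gives a constant success probability. By the same recurrence as in \Cref{prop:random-color-trial}, step (b) takes $O(n/\Delta)$ expected time. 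Step (c) is $O(1)$ with the maintained $\Avail{K}{\varphi}$ and color-count structures. Linearity of expectation over the $b$ iterations then yields $O(b\cdot n/\Delta)$ expected time and at most $b$ newly uncolored vertices, all in $K$, which are returned.
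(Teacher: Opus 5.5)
Your correctness argument for a single iteration, your analysis of step (b), and the summation over the $b$ iterations all match the paper. The gap is in step (a). You set up the bound ``anti-edges incident on $M_K$ $\le 2\card{M_K}\cdot\max_x\antiDegree(x)$'' but reject it as too crude. You then leave the constant-fraction claim as something you ``would attempt''. Your fallback is to change the sampling rule, which would no longer be the subroutine the lemma is about. As written, you never show that step (a) needs only $O(1)$ expected trials, so the $O(b\cdot n/\Delta)$ time bound is not established.

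The bound you dismissed is exactly the paper's argument, and it suffices. While the loop runs, $\card{M_K} < 8\,\avgantiDegree(K)$. Every vertex of $K$ has $\antiDegree(x)\le 2\eps\Delta$. So at most $2\card{M_K}\cdot 2\eps\Delta < 32\eps\Delta\cdot\avgantiDegree(K)$ anti-edges touch $M_K$. On the other hand, $K$ contains $\card{K}\cdot\avgantiDegree(K)/2 \ge (1-\eps)\Delta\cdot\avgantiDegree(K)/2$ anti-edges in total. The factor $\avgantiDegree(K)$ cancels, so the touched fraction is at most $64\eps/(1-\eps)\le 1/2$ for small $\eps$.

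Your worry that matched vertices might concentrate the anti-degree is therefore harmless. Even if every matched vertex has the maximum anti-degree $2\eps\Delta$, there are only $O(\avgantiDegree(K))$ matched vertices, while the total anti-edge count is $\Theta(\Delta\cdot\avgantiDegree(K))$. Hence a uniformly sampled anti-edge avoids $M_K$ with probability at least $1/2$, and step (a) takes $O(1)$ expected trials. With this in place, your proof goes through.
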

\begin{proof}
	Each time the while-loop proceeds, we have $\abs{M_K} < \floor{8\cdot \overline{\antiDegree}(K)}$. Also, we know there are total of $\abs{K}\cdot \overline{\antiDegree}(K)/2$ many anti-edges inside $K$. Moreover, as each vertex $v\in K$ has at least $(1-\eps)\Delta$ neighbors in $K$ and $\abs{K}\leq(1+\eps)\Delta$, then each matched anti-edge inside $M_K$ is incident on at most $4\eps\Delta$ other anti-edges, which means the total number of anti-edges incident on the anti-matching is at most 
	\[
	4\eps\Delta\cdot \abs{M_K}<  4\eps\Delta\cdot \floor{8\cdot \overline{\antiDegree}(K)}\leq 32 \eps\cdot (1-\eps)^{-1}\cdot \abs{K} \cdot \overline{\antiDegree}(K), 
	\] 
	where the last inequality holds as $\abs{K}\geq (1-\eps)\Delta$.
	Therefore, for small enough $\eps$, at least half of the anti-edges inside $K$ can be added to the matching as they are not incident on $M_K$. Thus, \ExpandMatching finds such anti-edge in expected constant time. Also, after sampling an anti-edge $(u, v)$, we can check if any of its endpoints are matched using the data structure $\matched$ in constant time. 

	After finding a proper anti-edge $(u, v)$ that is not incident on $M_K$, the algorithm samples a color $c$ and checks if it is used by any vertex in $\OUT(v)\cup \OUT(u)$ or by more than one other vertex in $K$. This can be done by going over the color class $c$ in expected $O(n/\Delta)$ time. Moreover, there are at most $\abs{K}/2\leq (1+\eps)\Delta/2$ many colors used by more than one vertex in $K$ and by properties of almost-cliques we have $\abs{\OUT(v)\cup \OUT(u)}\leq 2\eps\Delta$. Therefore, there are at least $\Delta/2 - 3\eps\Delta\geq \Delta/3$ available colors for small enough $\eps$. Consequently, by \Cref{prop:random-color-trial} we find such proper color $c$ in $O(n/\Delta)$ expected time.
	
	After successfully finding a color $c$, there is at most one vertex $w\in K$ which we have to recolor in order to `steal' its color and add $(u, v)$ to the matching. Therefore, \ExpandMatching spends $O(n/\Delta)$ expected time to increase $M_K$ by one and uncolors at most one new vertex.
	Therefore, if \ExpandMatching increases $M_K$ by $b$, the expected runtime is $O(b\cdot n/\Delta)$ and the expected recourse is at most $O(b)$,
	concluding the proof.
\end{proof}

We also prove the worst-case runtime of this subroutine with the bound mentioned in \Cref{lem:worstcase-subroutines}.
\begin{proof}[Proof of \Cref{lem:worstcase-subroutines} for \ExpandMatching]
    We can increase the size of the anti-matching by going over all $m$ edges and finding the ones that are incident to vertices in the almost-clique $K$. As a consequence we will find anti-edges in $K$ and the colors incident on them. Therefore, we can extend the matching size by picking an available color for an anti-edge that is not incident on $M_K$.
	
	Thus, whenever the runtime of $\ColorSparse(v)$ reaches the limit of $m$, we go over all edges and do as described above and the total runtime is $O(m)$ in the worst-case.   
\end{proof}

\subsection{The \ColorDense Subroutine} 

Finally, we have the $\ColorDense$ subroutine that, assuming~\Cref{inv:dense}, can color all remaining dense vertices. For each almost-clique $K$, vertices 
\[
I_K := \set{v \in K \mid \out(v) \leq 4\cdot \avgout(K) \wedge \antiDegree(v) \leq 4 \cdot \avgantiDegree(K)}
\] 
are called \textbf{inliers} and the remaining vertices $O_K := K \setminus I_K$ are called \textbf{outliers}. This subroutine treats outlier and inlier vertices separately. 

\clearpage

\begin{Algorithm}\label{alg:color-dense}
	$\ColorDense(v):$ for coloring dense vertex $v$. 
    
    \begin{enumerate}
	\item Let $K:=K(v)$. If $v\in I_{K}$ and $K\in \KKah$, sample a color $c\in [\Delta+1]$ uniformly at random. Go over all vertices $w$ with color $\varphi(w)=c$ to see if there is any in $N(v)$; if so repeat this step, otherwise set $\varphi(v)=c$ and terminate. 
	\item If $v\in I_{K}$ and $K\in \KKol$, go over all neighbors of $v$ in $\OUT(v)$ and find a color $c\in [\Delta+1] \setminus \varphi(\OUT(v))$. Set $\varphi(v)=c$ and terminate.
	
	\item \label{item:sampling-line-dense} Otherwise, if $v$ is an outlier, sample a color  $c\in [\Delta+1]$ uniformly at random, repeat this
    steps until a color $c$ that is accepted by all of \textup{(R1)--(R3)} is found:
	\begin{enumerate}
	\item[\textup{(R1)}] If $c \notin M_K$, reject $c$ and return to step~\ref{item:sampling-line-dense}.
	\item[\textup{(R2)}] If there exists $w \in N(v) \setminus I_K$ with $\varphi(w) = c$, reject $c$ and return to step~\ref{item:sampling-line-dense}.
	\item[\textup{(R3)}] If $|\{w \in I_K : \varphi(w) = c\}| \geq 2$, reject $c$ and return to step~\ref{item:sampling-line-dense}.\footnote{All \textup{(R1)--(R3)} can be checked by going over the vertices in color class $c$}
    \end{enumerate}
	\item Once $c$ is accepted, set $\varphi(v) = c$. Then,
    go over the color class of $c$, if there exists a vertex $w \in I_K$ with $\varphi(w) = c$, let $\varphi(w)=\perp$ and call $\ColorDense(w)$.
	\end{enumerate}
\end{Algorithm}

Given \Cref{alg:color-dense}, we can now restate and prove \Cref{lem:dense}.

\begin{lemma*}[Restatement of~\Cref{lem:dense}]
	Let $v \in D$ be a denser vertex with $\varphi(v) = \perp$ such that~\Cref{inv:dense} holds for the almost-clique $K(v)$ that $v$ belongs to. The subroutine $\ColorDense(v)$ extends the coloring $\varphi$ to color
	$v$ as well in expected $O(n/\Delta + n/\eta+\eta)$ time by recoloring vertices inside $K$, while ensuring properness of the (partial) coloring $\varphi$, without creating new uncolored vertices. 
\end{lemma*}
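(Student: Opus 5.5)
I will split the analysis along the three branches of $\ColorDense(v)$ (Algorithm~\ref{alg:color-dense}): inliers of anti-heavy almost-cliques, inliers of out-light almost-cliques, and outliers. For each branch I bound the expected runtime and check properness. The recursive call in the outlier branch needs extra care, so I treat it at the end.

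\textbf{Inliers of $K\in\KKol$.} Here $\out(v)\le 4\avgout(K)\le 24\eta/\eps^2=O(\eta)$. Scanning $\OUT(v)$ costs $O(\eta)$. Using the maintained list $\Avail{K}{\varphi}$, we can pick a color that is unused inside $K$ and also avoids $\varphi(\OUT(v))$; any such color is proper for $v$.

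To see such a color exists, count as in the proof of \Cref{lem:colorOH}. Since $\deg(v)\le\Delta$, we have $|K|\le\Delta+1+\antiDegree(v)-\out(v)$. By \Cref{inv:dense}, the $|M_K|=\floor{8\avgantiDegree(K)}$ matched pairs each use one color on two vertices. Hence the number of colors used in $K$ is at most $|K|-1-|M_K|$ (excluding $v$, which is blank). This leaves at least $|M_K|-\antiDegree(v)+\out(v)+1$ colors free within $K$. Because $v$ is an inlier, $\antiDegree(v)\le 4\avgantiDegree(K)<|M_K|$ up to rounding, so at least $\out(v)+1$ colors are free in $K$ and not all of them can be blocked by $\OUT(v)$. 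Finding the color takes $O(\eta)$ time if we scan $\Avail{K}{\varphi}$ only up to $\out(v)+1$ entries.

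\textbf{Inliers of $K\in\KKah$.} The same counting gives $\avail{v}{\varphi}\ge |M_K|-\antiDegree(v)+1\ge 4\avgantiDegree(K)-O(1)\ge \eta/\eps^2$ minus lower-order terms. Then \Cref{prop:random-color-trial} gives expected time $O(n/\eta)$.

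\textbf{Outliers.} A sampled color $c$ passes (R1)--(R3) if three things hold. First, $c$ is one of the $|M_K|\ge 8\avgantiDegree(K)-1$ anti-matching colors. Second, $c$ is not used on $N(v)\setminus I_K$. Third, $c$ is not used twice in $I_K$; for an anti-matching color this means both endpoints of its matched pair are not inliers.

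By Markov, $|O_K|\le |K|/4+|K|/4=|K|/2$. I expect the main obstacle to be showing that a constant fraction of the $M_K$ colors survive (R2) and (R3). Plain Markov is too weak when $\avgantiDegree$ is small, since $|O_K|$ may be much larger than $|M_K|$. The first thing I would try is to borrow the argument of \cite{FlinH25}: construct $M_K$ (via the uniformly random anti-edge sampling in \ExpandMatching) so that matched vertices are mostly inliers. Each outlier $w$ then kills at most one matched color, but only if $w$ carries that color, and $w$ lies in $N(v)$. I would bound the rejected colors by $|\{$matched pairs touching $O_K\}|+|O_K\cap N(v)\cap\bigcup M_K|$, aiming for at most $|M_K|/2$. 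If that holds, each trial succeeds with probability $\Omega(\avgantiDegree(K)/\Delta)$. Each trial scans a color class in $O(n/\Delta)$ expected time, giving $O(n/\avgantiDegree(K))$ expected time. When $\avgantiDegree(K)$ is below $\eta$, I would instead scan $N(v)\setminus I_K$ deterministically, which costs $O(\eta)$ when $\antiDegree(v)$ and $\out(v)$ are small; this step is the one I am least sure of.

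\textbf{Recursion and properness.} Stealing $c$ from an inlier $w$ is proper for $v$: by (R2) no non-inlier neighbor of $v$ has $c$, by (R3) only one inlier has $c$ and it is uncolored, and $c$ is unused outside $K$ near $v$. The uncolored $w$ is an inlier, so the recursive call on it terminates in one of the two inlier branches at cost $O(n/\eta+\eta)$. The anti-matching pair keeps its color, so \Cref{inv:dense} is preserved. Summing the branches gives the claimed bound $O(n/\Delta+n/\eta+\eta)$, and no vertex is left uncolored at the end.
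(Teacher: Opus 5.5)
\textbf{Gap in the outlier branch.} Your two inlier branches are correct and match the paper. At least $|M_K|-\antiDegree(v)+1$ colors are unused on $K\cup\OUT(v)$. For inliers of $\KKah$ this gives $\Omega(\eta)$ available colors and $O(n/\eta)$ time by \Cref{prop:random-color-trial}; for inliers of $\KKol$ an $O(\eta)$ scan finds a color. The gap is in the outlier branch, where you flagged uncertainty. It comes from reading (R1) as \emph{requiring} $c$ to be a color of $M_K$. Under that reading, each trial succeeds with probability at most $|M_K|/(\Delta+1)=O(\avgantiDegree(K)/\Delta)$, which for $K\in\KKol$ can be arbitrarily small. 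If $\avgantiDegree(K)<1/8$ then $M_K=\emptyset$ and no color is ever accepted. Your fallback does not help. Since $v$ is an outlier, $\out(v)$ or $\antiDegree(v)$ may be $\Theta(\eps\Delta)$, and $N(v)\setminus I_K$ may contain $\Theta(\Delta)$ outliers of $K$. So the deterministic scan costs $\Theta(\Delta)$, well above the $O(n/\Delta+n/\eta+\eta)$ budget. Your reading also breaks the invariant. Giving $v$ a matched color either dissolves a matched pair (when you uncolor its inlier endpoint $w$) or puts a third copy of a matched color into $K$. Either way $M_K$ is no longer a colorful anti-matching of the required size. So your claim that the pair keeps its color fails, and the recursive call on $w$ cannot assume \Cref{inv:dense}.

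\textbf{The missing idea.} Outliers should \emph{avoid} the anti-matching colors rather than use them; this is how the paper's proof reads (R1). Reject $c$ if it is a color of $M_K$, if it appears on $N(v)\setminus I_K$, or if it appears on two or more inliers. Then no dependence on $\avgantiDegree(K)$ is needed, only a count over all $\Delta+1$ colors:
\begin{itemize}
\item at most $|M_K|=O(\eps\Delta)$ matched colors;
\item at most $\out(v)\le\eps\Delta$ colors on $\OUT(v)$;
\item at most $|O_K|$ colors on outliers;
\item at most $|I_K|/2$ colors repeated among inliers.
\end{itemize}
By Markov, $|I_K|\ge |K|/2$, so the last two together are at most $|O_K|+|I_K|/2=|K|-|I_K|/2\le 3|K|/4$. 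This leaves $\Omega(\Delta)$ acceptable colors, so each trial succeeds with constant probability and \Cref{prop:random-color-trial} gives $O(n/\Delta)$ expected time. An accepted $c$ is not a matched color, so the unique inlier $w$ holding it is unmatched. Hence $M_K$ and \Cref{inv:dense} are untouched, and the single recursive call on $w$ lands in an inlier branch at expected cost $O(n/\eta+\eta)$, giving the claimed total.
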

\begin{proof}
	We first consider the case where $v$ is an inlier and $K\in \KKah$.
	We bound the number of colors available for $v$ inside $\Avail{K}{\varphi}$ that is not used by any vertex in $\OUT(v)$.
	As $\Avail{K}{\varphi}$ denotes the set of colors not used by vertices in $K$ we have,
	\[
	\Avail{K}{\varphi}\setminus \varphi(\OUT(v)) = [\Delta+1]\setminus \varphi(K \cup \OUT(v)).
	\]
	Since $v$ itself is currently uncolored, the number of distinct colors used on $K$ is at most $|K|-1-|M_K|$, as $M_K$ denotes the set of repeated colors in $K$. Therefore,
	\begin{align*}
	\abs{[\Delta+1]\setminus\set{\varphi(K\cup \OUT(v))}}
	&= \Delta+1 - \abs{\varphi(K\cup \OUT(v))}\\
	&\geq \Delta+1 - ((\abs{K}-1-\abs{M_K}) + \out(v)).
	\end{align*}
	Using the fact that  $|K| = \deg(v) + 1 + \antiDegree(v) - \out(v)$, we obtain
	\begin{align*}
	\abs{[\Delta+1]\setminus\set{\varphi(K\cup \OUT(v))}}
	&\geq \Delta+1 - (\deg(v)+\antiDegree(v) - \abs{M_K})\\
	&\geq \abs{M_K} - \antiDegree(v)+1, 
	\end{align*}
	where the last inequality uses $\deg(v)\le \Delta$.

	By \Cref{inv:dense}, we have $\abs{M_K} = \floor{8\cdot \avgantiDegree(K)}$. Since $v$ is an inlier, $\antiDegree(v) \leq 4\cdot \avgantiDegree(K)$, and thus the above quantity is at least $4\cdot \avgantiDegree(K)>4\cdot \eta/\eps^2$.
	Hence, there are at least $\Omega(\eta)$ many available colors for $v$ and in expectation after $O(\Delta/\eta)$ many trials we sample an available color for $v$. Each color class has expected size $O(n/\Delta)$, therefore by \Cref{prop:random-color-trial} this step takes $O(n/\eta)$ in expectation.

	When $v\in I_K$ and $K\in \KKol$, we know that $\out(v)\leq \avgout(K)=O(\eta)$ therefore we can go over all vertices in $\OUT(v)$ and find the first available color in $\Avail{K}{\varphi}$ in $O(\eta)$ time. 
	Similar to above, we have
	\begin{align*}
	\abs{[\Delta+1]\setminus\set{\varphi(K\cup \OUT(v))}}
	&\geq  1 + \abs{M_K} - \antiDegree(v).
	\end{align*}
	We know by \Cref{inv:dense} that $\abs{M_K}=\floor{8\cdot \avgantiDegree(K)}$ and as $v\in I_K$ we have $\antiDegree(v)\leq 4\cdot \avgantiDegree(K)\leq \abs{M_K}/2$. 
	Hence, there exists a color $c \in \Avail{K}{\varphi}$ that is not used by any vertex in $\OUT(v)$, and we can assign $c$ to $v$.

	\medskip
	
	Next, consider the case where $v$ is an outlier. The algorithm samples colors uniformly at random from $[\Delta+1]$ and checks whether a sampled color satisfies conditions (R1)--(R3). Each check can be performed by scanning the color class of $c$. Since a uniformly random color has expected class size $n/\Delta$, verifying a sampled color takes $O(n/\Delta)$ time in expectation.

	We now argue that at least a constant fraction of all colors are valid. A color is \emph{proper} if it satisfies (R1)--(R3), i.e., it is:
	(i) not in $M_K$,
	(ii) not used by any vertex in $\OUT(v)$,
	(iii) not used by any outlier neighbor of $v$ in $K$, and
	(iv) used by at most one inlier in $K$.

	We upper bound the number of colors violating each condition. By \Cref{inv:dense}, $\abs{M_K}=\floor{8\cdot \overline{\antiDegree(K)}}\leq 8\eps\Delta$. By the definition of almost-cliques, at most $\eps\Delta$ colors are used by $\OUT(v)$. 
	By Markov's inequality, the number of outliers in $K$ is at most $|K|/4 \leq (1+\eps)\Delta/4$, and the number of colors used by more than one inlier is at most $\abs{I_K}/2 \leq (1+\eps)\Delta/2$. Therefore, the number of forbidden colors is at most
	\[
	8\eps\Delta + \eps\Delta + \frac{(1+\eps)\Delta}{4} + \frac{(1+\eps)\Delta}{2}.
	\]
	For sufficiently small constant $\eps$, the number of remaining colors is at least $\Delta/8$. Therefore, by \Cref{prop:random-color-trial} the expected time needed to find an available color is $O(n/\Delta)$.

	\medskip

	Finally, note that the procedure may recurse on at most one inlier vertex $w$. As shown above, $\ColorDense(w)$ successfully colors the inlier vertex $w$ without creating new uncolored vertices. Therefore, the overall procedure colors $v$ while maintaining properness and does not introduce any new uncolored vertex.

	The total expected running time is $O(n/\Delta + n/\eta+\eta)$, as claimed.
	\end{proof} 

Now, we complete the proof of \Cref{lem:worstcase-subroutines}	
by proving the worst-case runtime of \ColorDense.
\begin{proof}[Proof of \Cref{lem:worstcase-subroutines} for \ColorDense]
	For every $v\in D$ we can wither find an available color for $v$ or find an inlier neighbor for it which it can steal a color from by going over all its neighbors. As $\abs{N(v)}\leq m$, we can run $\ColorDense(v)$ in $O(m)$ worst-case time by running the described process whenever $\ColorDense$ reaches the limit of $m$ time.
\end{proof}

%% file: missingProofs.tex
\section{Omitted Proofs}\label{sec:omitted-proofs}

We present the proofs of the results missing from the main body of the paper in this appendix.

\subsection{Proof of \Cref{cor:locally-sparse}}\label{sec:locally-sparse}

\begin{corollary*}[Restatement of \Cref{cor:locally-sparse}]
	There is a randomized algorithm that given any $\gamma \in (0,1)$ on any $n$-vertex dynamic $\gamma$-sparse graph with maximum degree $\Delta$ (sufficiently large w.r.t. $\gamma$) maintains a $(1-\Theta(\gamma))\Delta$-coloring  
	against an adaptive adversary in $\Ot(\sqrt{n}/\gamma)$ amortized update time with high probability.
\end{corollary*}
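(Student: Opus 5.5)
We plan to run the sparse-vertex machinery of \Cref{sec:sparser} essentially verbatim on the whole graph, treating every vertex as ``sparser'' with sparsity $\gamma$ in place of $\Omega(\eps^2)$, and with palette $[c]$ for $c:=\Delta+1-\lfloor\delta^7\gamma\Delta\rfloor = (1-\Theta(\gamma))\Delta$ in place of $[\Delta+1]$. No decomposition, no denser vertices, and no $\ColorOH$ are needed. Since $S=V$, \Cref{inv:sparser} is vacuous and $\deg_S(v)=\deg(v)$.

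The algorithm has the following pieces.
\begin{enumerate}
\item $\SampleColor$ samples, with probability $\delta$, a uniform color from $[c]$.
\item $\ColorSparse(v)$ runs $O(\log n/(\delta^{8}\gamma))$ random trials.
\item $\UpdateSparseColor(v)$ calls $\Resample(v)$ after a \Fail.
\item $\checkFlaw$ and $\test$ run unchanged in the background to keep the graph free of \Cref{flaw:all}.
\end{enumerate}
The single flaw is $\slack{v}{\colS}<\delta^6\gamma\Delta$, with slack now measured against the palette $[c]$. The factor $\delta^7\gamma\Delta$ by which we shrink the palette is below the slack that \Cref{lem:prob-slack-sparse} generates. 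So first we recheck that lemma with palette size $c$: the deficit of $\Delta+1-c$ colors only subtracts $\delta^7\gamma\Delta$ from the slack produced by same-colored anti-edges. Hence, for $\gamma$-sparse $v$ with $|B(u)|\le \Delta/4$, we get
\[
\Pr\paren{\slack{v}{\chi}<\delta^6\gamma\Delta}\le 2^{-\Omega(\delta^5\gamma\Delta)}.
\]
When $v$ is flawless, each trial of $\ColorSparse$ succeeds with probability $\Omega(\gamma)$, so $O(\log n/\gamma)$ trials succeed w.h.p. The run costs $O(n\log n/(\gamma\Delta))$ expected time, since color classes have expected size $O(n/\Delta)$. \Cref{lem:prob-flaw-sparse} goes through unchanged, because $c=\Theta(\Delta)$.

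The entropy compression argument follows \Cref{lem:bounding-conditional-prob,lem:prob-of-log,lem:cost-log-limited} with $\cost(v)=\gamma\Delta$ for every $v$, and there are no decomposition records ($d=0$). A `U' resample is then consistent with probability at most $2^{-\Omega(\gamma\Delta)}$. The counting of logs gives, with high probability, $r=O(t\log n/(\gamma\Delta))$ resamples over $t$ updates. Here we need $\gamma\Delta\gg\log n$, which is exactly the hypothesis that $\Delta$ is large relative to $\gamma$, together with $\Delta\gtrsim\sqrt n$.

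Each $\Resample$ costs $O(n)$ expected time and uncolors $O(\Delta)$ vertices. The total resample work is therefore $\Ot(tn/(\gamma\Delta))$, and the total recourse is $\Ot(t/\gamma)$. Each uncolored vertex is recolored by $\ColorSparse$ in $\Ot(n/(\gamma\Delta))$ time, so recoloring costs $\Ot(tn/(\gamma^2\Delta))$ in total.

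This last bound is the main obstacle: naively the recoloring work is $\Ot(n/(\gamma^2\Delta))$ per update, and we must drop one factor of $\gamma$. We would first try to recolor the vertices uncolored by $\Resample(v)$ with plain $\RandomColorTrial$ rather than $\ColorSparse$. Those vertices are mostly blank in $\colS$, and by \Cref{lem:prob-flaw-sparse} only $O(\delta\Delta)$ neighbors of each are colored in $\colS$, so many colors are available to them. Each such recoloring then takes $O(n/\Delta)$ expected time, and the total becomes $\Ot(tn/(\gamma\Delta))$.

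It remains to balance and to get high probability. Balancing the bound $\Ot(n/(\gamma\Delta))$ against the greedy $O(\Delta)$ algorithm, used whenever $\Delta\le\sqrt n$, gives update time $\Ot(\sqrt n/\gamma)$. The greedy algorithm also respects $c$ colors in that regime, because for $\gamma$-sparse graphs one can run greedy over $\colS$ with a slack argument, or simply absorb $\gamma$ into the $\Ot$. The high-probability statement then follows from \Cref{prop:KQ-inequality} exactly as in \Cref{lem:recourse-updatesparse-whp,lem:runtime-UpdateSparseColor}.
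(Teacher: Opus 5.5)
Your overall plan matches the paper's sketch: run the sparse machinery on $S=V$, shrink the palette by $\Theta(\gamma\Delta)$, recheck \Cref{lem:prob-slack-sparse}, and use entropy compression with $\cost(v)=\gamma\Delta$ to get $\Ot(t/(\gamma\Delta))$ resamples. Your accounting of the obstacle is also right: each $\Resample$ creates $\Theta(\Delta)$ blank vertices, and recoloring each by trials costs $\Ot(n/(\gamma\Delta))$. The fix you propose, however, does not work.

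$\RandomColorTrial(u)$ must find a color that is free in $\varphi$. Neighbors of $u$ that are blank in $\colS$ are still colored in $\varphi$, since every vertex is $\varphi$-colored between updates. So $\Ncolor{u}{\colS}=O(\delta\Delta)$ only makes $\avail{u}{\colS}$ large, which is irrelevant here. The only lower bound you actually have is $\avail{u}{\varphi}\geq \slack{u}{\colS}+\card{\set{w\in N(u):\varphi(w)=\perp}}$. With $c\leq\Delta$ colors, the slack is only guaranteed to be $\Omega(\gamma\Delta)$.

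A vertex $u\in N(v)$ with few neighbors inside $N(v)$ gains almost no $\varphi$-blank neighbors from $\Resample(v)$. Its trials can therefore cost $\Theta(n/(\gamma\Delta))$. Exploiting the blanks inside $N(v)$ would need a $\ColorOH$-style ordering argument, which you do not give. As written, your bound stays at $\Ot(n/(\gamma^2\Delta))$ per update.

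Second, for $\Delta\leq\sqrt n$ you cannot fall back on the greedy $O(\Delta)$ algorithm. Greedy works only because $\Delta+1$ colors exceed every degree; with $c<\Delta$ colors, a vertex may see the entire palette on its neighbors. Absorbing $\gamma$ into the $\Ot$ does not help, because the problem is the number of colors, not the running time.

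The paper instead keeps running the same slack-maintaining algorithm in every regime and only switches implementation. Every availability test is done by scanning neighbors, so (re)coloring a vertex costs $O(\Delta)$ and a $\Resample$ with its cleanup costs $O(\Delta^2)$. That is $\Ot(\Delta/\gamma)$ amortized, and the algorithm uses whichever implementation is cheaper.

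Be aware that combining the two implementations with the straightforward accounting gives $\Ot(\min\{n/(\gamma^2\Delta),\Delta/\gamma\})$. This is $\Ot(\sqrt n/\gamma)$ at $\Delta=\sqrt n$, where the paper balances, but only $\Ot(\sqrt n/\gamma^{3/2})$ near $\Delta=\sqrt{n/\gamma}$. The paper's sketch charges each $\Resample$ either $O(n/\gamma)$ or $O(\Delta^2)$, and then sums using $\min\{n,\Delta^2\}$. So it does not supply an idea that removes the extra $1/\gamma$ you identified either.
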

\begin{proof}[Prof sketch]
We run~\Cref{alg:delta+1} without the decomposition and let $S$ be the set of all vertices of the graph, which, by the assumption in the corollary, are all $\gamma$-sparse. 
We also slightly modify $\SampleColor$ so it picks its colors randomly from $(1-\Theta(\gamma))\Delta$ instead of $(\Delta+1)$. 

The proof of \Cref{lem:prob-slack-sparse} continues to apply after this modification to provide a slack of $\Omega(\gamma \cdot \Delta)$ for vertices. For concreteness, let $c > 0$ be the hidden constant in this $\Omega$ notation. 
Since we have generated $c \cdot \gamma \cdot \Delta$ slack and maximum degree is $\Delta$, if let the number of available colors be $(1-(c/2) \cdot \gamma \cdot \Delta)$, then, it means even if we use different colors on all remaining 
neighbors of a vertex, it will still have $(c/2) \cdot \gamma \cdot \Delta$ available colors. Thus, running $\Attempt$ in $\Ot(\frac{n}{\gamma \Delta})$ time with high probability can color any vertex with such a slack. Alternatively, we can also go over 
all neighbors of a vertex and given the slack find an available color to color the vertex in $O(\Delta)$ time. 

We also need to handle the dynamic entropy compression argument. For that,~\Cref{lem:prob-slack-sparse} provides a probability of $2^{-\Omega(\gamma \Delta)}$ for a \Resample call to happen, 
which allows us to bound the total number of $\Resample$ calls over $t$ updates by $O(t/(\gamma \cdot \Delta))$ using exactly the previous arguments. Finally, each call to $\Resample$ can be implemented either in $O(n/\gamma)$ expected time as 
before, or in $O(\Delta^2)$ time by checking neighbors of each vertex directly to see if the color can be assigned (instead of checking the color class). 

All in all, these imply an algorithm that with high probability takes 
\[
	\Ot\paren{\frac{t}{\gamma \cdot \Delta} \cdot \min\set{n,\Delta^2}}
\]
time for handling $t$ updates (the high probability part is also achieved by~\Cref{prop:KQ-inequality} exactly as before). Balancing out for $\Delta=\sqrt{n}$, the amortized update time is $\Ot(\sqrt{n}/\gamma)$ as desired. 
\end{proof}

\subsection{Proof of~\Cref{lem:OH-is-out-sparse}}\label{sec:lem:OH-is-out-sparse}
We have the following result due to~\cite{FlinH25} for out-heavy almost-cliques (their result also applies to anti-heavy ones but for our purpose, this is not needed). 

\begin{lemma}[\!\!{\cite[Lemma 3.6]{FlinH25}}]\label{prop:almost-clique-to-sparse}
	Fix any sparse-dense decomposition and its classification. Any vertex in an out-heavy almost-clique $K$ is $\gamma$-sparse for 
	\[
	\gamma = \Omega(\eps^2) \cdot \frac{\avgout(K)}{\Delta}.  
	\]
\end{lemma}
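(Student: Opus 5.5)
Since this is \cite[Lemma 3.6]{FlinH25}, the cleanest route is to cite it. I would still sketch a self-contained counting argument, which the generalization in \Cref{lem:OH-is-out-sparse} also needs. Fix $v \in K$ with $K$ out-heavy, and write $N_K := N(v) \cap K$. The goal is to show that $G[N(v)]$ misses at least $\Omega(\eps^2)\cdot \avgout(K)\cdot \Delta$ edges relative to $\binom{\Delta}{2}$. Here $|N_K| \ge (1-\eps)\Delta$ and $|N(v)| = |N_K| + \out(v) \le \Delta$. I would bound the number of edges of $G[N(v)]$ by summing $\deg_{N(v)}(u)$ over $u \in N(v)$ and halving, and split the sum into $u \in N_K$ and $u \in \OUT(v)$.

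\textbf{Step 1 (size of $K$).} Out-heaviness gives $\avgantiDegree(K) \le 2\eta/\eps^2 \le \avgout(K)/2$. Averaging the identity $\deg(u) = |K| - \antiDegree(u) + \out(u) - 1 \le \Delta$ over $u \in K$ (exactly as in the proof of \Cref{lem:colorOH}) then yields $|K| \le \Delta + 1 - \avgout(K)/2$. Intuitively, $K$ is strictly smaller than a $(\Delta+1)$-clique by $\Omega(\avgout(K))$ vertices. Hence each $u \in K$ spends $\out(u)$ of its degree budget outside $K$ and has correspondingly fewer neighbors inside.

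\textbf{Step 2 (degrees inside $N(v)$).} For $u \in N_K$ I would use
\[
\deg_{N(v)}(u) \le \Delta - |\OUT(u) \setminus \OUT(v)|,
\]
since edges from $u$ to vertices outside $N(v)$ do not count. Summing over $u\in N_K$ therefore gives a deficit of at least $\sum_{u \in N_K} \out(u) - e(N_K, \OUT(v))$. To lower bound $\sum_{u\in N_K}\out(u)$, I start from $|K|\cdot \avgout(K)$ and remove the contribution of $K \setminus N(v)$. A crude bound for that contribution is $\antiDegree(v)\cdot \eps\Delta$. A better one comes from choosing which vertices to discard: removing the top $\antiDegree(v)$ out-degrees costs at most a constant fraction once $\antiDegree(v)$ is compared against $|K|$ via Markov. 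If $v$ is an outlier with large $\antiDegree(v)$, I would instead fall back on the anti-edges inside $N_K$, which are themselves missing edges of $G[N(v)]$.

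\textbf{Step 3 (the obstacle).} The expected main obstacle is the cross term $e(N_K,\OUT(v))$, which can be as large as $\out(v)\cdot \Delta$. It beats the deficit whenever $\out(v) \gg \avgout(K)$. This is exactly where the $\eps^2$ loss should enter. My first attempt would be to lower bound the missing edges between $N_K$ and $\OUT(v)$ directly. A vertex $w \in \OUT(v)$ is not in $K$, so by the almost-clique structure of the decomposition it cannot be adjacent to more than a $(1-\Omega(\eps))$ fraction of $K$; otherwise it would be absorbed into $K$, and this is the property guaranteed in \cite{BehnezhadRW25,FlinH25}. Then each $w\in\OUT(v)$ contributes $\Omega(\eps\Delta)$ missing edges to $G[N(v)]$, for $\Omega(\eps)\cdot\out(v)\cdot \Delta$ in total. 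Combining with Step 2 and using $\eps\ge\eps^2$, in the regime $\out(v) \ge \eps\cdot\avgout(K)$ we get $\Omega(\eps^2)\cdot \avgout(K)\cdot\Delta$ missing edges. In the complementary regime $\out(v) < \eps\cdot\avgout(K)$, the cross term is at most $\eps\cdot\avgout(K)\cdot\Delta$ and is dominated by the Step 2 deficit. In either case, dividing by $\Delta^2$ gives $\gamma = \Omega(\eps^2)\cdot \avgout(K)/\Delta$.
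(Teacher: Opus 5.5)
Citing \cite[Lemma 3.6]{FlinH25} is exactly what the paper does, since it gives no proof of its own. The self-contained sketch you add, however, does not go through, and it fails at the step you yourself flag.

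\textbf{Step 3.} It rests on the claim that every $w\in\OUT(v)$ is adjacent to at most a $(1-\Omega(\eps))$ fraction of $K$. That is not among the properties of the sparse-dense decomposition used here: the definition only constrains the sizes and internal degrees of the $K_i$, and requires vertices of $\Vsparse$ to be $\Omega(\eps^2)$-sparse. Even granting such a bound, it does not give $\Omega(\eps\Delta)$ non-neighbors of $w$ inside $N_K$. It only yields $|N_K\setminus N(w)|\ge c\eps|K|-\antiDegree(v)$, which is vacuous once $\antiDegree(v)$ is of order $\eps\Delta$, and that is allowed for outliers. Concretely, take $|N_K|\approx(1-\eps)\Delta$ and a vertex $w\in\Vsparse$ adjacent to all of $N_K\cup\{v\}$, whose other $\approx\eps\Delta$ neighbors lie outside $N(v)\cup\{v\}$ and are non-adjacent to $N_K$. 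Such a $w$ is even $\Omega(\eps)$-sparse, yet it creates no non-edge between $N_K$ and $\OUT(v)$. What it actually costs $G[N(v)]$ is its edges leaving $N(v)\cup\{v\}$, and that is never credited, because your accounting for $\OUT(v)$ only counts non-edges to $N_K$.

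\textbf{Step 2.} This step is also unsound. The out-degree mass $|K|\cdot\avgout(K)$ can sit on $\Theta(\avgout(K)/\eps)$ vertices, all inside $\ANTIDegree(v)$, so $\sum_{u\in N_K}\out(u)$ can be $0$; Markov gives nothing here. The fallback to anti-edges inside $N_K$ also fails, because all anti-edges of $K$ may be incident to $v$.

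\textbf{The missing idea.} The $\eps^2$ comes from the partition itself, not from an $\eps\cdot\eps$ product: every outside neighbor of $v$ must either be sparse or live in another almost-clique. Let $\mathrm{miss}(x):=\binom{\Delta}{2}-|E(G[N(x)])|$ and use the exact identity
\[
2\,\mathrm{miss}(v)=(\Delta-\deg(v))(\Delta-1)+\sum_{u\in N(v)}\bigl(\Delta-1-\deg_{N(v)}(u)\bigr),
\]
all of whose terms are nonnegative.
\begin{itemize}
\item Your Step 1 gives $\Delta-\deg(v)\ge\avgout(K)/2+\antiDegree(v)-\out(v)$. So if $\out(v)\le\avgout(K)/4$, the first term alone suffices. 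This degree deficit, not Step 2, is what handles your complementary regime, outliers included.
\item If $\out(v)>\avgout(K)/4$, take $w\in\OUT(v)$. If $w$ lies in another almost-clique $K'$, then $\deg_{N(v)}(w)\le|N(w)\setminus K'|+\out(v)\le2\eps\Delta$.
\item If instead $w\in\Vsparse$ is $c\eps^2$-sparse, set $\theta:=c\eps^2/2$. Either $\Delta-1-\deg_{N(v)}(w)\ge\theta\Delta$, or $|N(v)\setminus N(w)|<\theta\Delta+1$. In the latter case $\mathrm{miss}(v)\ge\mathrm{miss}(w)-(\theta\Delta+1)\Delta\ge c\eps^2\Delta^2/2-\Delta$, so $v$ is already sparse enough.
\item If no $w$ falls into that last case, every $w\in\OUT(v)$ contributes at least $\theta\Delta$ to the sum, since $(1-2\eps)\Delta-1\ge\theta\Delta$. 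This gives $2\,\mathrm{miss}(v)\ge\out(v)\cdot\theta\Delta=\Omega(\eps^2)\cdot\avgout(K)\cdot\Delta$.
\end{itemize}
This argument uses only the paper's definition of the decomposition plus $\avgantiDegree(K)\le\avgout(K)/2$.
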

Using this lemma, we can move to the proof of \Cref{lem:OH-is-out-sparse}.
\begin{lemma*}[Restatement of~\Cref{lem:OH-is-out-sparse}]
		Fix any sparse-dense decomposition and its classification. Any vertex $v$ in an out-heavy almost-clique is $\gamma$-sparse for
	\[
		\gamma = \Omega(\eps^2) \cdot \paren{\frac{\avgout(K) + \out(v)}{\Delta}}.
	\] 
\end{lemma*}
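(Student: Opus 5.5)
The plan is a two-case split on whether $\out(v)$ dominates $\avgout(K)$. In the first case, $\out(v) \leq \avgout(K)$. Then $\avgout(K)+\out(v) \leq 2\avgout(K)$, so \Cref{prop:almost-clique-to-sparse} already gives $\gamma$-sparsity with $\gamma = \Omega(\eps^2)\cdot \avgout(K)/\Delta = \Omega(\eps^2)\cdot(\avgout(K)+\out(v))/(2\Delta)$. The factor $2$ is absorbed into the $\Omega(\cdot)$. So the only real work is the case $\out(v) > \avgout(K)$, where I would prove directly that $v$ is $\Omega(\out(v)/\Delta)$-sparse. This is even stronger than needed, since $\eps<1$.

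For that case, I would count edges missing from $G[N(v)]$ using the vertices of $\OUT(v)$. Each $u \in \OUT(v)$ lies outside $K$, and $v$ has at least $(1-\eps)\Delta$ neighbors inside $K$. The key claim is that such a $u$ has only few neighbors in $N(v)\cap K$.

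\begin{itemize}
\item If $u$ is not in any almost-clique, or is in another almost-clique $K'$, then $u$ has at least $(1-\eps)\Delta$ neighbors inside $K'$, which is disjoint from $K$. Hence $\deg_K(u) \leq \Delta - (1-\eps)\Delta = \eps\Delta$.
\item If $u \in \Vsparse$, this degree bound is not automatic, and this is the step I expect to be the main obstacle. I would first try to use the maximality property of the decomposition as constructed in \cite{Reed98,AssadiCK19,BehnezhadRW25}: a vertex with more than $(1-O(\eps))\Delta$ neighbors in an almost-clique would have been absorbed into it, so every vertex outside $K$ has at most $O(\eps)\Delta$, or say $\Delta/2$, neighbors in $K$.
\item If that property is unavailable, I would instead argue via the averaged quantity. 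The edges between $K$ and $V\setminus K$ number $\card{K}\avgout(K)$, and any constant bound on $\deg_K(u)$ below $(1-\eps)\Delta - \Delta/2$ suffices.
\end{itemize}

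Granting $\deg_K(u) \leq \Delta/2$ for every $u \in \OUT(v)$, each such $u$ is non-adjacent to at least $(1-\eps)\Delta - \Delta/2 \geq \Delta/3$ vertices of $N(v)\cap K$. The pairs $(u,w)$ with $u\in\OUT(v)$ and $w\in N(v)\cap K$ are distinct anti-edges inside $N(v)$, because the two sides are disjoint. So $G[N(v)]$ misses at least $\out(v)\cdot\Delta/3$ edges relative to a clique on $\deg(v)$ vertices. Since $\deg(v)\le\Delta$, it has at most $\binom{\Delta}{2} - \out(v)\Delta/3$ edges. This means $v$ is $(\out(v)/(3\Delta))$-sparse, and $\out(v)/(3\Delta) \geq (\avgout(K)+\out(v))/(6\Delta)$ because $\out(v) > \avgout(K)$ in this case.

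Combining the two cases with the smaller constant $\Omega(\eps^2)$ gives the statement. The only delicate point is the bound on $\deg_K(u)$ for sparse $u$. If it cannot be obtained from the decomposition's construction, a fallback is to repeat the argument of \cite[Lemma 3.6]{FlinH25}, which bounds missing edges by the out-edges of $K$'s vertices, for the single vertex $v$ instead of on average.
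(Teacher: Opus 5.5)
Your Case 1 is fine. Case 2, however, rests on a per-vertex bound $\deg_K(u)\le\Delta/2$ for every $u\in\OUT(v)$, and for $u\in\Vsparse$ this bound is neither established nor true. The paper's decomposition only requires vertices of $\Vsparse$ to be $\Omega(\eps^2)$-sparse. It says nothing about how many neighbors an outside vertex may have in $K$. Take a vertex $u$ adjacent to $(1-\eps)\Delta$ vertices of $K$ and to $\eps\Delta$ further vertices that have no edges to the rest of $N(u)$. Its neighborhood has at most $\binom{(1-\eps)\Delta}{2}\le\binom{\Delta}{2}-\eps\Delta^2/2$ edges, so $u$ may legitimately sit in $\Vsparse$. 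Then any $v\in K\cap N(u)$ with slightly above-average $\out(v)$ falls into your Case 2 with a bad $u\in\OUT(v)$.

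Your appeal to a ``maximality property'' also fails on its own terms. Absorbing vertices with more than $(1-O(\eps))\Delta$ neighbors in $K$ only yields $\deg_K(u)\le(1-O(\eps))\Delta$, not $O(\eps)\Delta$ or $\Delta/2$. Your averaging fallback cannot give a bound on each individual $\deg_K(u)$ either, since averaging only controls a sum.

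The missing idea is to use that sum directly and to move your case threshold. Let $X:=N(v)\cap K$. Every edge between $X$ and $\OUT(v)$ contributes to $\out(w)$ for its endpoint $w\in X$. So the number of such edges is at most $\sum_{w\in K}\out(w)=|K|\,\avgout(K)\le(1+\eps)\Delta\,\avgout(K)$. Hence the number of anti-edges between $X$ and $\OUT(v)$, all of which lie inside $N(v)$, is at least $(1-\eps)\Delta\,\out(v)-(1+\eps)\Delta\,\avgout(K)$.

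With your threshold $\out(v)>\avgout(K)$, this quantity can be negative. If you split at $\out(v)>4\,\avgout(K)$ instead, it is at least $\Delta\,\out(v)/2$, which gives $\Omega(\out(v)/\Delta)$-sparsity. Case 1 then uses $\avgout(K)+\out(v)\le 5\,\avgout(K)$ together with Lemma 3.6 of Flin--Halld\'orsson.

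This is exactly the paper's argument, phrased there as a contradiction. If there were fewer than $\eps^2\out(v)\Delta$ such anti-edges, then $\avgout(K)\ge\frac{1-\eps-\eps^2}{1+\eps}\,\out(v)$, which contradicts $\out(v)>4\,\avgout(K)$.
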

\begin{proof}
	If $\out(v)\leq 4\cdot \avgout(K)$, then by \Cref{prop:almost-clique-to-sparse} we know $v$ is $\paren{\Omega(\eps^2) \cdot \frac{\avgout(K)}{\Delta}}$-sparse, and therefore it is $\gamma$-sparse as well.

	Assume $\out(v)>4\cdot \avgout(K)$.
	Let $X:=N(v)\cap K$ be the neighbors of $v$ in $K$. If there exists $\eps^2\cdot \out(v)\cdot\Delta$ anti-edges with one endpoint in $X$ and another in $\OUT(v)$, then $v$ has at most ${\Delta \choose 2}- \eps^2\cdot \out(v)\cdot\Delta$ edges in its neighborhood, and therefore $v$ is $\paren{\Omega(\eps^2)\cdot \frac{\out(v)}{\Delta}}$-sparse. So we assume there are less than $\eps^2\cdot \out(v)\cdot \Delta$ many anti-edges between $X$ and $\OUT(v)$. Then, at least $\abs{X}\cdot\out(v)-\eps^2\out(v)\cdot \Delta$ many edges exists from $X$ to $\OUT(v)$. Therefore, we get
	\begin{align*}
		\avgout(K) 
		& = \frac{\sum_{v\in V}\out(v)}{\abs{K}}\\
		&\geq \frac{\abs{X}\cdot\out(v)-\eps^2\out(v)\cdot \Delta}{\abs{K}}\tag{by assumption}\\
		&\geq \frac{(1-\eps)\Delta\cdot\out(v)-\eps^2\out(v)\cdot \Delta}{(1+\eps)\Delta}\tag{as $\abs{K}\leq (1+\eps)\Delta$ and $\abs{X}\geq (1-\eps)\Delta$ by definition of almost-cliques}\\
		&\geq \frac{(1-\eps-\eps^2)\cdot \out(v)}{(1+\eps)}\geq \frac{(1-\eps-\eps^2)}{(1+\eps)}\cdot 4\cdot \avgout(K),\tag{as $\out(v)>4\cdot \avgout(K)$}
	\end{align*}
	which is a contradiction as $\eps$ is sufficiently small. Therefore, $v$ is $\paren{\Omega(\eps^2)\cdot \frac{\out(v)}{\Delta}}$-sparse and therefore $\gamma$-sparse as desired.
\end{proof}

\subsection{Proof of~\Cref{lem:decomp}}\label{sec:lem:decomp}

\begin{lemma*}[Restatement of~\Cref{lem:decomp}]
    Let $\eps > 0$ be small and $\eta = o(\eps^2 \Delta)$. 
    There is a fully dynamic algorithm that with high probability maintains 
    a sparse-dense decomposition of $G$ for $\eps$ together with its $\eta$-classification. 
    The algorithm provides all guarantees of~\Cref{prop:maintaining-decomposition} and additionally the number of almost-cliques that change from one set of the $\eta$-classification to another set is amortized $O(\frac{1}{\eps^2 \eta \Delta})$ per update. 
\end{lemma*}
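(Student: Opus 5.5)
We build the classification on top of the decomposition of \Cref{prop:maintaining-decomposition} and maintain it lazily, with hysteresis. For every almost-clique $K$ we keep the current class label. Since the algorithm of \Cref{prop:maintaining-decomposition} already stores $\avgout(K)$ and $\avgantiDegree(K)$, we can check after each change to $\OUT(\cdot)$ or $\ANTIDegree(\cdot)$ inside $K$ whether $K$ still satisfies the defining condition of its current label in \Cref{def:classification}. We reassign $K$ only when that condition fails, and then we choose a class whose condition holds with margin. Concretely: put $K$ in $\KKah$ if $\avgantiDegree(K) \geq 2\eta/\eps^2$, else in $\KKoh$ if $\avgout(K) \geq 6\eta/\eps^2$, else in $\KKol$. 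A newly formed almost-clique receives a label by the same rule. Each reclassification takes $O(1)$ time, so every guarantee of \Cref{prop:maintaining-decomposition} is preserved.

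\textbf{Correctness.} First we check that this rule always lands $K$ in a class whose condition holds. Suppose the rule reaches $\KKoh$. Then $\avgantiDegree(K) < 2\eta/\eps^2$ and $\avgout(K) \geq 6\eta/\eps^2 \geq 4\eta/\eps^2$, so $K$ is out-heavy. Suppose it reaches $\KKol$. Then $\avgout(K) < 6\eta/\eps^2$, so $K$ is out-light. Anti-heavy is immediate. The gaps in the thresholds are chosen deliberately. After a fresh assignment, the tracked parameter is at distance at least $\eta/\eps^2$ from the boundary of the assigned class:
\begin{itemize}
\item $2\eta/\eps^2$ versus $\eta/\eps^2$ for anti-heavy;
\item $6\eta/\eps^2$ versus $4\eta/\eps^2$ for out-heavy;
\item below $6\eta/\eps^2$, while out-heavy requires $\leq 2\eta/\eps^2$ only on $\avgantiDegree$.
\end{itemize}
For $\KKol$, the label is abandoned only when $\avgout$ exceeds $6\eta/\eps^2$. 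A label chosen for $\KKol$ has $\avgout < 6\eta/\eps^2$, but it may lie right at the boundary. This needs care. I would therefore adjust the rule so that $\KKol$ is chosen only when $\avgout(K) < 5\eta/\eps^2$. When $\avgout(K)$ lies in $[5\eta/\eps^2, 6\eta/\eps^2)$ and $\avgantiDegree(K)<2\eta/\eps^2$, the clique is assigned to $\KKoh$, whose condition needs only $\avgout \geq 4\eta/\eps^2$. With this choice, every assigned label has slack $\Omega(\eta/\eps^2)$ in its own condition.

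\textbf{Amortization.} This is the main step. A class change of $K$ that is not caused by $K$ being created or destroyed requires $\avgout(K)$ or $\avgantiDegree(K)$ to drift by $\Omega(\eta/\eps^2)$ since the last assignment. Each unit change to some $\OUT(v)$ or $\ANTIDegree(v)$ with $v\in K$ moves these averages by at most $1/\card{K} = O(1/\Delta)$. Vertices joining or leaving $K$ are also logged as changes to these sets, and such an event shifts the averages by $O(\eps\Delta/\Delta)=O(\eps)$. Charging uniformly, each class change therefore consumes $\Omega(\eta\Delta/\eps^2)$ units of change to $\OUT/\ANTIDegree$ sets. By \Cref{prop:maintaining-decomposition} there are $O(1/\eps^4)$ such units per update, amortized. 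This gives $O(\eps^{-4}\cdot \eps^2/(\eta\Delta)) = O(1/(\eps^2\eta\Delta))$ class changes per update, as claimed.

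\textbf{Main obstacle.} The delicate part is making this charging rigorous when $\card{K}$ itself changes and when cliques form or collapse. For formation and collapse, a newly created clique is given its label at birth, and this does not count as a change \emph{between} sets. The same holds when a clique is destroyed. For membership changes, I would first bound the total drift of $\sum_{v\in K}\out(v)$ and $\sum_{v\in K}\antiDegree(v)$ using the change count of \Cref{prop:maintaining-decomposition}, noting that a vertex entering $K$ contributes its whole $\OUT$ and $\ANTIDegree$ lists as changes. I would then use $\card{K}=\Theta(\Delta)$ to convert this into drift of the averages, and use the slack of $\Omega(\eta/\eps^2)$ to conclude.
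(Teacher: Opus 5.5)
Your plan matches the paper's. You reclassify lazily with hysteresis, charge each class change to $\Omega(\eta\Delta/\eps^2)$ changes of the $\OUT/\ANTIDegree$ lists inside $K$, and divide by the $O(1/\eps^4)$ amortized rate of \Cref{prop:maintaining-decomposition}. Your fallback of tracking the sums $\sum_{v\in K}\antiDegree(v)$ and $\sum_{v\in K}\out(v)$, converted using $(1-\eps)\Delta\le\card{K}\le(1+\eps)\Delta$, is exactly how the paper absorbs membership changes.

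However, the amortization step rests on a false claim: that after your adjusted rule ``every assigned label has slack $\Omega(\eta/\eps^2)$ in its own condition.'' Out-heavy has two conditions, and your margins cover only the one on $\avgout$. Your rule places $K$ in $\KKoh$ whenever $\avgantiDegree(K)<2\eta/\eps^2$ (and $\avgout(K)\ge 5\eta/\eps^2$). But $K$ leaves $\KKoh$ as soon as $\avgantiDegree(K)>2\eta/\eps^2$. If $K$ is assigned with $\avgantiDegree(K)$ just below $2\eta/\eps^2$, a single update can force a class change with essentially no drift. So your key sentence, that a class change ``requires $\avgout(K)$ or $\avgantiDegree(K)$ to drift by $\Omega(\eta/\eps^2)$ since the last assignment,'' does not hold for this exit.

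The repair is easy. Choose $\KKah$ already when $\avgantiDegree(K)\ge \frac32\,\eta/\eps^2$, which still certifies anti-heavy. Otherwise keep your rule: $\KKoh$ if $\avgout(K)\ge 5\eta/\eps^2$, else $\KKol$. Now every exit condition of every label has slack at least $\eta/(2\eps^2)$. For small $\eps$ this slack survives the $(1\pm\eps)$ fluctuation of $\card{K}$ in the sum-based accounting, and your charging goes through unchanged.

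For comparison, the paper's stated rule has the same zero-slack exit from $\KKoh$ on the anti side. It also has zero-slack exits from $\KKol$: it assigns $\KKol$ whenever $\avgout(K)<6\eta/\eps^2$ and $\avgantiDegree(K)<2\eta/\eps^2$, and removes $K$ as soon as either threshold is crossed. The paper only asserts that ``a similar analysis'' handles these. Your $5\eta/\eps^2$ threshold does fix the out-side $\KKol$ exit, but you missed the remaining anti-side case for $\KKoh$.
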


\begin{proof}
	The algorithm in~\Cref{lem:decomp} runs~\Cref{prop:maintaining-decomposition} to maintain the decomposition and thus satisfies the guarantees of~\Cref{prop:maintaining-decomposition}. 
	In addition, it maintains an $\eta$-classification $\KKah \sqcup \KKoh \sqcup \KKol$ using the following procedures. 
	
	When an almost-clique $K$ is \emph{formed}, the algorithm adds it to one of the classes using the following criteria: 
	\begin{itemize}
		\item to $\KKah$ if $\avgantiDegree(K) \geq 2\eta/\eps^2$; 
		\item to $\KKoh$ if $\avgantiDegree(K) < 2\eta/\eps^2$ and $\avgout(K) \geq 6\eta/\eps^2$; 
		\item to $\KKol$ otherwise. 
	\end{itemize}
	When an almost-clique $K$ \emph{collapses}, it is deleted from its class. Finally, after each update, the algorithm \emph{re-evaluates} the class of each almost-clique $K$ and removes it from its current list if: 
	\begin{itemize}
		\item from $\KKah$ if $\avgantiDegree(K) < \eta/\eps^2$; 
		\item from $\KKoh$ if $\avgantiDegree(K) \geq 2\eta/\eps^2$ or $\avgout(K) < 4\eta/\eps^2$; 
		\item from $\KKol$ if $\avgantiDegree(K) \geq 2\eta/\eps^2$ or $\avgout(K) > 6\eta/\eps^2$.
	\end{itemize}
	A removed almost-clique $K$ is then added to another class using the same criteria mentioned earlier as if it was just formed. 
	
	Checking that $\KKah \sqcup \KKoh \sqcup \KKol$ is an $\eta$-classification is now straightforward (through a somewhat tedious verification of each case) and we omit the proof. We now bound the amortized number of change in classifications of almost-cliques. 
	
	Consider an almost-clique $K$ that is just added to $\KKah$. We know $\avgantiDegree(K) \geq 2\eta/\eps^2$ which means it has at least $(1-\eps) \cdot {\eta\Delta}/\eps^2$ anti-edges inside it (by the lower bound on size 
	of $K$ in the decomposition). For $K$ to be removed from $\KKah$ later, its $\avgantiDegree(K)$ needs to go below $\eta/\eps^2$, which requires its number of anti-edges to become at most $(1+\eps)\cdot \eta\Delta/\eps^2$ 
	(by the upper bound on the size of $K$). This means at least $\Omega(\eta\Delta/\eps^2)$ changes to $\ANTIDegree$-sets is needed before $K$ needs another change of class. Since amortized number of changes to $\ANTIDegree$-sets 
	by~\Cref{prop:maintaining-decomposition} is $O(1/\eps^4)$ per update and each change contributes to exactly one almost-clique, the almost-clique $K$ needs $\Omega(\eps^2 \cdot \eta\Delta)$ dedicated updates before it changes class. In other words, 
	the amortized number of times $K$ moves in and out of $\KKah$ is $O(1/(\eps^2\eta\Delta))$ per update. 
	
	A similar analysis can be used to bound the changes between any two other classes as well (by additionally noting that an update to an edge, instead of an anti-edge, can also only affect two almost-cliques at most). 
	Since the overhead of maintenance of the $\eta$-decomposition is just a constant factor over the decomposition, we can conclude the proof. 
\end{proof}

\subsection{Proof of \Cref{prop:KQ-inequality}}\label{app:KQ-inequality}

Different forms of the following proposition appear in the literature, see, e.g.,~\cite{freedman1975tail,ChekuriQ18,BenderFGKY24,kuszmaul2025}; we use the 
following formulation which is an immediate extension of a result of~\cite{kuszmaul2025} that proves this for $\gamma=0$. 

\begin{proposition*}[{\cite{kuszmaul2025}}]
	Suppose Alice constructs a sequence of random variables $X_1, \ldots, X_n$, with $X_i\in [0, c]$, where $c>0$, using the following iterative process. For each $i\in [n]$, once the outcome of $X_1, \ldots, X_{i-1}$ are determined, Alice then selects the probability distribution $\mathcal{D}_i$ from which $X_i$ will be drawn; $X_i$ is then drawn from $\mathcal{D}_i$. Alice is an adaptive adversary in that she can adapt $\mathcal{D}_i$ to the outcome of $X_1, \ldots, X_{i-1}$. The only constraint on Alice is that
	if we define $a_i := \expect{X_i\mid \mathcal{D}_i}$ and $A = \sum_i a_i$, then, for some $\gamma \in (0,1)$ and $\mu > 0$, $\Pr\paren{A > \mu} \leq \gamma$; 
	that is, the sum of the means of the probability distributions $\mathcal{D}_1, \ldots, \mathcal{D}_n$ must be at most $\mu$ with probability $1-\gamma$. If $X=\sum_i X_i$, then for any $\delta>0$, we have
	\[
	\Pr\paren{X\geq (1+\delta)\mu}\leq \gamma + \exp\paren{-\frac{\delta^2\mu}{(2+\delta)c}}.
	\]
\end{proposition*}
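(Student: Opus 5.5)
The plan is to reduce to the $\gamma=0$ version of~\cite{kuszmaul2025} by a truncation argument. Define a modified process in which Alice is forced to stay within budget. Let $\tau$ be the first index $i$ at which the partial sum of means exceeds the budget, that is, $\sum_{j\le i} a_j > \mu$. Note that $a_i$ is determined by $\mathcal{D}_i$, which Alice chooses before $X_i$ is drawn, so whether $i=\tau$ is known at the moment $\mathcal{D}_i$ is selected.

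The truncated process is built as follows.
\begin{itemize}
\item For $i<\tau$, set $X'_i := X_i$, drawn from $\mathcal{D}_i$.
\item For $i\ge \tau$, Alice instead selects the point mass at $0$, so $X'_i=0$.
\end{itemize}
This is still a legitimate adaptive process: the decision to switch to the zero distribution at step $i$ depends only on $X_1,\dots,X_{i-1}$ and on Alice's proposed $\mathcal{D}_i$. Every $X'_i$ lies in $[0,c]$. Moreover, the sum of means of the modified process is $\sum_{i<\tau} a_i \le \mu$ deterministically.

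Applying the $\gamma=0$ result of~\cite{kuszmaul2025} to $X'=\sum_i X'_i$ gives
\[
\Pr\paren{X'\ge(1+\delta)\mu}\le\exp\paren{-\frac{\delta^2\mu}{(2+\delta)c}}.
\]

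Couple the two processes on the same randomness. On the event $\{A\le\mu\}$ the stopping time $\tau$ never occurs, so $X'_i=X_i$ for every $i$ and hence $X=X'$. Therefore
\[
\Pr\paren{X\ge(1+\delta)\mu}\le \Pr(A>\mu)+\Pr\paren{X'\ge(1+\delta)\mu}\le\gamma+\exp\paren{-\frac{\delta^2\mu}{(2+\delta)c}}.
\]

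The main obstacle is the measurability point: the truncation must be a valid strategy for Alice in the $\gamma=0$ theorem. This holds because $a_i=\expect{X_i\mid\mathcal{D}_i}$ is a function of the chosen distribution and not of the realized $X_i$, so the switch is predictable. If the cited $\gamma=0$ statement requires the constraint $A\le\mu$ rather than $A<\mu$, the strict versus non-strict mismatch is harmless, because $\tau$ is defined by strict exceedance.
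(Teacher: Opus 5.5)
Your proposal is correct and follows the paper's proof essentially verbatim. The paper also builds a truncated adaptive process $Y_i$ that copies $X_i$ until the running sum of means exceeds $\mu$ and is zero from then on. It then applies the $\gamma=0$ case of~\cite{kuszmaul2025} to this process and takes a union bound with $\Pr(X\neq Y)\le\Pr(A>\mu)\le\gamma$. Your choice to also zero out the step $\tau$ where the exceedance first occurs, which is legitimate because $a_\tau$ is fixed by $\mathcal{D}_\tau$ before $X_\tau$ is drawn, is what guarantees the truncated means sum to at most $\mu$; the paper's phrasing ``$A$ defined so far'' leaves this detail implicit.
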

\begin{proof}
The proof is by a reduction to the $\gamma=0$ case proven in~\cite[Corollary 11]{kuszmaul2025}. 

Define the following sequence of variables $Y_1,\ldots,Y_n$. Follow Alice's selection of $X_1,\ldots,X_n$ with $Y_1=X_1$ and for each $i$, check if $A$ defined so far by Alice has $A > \mu$ or not; if no, continue setting $Y_i=X_i$ but if yes, 
set $Y_i=Y_{i+1}= \ldots = Y_n = 0$. This is yet another adaptive adversary and the variables $Y_1,\ldots,Y_n$ satisfy the guarantee that 
\[
	\sum_{i=1}^{n} \expect{Y_i\mid \mathcal{D}_i} \leq \mu, 
\]
always. Thus, by the $\gamma=0$ case applied to $Y_1,\ldots,Y_n$, for $Y= \sum_i Y_i$, we have, 
\[
	\Pr\paren{Y \geq (1+\delta)\mu}\leq  \exp\paren{-\frac{\delta^2\mu}{(2+\delta)c}}.
\]
At the same time, we have, 
\[
	\Pr\paren{X \neq Y} = \Pr\paren{A > \mu} \leq \gamma,
\]
by the assumption in the statement of the lemma. By union bound, we have, 
\[
	\Pr\paren{X\geq (1+\delta)\mu} \leq \Pr\paren{X \neq Y} + \Pr\paren{Y \geq (1+\delta)\mu} \leq \gamma + \exp\paren{-\frac{\delta^2\mu}{(2+\delta)c}},
\]
 concluding the proof. 
\end{proof}

\subsection{Proof of \Cref{lem:whp-runtime-other-subroutines}}\label{sec:lem:whp-runtime-other-subroutines}

\begin{lemma*}[Restatement of~\Cref{lem:whp-runtime-other-subroutines}]
    Throughout running \Cref{alg:delta+1} over $t$ updates, with high probability, the time spent and number of uncolored vertices generated by subroutine: 
    \begin{enumerate}
		\item\label{item-attempt} \ColorSparse is $O(t\cdot n\cdot \log^3 n /\eta)$ time and $O(t \cdot \log^2{n})$ vertices; 
		\item\label{item-sampleColor} \SampleColor is $O(t \cdot n\cdot \log n/\Delta)$ time and $O(t\log{n})$ vertices;
        \item\label{item-expand-matching}\ExpandMatching is $O(t\cdot \log^2 n \cdot n/\Delta)$ time and $O(t\log^2{n})$ vertices;
        \item\label{item-dense} \ColorDense is $O(t\cdot \log^2 n\cdot (n/\Delta+\eta))$ time and zero vertices. 
    \end{enumerate}
\end{lemma*}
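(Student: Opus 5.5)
The plan is to reuse, for each of items (\ref{item-sampleColor})--(\ref{item-dense}), the template already used for item (\ref{item-attempt}). There are three steps: (i) bound the \emph{number} of invocations of the subroutine over $t$ updates with high probability; (ii) use its per-call expected time and recourse, together with the deterministic worst-case bound $O(m)$ from \Cref{lem:worstcase-subroutines}, or $\min\set{m,n}$ for recourse; (iii) apply \Cref{prop:KQ-inequality} with $c=m$ or $c=\min\set{m,n}$, $\delta=1$, $\gamma=n^{-\Omega(1)}$, and $\mu$ equal to (number of calls)$\times$(per-call expectation). The exponential term is then $\exp(-\Omega(\mu/m))$. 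It is negligible because $m\le t$ and each $\mu$ carries a factor $t\log n$ times something $\ge 1$. The distributions $\mathcal{D}_i$ are chosen adaptively by the adversary and the algorithm's history, which is exactly the setting \Cref{prop:KQ-inequality} allows.

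\textbf{SampleColor.} Calls occur in three places: at most two per update (Line~\eqref{line:F}); one per vertex moving from $D$ to $S$ (Line~\eqref{line:decomposition}), which is $O(t\log n)$ by \Cref{lem:decomp}; and inside \Resample. I would count the first two sources against the stated bound, giving $O(t\log n)$ calls. Multiplying by the $O(n/\Delta)$ expected time and $O(1)$ expected recourse from \Lref{lem:SampleColor} yields the stated bounds. The calls inside \Resample were already charged in \Cref{lem:runtime-UpdateSparseColor} and \Cref{lem:recourse-updatesparse-whp}.

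\textbf{ExpandMatching.} By \Cref{lem:anti-matching}, each call costs $O(b\cdot n/\Delta)$ expected time and $O(b)$ recourse, where $b$ is the number of anti-edges added. I would bound the total $\sum b$ by the total increase of $\floor{8\avgantiDegree(K)}$, plus the matching sizes required whenever a clique is (re)classified into $\KKah\cup\KKol$ or a matched pair is destroyed. Each unit of increase in $\avgantiDegree(K)$ needs $\Omega(\Delta)$ changes to $\ANTIDegree$-sets, and \Cref{prop:maintaining-decomposition} gives $O(t/\eps^4)$ such changes in total. A new classification costs at most $O(\eps\Delta)$ anti-edges, and \Cref{lem:decomp} makes this $O(t/\eta)$ in total. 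A matched pair can be destroyed only by an uncoloring (by \SampleColor, \ColorSparse, \UpdateSparseColor, or Line~\eqref{line:F}) or by a decomposition change, and these are $O(t\log^2 n)$ with high probability by item~(\ref{item-attempt}), \Cref{lem:recourse-updatesparse-whp}, and the \SampleColor bound. Hence $\sum b=O(t\log^2 n)$ with high probability, and \Cref{prop:KQ-inequality} gives the claim.

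\textbf{ColorDense.} It is called once per blank denser vertex. These vertices arise from Line~\eqref{line:F}, from decomposition moves, from \ColorSparse and \SampleColor uncolorings, and from \ExpandMatching, for a total of $O(t\log^2 n)$ with high probability by the previous items. Recursive calls add at most one nested call per invocation and create no new blanks, so they contribute a constant factor. Multiplying by the expected time from \Cref{lem:dense} and applying \Cref{prop:KQ-inequality} finishes this item; the recourse is zero by \Cref{lem:dense}.

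\textbf{Main obstacle.} I expect the hardest step to be the charging of $\sum b$ in \ExpandMatching, namely showing that destruction of matching edges is always caused by a counted uncoloring or decomposition event. I would try to attribute every drop in $\card{M_K}$ to one such event.
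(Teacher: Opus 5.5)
Your proof uses the paper's template: bound the number of invocations with high probability, multiply by the per-call expectation from the corresponding lemma, and apply \Cref{prop:KQ-inequality} with the worst-case cost as $c$. Your \SampleColor and \ColorDense items are essentially the paper's sketch.

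You diverge genuinely only in \ExpandMatching. The paper bounds the number of unit expansions $s_m$ solely by the number of vertices uncolored by \UpdateSparseColor, \ColorSparse and \SampleColor. It thus implicitly assumes every violation of \Cref{inv:dense} comes from such an uncoloring. You instead charge $\sum b$ to three sources: upward movement of the target $\floor{8\avgantiDegree(K)}$ (via \Cref{prop:maintaining-decomposition}), fresh matchings for cliques entering $\KKah\cup\KKol$ (via \Cref{lem:decomp}), and destroyed matched pairs. Your accounting is the more complete one, because the target can grow without any of the paper's listed uncolorings. An edge deletion inside $K$ raises $\avgantiDegree(K)$. A clique moving over from $\KKoh$ is blanked in Line~\eqref{line:decomposition} and starts with an empty matching. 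Your charging shows these extra sources cost only $O(t/\eps^4+t\log n)$, so the same $O(t\log^2 n)$ bound on $\sum b$ follows.

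A few small repairs are needed.
\begin{enumerate}
\item ``$\Omega(\Delta)$ $\ANTIDegree$-changes per unit of $\avgantiDegree(K)$'' controls only net growth. The floor can oscillate and $\card{K}$ also changes, so argue instead that a single $\ANTIDegree$-change or vertex move shifts $8\avgantiDegree(K)$ by less than $1$. Then the total upward movement of the target is at most the number of such events, which is still $O(t/\eps^4+t\log n)$.
\item Newly formed cliques are not counted by the class-change bound of \Cref{lem:decomp}. Charge their initial matching to the $\card{K}\cdot\avgantiDegree(K)$ $\ANTIDegree$-entries created at formation.
\item Your list of sources of blank denser vertices for \ColorDense should include the uncolorings by \UpdateSparseColor (\Cref{lem:recourse-updatesparse-whp}).
\item \Cref{lem:worstcase-subroutines} does not cover \SampleColor. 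Add the same fallback used for \ColorSparse (scan $N(v)$ directly) to get $c=O(m)$ there.
\item \Cref{lem:dense} gives $O(n/\Delta+n/\eta+\eta)$ per call, so your product yields the main-text form of item~(4). The restated form without $n/\eta$ requires $\eta=\Omega(\sqrt{n})$, which holds for the paper's choice $\eta=\sqrt{n}\log n$. The paper's appendix sketch has the same slip.
\end{enumerate}
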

\begin{proof}
We have already proved part (\ref{item-attempt}) in the main body of the paper. The remaining proofs 
are quite similar and we sketch them here for completeness. Recall that $m$ denotes the \emph{largest} number of edges at any point during the $t$ updates. Clearly, we have $m \leq t$. 

\paragraph{Proof of (\ref{item-sampleColor}):} Let $s$ be the number of times \SampleColor is called directly in \Cref{alg:delta+1} (here we do not consider the calls to \SampleColor invoked inside other subroutines as these calls are considered in the runtime of the corresponding subroutine). For each $i\in [s]$ redefine $X_i$ to be the random variable showing the runtime spent on the $i$-th call to \SampleColor and redefine $Y_i$ to be the number of uncolored vertices caused by this call to \SampleColor.
	
	By \Cref{lem:SampleColor} for each $i\in [s]$ we have $\expect{X_i}=O(n/\Delta)$ and $\expect{Y_i}=O(1)$. Moreover, the number of vertices that move from $D$ to $S$ in the decomposition per update is with high probability $O(\log n)$ by \Cref{lem:decomp}. The other calls to \SampleColor are caused directly by the updates which are at most $O(t)$ many. So with high probability $\expect{X}=O(t\cdot n\cdot \log n/\Delta)$ and $\expect{Y}=O(t\cdot \log n)$. Therefore, by \Cref{prop:KQ-inequality} and using the same argument as Part (\ref{item-attempt}), with high probability the time spent on \SampleColor is $O(t\cdot n\cdot \log n/\Delta)$ and the number of uncolored vertices caused by it is at most $O(t\log n)$.

\paragraph{Proof of (\ref{item-expand-matching}):}
	Assume the total number of times \ExpandMatching is used to increase the size of some anti-matching is $s_m$. For each $i\in[s_m]$ redefine $X_i$ be the random variable showing the time spent in \ExpandMatching for increasing the size of some anti-matching $M_K$ by one. Also, redefine $Y_i$ to show the number of uncolored vertices caused by the same process. 
	
    
    In \Cref{lem:anti-matching} we saw that for each $i\in[s_m]$ we have $\expect{X_i}=O(n/\Delta)$ and $\expect{Y_i}=O(1)$. By \Cref{lem:worstcase-subroutines} we know that $X_i\leq m$, where $m$ is the largest number of edges the graph at a single point in time has over all updates. Also, as there are $n$ vertices in the graph $Y_i\leq n$ trivially holds.
	
	Let $X=\sum_{i\in[s_m]}X_i$ and $Y=\sum_{i\in[s_m]}Y_i$. By linearity of expectation we have $\expect{X} = O(s_m\cdot n/\Delta)$ and $\expect{Y}=O(s_m)$.

	The uncolored vertices that cause a violation in \Cref{inv:dense} and therefore a call to the subroutine \ExpandMatching are caused by \UpdateSparseColor, \ColorSparse, or \SampleColor. By the previous two parts, with high probability \SampleColor causes $O(t\log n)$ many uncolored vertices and \ColorSparse causes $O(t\cdot \log^2 n)$ many uncolored vertices over all updates. By \Cref{lem:recourse-updatesparse-whp} we know that \UpdateSparseColor causes $O(t\cdot \log^2 n)$ many uncolored vertices over all updates with high probability. Therefore, the total number of uncolored vertices that might cause a call to \ExpandMatching and consequently $s_m$ is bounded by $O(t\cdot \log^2 n)$ vertices.

	Using \Cref{prop:KQ-inequality} with the same argument above we get that with high probabilities the total runtime spent on \ExpandMatching is $O(t\cdot n \log n/\Delta)$ and the total number of uncolored vertices it causes over all updates is at most $O(t\cdot \log^2 n)$.
    
\paragraph{Proof of (\ref{item-dense}):} Finally, let $s_d$ be the number of times $\ColorDense$ is called throughout the algorithm. Redefine $X_i$ as the time spent on the $i$-th call to \ColorDense. By \Cref{lem:dense} we know that $\expect{X_i}=O(n/\Delta+\eta)$ for each $i\in [s_d]$. Moreover, by \Cref{lem:worstcase-subroutines} we have $X_i\leq m$. Again, we know that \ColorDense is called when there is an uncolored (denser) vertex, which is caused by one of the subroutines \UpdateSparseColor, \SampleColor, \ColorSparse, or \ExpandMatching.
    
	By \Cref{lem:recourse-updatesparse-whp} and the previous parts of the proof, the number of uncolored vertices left by these subroutines over all updates is $O(t\cdot \log^2 n)$ with high probability. Thus, $\expect{X}\leq s_d \cdot O(n/\Delta+\eta)$, where 
	with high probability $s_d= O(t\cdot \log^2 n)$. So using \Cref{prop:KQ-inequality} and the same argument as above, the total time spent on \ColorDense is $O(t\cdot \log^2 n\cdot \paren{n/\Delta+\eta})$ with high probability. Finally, recall that $\ColorDense$ 
	does not uncolor any vertex. This concludes the proof. 
\end{proof}

\subsection{Proof of \Cref{lem:prob-flaw-sparse}}\label{sec:lem:prob-flaw-sparse}

\begin{lemma*}
    Fix a vertex $v\in S$ along with a set $B(u) \subseteq [\Delta+1]$ and a time $t_u$ associated to each vertex $u\in N_S(v)$. Suppose for every $u\in N(v)$, with probability $\delta$, we sample a color $c(u)$ and assign it to $u$ iff: $(1)$ the color $c(u) \notin B(u)$, and (2) no neighbor $w\in N_S(v)\cap N_S(u)$ with $t_w<t_u$ or with lower lexicographical order while $t_w=t_u$ has also sampled this color. All other
	vertices are marked blank. Let $\chi$ be the partial coloring function obtained this way. 
    Then,  
    \[
    \Pr(\Ncolor{v}{\chi}> 2\delta\Delta)\leq 2^{-\frac{\delta\cdot \Delta}{3}}.
    \]
\end{lemma*}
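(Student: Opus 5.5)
The plan is to dominate $\Ncolor{v}{\chi}$ by a sum of independent Bernoulli variables and apply the Chernoff bound of \Cref{prop:chernoff}. The key observation is that a neighbor $u \in N(v)$ can be colored in $\chi$ only if it was selected to sample a color in the first place. This event has probability exactly $\delta$, and it is determined solely by the first coordinate of $u$'s own random entry. Whether a colored $u$ actually keeps its color depends on $B(u)$, on the ordering by $t_u$, and on the samples of other neighbors. However, all of these can only \emph{remove} colors, never add them. So we have the pointwise inequality
\[
\Ncolor{v}{\chi} \;\leq\; X := \sum_{u \in N(v)} X_u,
\]
where $X_u$ is the indicator that $u$ was selected to sample a color.

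Next we check independence. The selection bits $X_u$ for distinct $u$ are independent, since each vertex uses its own random coin, and the blocked sets $B(u)$ and times $t_u$ are fixed in advance in the statement. Thus $X$ is a sum of at most $\Delta$ independent $\{0,1\}$ variables with
\[
\expect{X} \;=\; \delta\cdot \deg(v) \;\leq\; \delta\Delta.
\]
Vertices in $N(v)\setminus N_S(v)$ are covered by the same bound, since they are also only colored with probability at most $\delta$ (or not at all).

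We then apply the upper-tail bound of \Cref{prop:chernoff} with $\mu_{max} = \delta\Delta$, $b = 1$, and deviation parameter equal to $1$. This is a boundary case of the stated range $\delta\in(0,1)$, but the multiplicative Chernoff bound $\exp(-\mu/3)$ for doubling holds at deviation parameter $1$ as well, or one can take it as $1-o(1)$ and absorb the loss. This gives
\[
\Pr\paren{X > 2\delta\Delta} \;\leq\; \exp\paren{-\delta\Delta/3} \;\leq\; 2^{-\delta\Delta/3},
\]
using $e^{-x}\leq 2^{-x}$. Combined with $\Ncolor{v}{\chi}\leq X$, this yields the claim.

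The only step needing care is the monotone domination argument, namely that the sequential, order-dependent assignment rule never colors a vertex that did not sample. This is immediate from the description of the process, so I expect no real obstacle. In particular, no conditioning on the other vertices' colors is needed, because we discard all of that information and keep only the independent selection coins.
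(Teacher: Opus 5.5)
Your proposal is correct and follows the paper's argument: bound the number of colored neighbors of $v$ under $\chi$ pointwise by the number of neighbors that choose to sample a color, which is a sum of independent indicators with mean at most $\delta\Delta$, and then apply the multiplicative Chernoff bound with $\mu_{max}=\delta\Delta$ and deviation parameter $1$. The paper applies that bound at deviation parameter $1$ without comment, even though \Cref{prop:chernoff} is stated only for parameters in $(0,1)$; your remark that it extends to $1$ (for instance by letting the parameter tend to $1$) is a useful extra bit of care.
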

\begin{proof}
    For each $i\in [\Delta]$, let $X_i$ denote the indicator random variable that is 1 iff the $i$-th neighbor of $v$ exists and is in $S$ and has sampled a color in the described process. Let $X=\sum_{i\in[\Delta]} X_i$. As each vertex $u$ samples a color with probability $\delta$, we have
    \[
    \expect{X}=\sum_{i\in[\Delta]}\Pr(X_i=1)\leq \delta \cdot \Delta.
    \]
    Then by Chernoff bound (\Cref{prop:chernoff}), 
    \begin{align*}
        \Pr\paren{X > 2\delta \cdot \Delta}
        \leq \exp\paren{-\frac{\delta\Delta}{3}}
        \tag{as $\expect{X}\leq \delta \cdot \Delta$}\leq 2^{-\delta\cdot \Delta/3}.
    \end{align*}
    Moreover, any vertex in $N_S(v)$ that keeps a color has first sampled one, hence, $X$ is at least number of vertices in $N_S(v)$ that are colored, namely, $\Ncolor{v}{\chi}$, concluding the proof.
\end{proof}

\subsection{Proof of \Cref{lem:prob-slack-sparse}}\label{sec:lem:prob-slack-sparse}

Here we present the proof of \Cref{lem:prob-slack-sparse}. The proof closely follows the approach in \cite{MolloyR97,MolloyR02}.
Before moving on to the proof we recall the Talagrand's concentration inequality.

\paragraph{Talagrand's inequality:}
A function $f(x_1, \ldots, x_n)$ is called \textbf{$\bm{c}$-Lipschitz} for some $c \geq 0$ iff changing any $x_i$
can affect the value of $f$ by at most $c$. Additionally, $f$ is called \textbf{$\bm r$-certifiable} iﬀ whenever $f(x_1,\ldots , x_n)\geq s$ there are at most $r\cdot s$ variables $x_{i_1},\ldots , x_{i_{r\cdot s}}$ so that knowing the values of these variables certifies $f \geq s$.

\begin{proposition}[Talagrand's inequality; cf.~\cite{MolloyR02}]\label{prop:talagrand}
    Let $X_1, \ldots, X_n$ be $n$ independent random variables and $f(X_1, \ldots, X_n)$ be a $c$-Lipschitz and $r$-certifiable function. For any $t\geq 1$,
    \[
    \Pr\paren{\abs{f-\expect{f}}>t+ 30c\sqrt{r\cdot \expect{f}}}\leq \exp\paren{-\frac{t^2}{8c^2r\expect{f}}}.
    \]
\end{proposition}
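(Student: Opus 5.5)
This is the standard form of Talagrand's inequality for Lipschitz, certifiable functions from~\cite{MolloyR02}; the paper only cites it. If a proof were required, I would derive it from Talagrand's \emph{convex distance} inequality on product spaces. That inequality states that for independent $X_1,\ldots,X_n$ and any measurable set $A$ of outcomes,
\[
\Pr(A)\cdot \Pr\paren{d_T(X,A)\geq s}\leq e^{-s^2/4},
\]
where
\[
d_T(x,A)=\sup_{\|\alpha\|_2=1,\,\alpha\geq 0}\ \inf_{y\in A}\sum_{i:x_i\neq y_i}\alpha_i .
\]
I would take this inequality as the black box.

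\textbf{Step 1 (two-level inequality).} Fix reals $a<b$ and let $A:=\set{y : f(y)\leq a}$. Take any $x$ with $f(x)\geq b$. By $r$-certifiability there is an index set $I$ with $\card{I}\leq rb$ such that the values $x_I$ certify $f\geq b$. Choose $\alpha_i=1/\sqrt{\card{I}}$ on $I$ and $0$ elsewhere.

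For any $y\in A$, let $y'$ agree with $x$ on $I$ and with $y$ off $I$. Then $f(y')\geq b$. Moving from $y'$ back to $y$ changes only the coordinates of $I$ where $x_i\neq y_i$, and each such change moves $f$ by at most $c$. Hence $\card{\set{i\in I: x_i\neq y_i}}\geq (b-a)/c$, which gives
\[
d_T(x,A)\geq \frac{b-a}{c\sqrt{rb}}.
\]
Plugging this into the convex distance inequality yields
\[
\Pr(f\leq a)\cdot\Pr(f\geq b)\leq \exp\paren{-\frac{(b-a)^2}{4c^2 r b}}.
\]

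\textbf{Step 2 (median concentration).} Let $M$ be a median of $f$. Applying Step 1 with $(a,b)=(M,M+s)$ gives the upper tail $\Pr(f\geq M+s)\leq 2\exp(-s^2/(4c^2r(M+s)))$. Applying it with $(a,b)=(M-s,M)$ gives the lower tail $\Pr(f\leq M-s)\leq 2\exp(-s^2/(4c^2rM))$.

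\textbf{Step 3 (median to mean).} Integrate these tails to show $\card{\expect{f}-M}\leq O(c\sqrt{r\,\expect{f}})$, with an explicit constant that fits inside the $30c\sqrt{r\expect{f}}$ term. The lower tail is sub-Gaussian with variance proxy $\approx c^2rM$, so it contributes $O(c\sqrt{rM})$. The upper tail has a mixed sub-Gaussian/sub-exponential shape, so it contributes $O(c\sqrt{rM}+cr)$. To conclude, use $f\geq 0$ (certifiable functions are nonnegative in our applications) and the fact that $M\leq 2\expect{f}$ by Markov's inequality. The stray $cr$ term has to be absorbed using $t\geq 1$ or a mild condition on $\expect{f}$.

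\textbf{Step 4 (final form).} Substitute $s=t+(\text{shift})$ into the tails of Step 2 so that the event $\card{f-\expect{f}}>t+30c\sqrt{r\expect{f}}$ is contained in $\card{f-M}>t+\Omega(c\sqrt{r\expect{f}})$. Then simplify the exponent to $t^2/(8c^2r\expect{f})$.

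\textbf{Main obstacle.} The hard part is the constant bookkeeping in Steps 3--4. The upper tail has $b=M+t$ in the denominator, so when $t$ is large compared to $\expect{f}$ the bound is only sub-exponential, not Gaussian. Pinning down the exact constants $30$ and $8$ therefore requires the same case split as in~\cite{MolloyR02}. My first attempt would treat $t\leq \expect{f}$ and $t>\expect{f}$ separately, and in the second case rely on the trivial bound and the slack in the constants.
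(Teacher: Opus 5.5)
Your Steps 1--3 are the standard derivation from Talagrand's convex-distance inequality; this is how \cite{MolloyR02} obtain the result, and the paper itself only cites it. Step 1 is correct as written.

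The gap is Step 4 for large $t$. Your plan to handle $t>\expect{f}$ ``by the trivial bound and the slack in the constants'' cannot work, because the proposition as stated, for \emph{all} $t\geq 1$, is false in that regime. The $M+s$ in the denominator of your upper-tail exponent is not an artifact of the method. The upper tail of a Lipschitz certifiable function can be Poisson-like, of order $\exp(-\Theta(t\log(t/\expect{f})))$, which is much heavier than $\exp(-t^2/(8c^2r\expect{f}))$ once $t\gg\expect{f}$. Concretely, let $f$ count the successes among $n$ independent trials, each succeeding with probability $1/n$; then $c=r=1$ and $\expect{f}=1$. For $t=200$ the statement asserts $\Pr(f\geq 232)\leq e^{-5000}$. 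But $\Pr(f=232)\to e^{-1}/232!>e^{-1100}$ as $n\to\infty$. No choice of constants in the shift or prefactor repairs this.

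What your argument does establish is the form actually stated in \cite{MolloyR02}: for $f\geq 0$ and $0\leq t\leq\expect{f}$,
\[
\Pr(\card{f-\expect{f}}>t+60c\sqrt{r\expect{f}})\leq 4e^{-t^2/(8c^2r\expect{f})}.
\]
The restriction $t\leq\expect{f}$ is exactly what makes Step 4 work. It bounds the denominator $M+s$ by $2\expect{f}+O(c\sqrt{r\expect{f}})$, and the excess over $2\expect{f}$ costs only $O(t\,c\sqrt{r/\expect{f}})=O(c\sqrt{r\expect{f}})$ of extra shift. For this you need $M\leq\expect{f}+O(c\sqrt{r\expect{f}})$ from Step 3; Markov's $M\leq 2\expect{f}$ alone would give $12$ in place of $8$.

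Two smaller corrections to Step 3:
\begin{itemize}
\item The stray term is of order $c^2r$ (it comes from $\exp(-s/(8c^2r))$), not $cr$, and $t\geq 1$ does not absorb it. If $\expect{f}\geq c^2r$, then $c^2r\leq c\sqrt{r\expect{f}}$. If $\expect{f}<c^2r$, the shift exceeds $30\expect{f}$, so Markov's inequality already gives the (then nearly trivial) bound.
\item Matching the stated prefactor $1$ instead of $4$ also requires absorbing $\ln 4$ into the shift.
\end{itemize}

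For the paper, the restricted form covers $Y$ in the proof of \Cref{lem:prob-slack-sparse}, where $t\leq\expect{X}/10\leq\expect{Y}$. For $Z$, whose mean can be smaller than that $t$, one needs the variant with $\expect{f}+t$ in the denominator of the exponent.
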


We can now prove~\Cref{lem:prob-slack-sparse}. 

\begin{lemma*}[Restatement of~\Cref{lem:prob-slack-sparse}]
    Let $\gamma\in (0, 1)$ be sufficiently small. Fix a $\gamma$-sparse vertex $v$ with $\deg_S(v)>\Delta-\delta\cdot\gamma\cdot \Delta$ along with a set $B(u) \subseteq [\Delta+1]$ and a time $t_u$ associated to each vertex $u\in N_S(v)$ such that $\abs{B(u)}\leq \Delta/4$.
	Suppose for every $u\in N(v)$, with probability $\delta$, we sample a color $c(u)$ and assign it to $u$ iff: $(1)$ the color $c(u) \notin B(u)$, and (2) no neighbor $w\in N_S(v)\cap N_S(u)$ with $t_w<t_u$ or with lower lexicographical order while $t_w=t_u$ has sampled this color. 
	Let $\chi$ be the partial coloring  obtained this way. 
	Then, 
    \[
    	\Pr\paren{\slack{v}{\chi} < \delta^6\cdot\gamma\cdot \Delta} \leq 2^{-\delta^5\cdot \gamma \cdot \Delta}. 
    \]
\end{lemma*}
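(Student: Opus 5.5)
The plan follows the Molloy--Reed one-shot coloring argument. Write $N:=N_S(v)$ and let $\mathcal{A}$ be the set of anti-edges inside $N$. Since $v$ is $\gamma$-sparse, $G[N(v)]$ misses at least $\gamma\Delta^2$ pairs compared to $\binom{\Delta}{2}$. Here we count missing pairs including the ``virtual'' ones when $\deg(v)<\Delta$. Removing the at most $\delta\gamma\Delta$ non-sparser neighbors destroys at most $\delta\gamma\Delta^2$ of these pairs. Hence $\card{\mathcal{A}}\ge(1-\delta)\gamma\Delta^2-\Delta\cdot(\Delta-\deg_S(v))\ge \gamma\Delta^2/2$, after accounting for the missing vertices.

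The key observation is a lower bound on slack. Every color $c$ that appears on at least two vertices of $N$ under $\chi$ saves one color. Also, every vertex of $N(v)$ left blank counts in $\free{v}{\chi}$, while every colored one uses a color. So $\slack{v}{\chi}\ge \Delta+1-\deg(v)+Z$, where $Z$ is the number of colors $c$ that are retained by \emph{both} endpoints of some anti-edge in $\mathcal{A}$ and by no other vertex of $N(v)$. Counting only ``exactly twice'' keeps the function well behaved. It therefore suffices to show $\Pr(Z<\delta^6\gamma\Delta)\le 2^{-\delta^5\gamma\Delta}$.

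\textbf{Expectation.} Fix $(u,w)\in\mathcal{A}$ and a color $c\notin B(u)\cup B(w)$; there are at least $\Delta/2$ such colors since $\card{B(\cdot)}\le\Delta/4$. Consider the event that $u$ and $w$ both sample $c$ and no other vertex of $N(v)\cup N(u)\cup N(w)$ samples $c$. This event has probability at least $(\delta/(\Delta+1))^2(1-\delta/(\Delta+1))^{3\Delta}\ge \delta^2 e^{-4\delta}/\Delta^2$. On this event both $u$ and $w$ retain $c$. Condition (2) cannot block them: only neighbors within $N_S(v)$ can block, and none sampled $c$. Also, $c$ appears exactly on $\{u,w\}$ in $N(v)$. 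Summing over anti-edges and colors, and dividing by the multiplicity with which a color can be counted, we lower bound the number of colors $c$ having at least one such ``private'' anti-edge. This gives $\expect{Z}\ge \card{\mathcal{A}}\cdot(\Delta/2)\cdot\delta^2e^{-4\delta}/\Delta^2-(\text{overcount})$. The overcount is handled by counting, for each color, only anti-edges for which the \emph{only} samplers of $c$ in $N(v)$ are that pair, so each color is counted at most once. This yields $\expect{Z}\ge \delta^2\gamma\Delta/8$.

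\textbf{Concentration.} This is the main obstacle. $Z$ is a function of the independent pairs (sample bit, color) of the vertices in $N(v)\cup N(N(v))$, but those neighbors matter only through condition (2) restricted to $N_S(v)$. So in fact $Z$ depends only on the $\le\Delta$ variables of $N_S(v)$, with the sets $B(u)$ and the times $t_u$ fixed. This is exactly why the lemma was stated with blocked sets and only in-neighborhood conflicts. Following~\cite{MolloyR02}, split $Z=Z_1-Z_2$. Here $Z_1$ counts colors sampled (and kept, ignoring other occurrences) on some anti-edge pair, and $Z_2$ counts those also sampled by a third vertex of $N$ or otherwise spoiled. Both are $O(1)$-Lipschitz: changing one vertex's sample affects at most two colors. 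Each is $O(1)$-certifiable, using the two or three sampling vertices plus the at most two blocking ones from condition (2). Applying \Cref{prop:talagrand} to each with $t=\delta^3\gamma\Delta$ and $\expect{Z_i}=\Theta(\delta^2\gamma\Delta)$ gives deviation probability $\exp(-\Omega(\delta^4\gamma\Delta))$. The additive term $30c\sqrt{r\,\expect{f}}$ is $o(\gamma\Delta)$ since $\gamma\Delta$ is large, as $\eta\gg\log n$. Therefore $Z\ge \delta^2\gamma\Delta/16\ge\delta^6\gamma\Delta$ except with probability $2\exp(-\Omega(\delta^4\gamma\Delta))\le 2^{-\delta^5\gamma\Delta}$ for small $\delta$. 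If the bound on $\card{B(\cdot)}$ complicates the certificate for condition (1), I would fold $B$ into the definition of the sampled color, since it is a fixed predicate. I expect verifying certifiability of $Z_2$ to be the most delicate step.
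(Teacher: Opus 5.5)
Your argument is essentially the paper's. You lower-bound the slack by the number of colors that appear in $N(v)$ exactly on the two endpoints of an anti-edge of $N_S(v)$. You get probability $\Omega(\delta^2/\Delta)$ per anti-edge for the expectation. You then obtain concentration by writing this count as a difference of two $O(1)$-Lipschitz, $O(1)$-certifiable functions and applying Talagrand's inequality to each, which is the paper's $X=Y-Z$.

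One step needs repair. You let $\mathcal{A}$ be \emph{all} anti-edges of $N_S(v)$, but then use $\expect{Z_i}=\Theta(\delta^2\gamma\Delta)$. This fails when $N_S(v)$ has far more than $\gamma\Delta^2$ anti-edges, e.g.\ when $N_S(v)$ is nearly independent while $\gamma$ is small. In that case $\expect{Z_1}=\Theta(\delta^2\Delta)$, and Talagrand with $t=\delta^3\gamma\Delta$ only gives $\exp(-\Omega(\delta^4\gamma^2\Delta))$, which is too weak for $\gamma\ll\delta$. There are two easy fixes. You can fix a sub-collection of $\Theta(\gamma\Delta^2)$ anti-edges in advance, as the paper does with $\bar e_1,\ldots,\bar e_s$. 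Alternatively, take $t$ proportional to $\expect{Z}$. This works because $\expect{Z}$ is within a constant factor of $\expect{Z_1}$: compare your lower bound on $\expect{Z}$ with the union bound $\expect{Z_1}\le\card{\mathcal{A}}\,\delta^2/(\Delta+1)$.

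Certifiability of $Z_2$ is not the delicate point you expected. The sets $B(\cdot)$ are fixed, and a block via condition (2) already requires a third sampler of that color in $N_S(v)$. So three sampled colors certify each counted color.
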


\begin{proof}[Proof of~\Cref{lem:prob-slack-sparse}]
We prove this lemma by bounding the number of pairs of sparser vertices in neighborhood of $v$ that have received the same color under $\chi$. Let $p(v)$ be the number of pairs in $N_S(v)$ which end up with the same color. By definition we get $\slack{v}{\chi}\geq p(v)$. So it suffices to prove that with high probability $p(v)$ is bounded.

As $v$ is $\gamma$-sparse and $\deg_S(v)\geq \Delta - \delta\cdot \gamma \cdot \Delta$, there are at least $s=\delta^2\cdot \gamma\cdot \Delta^2$ many anti-edges between its sparser neighbors. Let $\bar{e}_1, \ldots, \bar{e}_s$ be $s$ arbitrary anti-edges in $N(v)$. Let $X$ be the number of colors that are sampled by both endpoints of exactly one of these $s$ anti-edges and is also kept by both its endpoints. We can see that $p(v) \geq X$, so it suffices to find a lower bound for $X$. Note that $X$ might be smaller than the number of same color pairs in $N(v)$ as we are considering a more restricted condition.
    
For each anti-edge $\bar{e}_i=(u_i, w_i)$ let $X_i$ be the indicator random variable where $X_i=1$ iff (I) $u_i$ and $w_i$ have sampled same color $c_i$ which is not in $B(u_i)\cup B(w_i)$ and (II) no other vertex in $N_S(v)$ has sampled color $c_i$.
As for each $u\in N(v)$ we have $\abs{B(u)}\leq \Delta/4$, then for each $i\in [s]$ we have
 \begin{equation}\label{eq:avail-cap}
    [\Delta+1]\setminus (B(u_i)\cup B(w_i))
    \geq \Delta+1 - /2\geq \Delta/2.
\end{equation}
Consequently, as $u\in N_S(v)$ is colored by the described process, we have
\begin{align*}
    \Pr\paren{X_i=1}
    & = \Pr\paren{\text{(I) holds }}\cdot \Pr\paren{\text{(II) holds}} \\
    &\geq \paren{\delta^2\cdot \frac{\abs{[\Delta+1]\setminus (B(u_i)\cup B(w_i))}}{\Delta+1}\cdot \frac{1}{\Delta+1}} \cdot (1-\delta\cdot \frac{1}{\Delta+1})^{\Delta-2}\tag{w.p $\delta$ colors are sampled independently for vertices in $N_S(v)$} \\
    & \geq \frac{\delta^2\Delta}{2(\Delta+1)^2} \cdot \exp\paren{-\frac{\delta(\Delta-2)}{(\Delta+1)}} \tag{as $1-x \geq e^{-4x/3}$ for small $x$, and by \Cref{eq:avail-cap}}\\
    &\geq \frac{\delta^2}{8\Delta}.
\end{align*}
Therefore, we have  $\expect{X}\geq s\cdot \frac{\delta^2}{8\Delta}\geq \delta^5\cdot \gamma \cdot \Delta$.

Let $Y$ be the random variable showing the number of colors that are sampled by the endpoints of at least one anti-edge in $\{\bar{e}_1, \ldots, \bar{e}_s\}$, and let $Z$ be the random variable showing the number of colors sampled by the endpoints of at least one anti-edge in $\{\bar{e}_1, \ldots, \bar{e}_s\}$ where the color is also sampled by some other vertex in $N(v)$ or is blocked by its corresponding set $B$. By definition we have $X=Y-Z$.
We can see that both $Y$ and $Z$ are functions of independent random variables that define the choices of random colors for this neighborhood. Moreover, $Y$ is 2-certifiable as it can be verified by knowing the sampled color for two vertices and also $Z$ is 3-certifiable as it can be verified by knowing the sampled color for three vertices or two vertices and element of one of their blocked sets. 
Also, as changing color of a vertex can affect at most two colors involved in $Y$ and $Z$, the new color and the old color, then both are $2$-Lipschitz. We also have
\[
\expect{Y}=\sum_{i\in[s]}\Pr(\bar{e_i} \text{ samples the same color})\leq \gamma\Delta^2\cdot \frac{\delta^2}{\Delta+1}\leq \delta^2\cdot \gamma \cdot \Delta.
\]
Therefore, we can use Talagrand inequality from \Cref{prop:talagrand} as follows
\begin{align*}
    \Pr\paren{\abs{Y-\expect{Y}}\geq \frac{\expect{X}}{10}} 
    &\leq \exp\paren{-\frac{(\expect{X}-\Theta(1)\sqrt{\delta^2\gamma\cdot \Delta})^2}{\delta^2\gamma\cdot \Delta}} \tag{as $\expect{Y}\leq \delta^2\gamma\cdot \Delta$}\\
    & \leq \exp\paren{-{\delta^5\cdot\gamma\cdot \Delta}}< 2^{-\delta^5\cdot\gamma\cdot \Delta}. \tag{as $\expect{X}\geq   \delta^5\cdot\gamma\cdot \Delta$}
\end{align*}
Similarly, as $\expect{Z}\leq \expect{Y}$, we get 
\[
    \Pr\paren{\abs{Z-\expect{Z}}\geq \frac{\expect{X}}{10}} < 2^{-\delta^5\cdot\gamma\cdot \Delta}.
\]
Using the two inequalities above, we have that with probability at least $1-2^{-\delta^5\cdot\gamma\cdot \Delta}$ both $Y$ and $Z$ are concentrated and therefore
\[
    X=Y-Z \geq \expect{Y} - \frac{\expect{X}}{10} - (\expect{Z}+\frac{\expect{X}}{10}) = \expect{X} - \frac{\expect{X}}{5} = \frac{4}{5}\cdot \expect{X}. 
\]
Thus, with probability at least $1-2^{-\delta^5 \cdot\gamma\cdot \Delta}$ we have
\[
     \slack{v}{\chi} \geq  p(v)\geq X \geq  \frac{1}{5}\cdot  \delta^5\cdot\gamma\cdot \Delta \geq  \delta^6\cdot\gamma\cdot \Delta.\qed
\]

\end{proof}

\subsection{Proof of \Cref{clm:test-works}}

The following subroutine determines if a sparser vertex has \Cref{flaw:all} without checking all its neighbors.

\begin{Algorithm}\label{alg:sparse-test} 
$\test(v)$:
\begin{enumerate} 
    \item For each $i\in[\log n/\delta^2]$, sample an integer $k_i$ from $[\Delta]$ uniformly at random. 
    \item Let $R$ be the number of $k_i$ where $k_i\leq \deg(v)$ and $k_i$-th neighbor of $v$ is colored by $\colS$.
    \item If $R>3\log n/\delta$ return \Fail; otherwise return \Pass.
\end{enumerate} 
\end{Algorithm}

\begin{lemma*}
    For any sparser vertex $v\in S$, if $v$ has \Cref{flaw:all}, then $\test(v)$ will return \Fail with high probability. Conversely,  if $v$ is far from having \Cref{flaw:all}, then $\test(v)$ will output \Pass with high probability. Moreover, the running time of $\test(v)$ is $O(\log n)$ with certainty.
\end{lemma*}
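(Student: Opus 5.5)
The plan is to treat the number $R$ computed by $\test(v)$ as a sum of $k:=\log n/\delta^2$ independent Bernoulli trials and apply the Chernoff bound of~\Cref{prop:chernoff}. Each $k_i$ is uniform over $[\Delta]$, and the event "$k_i\leq \deg(v)$ and the $k_i$-th neighbor of $v$ is colored under $\colS$" has probability exactly $p:=\Ncolor{v}{\colS}/\Delta$, independently across $i$. Hence $R$ is a sum of $k$ i.i.d.\ indicators with $\expect{R}=kp=\Ncolor{v}{\colS}\cdot \log n/(\delta^2\Delta)$. The threshold $3\log n/\delta$ corresponds to $p=3\delta$. This lies strictly between the "far" regime $p<2\delta$ and the flaw regime $p>4\delta$, so both directions are constant-factor deviation bounds.

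For the first direction, suppose $v$ has~\Cref{flaw:all}, so $p>4\delta$ and $\expect{R}\geq \mu_{min}:=4\log n/\delta$. I apply the lower-tail bound with $b=1$ and $(1-\delta')\mu_{min}=3\log n/\delta$, that is, $\delta'=1/4$. This gives
\[
\Pr\paren{R\leq 3\log n/\delta}\leq \exp\paren{-\tfrac{(1/4)^2\cdot 4\log n}{2\delta}}=n^{-\Omega(1/\delta)},
\]
which is a high-probability bound since $\delta$ is a small constant.

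For the converse, suppose $v$ is far from~\Cref{flaw:all}, so $p<2\delta$ and $\expect{R}\leq \mu_{max}:=2\log n/\delta$. I apply the upper-tail bound with $(1+\delta')\mu_{max}=3\log n/\delta$, that is, $\delta'=1/2$. This gives $\Pr\paren{R>3\log n/\delta}\leq \exp\paren{-\tfrac{(1/4)\cdot 2\log n}{3\delta}}=n^{-\Omega(1/\delta)}$.

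For the running time, the algorithm draws $\log n/\delta^2=O(\log n)$ samples. For each sample it performs one comparison with $\deg(v)$, one access to the $k_i$-th neighbor in $v$'s adjacency array, and one lookup of $\colS$, each in $O(1)$ time. This gives $O(\log n)$ time deterministically.

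The only point needing care is that the neighbor list supports $O(1)$ indexed access. I will assume the standard adjacency-array representation with swap-deletion, which the dynamic algorithm maintains anyway. Beyond that, everything is a direct Chernoff calculation.
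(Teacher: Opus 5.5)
Your proposal is correct and follows essentially the same argument as the paper: $R$ is a sum of $\log n/\delta^2$ independent indicators with mean $\Ncolor{v}{\colS}\cdot\log n/(\delta^2\Delta)$, and both directions follow from the lower and upper Chernoff tails of \Cref{prop:chernoff}, because the threshold $3\log n/\delta$ sits a constant factor away from both $2\log n/\delta$ and $4\log n/\delta$. Applying the upper tail through $\mu_{max}=2\log n/\delta$ is slightly cleaner than the paper's phrasing via $\frac{3}{2}\cdot\expect{X}$, and the strict-versus-non-strict inequality at the threshold in the first direction does not matter.
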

\begin{proof}
    Assume $v$ has \Cref{flaw:all}, then $\Ncolor{v}{\colS}> 4\delta\cdot\Delta$, i.e. $v$ has at least $4\delta\cdot \Delta$ many colored neighbors under $\colS$.

    For each $i\in[\log n/\delta^2]$, let $X_i$ be the indicator random variable that is 1 iff the $k_i$-th vertex of $N(v)$ exists and is colored under $\colS$. Let $X=\sum_{i\in[\log n/\delta^2]}X_i$. Observe that $X$ is precisely the value of $R$ in $\test(v)$. Since each sample is chosen independently and uniformly at random, then
    \[
    \expect{X} = \sum_{i\in [\log n/\delta^2]}\Pr(X_i=1) =\sum_{i\in [\log n/\delta^2]}\frac{\Ncolor{v}{\colS}}{\Delta}\geq \frac{4\log n}{\delta},
    \]
    as $\Ncolor{v}{\colS}>4\delta\Delta$.
    Using Chernoff bound (\Cref{prop:chernoff}) we get 
    \begin{align*}
        \prob{X<\frac{3\log n}{\delta}}
        \leq \exp\paren{-\frac{\log n}{8\cdot \delta}}\leq n^{-1/(8\delta)}.
    \end{align*}
    Hence, given $\delta$ is a small constant in $(0, 1)$, with high probability, we have $X>3\log n/\delta$, and therefore $\test(v)$ returns \Fail.

    Now suppose that $v$ is far from having \Cref{flaw:all}. Then, $\Ncolor{v}{\colS}<2\delta\cdot \Delta$, which implies
    $\expect{X} \leq \frac{2\log n}{\delta}$. Then we have 
      \begin{align*}
        \prob{X>\frac{3\log n}{\delta}}
        &\leq \prob{X>\frac{3}{2}\cdot \expect{X}}\tag{as $\expect{X}\leq 2\log n/\delta$}\\
        &\leq \exp\paren{-\frac{\log n}{2\delta}}\leq n^{-1/(2\delta)}.\tag{by \Cref{prop:chernoff}}
    \end{align*}
    Therefore, with high probability, $X\le 3\log n/\delta$, and so $\test(v)$ returns \Pass in this case.

    Finally, the procedure samples and checks the colors of $O(\log n)$ vertices. Hence, its running time is $O(\log n)$ with certainty.
\end{proof}